\newcommand{\change}[2]{{\color{red} #1}{\color{blue} #2}}
\newtheorem{theorem}{Theorem}[section]
\newtheorem{lemma}[theorem]{Lemma}
\newtheorem{meta-theorem}[theorem]{Meta-Theorem}
\newtheorem{claim}[theorem]{Claim}
\newtheorem{remark}[theorem]{Remark}
\newtheorem{proposition}[theorem]{Proposition}
\newtheorem{definition}[theorem]{Definition}
\newtheorem{question}[theorem]{Question}
\newtheorem{fact}[theorem]{Fact}
\crefname{theorem}{Theorem}{Theorems}
\crefname{proposition}{Proposition}{Propositions}
\crefname{observation}{Observation}{Observations}
\crefname{lemma}{Lemma}{Lemmas}
\crefname{claim}{Claim}{Claims}
\crefname{problem}{Problem}{Problems}
\crefname{conjecture}{Conjecture}{Conjectures}
\crefname{question}{Question}{Questions}
\crefname{example}{Example}{Examples}
\crefname{fact}{Fact}{Facts}
\definecolor{darkgreen}{rgb}{0,0.5,0}
\algnewcommand\algorithmicswitch{\textbf{switch}}
\algnewcommand\algorithmiccase{\textbf{case}}
\newcommand{\fC}{\mathcal{C}}
\newcommand{\fI}{\mathcal{I}}
\newcommand{\fK}{\mathcal{K}}
\newcommand{\fR}{\mathcal{R}}
\newcommand{\fS}{\mathcal{S}}
\newcommand{\fU}{\mathcal{U}}
\newcommand{\eps}{\varepsilon}
\renewcommand{\P}{\textrm{P}}
\newcommand{\E}{\textrm{E}}
\newcommand{\e}{\textrm{e}}
\newcommand{\poly}{\operatorname{poly}}
\newcommand{\R}{\mathbb{R}}
\renewcommand{\paragraph}[1]{\vspace{0.15cm}\noindent {\bf #1}:}
\newcommand{\FullOrShort}{full}
  \newcommand{\fullOnly}[1]{#1}
  \newcommand{\shortOnly}[1]{}
    \newcommand{\fullOnly}[1]{}
    \newcommand{\IncludePictures}[1]{}
\newcommand{\sigmabar}{\overline{\sigma_{i+1}}}
\newcommand{\avg}{\text{AVG}}
\newcommand{\hit}{\textrm{HIT}}
\renewcommand{\phi}{\varphi}
\newcommand{\argmin}{\textrm{argmin}}
\renewcommand{\tilde}{\widetilde}
\date{}
\title{A Nearly Tight Analysis of Greedy k-means++}
\author{
Christoph Grunau   \\
\texttt{cgrunau@inf.ethz.ch}\\
ETH Zurich \\
    \and
    Ahmet Alper Özüdoğru\\
    \texttt{oahmet@student.ethz.ch}\\
ETH Zurich \\
    \and
Václav Rozhoň \\
\texttt{rozhonv@inf.ethz.ch}\\
ETH Zurich \\
    \and
	Jakub Tětek\\
	\texttt{j.tetek@gmail.com}\\
	BARC, Univ. of Copenhagen
}
\begin{document}

\maketitle

\begin{abstract}
    The famous $k$-means++ algorithm of Arthur and Vassilvitskii [SODA 2007] is the most popular way of solving the $k$-means problem in practice. 
    The algorithm is very simple: it samples the first center uniformly at random and each of the following $k-1$ centers is then always sampled proportional to its squared distance to the closest center so far. 
    Afterward, Lloyd's iterative algorithm is run. 
    The $k$-means++ algorithm is known to return a $\Theta(\log k)$ approximate solution in expectation. 

    In their seminal work, Arthur and Vassilvitskii [SODA 2007] asked about the guarantees for its following \emph{greedy} variant: in every step, we sample $\ell$ candidate centers instead of one and then pick the one that minimizes the new cost. 
    This is also how $k$-means++ is implemented in e.g. the popular Scikit-learn library [Pedregosa et al.; JMLR 2011]. 
    
    
    We present nearly matching lower and upper bounds for the greedy $k$-means++: We prove that it is an $O(\ell^3 \log^3 k)$-approximation algorithm. 
    On the other hand, we prove a lower bound of $\Omega(\ell^3 \log^3 k / \log^2(\ell\log k))$. 
    Previously, only an $\Omega(\ell \log k)$ lower bound was known [Bhattacharya, Eube, Röglin, Schmidt; ESA 2020] and there was no known upper bound.

\end{abstract}

\thispagestyle{empty}
\newpage
\tableofcontents
 \thispagestyle{empty}
\newpage
\clearpage
\setcounter{page}{1}

\section{Introduction}
\label{sec:intro}

This paper is devoted to analyzing a natural and frequently-used greedy variant of the famous $k$-means++ clustering algorithm \cite{arthur2007k}. 
The difference between $k$-means++ and its greedy variant is very small: $k$-means++ samples one center in each step while greedy $k$-means++ samples $\ell$ candidate centers and then selects the one that decreases the current cost the most. 
While it is well known that the $k$-means++ algorithm is $\Theta(\log k)$-approximate, analyzing its greedy variant remained wide open. In this paper we show that greedy $k$-means++ is $O(\ell^3\log^3 k)$-approximate. 
Suprisingly, this is nearly tight. Specifically, we prove a lower bound of $\Omega(\ell^3 \log^3 k / \log^2(\ell\log k))$. 

\paragraph{Clustering}
Clustering is one of the most important tools in unsupervised machine learning. 
The task is to divide given input data into clusters of neighboring data points. 
There are many ways of formalizing that task, but one of the most popular ones is the $k$-means problem. 

In the $k$-means problem, we are given a set of points $X \subseteq \mathbb{R}^d$, as well as a parameter $k$. 
We are asked to find a set of $k$ \emph{centers} $K \subseteq \mathbb{R}^d$ that minimizes the sum of squared distances of points of $X$ to their respective closest centers. 
Namely, if we define the \emph{cost} of a point $x$ with respect to a set of centers $C$ as $\phi(x, C) := \min_{c \in C} ||x-c||^2\change{}{_2}$, we wish to find a set $C$ with $|C| = k$ that minimizes the expression $\phi(X, C) := \sum_{x \in X} \phi(x, C)$. 

The $k$-means problem is NP-hard \cite{aloise2009np,mahajan2009planar} and also NP-hard to approximate within some constant factor $c > 1$ \cite{awasthi2015hardness,lee2017improved}. 
On the other hand, there is a long line of work on approximation algorithms, with the current record holder being the algorithm of  \cite{cohenaddad2022best_apx} with approximation ratio of 5.912. Moreover, $1+\eps$-approximation can be reached for constant $d$ \cite{feldman2007ptas} and constant $k$ \cite{kumar2004simple}. 

However, these algorithms are not used in practice. Instead, practitioners rely on the so-called Lloyd's heuristic \cite{Lloyd} which can start with an arbitrary solution and iteratively makes its cost smaller, until convergence. 

Lloyd's heuristic is not ideal: it is prone to get stuck in bad local optima~\cite{arthur2007k}. 
In particular, it is not a constant approximation algorithm. 
A remedy to this problem is seeding it with a solution that is already good as the Lloyd's heuristic can then only make its cost smaller.  
In practice, such a seed can be simply a random subset of $X$ of size $k$. 
This natural option is for instance one of the options in the implementation used by Scikit learn \cite{scikit-learn} or R \cite{Rmanual}. 
Such an approach does not lead to any approximation ratio guarantees; its approximation ratio can be arbitrarily bad in some simple instances, e.g. whenever we have $k$ well-separated clusters lying in one line. 

\paragraph{$k$-means++}
A major result of Arthur and Vassilvitskii  \cite{arthur2007k,ostrovsky2013effectiveness} is a simple seeding algorithm known as $k$-means++ that both works well in practice and has desirable theoretical worst-case guarantees. 

The $k$-means++ algorithm works as follows. 
We sample the set $C \subseteq X$ sequentially, one center at a time. 
The first center we sample as a uniform point from $X$. Each next center is sampled proportional to its current cost. That is, if $C_i$ is the already constructed set of centers, we sample $x \in X$ as the next one with probability $\phi(x, C_i) / \phi(X, C_i)$.  
The pseudocode is in \cref{alg:kmpp}. 

\begin{algorithm}[ht]
	\caption{$k$-means++ seeding}
	\label{alg:kmpp}
	Input: $X$, $k$
	\begin{algorithmic}[1]
		\State Uniformly sample $x \in X$ and set $C_1 = \{ x \}$.
		\For{$i \leftarrow 1, 2, 3, \dots, k-1$}
		\State Sample $x \in X$ w.p. $\frac{\phi(x, C_{i})}{\phi(X, C_{i})}$ and set $C_{i+1} = C_{i} \cup \{x\}$.
		\EndFor
	\State \Return $C := C_k$
	\end{algorithmic}
\end{algorithm}

Arthur and Vassilvitskii proved that \cref{alg:kmpp} is $\Theta(\log k)$ approximate, in expectation\footnote{
In practice, the algorithm is rerun several times, which boosts the guarantee in expectation to a high probability guarantee. }. 
Although this approximation guarantee is not even constant, a benchmark achievable by many other known polynomial-time algorithms \cite{kanungo2004local, lattanzi2019better,ahmadian2019better},  the main point of the analysis is that we cannot construct an adversarial  instance where the Lloyd's heuristic seeded by $k$-means++ seeding can be arbitrarily bad, as is the case for e.g. the uniform random seeding. 
On practical data sets, the $k$-means++ seeding gives consistently better results than the random seeding \cite{arthur2007k} and is implemented in popular machine learning libraries like Scikit-learn \cite{scikit-learn}. 

However, it turns out that the algorithm implemented in the popular Scikit-learn library is \emph{not} the basic $k$-means++ (\cref{alg:kmpp}), but its \emph{greedy} variant described in \cref{alg:kmpp_greedy}. This algorithm in fact comes from the original paper of Arthur and Vassilvitskii \cite{arthur2007k} who mention that it gives better empirical results. They say that their analysis ``do[es] not carry over to this scenario'' and that ``it would be interesting to see a comparable (or better) asymptotic result''.

The greedy $k$-means++ algorithm works as follows. In every step, we sample $\ell$ \emph{candidate centers}  $c_{i+1}^1, \dots, c_{i+1}^\ell$ from the constructed distribution, not just one. 
Next, for each candidate center $c_{i+1}^j$ we compute the new cost of the solution $\phi(X, C \cup \{c_{i+1}^j\})$ if we add this candidate to our set of centers. Then we pick the candidate center that minimizes this expression. 

\begin{algorithm}[ht]
	\caption{Greedy $k$-means++ seeding}
	\label{alg:kmpp_greedy}
	Input: $X$, $k$, $\ell$
	\begin{algorithmic}[1]
		\State Uniformly independently sample $c^1_1, \dots, c^\ell_1 \in X$;
		\State Let $c_1 = \arg \min_{c \in \{c^1_1, \dots, c^\ell_1\} } \phi(X, c)$ and set $C_1 = \{ c_1 \}$.
		\For{$i \leftarrow 1, 2, 3, \dots, k-1$}
		\State Sample $c^1_{i+1}, \dots, c^\ell_{i+1} \in X$ independently, sampling $x$ with probability $\frac{\phi(x, C_i)}{\phi(X, C_i)}$;
		\State Let $c_{i+1} = \arg \min_{c \in \{c^1_i, \dots, c^\ell_i\} } \phi(X, C_{i} \cup \{c\})$ and set $C_{i+1} = C_{i} \cup \{c_{i+1}\}$.
		\EndFor
	\State \Return {$C := C_k$}
	\end{algorithmic}
\end{algorithm}




In \cite{bhattacharya2019noisy}, the authors show that \cref{alg:kmpp_greedy} is $\Omega(\ell \log k)$ approximate, in expectation. That is, the enhanced \cref{alg:kmpp_greedy} has worse theoretical guarantees than the original \cref{alg:kmpp}!
This should not, however, be so surprising. In fact, if we think about $\ell$ going to infinity, the algorithm becomes essentially deterministic; in every step, it will simply pick the point that decreases the cost the most. Such an algorithm heuristically makes sense, but lacks worst-case guarantees\footnote{Note that the negative cost $-\phi(X, C)$ is submodular in $C$, but because of the negative sign we cannot use the well-known fact that the greedy algorithm yields a $1-1/e$ approximation of the optimum. }. 
To see this, imagine two clusters with many points and a single lonely point in between: taking the lonely point results in a substantial drop of the cost, but it cannot be part of any solution with bounded approximation factor.

So, we cannot hope that \cref{alg:kmpp_greedy} gets better guarantees than \cref{alg:kmpp}. 
However, in the words of Arthur and Vassilvitskii, its guarantees could still be ``comparable''. 
This is the main result of this paper:

\begin{restatable}{theorem}{greedyUb}
\label{thm:greedy_ub}
Greedy $k$-means++ (\cref{alg:kmpp_greedy}) is an $O(\ell^3 \cdot \log^3 k)$-approximation algorithm, in expectation. 
\end{restatable}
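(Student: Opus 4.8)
The plan is to follow the skeleton of the Arthur--Vassilvitskii analysis of vanilla $k$-means++, reinforced so that it survives the fact that, inside each batch of $\ell$ candidates, the center that actually gets added may be the one that is \emph{worst} for the argument (a ``lonely point'') rather than one that captures a new optimal cluster. A convenient first move is to discard the cost-minimizing selection rule in favour of an arbitrary one: greedy's rule is a single deterministic, non-anticipating map from a batch of $\ell$ $D^2$-sampled points to a chosen center, so its expected final cost is at most $\sup_\sigma \mathbb{E}[\phi(X,C_k^\sigma)]$ over all such selection rules $\sigma$. Hence it is enough to bound the ``adversarial'' variant, and the analysis never has to track how the cost-minimizing choice correlates with the geometry. (Consistency check: the paper's lower bound is stated for greedy itself, hence also lower-bounds this adversarial variant, which is precisely the regime for which we want an upper bound. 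If this route loses too much, the fallback is to keep the minimization and use only that greedy's chosen center is never worse than a designated ``good'' candidate in the batch.) Fix an optimal solution with clusters $A_1,\dots,A_k$ and per-cluster costs $\mathrm{OPT}_1,\dots,\mathrm{OPT}_k$, and call $A_j$ \emph{settled} at center set $C$ if $\phi(A_j,C)\le\gamma\,\mathrm{OPT}_j$ for a threshold $\gamma$ to be fixed; being settled is monotone, since adding centers only lowers cost.

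The heart is a per-step progress statement. If at the current step the settled clusters carry at least half of the current cost $\phi(X,C)$, then $\phi(X,C)\le 2\gamma\,\mathrm{OPT}$ and, by monotonicity of the cost, the final cost is already within the target bound; so assume the free clusters carry more than half of $\phi(X,C)$. Then each candidate lands in a free cluster with probability exceeding $1/2$, and, conditioned on a candidate landing in a fixed free cluster $A_j$, with constant probability it is a ``good'' point --- one whose addition pulls $\phi(A_j,\cdot)$ down to $O(\mathrm{OPT}_j)$ --- by the standard Arthur--Vassilvitskii cluster lemma \cite{arthur2007k}. The selection rule need not add this good candidate; but by the monotonicity inequality $\phi(X,C\cup\{c\})\le\phi(X,C)-\phi(A_j,C)+\phi(A_j,C\cup\{c\})$, whatever center $c'$ is added satisfies $\phi(X,C\cup\{c'\})\le\phi(X,C\cup\{c^{\mathrm{good}}\})$, so at such a step \emph{either} a free cluster becomes settled \emph{or} the total cost drops by essentially the full current weight $\phi(A_j,C)$ of a free cluster that had a good candidate in its batch. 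Either way we have made quantifiable progress.

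To conclude I would amortize these two modes of progress across the $k$ steps. Settling events occur at most $k$ times and, via the cluster lemma, are charged $O(\mathrm{OPT}_j)$ each. The danger is a long run of ``cost-drop'' steps: the adversary repeatedly wasting a center on a lonely point, bleeding weight off a free cluster without ever settling it. To bound these I would group the free clusters by weight into $O(\log k)$ geometric scales and run an Arthur--Vassilvitskii-style harmonic accounting within each scale, now losing about a factor $\ell$ per scale, both because a lonely point carrying a $q$-fraction of the cost can be surfaced as a candidate with probability up to $\ell q$ (inflating the count of wasted steps), and because the cost we are forced to charge when a free cluster is finally settled is weaker by an $\ell$ factor than in the vanilla setting. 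Invoking this weakened harmonic bound at the three places it is needed --- to bound how long the ``free clusters dominate the cost'' regime persists, within the scale decomposition, and in the charging of the cluster lemma --- is what compounds to the $(\ell\log k)^3=\ell^3\log^3 k$ factor.

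I expect the real obstacle to be exactly this amortization. In vanilla $k$-means++ every $D^2$ step is either a capture or a harmlessly wasted step with bounded total waste; here one adversarial step can simultaneously fail to capture a cluster \emph{and} rearrange weight among clusters so that future lonely points stay ``affordable'', so the wasted steps are not obviously controllable. Making this rigorous will most likely require a potential function that tracks the cost, the number of still-free clusters, and the scale profile of their weights together, together with a proof that no adversarial step increases it in expectation; this is also where the exponent $3$, rather than something larger, has to be earned. The uniform first-center step and the bookkeeping of constants are routine by comparison.
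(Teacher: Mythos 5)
Your opening reduction is fatally flawed: you propose to bound $\sup_\sigma \E[\phi(X,C_k^\sigma)]$ over all non-anticipating selection rules $\sigma$, but the paper's \cref{thm:adversary_lb} exhibits a point set and a rule for which this quantity is $\Omega(k^{1-1/\ell})\cdot OPT$, i.e.\ polynomially large. Your consistency check compared only against the greedy lower bound $\Omega(\ell^3\log^3 k/\log^2(\ell\log k))$; the relevant comparison is to the adversarial lower bound, which kills the reduction outright. The greedy rule is thus \emph{essential}, not a complication to be abstracted away: the lonely-point example in \cref{subsec:strike} shows an arbitrary rule can refuse the correct center of an optimal cluster $K$ polynomially many times, while greedy provably cannot, and the entire upper bound hinges on making that distinction quantitative.

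Your fallback --- keep the minimization and use only that greedy's chosen center is at least as good as a designated good candidate in the batch --- is the right first observation and does appear in the paper, but the amortization you then sketch is not carried out, and you flag it yourself as the likely obstacle. The paper's route is different. Its technical core is \cref{lem:greedy_lemma}, which bounds the expected number of candidate centers sampled from a fixed optimal cluster $K$ before it is covered or $10^5$-approximated by $O(\ell^2\log^2 k)$; this is proven by a (reverse) induction on three potentials $\pi_i,\rho_i,\sigma_i$ that track a per-step budget, the sizes of two neighborhoods $N_i^{small},N_i^{big}$ of $\mu(K)$, and a logarithmic surplus, with a case split (``easy'' vs.\ ``hard'') on the $(1-\tfrac{1}{2\ell})$-quantile of the one-sample cost-drop distribution. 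The theorem then follows by reworking the Arthur--Vassilvitskii potential (\cref{sec:greedy_ub}), picking up one factor of $\E[\hit(K)]=O(\ell^2\log^2 k)$ on the cost of a newly covered cluster and one factor of $\ell$ from a union bound over candidates on the probability of a bad step. The subtle obstruction (\cref{subsec:return}, \cref{prop:monotone3}) is precisely that for $\ell>1$ the average cost of an uncovered cluster need not decrease in expectation; the paper handles it by invoking the greedy rule a second time, via \cref{lem:conditioning}, to compare against an idealized rule minimizing $\phi(X_{i+1}^\fU,\cdot)$ and then bounding the mismatch between $\phi(X,\cdot)$ and $\phi(X_{i+1}^\fU,\cdot)$ by terms already charged elsewhere. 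Your geometric-scale grouping says nothing about this drift --- especially since cluster weights migrate between scales as centers are added --- so the argument as sketched would not close.
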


On the other hand, we provide the following near-matching lower bound. 

\begin{restatable}{theorem}{greedyLB}
\label{thm:greedy_lb}
For every $k$ and $\ell \le k^{0.1}$, there exists a point set $X \subseteq \mathbb{R}^d$ for some $d \in \mathbb{N}$ where \cref{alg:kmpp_greedy} outputs $\Omega(\ell^3\log^3k / \log^2(\ell \log k))$ approximate solution with constant probability. 
\end{restatable}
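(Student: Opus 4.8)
The plan is to construct the hard point set $X$ hierarchically and to reduce the entire argument to a single self-contained ``base gadget'' whose only purpose is to exhibit, in isolation, the extra factor that the greedy rule pays over vanilla $k$-means++. Throughout, the target is to keep the optimum cost at some fixed value $\mathrm{OPT}$ while forcing greedy $k$-means++ (\cref{alg:kmpp_greedy}) to reach cost $\Omega(\ell^{3}\log^{3}k/\log^{2}(\ell\log k))\cdot\mathrm{OPT}$ with constant probability.

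\textbf{The base gadget.} First I would design a point configuration $\mathcal{G}$, parametrized by $\ell$ and $k$, together with a ``clean state'' of already-chosen centers, so that the following holds: conditioned on $\mathcal{G}$ being in its clean state, the next $\Theta(\log k)$ centers that greedy directs into $\mathcal{G}$ land, with constant probability, on ``decoy'' points rather than on the points a near-optimal solution would use. The mechanism is the one hinted at in the introduction's ``lonely point'' example, made quantitative: $\mathcal{G}$ contains a geometric cascade of decoy locations arranged so that, at a typical intermediate step, \emph{serving the next decoy gives a slightly larger immediate drop of $\phi$ than serving the correct cluster}, because the correct cluster is already partially (but poorly) served while the decoy's mass is still attached to a far-away center. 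Vanilla $k$-means++ would hit such a decoy only with probability $\Theta(1/\ell)$; but greedy inspects $\ell$ i.i.d.\ candidates and keeps the \emph{best} one, so a decoy is present among the candidates — and hence selected — with constant probability. I would tune the internal ``resolution'' of $\mathcal{G}$ (the number of distinct decoy scales and the geometric ratios between them) to be $\Theta(\log(\ell\log k))$: this is the smallest resolution for which the ``decoy beats correct cluster'' inequality survives the union bound over all $\Theta(\log k)$ steps inside $\mathcal{G}$, and this choice is exactly what introduces the $\log(\ell\log k)$ slack (per level) into the final bound. The deliverable of this step is a \emph{gadget lemma}: from the clean state, $\mathcal{G}$ forces, with constant probability, a blow-up of $\Omega(\ell\log k/\log(\ell\log k))$ over the cost achievable inside $\mathcal{G}$, while permanently ``wasting'' a controlled number of center slots.

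\textbf{Composition.} The instance $X$ is then obtained by nesting a constant number of copies of the gadget — three of them, matching the cube in \cref{thm:greedy_ub} — so that the per-level losses multiply: each ``wasted slot'' of the outer gadget is physically realized as an entire sub-instance that is itself a scaled copy of $\mathcal{G}$, whose own wasted slots are again scaled copies of $\mathcal{G}$. With the geometric parameters chosen so that the optimum stays at $\mathrm{OPT}$, greedy's cost is at least the product of the three per-level factors, i.e.\ $\Omega((\ell\log k)^{3}/\log^{2}(\ell\log k))\cdot\mathrm{OPT}$, which is the claimed bound. The hypothesis $\ell\le k^{0.1}$ is precisely what is needed for the total number of logical clusters created by three levels of nesting to remain below $k$, so that $X$ is a legitimate instance.

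\textbf{Main obstacle.} The delicate part is the interaction between levels. The gadget lemma is proved from a \emph{clean} state, but when greedy misbehaves in the outer gadget it leaves the inner gadgets in a perturbed state, and, worse, the $\ell$ candidates inspected at any inner step are drawn from a distribution that depends on the whole current configuration, including the outer mistakes. I would address this by (i) proving a robust version of the gadget lemma that holds uniformly over a neighborhood of the clean state, and (ii) building an explicit coupling showing that, throughout the run, the candidate distribution restricted to any single gadget stays within a constant factor of the idealized one used in the lemma. Making (i) and (ii) hold simultaneously across all three nested levels — so that the three constant-probability bad events can be intersected without the overall probability collapsing, and so that one obtains a \emph{with-constant-probability} statement rather than merely an in-expectation one — is the technical heart of the proof and is what dictates the careful choice of scales.
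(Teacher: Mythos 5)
Your plan is a blueprint rather than a proof, and the blueprint contains an internal arithmetic error. You state a gadget lemma in which one copy of $\mathcal{G}$ forces a factor $\Omega(\ell\log k/\log(\ell\log k))$, and then claim that nesting three copies multiplies to $\Omega\bigl((\ell\log k)^{3}/\log^{2}(\ell\log k)\bigr)$; but $\bigl(\ell\log k/\log(\ell\log k)\bigr)^{3}=(\ell\log k)^{3}/\log^{3}(\ell\log k)$, which is off from both the stated theorem and your own claimed product by a factor of $\log(\ell\log k)$. So either the gadget lemma as phrased is not what you intend, or the composition argument must do something asymmetric that you never spell out.

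This is not an incidental slip: it reflects a structural mismatch with what the theorem actually encodes. In the paper's construction (and in what \cref{thm:greedy_ub} suggests must happen) the exponent $3$ does not arise from three copies of one mechanism. There is one designated optimal cluster $K=\{b,c\}$, and the construction has two qualitatively different epochs. The first epoch exploits the phenomenon behind \cref{lem:greedy_lemma}: the cluster $K$ is hit $\Theta(\ell^{2}\log^{2}k/\log^{2}(\ell\log k))$ times before it is covered, and it is eventually covered by the \emph{wrong} point $b$; this is realized by $t=\Theta(\ell\log k/\log(\ell\log k))$ ``phases'' (a geometric cascade of scales plus a clock set $E_{i}$ whose cost drop is sandwiched between the drops of $b$ and $c$), during each of which $b$ is selected with probability $\approx 1/t$. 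The second epoch is structurally different: $K$ is already covered, but greedy \emph{still} takes a second point ($c$) from it, because $c$ shaves a tiny amount off a huge population $A$ of distant cheap points; this leaves one $a_{i}\in A$ uncovered, with cost $\Theta(\ell\log k)$. The final ratio is $\Theta(\ell\log k)\cdot t^{2}$ with the $t^{2}$ sitting in the denominator of $\mathrm{OPT}$ (via $w(b)\approx 1/t^{2}$). So two of your ``$\ell\log k$'' factors come from setting $\mathrm{OPT}$ small enough that the phases work, and only one comes from greedy paying for an uncovered point — these are not three exchangeable, identically-built sub-instances. A symmetric nesting of three copies of one gadget is not obviously impossible, but it is a genuinely different route, you would have to explain why only two of the three levels incur the $\log(\ell\log k)$ loss, and you have given no construction, no weight/distance bookkeeping, no proof of the gadget lemma, and no resolution of the coupling/robustness issue that you yourself flag as the ``technical heart''. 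As written, the proposal does not establish the statement.
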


We believe that the $\tilde{\Theta}(\ell^3 \log^3 k)$-approximation\footnote{Throughout this paper, we use $\tilde{O}(f(x))$ to denote $O(f (x)\poly \log (f(x)))$ and $\tilde{\Omega}, \tilde{\Theta}$ are defined analogously.}  bound 
 is a truly unexpected twist! However bizarre it may sound now, we hope to  give an adequate intuition behind it in \cref{sec:intuitive}.


\paragraph{Greedy rule is crucial}
The second result of this paper is that the greedy heuristic in \cref{alg:kmpp_greedy} is in fact crucial to getting polylogarithmic approximation. 
To this end, we generalize \cref{alg:kmpp_greedy} by allowing each center to be chosen from $\ell$ candidates by an arbitrary rule. 
The approximation ratio of this general algorithm becomes polynomial in $k$, unless we know the specifics of the rule. Concretely, we prove:

\begin{theorem}[Informal version of \Cref{thm:adversary_lb}]
\label{thm:adversary_lb_informal}
There exists a point set $X  \subseteq \mathbb{R}^d$ and a rule $\fR$ such that a variant of \cref{alg:kmpp_greedy} that uses $\fR$ instead of the greedy rule is $\Omega(k^{1 - 1/\ell})$-approximate with constant probability. 
\end{theorem}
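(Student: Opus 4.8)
The plan is to exhibit, for every $k$ and every $\ell \le k^{0.1}$, a single point set $X\subseteq\mathbb R^d$ (with $d$ chosen as large as convenient) together with one explicit selection rule $\mathcal R$, and then to analyze the resulting random process by hand. Since we only need an existence statement and $\mathcal R$ may be an arbitrary function of the current center set $C$ and the multiset of $\ell$ sampled candidates, $\mathcal R$ will be chosen to be as adversarial as possible: among the $\ell$ candidates it always returns the "least useful" one, in a sense dictated by the instance. The entire argument then reduces to showing that, with constant probability, the interplay between $D^2$-sampling and $\mathcal R$ drags the algorithm along a prescribed bad sequence of choices for all $k$ iterations.

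\textbf{The instance.} The construction iterates the "lonely point between two clusters" phenomenon described in the introduction (a center on the lonely point causes a large drop in cost yet cannot be part of any bounded-ratio solution) into a recursive \emph{decoy gadget}. Set $b:=\lceil k^{1/\ell}\rceil$; the instance is a rooted tree of depth $\Theta(\ell)$ and branching factor $\Theta(b)$, whose $k$ leaves are tiny equal-weight clusters, with sibling separation $\Delta_j$ at level $j$ and the scales exponentially separated, $\Delta_0\gg\Delta_1\gg\dots\gg\Delta_\ell$. Each internal node additionally carries a small \emph{decoy} point, placed and weighted so that, once the subtree below that node contains a center but is otherwise uncovered, the decoy keeps a $\Theta(1/\ell)$-fraction of that subtree's residual $D^2$-mass. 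The weights and the scale ratios are calibrated so that (i) the optimum, realized by one center per leaf, is $\Theta(W\Delta_\ell^{2})$ with $W$ the total weight; and (ii) the "decoy-following" configuration that $\mathcal R$ steers the algorithm into serves a $1-o(1)$ fraction of the weight from decoy centers sitting one scale too coarse, and therefore costs $\Omega(W\Delta_0^{2}) = \Omega(k^{1-1/\ell})\cdot\mathrm{OPT}$, the ratios being chosen so that $\Delta_0^{2}/\Delta_\ell^{2} = \Theta(k^{1-1/\ell})$.

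\textbf{The rule and the process analysis.} Let $\mathcal R$ return, among the $\ell$ candidates, the decoy point (or the decoy-adjacent candidate) of the coarsest scale that is currently "relevant", breaking further ties toward the candidate nearest an existing center. We maintain the invariant that after step $i$ the centers opened so far form a prefix of the planned bad sequence. The key one-step estimate: conditioned on the invariant, property (ii) of the construction guarantees that the relevant decoy carries an $\Omega(1/\ell)$ fraction of the residual $D^2$-mass, so with probability at least $1-(1-\Omega(1/\ell))^{\ell}=\Omega(1)$ at least one of the $\ell$ candidates is that decoy and $\mathcal R$ selects it; the branching is absorbed by letting $\mathcal R$ also make the worst available choice among decoys of equal scale. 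Chaining these per-step estimates over the $k$ iterations — via a supermartingale/potential argument together with a union bound — keeps the invariant alive with constant probability, and on that event the final center set has cost $\Omega(k^{1-1/\ell})\cdot\mathrm{OPT}$, proving the theorem; one also checks that $\mathcal R$ is a legitimate rule, i.e.\ a function of $(C,\text{candidate multiset})$ only, which it is by construction.

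\textbf{Main obstacle.} The delicate part is the joint calibration of decoy weights and scale gaps so that the estimate "the relevant decoy carries $\Omega(1/\ell)$ of the residual $D^2$-mass" holds at \emph{every} one of the $k$ steps and uniformly over the whole range $\ell\le k^{0.1}$, and so that it is quantitatively strong enough (rather than merely constant per step) to survive a union bound over $k$ steps while keeping $\mathrm{OPT}$ large enough that a few well-placed forced centers cannot rescue the cost. Obtaining a clean recursion for how the residual-cost fractions evolve as the active part of the instance descends level by level, and verifying that the decoys — not the many far, fresh subtrees — dominate the sampling distribution along the bad sequence, is where essentially all of the technical effort lies.
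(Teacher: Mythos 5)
Your approach is genuinely different from the paper's, and there is a real gap in the chaining step that the paper's construction is specifically designed to avoid. The paper proves \cref{thm:adversary_lb} with a single, flat gadget: a heavy origin $d$ (in $C_0$), $k-1$ unit-weight satellites $x_1,\dots,x_{k-1}$ at distance $k$ from $d$, one lonely unit-weight point $b$, and one point $c$ at distance $1$ from $b$ carrying weight $\approx k^{1-1/\ell}$. The rule $\fR$ takes $b$ whenever it is sampled and only takes $c$ when all $\ell$ candidates are $c$. The whole quantitative content is the threshold computation $\P(\text{forced to take } c) = p_c^{\ell} \le 1/k$, which pins down $w(c)\approx k^{1-1/\ell}$; then the algorithm only needs the \emph{single} bad event ``take $b$ at some point in $k/2$ steps'' to occur (probability $\Omega(1)$ since the per-step probability is $\Omega(1/k)$), and the single rescue event ``take $c$ within $k/2$ steps'' to \emph{not} occur (probability $\ge 1/2$ since the per-step probability is $\le 1/k$). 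Nothing has to happen at every step.

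Your recursive tree of decoys is a plausible-looking alternative, but the analysis as stated cannot close: you assert that, conditioned on your invariant, the relevant decoy is hit and chosen with probability $1-(1-\Omega(1/\ell))^\ell = \Omega(1)$ per step, and then propose to ``chain these per-step estimates over the $k$ iterations\dots keeping the invariant alive with constant probability.'' A per-step success probability that is merely $\Omega(1)$ but bounded away from $1$ yields an overall survival probability that is $\exp(-\Omega(k))$; a union bound or a supermartingale argument does not change this arithmetic. For the chain to survive $k$ steps with constant probability you would need the invariant-breaking probability to be $O(1/k)$ per step, not $\Omega(1)$, and nothing in your decoy design forces that. (If instead you only intend $\Theta(\ell)$ critical steps, the probability is $\Omega(1)^\ell$, which again vanishes for $\ell$ growing with $k$, a regime you explicitly want to cover.) The paper's construction sidesteps this issue entirely by making the adversary's job a one-shot ``hit $b$ at least once / never be forced to take $c$'' event rather than a maintained invariant. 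Beyond this structural gap, the proposal leaves the entire calibration of weights and scale gaps unspecified --- which you yourself flag as ``where essentially all of the technical effort lies'' --- so it cannot be counted as a proof in its current form.
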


This theorem also suggests that the greedy heuristic, among all others, makes a lot of sense!
On the other hand, we get the following upper bound.  
\begin{theorem}[Informal version of \Cref{thm:adversary_ub}]
\label{thm:adversary_ub_informal}
For any rule $\fR$, a variant of \cref{alg:kmpp_greedy} that uses $\fR$ instead of the greedy rule is $O(  k^{2-1/\ell} \cdot \ell\log k)$-approximate. 
\end{theorem}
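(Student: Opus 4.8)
The plan is to run the Arthur--Vassilvitskii potential argument for $k$-means++~\cite{arthur2007k}, adding one ingredient that controls the harm done by the adversarial selection among the $\ell$ $D^2$-candidates. Fix an optimal clustering $A_1,\dots,A_k$ with optimal $1$-means costs $o_1,\dots,o_k$ and $\mathrm{OPT}=\sum_{j}o_j$, and fix a threshold $\alpha:=\Theta(\ell\log k)$. Call $A_j$ \emph{settled after step $i$} if $\phi(A_j,C_i)\le\alpha\,o_j$. Since adding a center never increases any cost, settledness is monotone in $i$, and at every step $\phi(X,C_i)\le\alpha\,\mathrm{OPT}+\phi\big(\bigcup_{j\in\mathcal U_i}A_j,\,C_i\big)$, where $\mathcal U_i$ denotes the set of still-unsettled clusters. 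Because the cost only decreases along a run, it is therefore enough to show that, in expectation, some step $i\le k$ satisfies $\phi(X,C_i)=O(k^{2-1/\ell}\,\ell\log k)\cdot\mathrm{OPT}$.

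The core of the argument is a \emph{forcing lemma}. By the main sampling lemma of Arthur and Vassilvitskii~\cite{arthur2007k}, a $D^2$-candidate $c$ that lands in an unsettled cluster $A_j$ satisfies $\mathbb E[\phi(A_j,C_i\cup\{c\})\mid c\in A_j]\le 8\,o_j$, so by Markov it settles $A_j$ with probability at least $1-8/\alpha\ge 1-1/\ell$. Call a candidate \emph{effective} if it falls in a currently unsettled cluster \emph{and} its addition settles that cluster; then a single candidate is effective with probability at least $(1-1/\ell)\cdot\phi(\bigcup_{j\in\mathcal U_i}A_j,C_i)/\phi(X,C_i)$, and --- crucially --- if all $\ell$ (i.i.d.) candidates are effective, then whichever one $\fR$ picks, a new cluster becomes settled. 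Writing $\rho_i:=\phi(\bigcup_{j\in\mathcal U_i}A_j,C_i)/\phi(X,C_i)$, this yields
\[
\Pr[\text{a new cluster is settled in step }i\mid C_i]\ \ge\ \big((1-\tfrac1\ell)\,\rho_i\big)^{\ell}\ =\ \Omega\big(\rho_i^{\,\ell}\big),
\]
and choosing $\alpha=\Theta(\ell\log k)$ (rather than $O(1)$) is precisely what avoids an extra $c^{\ell}$ factor here. An identical argument applied to a \emph{single} target cluster $A_j$ shows that if $\phi(A_j,C_i)/\phi(X,C_i)\ge\gamma$, then $A_j$ is settled in step $i$ with probability $\Omega(\gamma^{\ell})$.

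We then split on $\rho_i$. If $\rho_i\le\tfrac12$ at some step $i$, then $\phi(X,C_i)\le 2\alpha\,\mathrm{OPT}=O(\ell\log k)\cdot\mathrm{OPT}$ and we are done (with room to spare). So assume $\rho_i>\tfrac12$ throughout; then each step settles a new cluster with probability $\Omega(2^{-\ell})$, which over $k-1$ steps does not suffice to settle all clusters, so the final cost is dominated by still-unsettled clusters and the remaining task is to bound $\phi(\bigcup_{j\in\mathcal U_k}A_j,C_k)$. Split $\mathcal U_k$ into \emph{barely} unsettled clusters ($\phi(A_j,C_k)<2\alpha\,o_j$), whose total cost is $O(\alpha\,\mathrm{OPT})$ and thus harmless, and \emph{strongly} unsettled ones; settling a strongly unsettled cluster of relative cost $\ge\gamma$ shrinks $\phi$ by a factor $1-\Omega(\gamma)$. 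The plan is to use the single-target bound to argue that a cluster that is strongly unsettled at the end cannot have kept relative cost $\ge\gamma$ at many steps: conditioning on it surviving, it does so through $T$ such steps with probability $\le\exp(-\Omega(T\gamma^{\ell}))$, so with $\gamma:=\Theta((\log k/k)^{1/\ell})$ (hence $T=\Theta(k)$) a union bound over the $\le k$ clusters shows that, with high probability, no never-settled cluster has relative cost $\ge\gamma$ for more than, say, $k/2$ steps. Feeding this back --- together with a bucketing of the run into $O(\log k)$ phases on which $\phi(X,C_i)$ changes by at most a constant factor --- yields $\phi(X,C_k)=O(k\gamma)\cdot O(\log k)\cdot O(\alpha)\cdot\mathrm{OPT}$, and $k\gamma=k^{1-1/\ell}\cdot\mathrm{polylog}$, giving the claimed $O(k^{2-1/\ell}\,\ell\log k)$.

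I expect the last step --- converting the per-step forcing probabilities into a tight bound on the residual cost of never-settled clusters --- to be the main obstacle. Here one must handle the fact that the failure events at different (non-consecutive) steps are only conditionally independent, verify that the phase/bucketing decomposition loses no extra factors, and show that the adversary can do no better than spreading the unsettled mass roughly evenly across $\Theta(k^{1/\ell})$ clusters; this last point is exactly the regime that the $\Omega(k^{1-1/\ell})$ lower bound of \cref{thm:adversary_lb_informal} exploits, and it is what pins down the exponent $2-1/\ell$.
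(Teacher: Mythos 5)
There is a genuine gap in the final step, and the route you take is also quite different from the paper's. Your forcing lemma is sound and is, in spirit, the same mechanism the paper uses in \cref{lem:adversary_lemma}: when all $\ell$ $D^2$-candidates land in an uncovered (for you, unsettled) cluster, the adversary is forced to cover (settle) it, which happens with probability at least $\rho_i^\ell$. The paper turns this observation into a clean recursion $\E[\hit_{\ge k-i}(K)] \le \ell p + (1-p^\ell)\,\E[\hit_{\ge k-i+1}(K)]$ and optimizes it to $O(\ell k^{1-1/\ell})$, then plugs that into the Arthur--Vassilvitskii potential of \cref{sec:greedy_ub}; the additional factor of $k$ that makes the bound $k^{2-1/\ell}$ (rather than $k^{1-1/\ell}$) comes from \cref{prop:monotone3}, where for an adversarial rule the drift in the average uncovered cluster cost $\phi(X_i^\fU,C_i)/u_i$ can only be bounded crudely by a multiplicative factor of $k$. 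Your argument never engages with this drift phenomenon at all, which is why your final accounting $O(k\gamma)\cdot O(\log k)\cdot O(\alpha)\cdot\mathrm{OPT}$, with $k\gamma = k^{1-1/\ell}\cdot\mathrm{polylog}$ and $\alpha=\Theta(\ell\log k)$, actually produces $O(k^{1-1/\ell}\,\ell\,\mathrm{polylog}(k))\cdot\mathrm{OPT}$ --- a factor of $k$ \emph{stronger} than what \cref{thm:adversary_ub_informal} asserts, and essentially tight against the $\Omega(k^{1-1/\ell})$ lower bound of \cref{thm:adversary_lb}. The paper explicitly states (\cref{subsec:lprocess}) that closing this $k^{1-1/\ell}$ versus $k^{2-1/\ell}$ gap requires understanding the $\ell$-point adversarial sampling process and is left open, so a proof yielding $\tilde O(k^{1-1/\ell}\ell)$ should be treated as a strong signal that something has gone wrong rather than as a bonus.

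The concrete failure is in the clause ``a cluster that is strongly unsettled at the end cannot have kept relative cost $\ge\gamma$ at many steps.'' The relative cost $\phi(A_j,C_i)/\phi(X,C_i)$ is not monotone in $i$: the numerator is weakly decreasing, but the denominator can decrease much faster, so a cluster can sit at relative cost $\ll\gamma$ for almost all of the run and cross $\gamma$ only in the last handful of steps, after which the single-target forcing bound has too few steps to act. Bounding the number of steps a surviving cluster spends above threshold $\gamma$ therefore does not bound its final relative cost, and the phase/bucketing device does not repair this --- within a phase where $\phi(X,\cdot)$ moves by a constant factor the relative cost is stable, but you get no control over how the unsettled mass reorganizes \emph{across} phase boundaries, which is exactly the ``average uncovered cluster grows'' drift the paper pays a factor $k$ for. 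Also, the intuition that the adversary spreads mass over $\Theta(k^{1/\ell})$ clusters to pin down the exponent $2-1/\ell$ is off: the $\Omega(k^{1-1/\ell})$ construction in \cref{sec:l_adversary_lower_bound} uses a single heavy cluster, and the exponent $2-1/\ell$ in the upper bound arises as $(1-1/\ell) + 1$, the $+1$ being the unresolved drift term, not a spreading phenomenon.
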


We note that from the perspective of Algorithms with Predictions \cite{mitzenmacher_vassilvitskii2020algorithms_with_predictions}, \cref{thm:adversary_ub_informal} shows that whatever rule we use to generalize the greedy $k$-means++ algorithm, we still know that the algorithm remains somewhat comparable to the optimum solution. 

The gap between the lower bound and upper bound of \cref{thm:adversary_ub_informal,thm:adversary_lb_informal} can be tightened if one analyzes a certain natural stochastic process that can be understood without knowing anything about $k$-means(++); we defer the discussion of this interesting open problem to \cref{subsec:lprocess}. 


\paragraph{Related work}
In this section, we list some related work with algorithms derived from $k$-means++. To see more work about $k$-means in general, see for example the introduction of \cite{cohenaddad2022best_apx}. 

To the best of our knowledge, the only paper exploring \cref{alg:kmpp_greedy} from the theoretical perspective after the seminal paper \cite{arthur2007k} of Arthur and Vassilvitskii, is the paper of Bhattacharya, Eube, R\"{o}glin, and Schmidt \cite{bhattacharya2019noisy}. 
There, the authors show an $\Omega(\ell \log k)$ lower bound. They also consider a problem closely related to analysis of \cref{alg:kmpp_adversary}, we discuss it in greater detail in \cref{sec:l_adversary_upper_bound}. 

The empirical work related to \cref{alg:kmpp_greedy} starts with the PhD thesis of Vassilvitskii \cite{vassilvitskii_thesis} that reports experiments with $\ell = 2$, the comparative study of \cite{celebi2013comparative} on the other hand advertises the choice of $\Theta(\log k)$. This is also the choice taken in the Scikit-learn implementation \cite{scikit-learn} that chooses $\ell = \lfloor 2 + \ln k\rfloor$. 

Another related work to $k$-means++ include the following. Lattanzi and Sohler \cite{lattanzi2019better,choo2020kmeans} present a variant of $k$-means++ inspired by the local search algorithm of \cite{kanungo2004local}. 
A popular distributed variant of the $k$-means++ algorithm is the $k$-means$\|$ algorithm of Bahmani, Moseley, Vattani, Kumar, and Vassilvitskii \cite{bahmani2012scalable,bachem2017distributed,rozhon2020simpleAndTight,makarychev_reddy_shan2020improved} that achieves the same guarantees as $k$-means++ in $O(\log n)$ Map-Reduce rounds. 
Other lines of work study bicriteria guarantees of $k$-means++ \cite{aggarwal2009adaptive,wei2016constant,makarychev_reddy_shan2020improved}, analyze bad instances \cite{brunsch2013bad}, speed-up $k$-means++ by subsampling \cite{bachem2016approximate,bachem2016fast} or adapt it to the setting with outliers \cite{bhaskara2019kmeans++_with_outliers_and_penalties,grunau_rozhon2020adapting_kmeans_to_outliers}.

\paragraph{Roadmap}
In \cref{sec:intuitive}, we explain intuitively the proofs of \cref{thm:greedy_ub,thm:greedy_lb,thm:adversary_ub,thm:adversary_lb}. 
\cref{sec:preliminaries} collects some basic preliminary results that we need to use. 
In \cref{sec:greedy_lemma} we prove the main result necessary to prove \cref{thm:greedy_ub} which is then proved in \cref{sec:greedy_ub}. 
In \cref{sec:log3_lb} we then construct the almost matching lower bound. 
The analysis of \cref{alg:kmpp_adversary} is deferred to \cref{sec:l_adversary_upper_bound,sec:l_adversary_lower_bound}. 

\section{Intuitive explanations}
\label{sec:intuitive}

This section is devoted to an intuitive explanation of \cref{thm:greedy_ub,thm:greedy_lb,thm:adversary_ub,thm:adversary_lb}. 
We start by reviewing the analysis of the $k$-means++ algorithm by \cite{arthur2007k} in \cref{subsec:kmeans++}. 
Next, in \cref{subsec:strike} we identify the issues with generalizing the analysis of $k$-means++ to its greedy variant from \cref{alg:kmpp_greedy}. 
In \cref{subsec:hope}, we discuss a crucial lemma that essentially says that the greedy rule implies that not so many candidate centers are sampled in total from each optimal cluster. 
In \cref{subsec:return}, we show how this lemma implies \cref{thm:greedy_ub}. 
Finally, we present the lower bound in \cref{subsec:lb}.

\subsection{Analyzing $k$-means++}
\label{subsec:kmeans++}

We need to start by reviewing the analysis of the $k$-means++ algorithm by Arthur and Vassilvitskii. 
Reviewing it will allow us to explain which parts of the argument cannot be simply generalized in the analysis of \cref{alg:kmpp_greedy}. 

For $K \subseteq X$ we define $\phi^*(K) = \phi(K, \mu(K))$ where $\mu(K) = (\sum_{c \in K} c)/|K|$ is the center of mass of $K$. We note that $\phi^*(K)$ is the optimal $k$-means cost of $K$ achievable with one center. 
At the core of the $k$-means++ analysis lies the following lemma. 

\begin{lemma}[Informal version of \cref{lem:5apx}]
\label{lem:5apx_intuitive}
Let $X \subseteq \R^d$ be the input point set to the $k$-means problem and $C \subseteq \R^d$ a set of already selected centers. 
Let $K$ be an arbitrary subset of $X$. 
If we sample a point of $K$ such that a point $c \in K$ is sampled with probability $\phi(c, C) / \phi(K, C)$, we have $\E[ \phi(K, C \cup \{c\}) ] \le 5\phi^*(K)$. 
\end{lemma}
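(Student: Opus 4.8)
The plan is to follow the classical Arthur--Vassilvitskii argument. Fix the subset $K$ and write $\mu = \mu(K)$ for its center of mass. When we sample $c \in K$ with probability $\phi(c,C)/\phi(K,C)$ and add it to $C$, the new cost of a point $x \in K$ is $\phi(x, C \cup \{c\}) = \min(\phi(x,C), \|x-c\|^2)$, which is at most $\|x-c\|^2$. Hence
\[
\E\bigl[\phi(K, C \cup \{c\})\bigr] \;=\; \sum_{c \in K} \frac{\phi(c,C)}{\phi(K,C)} \sum_{x \in K} \min\bigl(\phi(x,C), \|x-c\|^2\bigr).
\]
The first step is to handle the two natural regimes of $\phi(K,C)$ separately — or rather, to bound the inner minimum in a way that interpolates. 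A clean route is to use $\min(a,b) \le a$ and $\min(a,b) \le b$ and combine: for each ordered pair $(c,x)$ we will bound $\min(\phi(x,C), \|x-c\|^2)$ by roughly an average of $\phi(c,C)$-type and $\|x-c\|^2$-type quantities, using the triangle inequality $\phi(x,C)^{1/2} \le \phi(c,C)^{1/2} + \|x-c\|$ and the elementary inequality $(u+v)^2 \le 2u^2 + 2v^2$ together with $\min(a,b) \le \sqrt{ab}$ or a convexity trick.

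Concretely, I would split the sum over $c$ into the two cases $\phi(c,C) \le \phi(K,C)/|K|$ and $\phi(c,C) > \phi(K,C)/|K|$ is not needed; instead the standard move is: bound $\min(\phi(x,C),\|x-c\|^2) \le 2\,\phi(c,C) + 2\|c - x\|^2$ coming from triangle inequality applied when $\|x-c\|^2$ is the min, while when $\phi(x,C)$ is the min we use $\min(\phi(x,C),\|x-c\|^2) \le \phi(x,C)$ and sum $\sum_c \phi(c,C)\phi(x,C)/\phi(K,C) = \phi(x,C)$. Averaging these two bounds (each valid pointwise, so their average is a valid bound on the min) yields, after plugging in,
\[
\E\bigl[\phi(K,C\cup\{c\})\bigr] \;\le\; \frac{1}{|K|}\sum_{c\in K}\sum_{x\in K}\|c-x\|^2 \;+\; (\text{cross terms that telescope to }\phi^*(K)).
\]
The first term is exactly $2\,\phi^*(K)$ by the standard identity $\frac{1}{|K|}\sum_{c,x}\|c-x\|^2 = 2\sum_{x}\|x-\mu\|^2 = 2\phi^*(K)$, and a careful accounting of the cross terms gives the remaining $\le 3\phi^*(K)$, for a total of $5\phi^*(K)$.

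The main obstacle — really the only delicate point — is getting the constant to be exactly $5$ rather than something larger: one must not be wasteful in the application of the triangle inequality and the $(u+v)^2 \le 2u^2+2v^2$ step, and one must correctly use that $\sum_{c\in K}\phi(c,C)/\phi(K,C) = 1$ so that the $\phi(x,C)$ contributions collapse to a single copy of $\phi(x,C)$, which then must itself be bounded — the trick here is that we are free to bound $\phi(x,C)$ using whichever $c$ we summed against, via $\phi(x,C) \le 2\phi(c,C) + 2\|c-x\|^2$, and then re-average. Since we are allowed to cite \cref{lem:5apx} in its formal form, I would in fact just reduce to it directly; the sketch above is the self-contained version in case the formal statement is proved here.
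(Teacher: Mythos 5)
Your bottom line --- that this informal statement should simply be referenced to the formal \cref{lem:5apx}, which the paper takes from Arthur--Vassilvitskii and Makarychev--Reddy--Shan without proof --- is exactly what the paper does. There is no self-contained proof of this lemma in the paper; the statement labelled \cref{lem:5apx} is an import, and the text explicitly notes that the original constant was $8$ and the improvement to $5$ is due to Makarychev, Reddy and Shan.

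That said, the self-contained sketch you offer as a backup does not hold up as written. You propose to bound the inner $\min\bigl(\phi(x,C),\|x-c\|^2\bigr)$ by an average of the two pointwise bounds $\phi(x,C)$ and $2\phi(c,C)+2\|c-x\|^2$, and claim the resulting $\phi(x,C)$-contributions ``telescope to $\phi^*(K)$.'' They do not: summing $\phi(x,C)$ over $x\in K$ gives $\phi(K,C)$, and the $\phi(c,C)$-weighted $\phi(c,C)$ term gives $|K|\sum_c\phi(c,C)^2/\phi(K,C)$; both can be arbitrarily large compared to $\phi^*(K)$, so the resulting bound is not of the claimed form. The actual Arthur--Vassilvitskii trick is structurally different: one first bounds the \emph{outer sampling weight} $\phi(c,C)$ by averaging the triangle inequality $\phi(c,C)\le 2\phi(x,C)+2\|x-c\|^2$ over $x\in K$, which splits the expectation into two sums, and then uses $\min\le\|x-c\|^2$ in the branch weighted by $\frac{2}{|K|}\phi(K,C)$ and $\min\le\phi(x,C)$ in the branch weighted by $\frac{2}{|K|}\sum_{x'}\|x'-c\|^2$. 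The point is that in each branch the weight and the bound on the min are ``swapped,'' so both branches collapse to $\frac{2}{|K|}\sum_{c,x}\|c-x\|^2=4\phi^*(K)$, giving $8\phi^*(K)$. Getting from $8$ down to $5$ is not bookkeeping; it is the genuinely more refined argument of Makarychev--Reddy--Shan. So your sketch would not produce $5$ (nor even a bound in terms of $\phi^*(K)$ alone as written), but your fallback of citing \cref{lem:5apx} is precisely the paper's move and is the right one here.
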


Intuitively, the reason why the lemma holds is that either all points $C$ are far away from $\mu(K)$ and then we essentially sample $c \in K$ from a uniform distribution; such a distribution would even lead to $\E[\phi(K, \{c\})] = 2\phi^*(K)$ by a simple averaging argument (cf. \cref{lem:2apx}). The other option is that some point of $C$ is already close to $\mu(K)$, but then $\phi(K, C)$ is already small. 

Whenever we use this lemma, we have $K$ to be a \emph{cluster} of some fixed optimal solution. Here, given a set of centers $C$, we define a cluster $K \subseteq X$ of a center $c \in C$ as the set of the points for which $c$ is the closest center, that is, $K = \{ x\in X : c = \argmin_{c' \in C} \phi(x, c')\}$. 
The usefulness of \cref{lem:5apx_intuitive} comes from its corollary that whenever \cref{alg:kmpp} samples the new center $c_i$ in the $i$-th step and it happens that $c_i \in K$, then the cost of $K$ becomes $5$-approximated, in expectation. This holds even though the algorithm itself has no idea about what the optimal clusters are. 

So, if it somehow happened that each of the $k$ sampled centers was from a different optimal cluster, we would get that \cref{alg:kmpp} is a $5$-approximation. 
Of course, there can be some clusters with no sampled center in the end because we happened to hit some other optimal cluster twice or more. 
This is the reason behind the final $O(\log k)$ approximation. 

Let us be more precise now: Given a set of already taken centers $C_{i}$, we say that a cluster $K$ from the optimal solution is \emph{covered} in the $i+1$th step if $K \cap C_{i} \not= \emptyset$. We accordingly split $X$ into points in covered and uncovered clusters, i.e., $X = X_{i}^\fC \sqcup X_{i}^{\fU}$.\footnote{The notation $A = B \sqcup C$ means $A = B \cup C$ and $B \cap C = \emptyset$.  } 
Finally, we say that a step $i+1$ is good whenever $c_{i+1}$ is sampled from an uncovered cluster and bad otherwise. 
It is the bad steps, where the algorithm really loses ground with respect to the optimal solution. The question is, by how much? 

Intuitively, if there are $u_i$ uncovered clusters remaining after $i$ steps and we had a bad $i+1$th step, in the end we will need to pay in our solution for one of the $u_i$ currently uncovered clusters. 
The cost of the largest currently uncovered cluster may be as large as $\phi(X_i^\fU, C_i)$. However, we should hope that  we only need to pay the cost of the average one, i.e., $\phi(X_{i}^\fU, C_{i})/u_i$. 
The reason for that is that we expect the size of the average uncovered cluster to only go down in the future, i.e., we have $\E\left[\phi(X_j^\fU, C_j)/u_j\right] \le \phi(X_i^\fU, C_i)/u_i$ for $j > i$. 
This is because if each new covered cluster were chosen from uncovered ones uniformly at random, the average cost of an uncovered cluster would stay the same. However, we are actually covering the more costly clusters with higher probability which makes the expected average cost of an uncovered cluster decrease in the future steps. 

So, having a bad step incurs a cost of $\phi(X_{i}^\fU, C_{i})/u_i$. This cost can be huge but, very conveniently, in that case, the probability of the $i+1$-th step being bad was very small to begin with. 
More concretely, the probability of having a bad step is equal to $\phi(X_{i}^\fC, C_{i}) / \phi(X, C_{i})$. Since the numerator of that expression is in expectation at most $5OPT$ by \cref{lem:5apx_intuitive}, we conclude that the expected cost incurred by the fact that we may sample from an already covered cluster in step $i+1$ is  at most
\[
\frac{5OPT}{\phi(X, C_{i})}
\cdot \frac{\phi(X_{i}^\fU, C_{i})}{u_i}
\le \frac{5OPT}{u_i}. 
\]

The variable $u_i$ starts at $u_0 = k$ and then decreases in each step by at most one until $u_{k-1} \ge 1$ in the $k$th step. 
Thus the contribution to the cost over all $k$ steps can be upper bounded by $O(OPT) \cdot (\frac{1}{k} + \frac{1}{k-1} + \dots + \frac{1}{1}) = O(OPT \cdot \log k)$. 

This concludes the intuitive analysis of \cref{alg:kmpp}. 
The formal proof can be found in \cite{arthur2007k} and a more detailed exposition can be found in lecture notes of Dasgupta \cite{Dasgupta}.   

\subsection{$k$-means++ strikes back}
\label{subsec:strike}

Here is the main problem with generalizing the $k$-means++ analysis to \cref{alg:kmpp_greedy}: We can sample many centers from the same cluster $K$ of the optimal solution, before the greedy heuristic decides to pick one! 
In other words: in $k$-means++ we can simply use \cref{lem:5apx_intuitive} to conclude that whenever we sample a point from an optimal cluster $K$, we expect its cost to become a $5$-approximation of the optimal cost. 
But what if we sample a candidate center from $K$ in every  step of the algorithm and the greedy heuristic chooses to actually pick the center only when it results in a bad approximation for $K$? 
In general, we can guarantee only expected $5\cdot (k\ell)$ approximation for the cluster $K$, when a center from it is picked, instead of $5$ approximation. 

In fact, let us now turn this idea into a lower bound. We will now show that with $\ell = \Omega(\log k)$ there is a rule $\fR$ such that \cref{alg:kmpp_adversary} has approximation factor of $\Omega(k)$, with constant probability. A simple generalization of the following argument will then yield \cref{thm:adversary_lb} in \cref{sec:l_adversary_lower_bound}. 

\begin{figure}[th]
    \centering
    \includegraphics{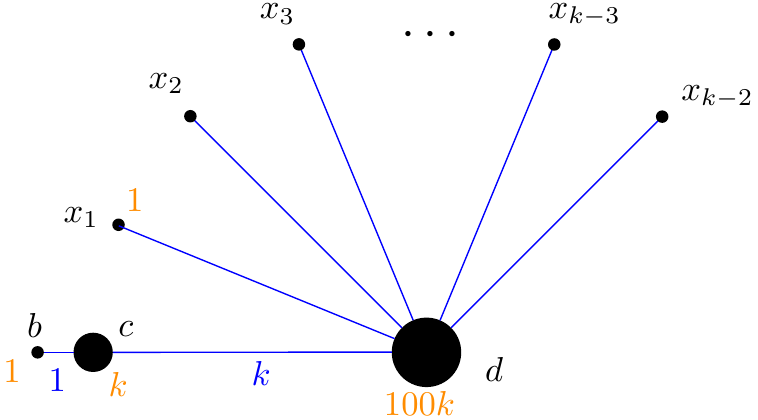}
    \caption{Illustration for the lower bound in the adversarial case. 
    Edge weights defining the metric are in blue while vertex weights are in orange and heavier points have a larger disk. 
    All nodes $x_i$ have weight $1$.\\
    Consider a rule that does not want to take the point $c$ as a center unless it has to, but it wants to take $b$ as a center whenever it can. 
    Such a rule takes $b$ a center with constant probability which results in $\Omega(k)$ approximation. 
    Notice that in this case $K = \{b,c\}$ is an optimal cluster such that  we sample a candidate center from it $\Omega(k)$ times but select each candidate as a center only when it results in a bad approximation of the cost of $K$. }
    \label{fig:simple_lb}
\end{figure}\todo{V: I would add orange ones for $b$ and $x_1$ }

The weighted point set $X$ that we use for the lower bound is in \cref{fig:simple_lb}. We can use integer weights, even though they are not part of the original problem formulation, by replacing an element with weight $w$ by $w$ copies with unit weight.
We will explain the lower bound in the setup where the points of $X$ are endowed with an arbitrary metric, not necessarily the Euclidean one. Generalizations to the Euclidean space are routine and left to the formal proofs. 

We start with a point $d$ in the center of the picture that has weight much larger than all the other points combined; thus with constant probability at least one candidate center is $d$ in the first step and our rule will then select $d$ as the first center. 

In a distance $k$ around $d$, there are $k-2$ dummy points $x_1, \dots, x_{k-2}$, each with weight one. Moreover, there is an additional pair of points $c$ and $b$, with $c$ having weight $k$ and $b$ having weight $1$. Their distance from each other is $1$ and the distance of $c$ from $d$ is $k$. The set $X = \{d, x_1, \dots, x_{k-2}, b, c\}$ is endowed with a tree metric generated by the  defined distance (see \cref{fig:simple_lb}). 

Let us compute the optimal cost of this instance:  
The optimum solution would take as centers all points except of $b$. Its cost is hence $w(b) \cdot d(b, c)^2 = 1$.

Now, let us choose the rule $\fR$ in \cref{alg:kmpp_adversary} as follows: whenever $d$ or $b$ is sampled as a candidate center, we select it. On the other hand, we do not select $c$ as a center unless we have to (which happens only when all candidate centers sampled are $c$). 

We can see that with this rule, with high probability, we select $d$ in the very first step and then in each of the following $k/2$ steps we have only negligible probability of adding $c$ to the set of centers: For that to happen, we would need all candidate centers to be $c$ which happens with probability at most $\left( \frac{w(c)}{w(c) + k/2 \cdot w(x_1)} \right)^\ell = (2/3)^\ell < 1/\poly(k)$ by our assumption $\ell = \Omega(\log k)$. 
On the other hand, the probability of sampling $b$ and thus adding it to the set of centers is $\Omega(1/k)$ in every step (unless $b$ or $c$ is selected as a center), hence it is constant after $k/2$ steps. 

In this case, the solution of the algorithm has to leave some other point from $\{x_1, \dots, x_{k-2}, c\}$ from the set of centers. If it leaves some $x_i$, it needs to pay $w(x_i) \cdot d(x_i, d)^2 = k^2$. If it leaves $c$, it needs to pay $w(c) \cdot d(b,c)^2 = k$. That is, the solution of \cref{alg:kmpp_adversarial} is  $k$-approximate or worse with constant probability. 

A more careful analysis in \cref{thm:adversary_lb} shows that for smaller $\ell$, we should choose the weight of $c$ to be $k^{1 - 1/\ell}$, to get a $\Omega(k^{1- 1/\ell})$ lower bound. 

\subsection{A new hope}
\label{subsec:hope}

Let us observe that greedy $k$-means++ would not be fooled by the example from the previous \cref{subsec:strike}. 
Once the point $c$ is sampled as a candidate center, it is also selected as a center by the greedy rule. 
This is because it results in a bigger drop in the cost than if we picked any of the dummy points $x_j$ (we assume $d$ is already picked). 

So, we may hope that in greedy $k$-means++ each optimal cluster cannot be hit by many candidate centers before it becomes covered. 
Our main technical contribution towards \cref{thm:greedy_ub} is indeed a proof that each optimal cluster $K$ is not hit by many candidate centers by greedy $k$-means++, before it becomes covered or before its cost is comparable with the optimal one. 

Recall that a cluster $K$ is covered in the $i$-th step if $K \cap C_i \not= \emptyset$. We additionally say that $K$ is solved if $\phi(K, C_i) \le 10^5 \phi^*(K)$. 
Finally, we define $\hit(K)$ to be the count of candidate centers $c_i^j \in K$ for all $1 \le i \le k$ and $1 \le j \le \ell$ where we count a hit only when $K$ is not covered or solved in the respective step. Our result is then the following. 

\begin{restatable}{lemma}{greedyLemma}
\label{lem:greedy_lemma}
For any optimal cluster $K$ we have $\E[\hit(K)] =  O(\ell^2 \log^2(k))$. 
\end{restatable}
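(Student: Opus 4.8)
The plan is to fix an optimal cluster $K$ and track the sequence of times at which candidate centers land in $K$ while $K$ is still ``active'' (neither covered nor solved). The key quantity is $\phi(K, C_i)$, the current cost of $K$; since $K$ is unsolved we have $\phi(K, C_i) > 10^5 \phi^*(K)$, and since $K$ is uncovered, $C_i \cap K = \emptyset$. I would first record the basic one-step fact coming from \cref{lem:5apx_intuitive}: conditioned on a candidate center $c$ being sampled \emph{from} $K$ (i.e. conditioned on $c \in K$, which makes $c$ distributed as $\phi(\cdot, C_i)/\phi(K,C_i)$ on $K$), we have $\E[\phi(K, C_i \cup \{c\})] \le 5\phi^*(K)$. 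Hence for such a $c$, Markov's inequality gives that with probability at least, say, $1 - 10^{-4}$, adding $c$ would bring the cost of $K$ down to at most $10^5 \phi^*(K)$, i.e. would \emph{solve} $K$. The greedy rule only helps: the center actually chosen in that step decreases the global cost at least as much as $c$ would, and — this needs a short argument — decreasing the global cost by at least $\phi(K,C_i) - 10^5\phi^*(K)$ forces $\phi(K, C_{i+1}) \le 10^5 \phi^*(K)$ as well, because the cost of $K$ can only go down and the cost outside $K$ can only go down, so the total drop is at least the drop localized to $K$. Wait — more carefully: adding any single center $c'$ changes $\phi$ only by reassigning points to $c'$; the drop in global cost equals the sum over all points newly assigned to $c'$ of (old cost $-$ new cost), which is at least the drop restricted to points of $K$; so if $c \in K$ would have produced a global drop $\Delta$, the greedy choice produces a global drop $\ge \Delta$, but that does not by itself localize the drop to $K$. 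The clean way around this is the standard trick used in the $k$-means++ analysis: compare the greedy choice against $c \in K$ \emph{as applied to $K$ alone}, i.e. use that $\phi(K, C_{i+1}) \le \phi(K, C_i \cup \{c_{i+1}\}) \le \phi(K, C_i \cup \{c\})$ is \emph{false} in general — instead one argues $\phi(X, C_i \cup \{c_{i+1}\}) \le \phi(X, C_i \cup \{c\})$ and then $\phi(K, C_i \cup \{c_{i+1}\}) \le \phi(X, C_i\cup\{c\}) - \phi(X \setminus K, C_i) \le \phi(K, C_i \cup \{c\})$ using $\phi(X\setminus K, C_i \cup \{c\}) = \phi(X\setminus K, C_i)$ since $c \in K$ can only capture points... this also is not immediate. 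So the real technical core — and the main obstacle — is exactly this: turning ``the greedy step is at least as good globally as taking $c \in K$'' into ``$K$ gets solved''. I expect the resolution to require looking at the candidate that greedy actually picks and splitting on whether it lies in $K$ or not, and handling the out-of-$K$ case by noting such a pick does not increase $\phi(K,\cdot)$ while still being globally competitive with the in-$K$ candidate.

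Granting that technical lemma, the combinatorial argument is a geometric-series / stopping-time computation. Consider the steps in which at least one candidate center falls in $K$ while $K$ is active. In any such step, each of the $\ell$ candidates independently lands in $K$ with some probability $p_i = \phi(K,C_i)/\phi(X,C_i)$; conditioned on at least one landing in $K$, the minimum over those in-$K$ candidates of $\phi(K, C_i \cup \{\cdot\})$ is what greedy effectively competes against, and by the union/Markov bound above, with probability $\ge 1 - 10^{-4}$ \emph{at least one} in-$K$ candidate would solve $K$ — and then greedy solves $K$ that step. Therefore, each active step that hits $K$ is, independently, a ``last hit'' with probability $\ge 1 - 10^{-4}$. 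Consequently the number of active steps that hit $K$ is stochastically dominated by a geometric random variable with success probability $1 - 10^{-4}$, so its expectation is $O(1)$. Finally, $\hit(K)$ counts \emph{candidate centers}, not steps, and a single active step can contribute up to $\ell$ hits; moreover within one step the events ``$c_i^j \in K$'' for $j = 1,\dots,\ell$ are i.i.d. Bernoulli($p_i$) with $p_i \le 1$, so the expected number of in-$K$ candidates in a hitting step, conditioned on it being a hitting step, is $\ell p_i / (1 - (1-p_i)^\ell) \le \ell$... actually this only gives $O(\ell)$ per step, hence $O(\ell)$ total in expectation, which is far better than claimed — so either my simplification is too lossy somewhere, or the extra $\ell \log^2 k$ comes from the coupling between the ``solve'' threshold and the fact that $\phi^*(K)$ comparison degrades.

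Reconciling with the stated bound $O(\ell^2 \log^2 k)$, I suspect the true argument cannot get constant success probability per hitting step: the Markov bound against $5\phi^*(K)$ only lets us conclude $\phi(K, C_{i+1}) \le t\phi^*(K)$ with probability $\ge 1 - 5/t$, and to reach the fixed threshold $10^5$ we get success probability only bounded below by a constant — that part is fine — but the issue is that ``$c$ sampled from $K$'' requires conditioning, and the \emph{unconditional} per-candidate probability of hitting $K$ is $p_i$, which can be tiny, so hits are rare but when the cluster is never solved we need a potential-function bound. I would therefore structure the final version as a potential argument: let $\Phi_i = \log(\phi(K,C_i)/\phi^*(K))$ while $K$ is active ($\Phi_i \ge \log 10^5$), show that conditioned on a hitting step the expected decrease in $\Phi$ is $\Omega(1/\ell)$ (from the Markov bound, a constant-probability constant-factor drop, diluted by $\ell$ because only the best of $\ell$ candidates matters and we need an in-$K$ candidate), and show $\Phi$ cannot increase above roughly $\log(k\ell)$ before $K$ gets covered (since after that many hits the cost must have dropped). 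Then the number of hitting steps is $O(\ell \log(k\ell)) = O(\ell \log k)$ in expectation, and multiplying by $O(\ell)$ candidates per hitting step gives $O(\ell^2 \log^2 k)$. The main obstacle, as flagged, is the localization lemma (greedy-competitiveness $\Rightarrow$ progress on $K$'s own cost), together with correctly handling the conditioning so that the per-hit expected potential drop really is $\Omega(1/\ell)$; the rest is a routine optional-stopping / geometric-tail estimate.
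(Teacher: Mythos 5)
You correctly diagnose the central obstacle: the greedy rule optimizes the \emph{global} cost $\phi(X,\cdot)$, so competitiveness against an in-$K$ candidate $c$ (which would solve $K$ with constant probability by \cref{lem:5apx}) does not localize to progress on $\phi(K,\cdot)$ -- greedy can reject $c$ in favor of a center far from $K$, leaving $\phi(K,C_{i+1}) = \phi(K,C_i)$. However, the potential you propose, $\Phi_i = \log\bigl(\phi(K,C_i)/\phi^*(K)\bigr)$, does not resolve this. During exactly those problematic steps -- where a candidate hits $K$ but greedy picks elsewhere -- $\phi(K,C_i)$ does not decrease at all, so your claimed expected drop of $\Omega(1/\ell)$ per hitting step fails. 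Moreover, the upper bound $\Phi_i \lesssim \log(k\ell)$ at the start of the active phase is unjustified: there is no a priori bound on $\phi(K,C_i)/\phi^*(K)$ when $K$ first becomes relevant, since $\phi^*(K)$ can be arbitrarily small compared to $\phi(K,C_i)$.

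The paper's actual proof tracks two different quantities, neither of which is $\phi(K,C_i)$: the size of a geometric \emph{neighborhood} $N_i$ of $\mu(K)$ (split into $N_i^{small}, N_i^{big}$, balls of radius $R_i/100$ and $R_i/10$ where $R_i \approx d(\mu(K),C_i)$), and the global cost $\phi(X,C_i)$. The key idea you are missing is the easy/hard dichotomy of \cref{def:easy_hard}. In the easy case, a candidate from $N_i^{small}$ has constant probability of actually being chosen by greedy (because its cost drop $\gtrsim \phi(N_i^{small},C_i)$ beats the others), so the neighborhood shrinks geometrically after $O(1)$ in-$N_i$ samples. In the hard case, greedy is rejecting $N_i$-candidates only because some \emph{other} candidate gives a global drop of $\gtrsim \phi(N_i^{small},C_i)$; this makes $\phi(X,C_i)$ decrease at a definite rate (the potential $\sigma_i$ proportional to $\log\bigl(\phi(X,C_i)/(|K|R_i^2)\bigr)$ captures exactly this). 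Since $\phi(X,C_i)$ can only fall by an $O(k)$ factor relative to $|K|R_i^2$ before a point of $N_i$ \emph{must} be taken, only $O(\ell\log k)$ in-$N_i$ samples occur before $R_i$ (and hence the neighborhood) shrinks, and the ratio $|K|/|N_i|$ is the right way to convert $N_i$-samples into $K$-hits. Your potential on $\log\phi(K,C_i)$ has no analogue of $\sigma_i$'s dependence on the global cost, so it cannot see the hard-case progress, and your sketch has no quantity playing the role of $|N_i|$ at all. This is not a fixable technicality in your argument; it requires introducing the neighborhood structure and the potential $\rho_i$ tracking $|K|/|N_i^{small}| + |K|/|N_i^{big}|$.
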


Going back to our issue with generalizing the analysis of $k$-means++, we note that the above lemma implies that although we can no longer say that the cost of a cluster that just became covered is $5$ approximated in expectation, we can at least expect it to be $5 \cdot O(\ell^2 \log^2 k)$ approximated. This implies that the final approximation guarantee picks up an additional $O(\ell^2 \log^2 k)$ factor. This almost explains the final $O(\ell^3 \log^3 k)$ guarantee that we achieve in \cref{thm:greedy_ub} -- we pick up the remaining $\ell$ factor inside the main analysis, essentially because the probability of having a bad step increases by a factor of $\ell$. 
In the rest of this section, we sketch the proof of \cref{lem:greedy_lemma}. 

An important measure of progress in our analysis is the size of the \emph{neighborhood} of $K$. 
Given a point set $C_i$, we define $R_i = d(\mu(K), C_i)$ and define the neighborhood $N_i$ of a cluster $K$ in the $i+1$-th step as the set of points closer to $\mu(K)$ than $R_i$ (see \cref{fig:upper_bound}). 
Since we assume $K$ is not solved, i.e., that the current cost $\phi(K, C_i)$ of $K$ is at least $10^5\phi^*(K)$, we know that the distance $R_i$ is much larger than the distance of an average point of $K$ from $\mu(K)$. 
This means that most of the points of $K$ have to lie in $N_i$. For simplicity, we will next assume that $K \subseteq N_i$.  

We will also, for the sake of simplicity, assume that every point in $N_i$ has distance at most $R_i/10$ from $\mu(K)$. 
One reason this assumption makes our life easier is that every point in $N_i$ has cost between $(9R_i/10)^2$ and $(11R_i/10)^2$. That is, up to small factors we can think of all points of $N_i$ having the cost $R_i^2$. 

In every step of the algorithm, we say that we are either in the \emph{easy} case or in the \emph{hard} case.
We say that we are in the easy case if it holds that for $c$ sampled proportionally to its cost we have $\phi(X, C_i) - \phi(X, C_i \cup \{c\}) \le \phi(N_i, C_i)/2$ with probability at least  $1-1/\ell$, otherwise we are in the hard case. 
In the following discussions, we explain what needs to be done if all steps are easy and then if all the steps are hard. The fact that some steps are easy and some are hard does not make the final analysis more complicated. 


\paragraph{Easy case}
The reason why the easy case is in fact easy is that in that case, we can verify that whenever we sample a candidate center from $N_i$, the greedy rule adds it to the set of centers with constant probability. This intuitively means that for every hit of $K$ we are getting a lot of progress in terms of the size of $N_i$ going down rapidly. 

More precisely, we first observe that for a fixed $1 \le j \le \ell$, whenever $c_{i+1}^j \in N_i$, we have $\phi(X, C_i) - \phi(X, C_i\cup \{c_{i+1}^j\}) \ge \phi(N_i, C_i)/2$. This inequality holds because of our simplifying assumption that all points in $N_i$ have distance at most $R_i/10$ from $\mu(K)$: this assumption implies that the cost of each point $x$ in $N_i$ drops from $d(x,C_i)^2 \ge (9R_i/10)^2$ to at most $(d(c_{i+1}^j, \mu(K)) + d(\mu(K), x))^2 \le (2R_i/10)^2$.   

This means that whenever we sample a candidate center $c_{i+1}^j \in N_i$ and we are in the easy case, we also have constant probability of $(1-1/\ell)^{\ell-1} \ge 1/\e$ that all other sampled points result in a smaller cost drop than $c_{i+1}^j$ and thus the greedy rule decides that $c_{i+1} = c_{i+1}^j$. 
We claim this means that in the easy case, a sampled point from $N_i$ means a constant probability of $|N_{i+1}| \le |N_i|/2$. 
To see this, let us order the points of $N_i$ in their increasing distance from $\mu(K)$ as $n_1, n_2, \dots, n_{|N_i|}$ (see \cref{fig:upper_bound}). 
Importantly, if $c_{i+1}^j = n_t$, then $N_{i+1}$ does not contain any of the points $n_{t+1}, n_{t+2}, \dots, n_{|N_i|}$.  
Recall that all points of $N_i$ have the same cost, up to constant factors. 
Hence, if we condition on $c_{i+1}^j \in N_i$, we know it is essentially a point of $N_i$ selected uniformly at random. This means that with constant probability $c_{i+1}^j \in \{n_1, \dots, n_{|N_i|/2}\}$ and hence $|N_{i+1}| \le |N_i|/2$, as we wanted. 

To conclude, note that whenever we sample a point from $N_i$, we also hit $K$ with probability $\phi(K, C_i) / \phi(N_i, C_i) \approx |K|/|N_i|$. But above discussion shows that a hit of $N_i$ implies that $|N_{i+1}| \le |N_i|/2$ (with constant probability). 
Hence, the expected total number of hits of $K$ of the easy case can be upper bounded by $\frac{|K|}{|X|} + \frac{|K|}{|X|/2} + \dots + \frac{|K|}{2|K|} + \frac{|K|}{|K|} = O(1)$. 

\paragraph{ Hard case}
Next, let us turn to the hard case. There, we at least know that with probability $1 - (1 - 1/\ell)^\ell \ge 1-1/\e$, at least one candidate center $c_{i+1}^j$ for $1\le j \le \ell$ makes the cost drop by at least $\phi(N_i, C_i)/2$. 
We hence get that $\phi(X, C_i) - \E[\phi(X, C_{i+1})] \ge (1-1/\e)\phi(N_i, C_i)/2 \ge \phi(N_i, C_i)/4$. 

Note that whenever it is the case that $\phi(N_i, C_i) \le \phi(X, C_i)/k$, this implies that the probability we sample from $N_i$ (and hence from $K$) is $\frac{\phi(N_i, C_i)}{\phi(X, C_i)} \le 1/k$. Even if we sum up over all $k$ steps, the total contribution in terms of number of samples from $K$ is negligible.  So, let us assume that $\phi(N_i, C_i) \ge \phi(X, C_i)/k$.

We will now consider the sequence of sampling steps of the algorithm until a step $j$ when it selects a center from $N_i$. 
Until then, we have for each step $i+1 \le i' < j$ that $N_i = N_{i'}$, that is, the neighborhood of $K$ is the same. 
Notice that the cost of all points in $N_i$ also stays roughly the same during this time. 
This is because of our simplifying assumption that all points of $N_i$ have distance at most $R_i/10$ from $\mu(K)$ and hence the cost of all points of $N_i$ is between $(9R_i/10)^2$ and $(11R_i/10)^2$. This remains so even after sampling new centers with distance larger than $R_i$ from $K$. 

After $\phi(X, C_i)/\phi(N_i, C_i)$ steps, we expect to sample $O(\ell)$ many candidate centers from $N_i$. Meanwhile, the cost of $X$ is expected to drop from $\phi(X, C_i)$ to $3/4 \, \cdot \phi(X, C_i)$ or smaller. 

Recall that we assume that $\phi(N_i, C_i) \ge \phi(X, C_i)/k$. This means that after expected $O(\ell \log k)$ candidate centers sampled from $N_i$, we expect that $\phi(X, C_{i'}) < \phi(N_i, C_{i'})$ which is a contradiction. 
In other words, when we finally reach the step $j$ when a center is picked from $N_i$, this center is some point out of $O(\ell \log k)$ candidate centers sampled from $N_i$ so far between the steps $i$ and $j$. Each of these candidates is essentially a uniformly random point of $N_i$ (see \cref{fig:upper_bound}). 

We can now (as in the easy case) order the points of $N_i$ as $n_1, \dots, n_{|N_i|}$ in the order of increasing distance from $\mu(K)$. We sampled $O(\ell \log k)$ candidate centers from $\{n_1, \dots, n_{|N_i|}\}$ essentially uniformly at random; hence with constant probability none of them hits the set of top $|N_i|/O(\ell \log k)$ points $\{n_{|N_i| - |N_i|/O(\ell \log k)}, \dots, n_{|N_i|}\}$, hence with constant probability we have $|N_{i+1}| \le |N_i|(1-1/O(\ell\log k))$.

\begin{figure}[t]
    \centering
    \includegraphics[width = \textwidth]{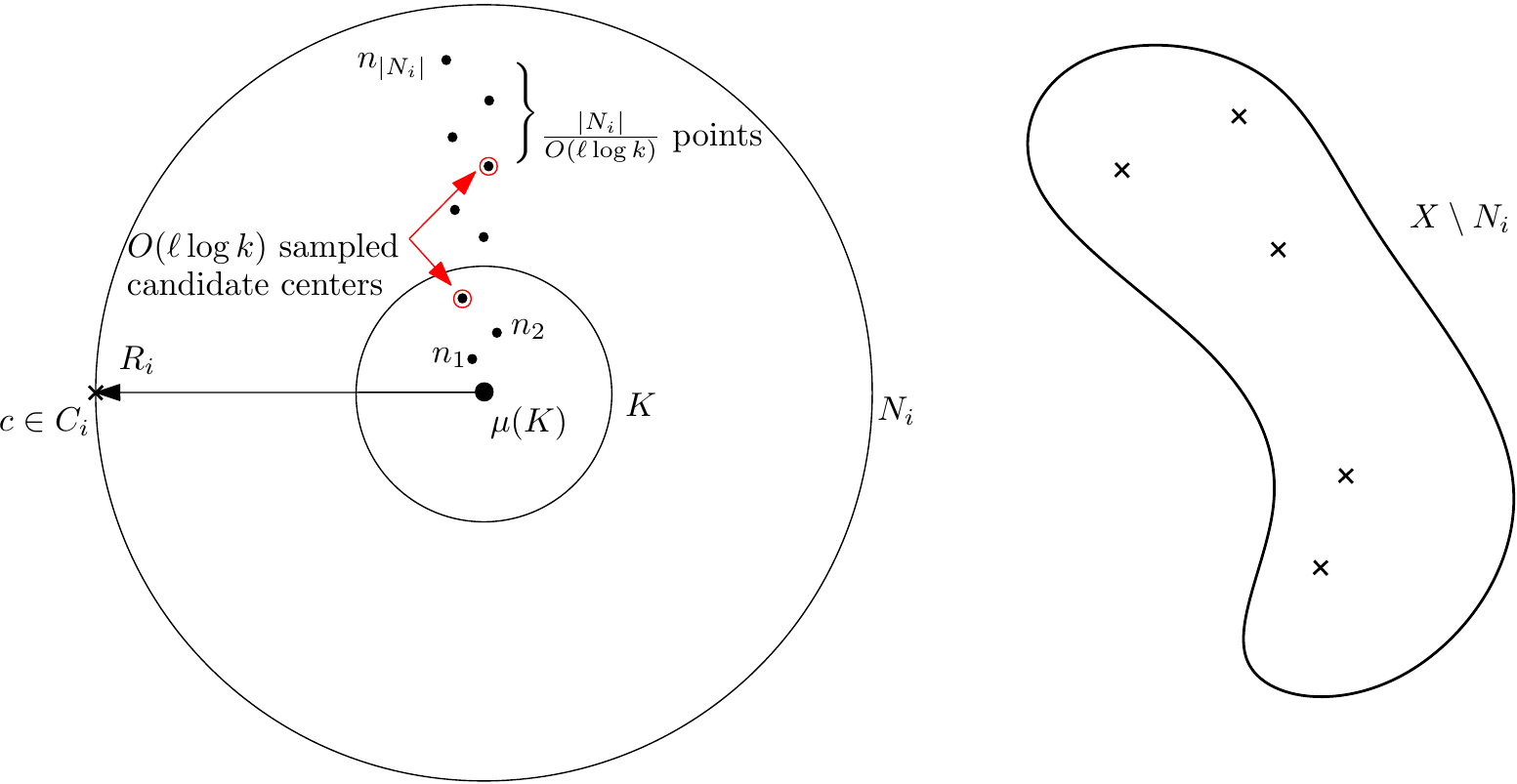}
    \caption{The figure shows an optimal cluster $K$ and its neighborhood $N_i$.
    The points of $N_i$ are sorted as $n_1, \dots, n_{|N_i|}$ based on their distance from $\mu(K)$. Note that the picture is not to scale as our simplifying assumption says that even $n_{|N_i|}$ has distance at most $R_i/10$ from $\mu(K)$. \\
    During the steps $i' \ge i$ until the step $j$ with $N_j \not= N_i$, the relative cost $\phi(N_{i}, C_{i'})/\phi(X, C_{i'})$ of the neighborhood $N_i$ increases from at least $1/k$ to at most $1$, we expect to sample $\ell \log k$ candidate centers from $N_i$ (highlighted by red color). 
    One of these samples needs to be chosen as the new center in the step $j$ that defines the new $N_j \not= N_i$. 
    Since the red points are chosen essentially uniformly at random, we expect the top $|N_i|/(\ell \log k)$ points of $N_i$ to be removed from $N_{i+1}$. }
    \label{fig:upper_bound}
\end{figure}

That is, the size of $|N_i|$ is expected to drop by $1-1/O(\ell \log k)$ factor while $O(\ell \log k)$ candidate centers hit $N_i$, which corresponds to expected $\frac{|K|}{|N_i|}\cdot O(\ell \log k)$ hits of $K$. 
Put differently, after $\frac{|K|}{|N_i|}O(\ell^2\log^2 k)$ hits of $K$ we expect the size of the neighborhood $N_i$ to halve. 

Similarly to the easy case, this implies that the total number of hits of $K$ can be upper bounded by $O(\ell^2\log^2 k) \cdot \left( \frac{|K|}{|X|} + \frac{|K|}{|X|/2} + \dots + \frac{|K|}{2|K|} + \frac{|K|}{|K|}\right) = O(\ell^2\log^2 k)$. This finishes the hard case analysis and hence the whole proof. 

The full proof of \cref{lem:greedy_lemma} is in \cref{sec:greedy_lemma}. The only substantial difference from the above sketch is that as we do not have the simplifying assumption that all points of $N_i$ are $R_i/10$-close to $\mu(K)$, we need to work with two sets $N_i^{small}, N_i^{big}$ instead. 
We remark that one can replace the term $O(\ell^2\log^2 k)$ by $O\left(\ell \log \frac{OPT(1)}{OPT(k)}\right)$ with a substantially easier proof, where $OPT(\tilde{k})$ is the size of the optimal solution with $\tilde{k}$ centers. We give a proof sketch in \cref{sec:alternative}.

\subsection{Return of the guarantees}
\label{subsec:return}

After proving \cref{lem:greedy_lemma}, we are ready to prove the main result, \cref{thm:greedy_ub}. 
This is proven by adapting the $k$-means++ analysis of Arthur and Vassilvitskii \cite{arthur2007k}. As we have seen, their analysis gives $O(\log k)$ approximation guarantee. We pick up additional $O(\ell^2\log^2 k)$ factor after using \cref{lem:greedy_lemma} to conclude that when a cluster $K$ becomes covered, we expect its cost to be $5\cdot \E[\hit(K)]\cdot\phi^*(K) = O(\ell^2\log^2 k)\cdot \phi^*(K)$. Finally, we need one more $\ell$ factor since the probability of a bad step can now be bounded only by $\frac{\ell \phi(X_i^\fU, C_i)}{\phi(X, C_i)}$ instead of $\frac{\phi(X_i^\fU, C_i)}{\phi(X, C_i)}$. This results in $O(\ell^3\log^3 k)$ expected approximation guarantee. 

The above discussion may suggest that everything simply falls in place but there is one important subtlety that our analysis needs to deal with. 
In the $k$-means++ analysis, we paid the cost of $\phi(X_i^\fU, C_i)/u_i$ in every bad step. 
Recall that this was substantiated by the fact that the size of an average uncovered cluster is only expected to go down in the future steps, i.e., for $k$-means++ we can prove that $\E\left[\phi(X_{i+1}^\fU, C_{i+1}) / u_{i+1}\right] \le \phi(X_i^\fU, C_{i})/u_i$. 
However, this bound is not necessarily true for $\ell > 1$. In fact, in the case of an adversarial rule, one can imagine the average size of a cluster increases substantially during the algorithm. 
Whether it is indeed so is an exciting open problem described in  \cref{subsec:lprocess}. This is also the reason behind the mismatch in our upper and lower bounds of \cref{thm:adversary_ub,thm:adversary_lb}. 

Fortunately, the fact that our rule is greedy and not an arbitrary one saves us again. 
To see this, first assume that instead of the greedy rule we use a different, idealized, rule that picks the candidate center $c_{i+1}^j$ that minimizes the expression $\phi(X_{i+1}^\fU, C_{i} \cup \{c_{i+1}^j\})$ instead of $\phi(X, C_{i}\cup \{c_{i+1}^j\})$ like the greedy rule. 
For such an idealized algorithm we have
\[
\E\left[ \phi(X_{i+1}^\fU, C_{i+1})/u_{i+1}\right]
\le 
\E\left[ \phi(X_{i+1}^\fU, C_{i}\cup \{c_{i+1}^1\})/u_{i+1}\right]
\le \phi(X_i^\fU, C_i)/ u_i. 
\]
The first inequality above is saying that our rule picks the best candidate $c_{i+1}^j$ which is at least as good as the first candidate $c_{i+1}^1$. The second inequality is just the reasoning of the original $k$-means++ analysis.  
Therefore, for this idealized algorithm, the original analysis of $k$-means++ immediately generalizes. 

Fortunately, our greedy rule is almost this idealized algorithm!
The difference between the idealized minimization of $\phi(X_{i+1}^\fU, C_{i+1})$ and the actual greedy minimization of $\phi(X, C_{i+1})$ just creates two small mismatches: When the greedy rule considers a candidate center $c_{i+1}^j \in K$, it, in addition to the idealized rule, takes into account 1) the decrease in the cost of already covered clusters $\phi(X_i^\fC, C_i) - \phi(X_i^\fC, C_i\cup \{c_{i+1}^j\})$ and 2) the cost of the newly covered cluster $\phi(K,C_i \cup \{ c_{i+1}^j\})$. 
To give an example of the point (1), we can have a cluster $K$ with $\phi(K, C_i) \approx 0$ that is very close to covered clusters. When we sample $c_{i+1}^j \in K$ and we have $\phi(X_i^\fC, C_i \cup \{c_{i+1}^j\}) \ll \phi(X_i^\fC, C_i)$; for the greedy rule then taking $c_{i+1}^j$ is an attractive option although taking it increases the average cost of the remaining uncovered clusters.

\todo{kuba precte}
Fortunately, the two mismatches can be handled. 
After some calculations, it turns out that the first mismatch implies that we need to pay an additional cost of $\phi(X_i^\fC, C_i)/u_i$ per step in the analysis, but fortunately we already pay this term in the original analysis because of the possibility that the $i+1$th step can be bad. 

One can also show that the second mismatch implies that we need to pay additional factor of $\sum_{K \in \fK_i^\fU} \frac{\phi(K, C_i)}{\phi(X, C_i)}\phi^*(K)$ where $\fK_i^\fU$ is the set of uncovered clusters. Fortunately, if we sum this expression over all steps, this is simply counting the number of hits to each optimal cluster $K$. 
So, in total, we need to pay additional term $ \sum_{K \in \fK_0^\fU} \E[\hit(K)] \phi^*(K)$ in the approximation guarantee, but we are again already paying this term anyway because this is our upper bound on the cost of a cluster $K$ once it becomes covered. 

To conclude, the mismatch between the idealized algorithm and the actual greedy algorithm can be accounted for and the increase in the approximation factor is asymptotically dominated by terms we already have to pay in the analysis anyway for different reasons.

\subsection{Matching lower bound}
\label{subsec:lb}

At this point, we have already a good understanding of where different terms in the approximation guarantee $O(\ell^3 \log^3 k)$ are coming from. 
This allows us to construct a point set where the above analysis is close to tight. 

\todo{change e0 add nt+1}

\begin{figure}[t]
    \centering
    \includegraphics[width = .9\textwidth]{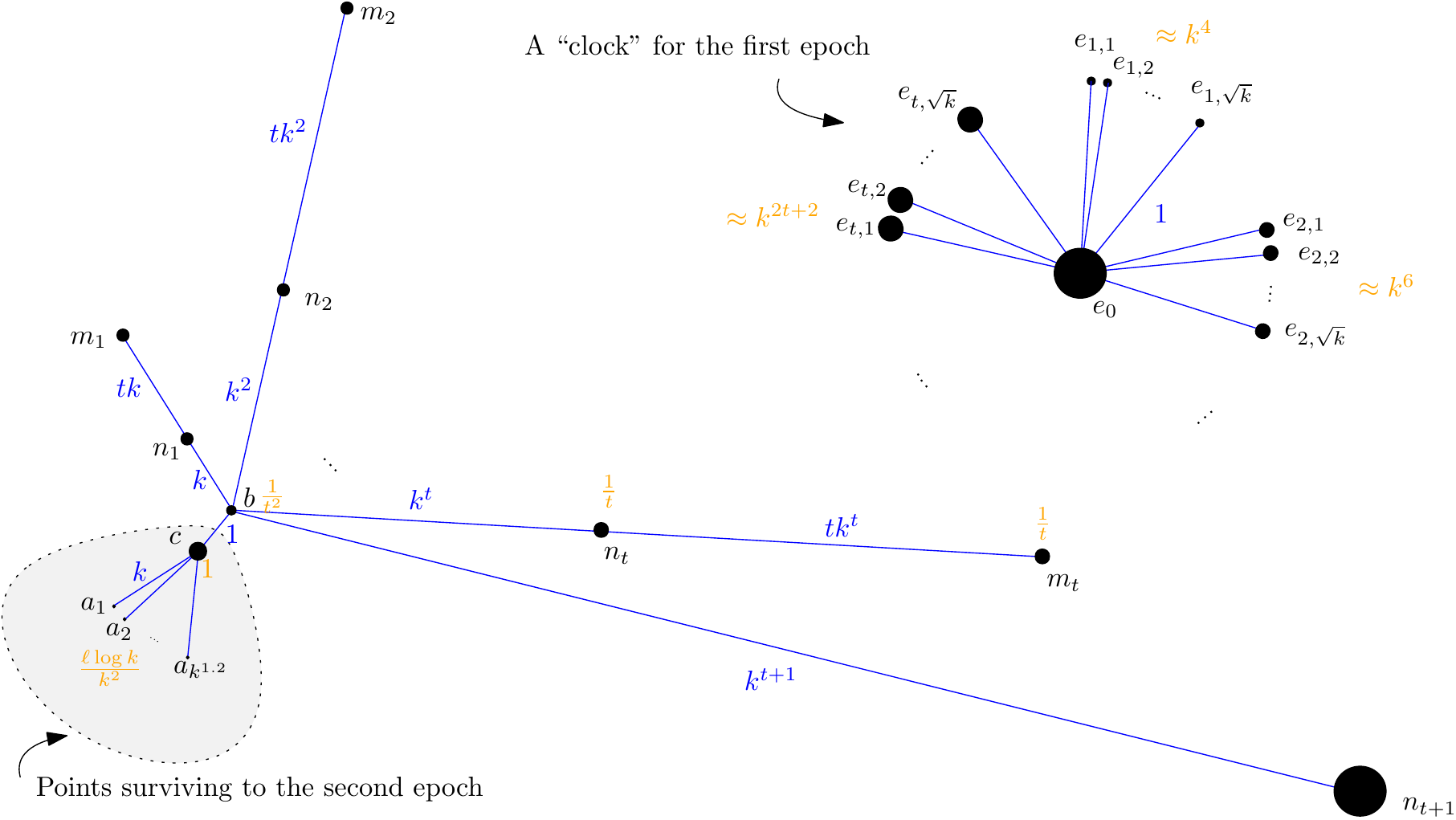}
    \caption{This figure shows the point set $X$ used for the lower bound (up to small changes by $\poly\log (\ell \log k)$ factors). 
    The point weights are orange and heavier points are represented by larger disks. \\
    In this intuitive explanation, the metric is not Euclidean but it is a tree (or a bit more precisely forest) metric induced by the blue edges with the distances given by blue numbers. 
    The image is not to scale, for example, the distance from $b$ to $n_t$ is in fact much larger than the distance from $b$ to $m_2$. \\
    In the first two steps, the points $n_{t+1}$ and $e_0$ are taken. 
    During the $i$th phase of the first epoch we mostly take just points of $E_i = \{e_{i,1},\dots, e_{i,\sqrt{k}}\}$ that serve as a ``clock''. This clock is ticking for enough time so that the algorithm samples $n_i$ as a candidate. Because of $m_i$, $n_i$ is then selected by the greedy rule. 
    This drastically reduces the costs of the not-yet-taken points of $X \setminus E$ so that nothing interesting happens until all points of $E_i$ are taken (the clock for this phase stops ticking) and then we go to the next phase. 
    The aim of these phases is that in each one of them we have a small probability of $1/t$ of sampling  $b$ and taking it as the center. As there are $t$ phases, we get constant probability of taking it. \\
    In the second epoch, we simply show that the greedy $k$-means++ samples and then takes $c$ with constant probability, which increases the approximation factor by additional $\ell\log k$.  }
    \label{fig:adversary_lb_intuitive}
\end{figure}\todo{add $n_{t+1}$.}

\paragraph{Point set definition}
We describe the point set $X$ next (see also \cref{fig:adversary_lb_intuitive}) and then explain intuitively what happens when the greedy $k$-means++ is run on it. We will describe the lengths and weights here up to factors of size $O(\log(\ell \log k))$ that need to be added to make the construction work. 
Given a parameter $k$, there are $1+\tilde{k} = O(k^{1.2})$ points in $X$ and we ask for a solution with $\tilde{k}$ centers, hence exactly one point of $X$ is going to be excluded. We set $t = \ell \log k$.

\begin{enumerate}
    \item There is a point $b$ for which we have $w(b) = \frac{1}{t^2}$.
    \item A point $c$ is at distance 1 from $b$. We have $w(c) = 1$. The points $\{b,c\}$ form one cluster of the optimal solution. While the optimal solution would take $c$ as a center and would not take $b$, we will argue that greedy $k$-means++ takes $b$ as a center with constant probability. 
    \item We have a set of points $N = \{n_1, n_2, \dots, n_{t + 1}\}$ and  $M = \{m_1, \dots, m_t\}$ defined as follows. 
    We have $d(b, n_i) = k^i, d(n_i, m_i) = tk^i$ and $w(n_i) = w(m_i) = \frac{1}{t}$. 
    An exception is the point $n_{t + 1}$ that gets very large weight so that it is selected as a center in the first two steps of the algorithm.  
    \item We have $E = \{e_0, \bigcup_{e \in E_1} e, \dots, \bigcup_{e \in E_t} e\}$ where $E_i = \{e_{i,1}, \dots,  e_{i, \sqrt{k}}\}$. 
    Each point in $E \setminus \{e_0\}$ has distance $1$ from the point $e_0$ which is very far from $\{b,c\}\cup N\cup M\cup A$. The point $e_0$ has very large weight so that it is selected as a center in the first two steps of the algorithm. 
    Each $e_{i,j} \in E_i$ has the same weight $w_i \approx k^{2i+2}$. We postpone the exact definition of $w_i$ as it needs to be quite precise. 
    \item We have $A = \{a_1, \dots, a_{k^{1.2}}\}$ at distance $k$ from $c$. 
    The weight of each $a_i$ is $\frac{\ell\log k }{k^2}$ so their total weight is $\frac{\ell \log k }{k^{0.8}}$. 
\end{enumerate}

The optimal solution on $X$ takes all the points except $b$ as centers and hence it pays $1/(\ell^2 \log^2k)$. 

Here is an intuitive explanation how  greedy $k$-means++ runs on $X$. There are two epochs. In the first epoch, the set $A$ is too small to be discovered by the algorithm, so only $\{b,c\}\cup N\cup M \cup E$ is relevant. Our aim is to show that with constant probability we reach a situation where only the points in $A \cup \{c\}$ are not taken; when this happens we say that the second epoch starts. 

Note that $K = \{b,c\}$ is a cluster in the optimal solution. Hence, we are proving that in the first epoch the cost of $K$ under the greedy $k$-means++ is approximated only by factor $\Omega(\ell^2\log^2 k)$, matching the bound in the proof of \cref{lem:greedy_lemma}. Moreover, for each $i$ the sets $\{n_1, m_1, \dots, n_i, m_i\}$ are playing the role of the neighborhood $N_i$ of $K$ in the analysis from \cref{subsec:hope}, while the set $E$ plays the role of $X \setminus K$. 

\paragraph{ First epoch }
Here is what is going to happen in the first epoch. We can split it into $t$ phases where at the beginning of the $i$th phase all the points in $N_{\ge i+1} \cup M_{\ge i+2} \cup E_{\ge i+1}$ \footnote{The notation $N_{\ge i+1}$ means $\{n_{i+1}, \dots, n_t\}$. } are already selected as centers. 
During the $i$-th phase, we do not sample points from $A \cup E_{<i}$ as they have too small a probability of being sampled. Mostly, we sample just points of $E_i$ since each point there has cost about $k^{2i+2}$. These points serve as a kind of clock. During the time we are sampling mostly points of $E_i$, we also have a small chance of sampling points of $\{b,c\}\cup N_{\le i}\cup M_{\le i+1}$. Since we start with $|E_i| \ge \sqrt{k}$, before we add all of $E_i$ to the set of centers, we expect to sample the point $c$ about $\ell \log k$ times. Similarly, each point in $N_{\le i } \cup M_{\le i+1}$ is expected to be sampled constantly many times. The point $b$ has only probability of around $(\ell\log k)/t^2 = 1/t$ of being sampled. 

Now we can fine-tune the weight of points in $E_i$ so that the drop in the cost if we take $e \in E_i$ is larger than the drop resulting from taking $c$ but smaller than the drop of $b$ (taking $b$ results in a larger drop than $c$, since $c$ is further from $N_{\le i} \cup M_{\le i}$ than $b$). 
This means that the drop of $e$ is smaller than the drop of the points $\{b, n_i, m_{i+1}\}$, but it is larger than the drop of all other points. 
In fact, the reason why we always have a pair $\{n_i, m_i\}$ is to make the cost drop of $n_i$ large: as $n_i$ lies roughly in the center of mass of $c$ and $m_i$, it is a very attractive point to take from the perspective of the greedy rule. 
So, during the $i$th phase we are essentially just waiting until we sample $n_i$. When we encounter it, we add it to the set of centers which decreases the cost of $\{b,c\}\cup N_{< i} \cup M_{\le i}$ dramatically so that in the rest of the phase only the points of $E_i$ are sampled. Also, the point $m_{i+1}$ that is leftover from the previous phase is at some point sampled and selected as a center. 

This process is running for $t$ phases and in each phase, we have probability of $1/t$ of sampling $b$. 
After $b$ is sampled, all other points except for $A \cup \{c\}$ are sampled in the following steps; the weight of all of them is much larger than the weight of points of $A \cup \{c\}$. When this is done, the first epoch is finished. 

\paragraph{Second epoch}
While the first epoch corresponds to the bound that we lose in \cref{lem:greedy_lemma} for not approximating well clusters that get covered, the second epoch corresponds to the rest of the analysis in which we lose additional $\ell \log k$ factor because of the fact that some steps are bad. 
In our case, a bad step means that we select a point $c$ from the optimal cluster $K = \{b,c\}$ that we already covered in the first epoch by selecting $b$. 

The second epoch begins when all points except of $A \cup \{c\}$ are already taken as centers. 
Note that in the following $k^{1.2} - k^{1.1}$ steps we have a constant probability that we sample $c$ as a candidate center. 
In that case, the greedy heuristic decides to pick $c$ over the other candidates from $A$. 
This is because the cost drop induced by any $a \in A$ is simply $w(a) \cdot d(a, b)^2 \approx \ell\log k$, whereas selecting $c$ makes the cost of each $a \in A$ drop by roughly $w(a) \cdot \left((k+1)^2 - k^2\right) \approx \frac{\ell \log k}{k^2} \cdot k = \frac{\ell \log k}{k}$. However, there are at least $k^{1.1}$ points in $A$, hence the drop of $c$ is larger than the drop of any $a \in A$. 

\paragraph{Putting it together}
Putting the two epochs together, we get a constant probability that the greedy $k$-means++ algorithm first fails by covering $K$ using $b$, instead of $c$, and then it fails again by taking $c$ although $K$ is already covered. This means that one point of $A$ is not covered in the end and we need to pay $\ell \log k$ for it, whereas the optimum opts not to take $b$ and hence pays only $1/(\ell^2 \log^2 k)$. 

One problem with the above analysis is that at each epoch we have a constant probability of not sampling $n_i$ before all points of $E_i$ are added to the set of centers. In that case, our analysis fails since it is probable that the algorithm will soon afterward sample $c$ and take it as a center. 
However, adjusting weights by $\poly(\log t) = \poly \log(\ell\log k)$ factors makes the failure probability of one phase smaller than $1/(2t)$ so that we can union bound over them. 

\section{Preliminaries}
\label{sec:preliminaries}

We use $X$ for the input point set. In case $X$ is weighted as discussed below in the lower bound section, every point $x \in X$ comes with nonnegative weight $w(x)$, in the unweighted case we have $w(x) = 1$. 
Whenever we talk about an optimal cluster $K$, we tacitly assume a fixed optimal solution $C^*$ and $K \subseteq X$ is a set of points defined by some $c \in C^*$ as $K = \{ x \in X : c = \argmin_{c' \in C^*} \phi(x, c')\}$. 

We use $d(x,y)$ to denote the distance between two points $x,y$ rather then $||x-y||$, since all our upper bounds generalize to general metric space (the only difference being that \cref{lem:2apx,lem:5apx} require larger constant factors in that case).

For $x \in X$ we use $\phi(x, C) = w(x) \cdot \min_{c \in C}d(x,c)^2$ and for $K \subseteq X$ we define $\phi(K, C) = \sum_{x \in K} \phi(x, C)$. 
For a cluster $K$ we write $\phi^*(K) = \phi(K \mu(K))$ where $\mu(K)$ is the center of mass of $K$, i.e., $\mu(K) = (\sum_{x \in K} x)/w(K)$ for $w(K) = \sum_{x \in K} w(x)$. That is, $\phi^*(K)$ is the smallest cost of $K$ achievable if we have just one center. 

When we write $\E_{\ge i+1}[ X]$ in relation to \cref{alg:kmpp,alg:kmpp_greedy,alg:kmpp_adversary}, we tacitly assume that the randomness of the first $i$ steps is fixed and the expectation is over the randomness in the rest of the algorithm. Similarly, $\E_{i+1}[X]$ is an expectation over the randomness in the step $i+1$. 

\paragraph{Standard lemmas from \cite{arthur2007k}}
We will need some lemmas from \cite{arthur2007k}. 

First, note that for the squared distance we have the approximate triangle inequality
\begin{equation}
\label{eq:triangle_inequality}
    d(x, z)^2 \le 2(d(x,y)^2 + d(y,z)^2)
\end{equation}

The following lemmas can be found in \cite{arthur2007k}. 

\begin{lemma}[Lemma 2.1 in \cite{arthur2007k}]
\label{lem:steiner}
For any $K, c$ we have $\phi(K, c) = \phi^*(K) + |K|c^2$. 
\end{lemma}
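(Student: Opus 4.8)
The plan is to prove this by a direct expansion exploiting the inner-product structure of $\R^d$: this is the classical parallel-axis (Steiner point) identity, and — unlike the approximation lemmas \cref{lem:2apx,lem:5apx} — it relies genuinely on Euclidean geometry rather than merely on the approximate triangle inequality \eqref{eq:triangle_inequality}.

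First I would translate coordinates so that the center of mass of $K$ sits at the origin, i.e.\ work with $x' := x - \mu(K)$ for each $x \in K$; by the definition $\mu(K) = (\sum_{x \in K} w(x) x)/w(K)$ this gives $\sum_{x \in K} w(x) x' = 0$. Writing also $c' := c - \mu(K)$, I would expand the squared distance as $d(x,c)^2 = \|x' - c'\|^2 = \|x'\|^2 - 2\langle x', c'\rangle + \|c'\|^2$ and sum against the weights:
\[
\phi(K, c) = \sum_{x \in K} w(x)\,\|x' - c'\|^2 = \sum_{x \in K} w(x)\,\|x'\|^2 \;-\; 2\Big\langle \sum_{x \in K} w(x)\, x',\; c'\Big\rangle \;+\; \|c'\|^2 \sum_{x \in K} w(x).
\]
The middle (cross) term vanishes because $\sum_{x \in K} w(x)\, x' = 0$; the first term is exactly $\phi^*(K) = \phi(K, \mu(K))$ by definition; and the last term equals $w(K)\,\|c'\|^2 = |K|\cdot d(c, \mu(K))^2$ in the unweighted case (written as $|K|\,c^2$ under the convention that $\mu(K)$ is taken as the origin). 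Adding the three pieces yields the claimed identity.

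There is essentially no obstacle here — the computation is a one-liner. The only points worth care are the bookkeeping between the weighted and unweighted forms (the statement is phrased with $|K|$, matching the unweighted setting of \cite{arthur2007k}; in the weighted case the same argument gives $w(K)\,d(c,\mu(K))^2$), and the remark that this exact identity fails for a general metric space, which is precisely why the downstream lemmas incur larger constants outside $\R^d$.
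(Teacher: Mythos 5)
Your proof is correct and is precisely the standard Steiner/parallel-axis argument used in \cite{arthur2007k} (the paper here only cites the lemma rather than reproving it): translate so $\mu(K)$ is at the origin, expand $\|x'-c'\|^2$, and observe that the cross term vanishes since $\sum_{x\in K} w(x)\,x' = 0$. You also correctly read the paper's shorthand $|K|c^2$ as $|K|\,d(c,\mu(K))^2$ and note the weighted generalization and the Euclidean-specific nature of the identity, so there is nothing to add.
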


\begin{lemma}[Lemma 3.1 in \cite{arthur2007k}]
\label{lem:2apx}
If $c$ is a uniformly randomly selected point of $K$, we have $\E[\phi(K, c)] = 2\phi^*(K)$. 
\end{lemma}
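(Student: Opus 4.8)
The plan is to obtain the identity directly from the parallel-axis (Steiner) formula of \cref{lem:steiner} followed by one averaging step. Write $\mu = \mu(K)$ for the center of mass of $K$, so that by definition $\phi^*(K) = \phi(K,\mu) = \sum_{x \in K} d(x,\mu)^2$ (in the weighted case, insert the weights $w(x)$ throughout and replace $|K|$ by $w(K)$; nothing else changes), and recall that $\mu$ is characterized by $\sum_{x \in K}(x - \mu) = 0$.

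First I would apply \cref{lem:steiner} with the random point $c$ in the role of the single center, which gives the pointwise identity $\phi(K,c) = \phi^*(K) + |K|\, d(c,\mu)^2$; if one prefers not to cite it as a black box, this is the only computation needed, and it follows by expanding $d(x,c)^2 = d(x,\mu)^2 + d(\mu,c)^2 + 2\langle x-\mu,\,\mu-c\rangle$, summing over $x \in K$, and noting that the cross term vanishes by the defining property of $\mu$. Taking the expectation over $c$ chosen uniformly from $K$ and using linearity yields
\[
\E[\phi(K,c)] = \phi^*(K) + |K|\cdot \E\big[d(c,\mu)^2\big].
\]

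The key observation is that the leftover expectation is itself a rescaled copy of $\phi^*(K)$: since $c$ is uniform on $K$, $\E[d(c,\mu)^2] = \tfrac{1}{|K|}\sum_{x \in K} d(x,\mu)^2 = \phi^*(K)/|K|$. Substituting this back gives $\E[\phi(K,c)] = \phi^*(K) + \phi^*(K) = 2\phi^*(K)$, as claimed. I do not expect a genuine obstacle in the Euclidean setting — the whole content is the cancellation of the cross term, which is exactly the statement that $\mu$ is the optimal $1$-means center. The one point that needs care is the general-metric regime flagged in the preliminaries, where the exact Steiner identity is unavailable: there I would instead fix an optimal single center $x^\star$ of $K$ and use the approximate triangle inequality \eqref{eq:triangle_inequality} to bound $d(x,c)^2 \le 2\,d(x,x^\star)^2 + 2\,d(x^\star,c)^2$, then average over both $x \in K$ and $c$ and absorb the factor $2$ into a larger constant — which is why the paper notes that \cref{lem:2apx} holds only with a worse constant in metric spaces.
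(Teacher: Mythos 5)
Your proof is correct and is exactly the standard argument from Arthur and Vassilvitskii: apply the Steiner identity (\cref{lem:steiner}) pointwise, take the expectation over the uniform choice of $c$, and observe that $\E[d(c,\mu)^2] = \phi^*(K)/|K|$, giving $\phi^*(K) + \phi^*(K) = 2\phi^*(K)$. The paper itself does not reprove this lemma — it simply cites it from \cite{arthur2007k} — so there is no alternative route to compare against, and your appended remark about the general-metric regime correctly matches the paper's parenthetical note that \cref{lem:2apx} requires a larger constant there.
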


\begin{lemma}[Lemma 3.2 in \cite{arthur2007k}, Lemma 4.2 in \cite{makarychev_reddy_shan2020improved}]
\label{lem:5apx}
\todo{This should be lemma 3.2 in AV07}Fix two point sets $K,C$ and sample a random points $c \in K$ with probability proportional to $\phi(c, C)$. Then, 
\[
\E[\phi(K, C\cup \{c\})] \le 5\phi^*(K).
\]
\end{lemma}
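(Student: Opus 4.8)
The plan is to expand the expectation over the $D^2$-sampled point $c$ and reduce everything to two classical facts about the center of mass: \cref{lem:steiner} (the Steiner / parallel-axis identity $\phi(K,c)=\phi^*(K)+w(K)\,d(c,\mu(K))^2$) and \cref{lem:2apx} (a uniformly random point of $K$ is a $2$-approximate single center). Concretely, write
\[
\E[\phi(K,C\cup\{c\})]\;=\;\sum_{c\in K}\frac{\phi(c,C)}{\phi(K,C)}\,\phi(K,C\cup\{c\}),
\]
and bound the two ingredients of each summand separately. For the cost after adding $c$, every point $x\in K$ satisfies $\phi(x,C\cup\{c\})\le w(x)\min\{d(x,C)^2,\,d(x,c)^2\}$ since $c$ is now an available center; summing over $x$ gives $\phi(K,C\cup\{c\})\le\min\{\phi(K,C),\,\phi(K,c)\}$. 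For the sampling weight $\phi(c,C)=w(c)d(c,C)^2$, apply the triangle inequality $d(c,C)\le d(c,x)+d(x,C)$ (via the center of $C$ nearest $x$) for every $x\in K$, average over $x$ with weights $w(x)/w(K)$, and use Cauchy--Schwarz together with $\sum_{x\in K}w(x)d(c,x)^2=\phi(K,c)$ to obtain
\[
\phi(c,C)\;\le\;\frac{w(c)}{w(K)}\bigl(\sqrt{\phi(K,c)}+\sqrt{\phi(K,C)}\bigr)^2 .
\]

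Substituting both bounds and writing $a=\phi(K,C)$, $b_c=\phi(K,c)$, each summand is at most $\tfrac{w(c)}{w(K)\,a}\bigl(\sqrt{b_c}+\sqrt a\bigr)^2\min\{a,b_c\}$, which an elementary case check (whether $b_c\le a$ or $b_c>a$) bounds by $\tfrac{4w(c)}{w(K)}b_c$. Summing over $c\in K$ and using $\sum_{c\in K}w(c)\,\phi(K,c)=2\,w(K)\,\phi^*(K)$ — immediate from \cref{lem:steiner} together with $\sum_{c\in K}w(c)d(c,\mu(K))^2=\phi^*(K)$, or equivalently from \cref{lem:2apx} — yields $\E[\phi(K,C\cup\{c\})]\le 8\,\phi^*(K)$. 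So a constant bound of the required shape is essentially forced; the content of \cref{lem:5apx} beyond this is only the sharper constant $5$.

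Getting down to $5$ is the delicate part, and I would handle it by a two-regime split on $D:=d(\mu(K),C)$, mirroring the intuition given after \cref{lem:5apx_intuitive}. If $D$ is at most a small constant times $\sqrt{\phi^*(K)/w(K)}$, then by \cref{lem:steiner} (applied to the center of $C$ nearest $\mu(K)$) the current cost $\phi(K,C)$ is already $O(\phi^*(K))$, and since $d(x,C\cup\{c\})\le d(x,\mu(K))+\min\{D,\,d(\mu(K),c)\}$ the new cost stays $\le(1+\mathrm{const})^2\phi^*(K)$ deterministically. If instead $D$ is large, then by Markov applied to $\phi^*(K)$ most of the weight of $K$ lies within $D/2$ of $\mu(K)$, so the $D^2$-sampling distribution restricted to $K$ is within a constant factor of sampling proportionally to weight; one can then invoke \cref{lem:2apx} while paying only a lower-order factor for the $d(\cdot,C)$ contributions, after splitting the candidate centers by their distance to $\mu(K)$ (so as not to apply the $\min$ twice) and optimizing the thresholds. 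This is exactly the refinement carried out in \cite{makarychev_reddy_shan2020improved}, and I expect reconciling the two regimes to be the main obstacle: the deterministic regime wants a small threshold while the probabilistic regime wants a large one, and closing that gap is precisely where the crude constant $8$ is squeezed to $5$. For the purposes of this paper any fixed constant in place of $5$ would do, since it is absorbed into the asymptotic notation throughout.
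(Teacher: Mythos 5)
The paper does not actually prove \cref{lem:5apx}: it imports the statement by citation, attributing the constant $8$ to Arthur and Vassilvitskii \cite{arthur2007k} and the sharpening to $5$ to \cite{makarychev_reddy_shan2020improved}. So there is no in-paper argument to compare against, and the comparison has to be with the cited sources.

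Your reconstruction of the constant-$8$ bound is the standard Arthur--Vassilvitskii proof and it is complete and correct. The triangle-inequality averaging with Cauchy--Schwarz does give $\phi(c,C)\le\tfrac{w(c)}{w(K)}\bigl(\sqrt{\phi(K,c)}+\sqrt{\phi(K,C)}\bigr)^2$; the bound $\phi(K,C\cup\{c\})\le\min\{\phi(K,C),\phi(K,c)\}$ is right; the two-case check does produce the uniform factor $4w(c)\,\phi(K,c)/w(K)$ in both cases; and summing with $\sum_{c\in K}w(c)\,\phi(K,c)=2\,w(K)\,\phi^*(K)$ (which indeed follows from \cref{lem:steiner}) gives $8\,\phi^*(K)$. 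The only thing you do not actually establish is the constant $5$ in the lemma as stated: your two-regime sketch on $D=d(\mu(K),C)$ is a reasonable guess at the flavor of \cite{makarychev_reddy_shan2020improved}, but you neither carry it out nor verify that the thresholds close, and you say as much. That is the honest thing to say, and your closing remark is the right one — the paper only needs \emph{some} absolute constant here, since it is swallowed into the $O(\cdot)$ whenever the lemma is invoked, so the $8$ you do prove already suffices for every use of \cref{lem:5apx} in this paper.
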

We note that the original version of the lemma from \cite{arthur2007k} had the constant being 8, this was improved to 5 in the work of \cite{makarychev_reddy_shan2020improved}. 

\paragraph{Lemmas for lower bounds}
For lower bounds, it will be easier to work with the more general weighted version of the $k$-means problem. Any lower bound for the weighted version can be lifted to an unweighted one by multiplying all weights by a large number and rounding them to the closest integer. 

Even more generally, it will be more convenient to prove our lower bounds for the generalized $k$-means problem where the input contains not only a weighted point set $X$ and $k$, but also a set of prescribed centers $C_0$. We next use $(X, k, C_0)$ to denote input to such generalized problem. We will use the following fact. 

\begin{lemma}
\label{lem:prescribed}
Suppose that any of the algorithms \cref{alg:kmpp,alg:kmpp_greedy,alg:kmpp_adversary} returns at least $\alpha$-approximate solution on some $k$-means instance $(X,k,C_0)$ with constant probability. Then there exists an instance $(X', k+|C_0|, \emptyset)$ where the algorithm is still at least $\alpha$-approximate with constant probability. 
\end{lemma}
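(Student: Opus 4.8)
The plan is to realize the prescribed centers as extremely heavy ``gadget'' points and let the algorithm discover them before anything else happens. Concretely, assume (as we may, since duplicated centers do not change any cost) that $C_0$ is a set, put $\delta_0 := \min_{c \ne c' \in C_0} d(c,c') > 0$, and let $D$ be the diameter of $X \cup C_0$. Form $X'$ from $X$ by adding, for every $c \in C_0$, one point located exactly at $c$ with weight $W$, where $W$ is chosen so large that $w(X) D^2 / (W\delta_0^2)$ is below any constant we please; if a point of $X$ (or another gadget) already sits at $c$ we simply add $W$ to its weight. The claim is that on the instance $(X', k+|C_0|, \emptyset)$, with probability $1-\eps_0$ (for $\eps_0$ as small as we like) the algorithm uses its first $|C_0|$ steps to select precisely the points of $C_0$, one gadget each, after which it is indistinguishable from a run on $(X,k,C_0)$.

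For the first phase I would induct on the step $j \le |C_0|$: conditioned on the centers chosen so far forming a $j$-element subset of $C_0$, each of the $|C_0|-j$ gadgets not yet covered has cost at least $W\delta_0^2$, every point of $X$ has cost at most $w(X)D^2$, and every already-covered gadget has cost exactly $0$ (a center lies on it). Hence with probability at least $1 - \eps_0/(|C_0|+1)$ all candidate centers sampled in step $j+1$ (all $\ell$ of them for \cref{alg:kmpp_greedy,alg:kmpp_adversary}, the single one for \cref{alg:kmpp}) are gadgets sitting at prescribed centers not chosen so far; the greedy rule of \cref{alg:kmpp_greedy} --- and, crucially for \cref{alg:kmpp_adversary}, \emph{any} rule --- then selects one of them, extending the chosen set to a $(j+1)$-element subset of $C_0$. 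A union bound over the $|C_0|$ steps shows $C_{|C_0|} = C_0$ with probability $\ge 1-\eps_0$. The only point needing care is that greedy might in principle prefer some point of $X$ lying very close to a prescribed center over the gadget there, but such a point has probability $O(w(X)D^2/W)$ of ever being sampled as a candidate, so taking $W$ huge makes this irrelevant.

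For the second phase, note that once the chosen set contains $C_0$ every gadget has cost $0$, so $\phi(X',C) = \phi(X,C)$, the distribution $\propto \phi(\cdot,C)$ on $X'$ is supported on $X$ and coincides with the one on $(X,k,C_0)$, and the values $\phi(X', C\cup\{c\})$ that the rule compares equal $\phi(X, C\cup\{c\})$. Therefore, choosing the rule on the new instance to pick an arbitrary eligible gadget during the first $|C_0|$ steps and then to emulate the rule witnessing the lower bound on $(X,k,C_0)$, the last $k$ steps are --- conditioned on the first phase succeeding --- distributed exactly as a run on $(X,k,C_0)$. Finally $\mathrm{OPT}(X', k+|C_0|) \le \phi(X', C_0 \cup C^*) = \mathrm{OPT}(X,k,C_0)$ for $C^*$ optimal on the original instance, while whenever the emulated run is $\alpha$-bad the algorithm's output $C^{\mathrm{alg}}$ on $X'$ costs $\phi(X', C_0 \cup C^{\mathrm{alg}}) = \phi(X, C_0 \cup C^{\mathrm{alg}}) \ge \alpha\cdot\mathrm{OPT}(X,k,C_0) \ge \alpha\cdot\mathrm{OPT}(X',k+|C_0|)$. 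Combining the two phases, the algorithm is $\alpha$-approximate on $(X', k+|C_0|,\emptyset)$ with probability at least $(1-\eps_0)$ times its success probability on $(X,k,C_0)$, still a constant. The main (mild) obstacle is exactly the first-phase argument that the greedy or arbitrary rule cannot be lured away from the gadgets; everything else is bookkeeping.
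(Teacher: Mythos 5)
Your proof is correct and takes essentially the same approach as the paper's (one-line) proof sketch: place an extremely heavy copy of each prescribed center in $X'$ so that, with high probability, the first $|C_0|$ steps recover exactly $C_0$, after which the run is identical in distribution to the run on $(X,k,C_0)$. You fill in the details the paper leaves implicit (the choice of $W$ via $\delta_0$ and the diameter, the observation that if all $\ell$ candidates in a step are uncovered gadgets then \emph{any} rule must pick one, the explicit construction of the new rule $\fR'$ for \cref{alg:kmpp_adversary}, and the cost comparison $\mathrm{OPT}(X',k+|C_0|)\le\mathrm{OPT}(X,k,C_0)$), but the underlying idea matches.
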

\begin{proof}[Proof Sketch]
We simply define $X'$ to be $X\cup C_0$, where we make the weight of each point in $C_0$ substantially larger than the total weight of $X$ so that the first $|C_0|$ steps only points of $C_0$ can be sampled, with high probability.   
\end{proof}
Above construction can, in general, substantially increase the weights. However, we only use it in case $|C_0| \le 2$ where this is not the case. 

Another useful fact is that if we restrict our solution to $k$-means to points of $X$, we lose only a $2$-factor in approximation. This lemma follows directly from \cref{lem:2apx}. 
\begin{lemma}
\label{lem:wlog_opt_subset}
Whenever there is some solution $C_k$ to an instance of $k$-means, there is also a solution $C'_k \subseteq X$ such that $\phi(X, C'_k) \le 2\phi(X, C_k)$. 
\end{lemma}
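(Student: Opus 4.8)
The plan is to take the given solution $C_k$, consider the partition of $X$ it induces, and replace each center by a well-chosen point of its own cluster, charging the loss to \cref{lem:2apx}.

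Concretely, first fix the clustering induced by $C_k$: for each $c \in C_k$ let $X_c := \{x \in X : c = \argmin_{c' \in C_k} \phi(x,c')\}$, with ties broken arbitrarily, so that the sets $X_c$ partition $X$ and $\phi(X, C_k) = \sum_{c \in C_k} \phi(X_c, c)$. Since $\phi^*(K)$ is by definition the smallest cost of serving a set $K$ with a single center, we have $\phi(X_c, c) \ge \phi^*(X_c)$ for every $c$. Next, apply \cref{lem:2apx} to each (weighted) point set $X_c$: selecting a point of $X_c$ at random (uniformly in the unweighted case, and with probability proportional to its weight in the weighted case) yields expected cost $2\phi^*(X_c)$, so there is at least one point $p_c \in X_c$ with $\phi(X_c, p_c) \le 2\phi^*(X_c) \le 2\,\phi(X_c, c)$. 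Set $C'_k := \{p_c : c \in C_k\} \subseteq X$. Because the $X_c$ partition $X$ and each $p_c$ lies in $C'_k$, every $x \in X_c$ satisfies $\phi(x, C'_k) \le \phi(x, p_c)$, and summing over all clusters gives
\[
\phi(X, C'_k) \;\le\; \sum_{c \in C_k} \phi(X_c, p_c) \;\le\; 2 \sum_{c \in C_k} \phi(X_c, c) \;=\; 2\,\phi(X, C_k).
\]

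Finally, $|C'_k| \le |C_k| \le k$; if the inequality is strict (for instance when some cluster $X_c$ is empty, or two clusters are represented by the same point), add arbitrary points of $X$ to $C'_k$ until it has exactly $k$ elements, which only decreases $\phi(X, \cdot)$. There is no genuine obstacle in this argument; the only points needing a little care are using the weighted form of \cref{lem:2apx} (which is the version actually required, and which holds with the weight-proportional choice of the representative point) and the trivial bookkeeping about empty clusters and padding $C'_k$ up to size $k$ just described.
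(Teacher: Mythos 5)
Your proof is correct and is exactly the argument the paper has in mind: the paper merely remarks that the lemma ``follows directly from \cref{lem:2apx},'' and your write-up supplies the standard chain of reasoning that this shorthand refers to — partition $X$ by the clustering induced by $C_k$, use the existential (and, where needed, weight-proportional) form of \cref{lem:2apx} together with $\phi^*(X_c) \le \phi(X_c,c)$ to replace each $c$ by a data point $p_c \in X_c$ with $\phi(X_c, p_c) \le 2\phi(X_c,c)$, and sum over clusters. The remarks about the weighted variant and padding $C'_k$ up to size $k$ are the right bookkeeping and do not change the argument.
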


Finally, our lower bounds would be easier if they were proven in general metric spaces. 
It is simple, though a bit technical, to make them work in Euclidean spaces. To this end, we need the following result about embedding a star metric to the Euclidean space. 

\begin{fact}
\label{fact:simplex}
There is a way to arrange $d$ vectors $v_1, \dots, v_d$ in Euclidean space in such that for any two $v_i, v_j$, we have $\langle v_i, v_j \rangle = -1/(d-1)$. 
In particular, we get this property by arranging $v_1, \dots, v_d$ as vertices of a $d-1$ dimensional simplex. 
\end{fact}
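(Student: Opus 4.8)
The plan is to prove \Cref{fact:simplex} by an explicit construction together with a short coordinate computation; there is no real obstacle, only some bookkeeping, and the one point to watch is not to confuse the ambient dimension $d$ with the dimension $d-1$ of the simplex.

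First I would work in $\R^d$ with the standard basis $e_1,\dots,e_d$ and let $\bar e = \tfrac1d\sum_{j=1}^d e_j$ be their centroid. Set $u_i := e_i - \bar e$. All $d$ of these vectors lie in the hyperplane $H = \{x\in\R^d : \sum_j x_j = 0\}$, which is isometric to $\R^{d-1}$; so exhibiting the desired vectors in $\R^{d-1}$ is the same as exhibiting the $u_i$ inside $H$ and then carrying the isometry over.

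Next I would compute the two quantities that matter. The vector $u_i$ has $i$-th coordinate $1-\tfrac1d$ and every other coordinate $-\tfrac1d$, so
\[
\|u_i\|^2 = \Bigl(1-\tfrac1d\Bigr)^2 + (d-1)\tfrac1{d^2} = \tfrac{d-1}{d},
\]
the same value for every $i$, and an analogous coordinate count for $i\neq j$ gives $\langle u_i,u_j\rangle = -\tfrac1d$. Hence $v_i := u_i/\|u_i\| = \sqrt{d/(d-1)}\,u_i$ are unit vectors with $\langle v_i,v_j\rangle = (-\tfrac1d)\big/(\tfrac{d-1}{d}) = -\tfrac1{d-1}$ for all $i\neq j$, and by construction they are the vertices of a regular $(d-1)$-simplex centered at the origin, which is exactly the claim.

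If one prefers a non-constructive argument (I would include it only as a remark), one can instead note that the target Gram matrix $G = \tfrac{d}{d-1}I_d - \tfrac1{d-1}J_d$, with $J_d$ the all-ones matrix, has eigenvalue $0$ on the all-ones vector and $\tfrac{d}{d-1}$ on its orthogonal complement; thus $G$ is positive semidefinite of rank $d-1$, and any such matrix is the Gram matrix of $d$ vectors spanning $\R^{d-1}$. The only thing to be careful about in either version is the dimension count $d \mapsto d-1$; once the centroid shift is in place, the verification is a one-line computation.
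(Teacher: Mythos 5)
The paper states \Cref{fact:simplex} as a standalone fact without supplying any proof, so there is no "paper's own argument" to compare against; your task was simply to justify the claim, and your proposal does that correctly. The centroid-shift construction $v_i = (e_i - \bar e)/\|e_i-\bar e\|$ is the standard one, your coordinate computations ($\|u_i\|^2 = (d-1)/d$ and $\langle u_i,u_j\rangle = -1/d$ for $i\neq j$, hence $\langle v_i,v_j\rangle = -1/(d-1)$) are all correct, and the observation that the $u_i$ live in the hyperplane $\sum_j x_j = 0 \cong \R^{d-1}$ properly accounts for the dimension count. The alternative Gram-matrix argument via $G = \tfrac{d}{d-1}I_d - \tfrac1{d-1}J_d$ being PSD of rank $d-1$ is also correct and is a nice non-constructive variant. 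One small caveat worth making explicit if this were to appear in the paper: the phrase "for any two $v_i, v_j$" in the fact must be read as "for any two \emph{distinct} $v_i, v_j$," and the vectors in question are unit vectors (otherwise the norm would not even be determined by the stated constraint); your construction of course produces exactly this.
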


\paragraph{Few more results}
All logarithms in this paper are natural. We will use the following facts. 
\begin{fact}
\label{fact:log_cool}
For any $x > 0$ we have 
$\log(x) \ge 1 - 1/x   
$.
\end{fact}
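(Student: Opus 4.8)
The plan is to derive this from the standard inequality $\log(y) \le y-1$, valid for every $y>0$ (which is in turn just the statement that the graph of $\log$ lies below its tangent line at $y=1$, equivalently $e^{t}\ge 1+t$ for all real $t$). Since $x>0$, I would apply this inequality with $y := 1/x$, obtaining $\log(1/x) \le 1/x - 1$. Using $\log(1/x) = -\log(x)$ and rearranging immediately gives $\log(x) \ge 1 - 1/x$, which is exactly the claim.

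If one prefers a self-contained argument not relying on $\log y \le y-1$, the alternative is a one-line calculus computation: set $g(x) = \log(x) - 1 + 1/x$ on $(0,\infty)$. Then $g(1) = 0$ and $g'(x) = \tfrac{1}{x} - \tfrac{1}{x^{2}} = \tfrac{x-1}{x^{2}}$, which is negative on $(0,1)$ and positive on $(1,\infty)$; hence $g$ attains its global minimum on $(0,\infty)$ at $x=1$, so $g(x) \ge g(1) = 0$ for all $x>0$, i.e. $\log(x) \ge 1 - 1/x$.

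There is no genuine obstacle here — this is a textbook inequality — so the only real decision is presentational: whether to cite $\log y \le y-1$ as known or to include the two-line monotonicity check. I would take the substitution route for brevity, mentioning the equality case $x=1$ in passing since tightness at $x=1$ is what makes the bound useful in the later analysis.
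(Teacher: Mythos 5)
Your proof is correct. The paper states \cref{fact:log_cool} without proof, treating it as a standard calculus fact, so there is no paper argument to compare against; both of your routes — substituting $y = 1/x$ into $\log y \le y-1$, or checking that $g(x) = \log x - 1 + 1/x$ has a global minimum of $0$ at $x=1$ — are valid and entirely standard.
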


\begin{fact}
\label{fact:one_minus_x_lowerbound}
For any $x \in [-1, 0]$ we have 
$\left(1 +\frac{x}{2}\right) \geq e^x$.
\end{fact}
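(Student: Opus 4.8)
The plan is to reduce this one-variable inequality to a trivial polynomial sign check, using a tool already available in the paper. Since $x\in[-1,0]$ forces $1+\tfrac{x}{2}\in[\tfrac12,1]$, in particular $1+\tfrac{x}{2}>0$, the claim $1+\tfrac{x}{2}\ge e^{x}$ is equivalent (taking logarithms) to $\log\!\left(1+\tfrac{x}{2}\right)\ge x$. I would first apply \cref{fact:log_cool} with argument $1+\tfrac{x}{2}$, which gives
\[
\log\!\left(1+\tfrac{x}{2}\right)\;\ge\;1-\frac{1}{1+x/2}\;=\;\frac{x}{2+x}.
\]
It then suffices to show $\frac{x}{2+x}\ge x$ on $[-1,0]$. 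Here $2+x\in[1,2]>0$, so multiplying through is legitimate and the inequality becomes $x\ge x(2+x)$, i.e. $0\ge x+x^{2}=x(1+x)$; on $[-1,0]$ the factor $x$ is $\le 0$ while $1+x$ is $\ge 0$, so their product is nonpositive, which is exactly what is needed. This closes the argument, with equality precisely at $x=0$.

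If one prefers a fully self-contained route, one can instead invoke the standard bound $e^{x}\le\frac{1}{1-x}$ for $x<1$ (equivalently $1-x\le e^{-x}$, which holds for all real $x$) and then verify $(1+\tfrac{x}{2})(1-x)\ge 1$; expanding, this simplifies to $-\tfrac{x}{2}(1+x)\ge 0$, which is again immediate on $[-1,0]$. A third option is pure calculus: set $g(x)=1+\tfrac{x}{2}-e^{x}$, note $g'(x)=\tfrac12-e^{x}$ vanishes only at $x=-\ln 2\in(-1,0)$ with $g'>0$ to its left and $g'<0$ to its right, so $g$ is minimized over $[-1,0]$ at an endpoint; since $g(0)=0$ and $g(-1)=\tfrac12-\tfrac1e>0$, we conclude $g\ge0$ on the whole interval.

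There is essentially no hard part here: the only care required is to confirm that the sign conditions used in the manipulations ($2+x>0$, or $1-x>0$, and $1+x\ge0$) hold throughout $[-1,0]$, and to note that the boundary case $x=0$ is the tight one where equality is attained.
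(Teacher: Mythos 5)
Your proof is correct. The paper states \cref{fact:one_minus_x_lowerbound} without giving a proof at all (it is listed as a bare ``Fact'' in the preliminaries), so there is no authorial argument to compare against; any of your three routes fills that gap, and the first one is especially apt since it reuses \cref{fact:log_cool}, which the paper does state, keeping the argument self-contained within the paper's own toolkit.
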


We also need the following lemma that says that if we learn about a random variable that it is larger than some other independently sampled variables it can only increase its expectation. 

\begin{lemma}
\label{lem:conditioning}
Let $X, Y_1, \dots, Y_t$ be independent random variables. 
We have $$\E[X | X = \min(X, Y_1, \dots, Y_t)] \le \E[X].$$ 
\end{lemma}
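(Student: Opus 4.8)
The plan is to reduce the statement to a standard correlation inequality for a monotone function of a single random variable. First I would rewrite the conditioning event: the equality $X = \min(X, Y_1, \dots, Y_t)$ holds exactly when $X \le Y_i$ for every $i$, so, writing $A$ for this event, the goal is $\E[X \mid A] \le \E[X]$. (If $\Prob{A} = 0$ there is nothing to prove, so assume $\Prob{A} > 0$; we may also assume $\E|X| < \infty$.) Define $g(x) = \prod_{i=1}^t \Prob{Y_i \ge x}$. Each factor is a non-negative, non-increasing function of $x$, hence $g$ is non-increasing and satisfies $0 \le g \le 1$.

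Next I would use independence. Conditioning on $X = x$ leaves the distribution of $(Y_1, \dots, Y_t)$ unchanged, so $\Prob{A \mid X = x} = \Prob{Y_i \ge x \text{ for all } i} = g(x)$. Integrating over the law of $X$ yields $\Prob{A} = \E[g(X)]$ and $\E[X \mid A]\cdot \Prob{A} = \E[X g(X)]$, and both expectations are finite (the first because $g$ is bounded, the second since $\E|X| < \infty$). Therefore $\E[X \mid A] = \E[X g(X)]/\E[g(X)]$, and the claim is equivalent to $\E[X g(X)] \le \E[X]\,\E[g(X)]$, i.e. to the assertion that $X$ and the non-increasing quantity $g(X)$ are non-positively correlated.

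This last inequality I would prove by the usual independent-copy trick: let $X'$ be an independent copy of $X$. Since $g$ is non-increasing, $(X - X')\bigl(g(X) - g(X')\bigr) \le 0$ holds pointwise; taking expectations and expanding the product (using that $X'$ has the same law as $X$ and is independent of $X$) gives $2\E[X g(X)] - 2\,\E[X]\,\E[g(X)] \le 0$, which is exactly what is needed. Combining this with the previous paragraph gives $\E[X \mid A] \le \E[X]$.

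I do not expect a real obstacle here — the content is entirely the monotone-correlation step, and the rest is bookkeeping. The only points requiring a little care are the measure-theoretic details (integrability of $X$, the degenerate case $\Prob{A}=0$) and the handling of ties, which cause no trouble since $Y_i = X$ still makes $X$ attain the minimum, so the event $A$ is correctly $\{X \le Y_i \text{ for all } i\}$ whether or not one counts ties.
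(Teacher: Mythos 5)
Your proof is correct, but it takes a genuinely different route than the paper. The paper restricts to discrete variables, conditions on the exact values of $Y_1,\dots,Y_t$, observes that on $\{Y_i = y_i \,\forall i\}$ the conditioning event becomes $\{X \le \min(y_1,\dots,y_t)\}$ (which by independence can be stripped of the $Y$-information), and then invokes the elementary fact that $\E[X \mid X \le c] \le \E[X]$ for every threshold $c$; averaging over $y$ finishes. You instead collapse the $Y$-dependence into the single non-increasing function $g(x) = \prod_i \Prob{Y_i \ge x}$, rewrite $\E[X\mid A]$ as the ratio $\E[X g(X)]/\E[g(X)]$, and close with the Chebyshev/Harris correlation inequality via the independent-copy coupling $(X-X')(g(X)-g(X')) \le 0$. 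What your route buys: it works verbatim for arbitrary (not necessarily discrete) distributions under the natural integrability hypothesis $\E|X| < \infty$, and it isolates the content of the lemma as a negative-association statement. What the paper's route buys: it is slightly more elementary, avoiding the correlation inequality entirely, at the cost of an explicit discreteness assumption (harmless in context, since the variables there are costs over a finite point set). Both hinge on the same underlying monotonicity, just packaged differently. One small remark: you correctly observe that ties are a non-issue because $\{X = \min(X,Y_1,\dots,Y_t)\}$ is literally the event $\{X \le Y_i \text{ for all } i\}$, which is the same cleanup the paper performs implicitly.
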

\begin{proof}
We assume all variables are discrete. Let us consider any $(y_1, \dots, y_t) \in \R^t$, and fix $Y_i = y_i$ for all $1 \le i \le t$. 
We have that the conjunction of the event $\forall i : Y_i = y_i$ together with the event $X = \min(X, Y_1, \dots, Y_t)$ is equivalent to the conjunction of the event $\forall i : Y_i = y_i$ with $X \le \min(y_1, \dots, y_t)$. 
Hence, we can write
\begin{align*}
    \E[X |  X = \min(X, Y_1, \dots, Y_t)]
    &= \sum_{(y_1, \dots, y_t) \in \R^t} \P(\forall i : Y_i = y_i) \cdot \E\left[ X | X = \min(X, Y_1, \dots, Y_t) \wedge \forall i : Y_i = y_i\right]\\
    &= \sum_{(y_1, \dots, y_t) \in \R^t} \P(\forall i : Y_i = y_i) \cdot \E\left[ X | X \le \min(y_1, \dots, y_t) \wedge \forall i : Y_i = y_i\right]\\
    &= \sum_{(y_1, \dots, y_t) \in \R^t} \P(\forall i : Y_i = y_i) \cdot \E\left[ X | X \le \min(y_1, \dots, y_t)\right]
\end{align*}
However, for any $(y_1, \dots, y_t) \in \R^t$ we have $\E\left[ X | X \le \min(y_1, \dots, y_t)\right] \le \E[X]$ and the lemma follows. 
\end{proof}

\section{Bounds on cluster hits by greedy $k$-means++}
\label{sec:greedy_lemma}

In this section we give a formal proof of  \cref{lem:greedy_lemma}. We start by giving preparatory definitions and the statement of \cref{lem:greedy_lemma_induction} -- inductive variant of \cref{lem:greedy_lemma} -- in \cref{subsec:definitions}.
The lemma is then proved in \cref{subsec:proof_lemma_greedy}.

\subsection{Preparatory definitions and results}
\label{subsec:definitions}

We start by formally defining covered and solved clusters. 

\begin{definition}[Covered Cluster]
\label{def:covered}
Consider some optimal cluster $K$. We refer to $K$ as covered with respect to a set of centers $C$ if $K \cap C \not= \emptyset$. 
\end{definition}

\begin{definition}[Solved Cluster]
\label{def:solved}
Consider some optimal cluster $K$. We refer to $K$ as solved with respect to point set $C$ if 
\[
\phi(K,C) \leq 10^5 \phi^*(K).
\]
\end{definition}
We also say that $C$ is covered/solved in the $i+1$th step of the algorithm if it is covered/solved with respect to $C_i$. 

Next, we formally define the object of interest, that is, the number of points we are going to sample from $K$ during \cref{alg:kmpp_greedy}. 

\begin{definition}[Number of points we will sample]
\label{def:hit}
Let $\hit_{i+1}^j(K)$ be an indicator variable for the conjunction of the following three events:
\begin{enumerate}
    \item $c_{i+1}^j \in K $
    \item $K$ is not covered with respect to $C_{i}$
    \item $K$ is not solved with respect to $C_{i}$
\end{enumerate}
We further define $\hit_i(K) :=  \sum_{j = 1}^\ell \hit_{i}^j(K)$, $\hit_{\ge i}(K) = \sum_{\iota = i}^{k} \hit_\iota(K)$ and $\hit(K) = \hit_{\ge 1}(K)$. 
\end{definition}

Recall that \cref{lem:greedy_lemma} asks us to prove that $\hit(K) = O(\ell^2 \log^2 k)$. 

We will need a few more definitions to state our main technical result \cref{lem:greedy_lemma_induction} which is an inductive version of \cref{lem:greedy_lemma}. 
We start with the definition of the parameter $R_i$. This parameter is, up to constant factors, equal to the distance $d(\mu(K), C_i)$ of the center of mass of $K$ with the closest center of $C_i$. 
However, it may be that $R_i = R_{i+1}$ although $d(\mu(K), C_i) = d(\mu(K), C_{i+1})$. This is because whenever $R_i$ changes, we want it to change by a factor of $10$ for a reason explained later. 

We also define the index $i_0$ as the smallest index of a step where the size of $K$ becomes ``non-negligible''. Before step $i_0$, we have only a negligible probability of sampling a point from $K$. 
We note that from now on, we consider the cluster $K$ fixed, so that we can talk about $R_i$ instead of $R_i(K)$, etc. 

\begin{definition}[Parameters $i_0$ and $R_i$]
\label{def:R}
Fix an optimal cluster  $K$. 
Let $i_0 = \min \{i \in \{1,2,\ldots,k\} \colon \phi(K, C_i) \ge \phi(X, C_i)/k\}$. 
For every $i \in \{i_0,i_0 + 1, \ldots,k\}$, we define a parameter $R_i$ with
\[
R_{i_0} = d(\mu(K), C_{i_0})
\]

and for $i > i_0$, we define 
\[
R_i = \left\{
\begin{array}{ll}
R_{i-1} & d(\mu(K),C_{i}) > R_{i-1}/10 \\
d(\mu(K),C_{i}) & \, \textrm{otherwise.} \\
\end{array}
\right.
\]
\end{definition}
Note that $R_i$ satisfies
\begin{align}
\label{eq:Ri_reasonable} 
d(\mu(K),C_{i}) \le R_i \le 10 d(\mu(K),C_{i}).   
\end{align}
That is, $R_i$ is roughly equal to the distance of the cluster center to the set $C_i$. 
Next, we always implicitly assume that $i \ge i_0$, i.e., $R_i$ is well defined. 

Based on $R_i$ we also define 
\begin{equation}
    \label{eq:def_small}
    N_i^{small} = B(\mu(K), R_i/100)
\end{equation}
and
\begin{equation}
    \label{eq:def_big}
    N_i^{big} = B(\mu(K), R_i/10).
\end{equation}

These definitions correspond to the neighborhood $N_i$ from the intuition section in \cref{subsec:hope}. 
We need two different neighborhoods due to technical difficulties like the fact that only for $x \in N_i^{small}$ we have $d(x, C) = \Omega(R_i)$ (in the proof sketch of \cref{subsec:hope} we worked under a simplifying assumption that all $x \in N_i^{big}$ are also in $N_i^{small}$). The reason for the slightly weird definition of $R_i$ is that we want that whenever $R_{i+1} < R_i$, then $N_{i+1}^{big} \subseteq N_i^{small}$. 

In the following claim, we summarize several basic properties of the cluster $K$ and its neighborhoods $N_i^{small}$ and $N_i^{big}$. 
These claims substantiate some simple intuition about these sets like that all points in $N_i^{small}$ have essentially the same cost.  
A careful reader will note that we do not try to optimize the constants (and we warn that it will get worse). 
\begin{claim}
\label{cl:basic_properties}
Assume $K$ is not solved  with respect to $C_{i}$ for $i \ge 1$. Then, we have:
\begin{enumerate}
\item For each point $x \in N_i^{small}$, we have 
\[
R_i^2/200 \le \phi(x, C_i) \le 3R_i^2.
\]
and for each point $x \in N_i^{big}$ we have
\[
0 \le \phi(x, C_i) \le 3R_i^2. 
\]
\item We have \[
|N_i^{small}|\cdot R_i^2/200 
\le \phi(N_i^{small}, C_i) \le 3 |N_i^{small}| \cdot R_i^2
\]
and
\[
\phi(N_i^{big}, C_i) \le 3|N_i^{big}| \cdot R_i^2. 
\]

    \item We have \[
|K|R_{i}^2/400 \le \phi(K, C_{i}) \le 2|K|R_{i}^2.
\]
    \item At least $|K|/2$ points of $K$ are in $N_i^{small}$. Also, $\phi(K \cap N_i^{small}, C_i) \ge \phi(K, C_i)/800$. 
    
\end{enumerate}
\end{claim}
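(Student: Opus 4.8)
\textbf{Proof plan for \cref{cl:basic_properties}.}

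The strategy is to translate the hypothesis ``$K$ is not solved'' into a quantitative lower bound on $R_i$ relative to the spread of $K$, and then push everything through the approximate triangle inequality \eqref{eq:triangle_inequality} together with \cref{lem:steiner}. The starting point is that $\phi(K,C_i) > 10^5\phi^*(K)$ by \cref{def:solved}. On the other hand, by \cref{lem:steiner} applied to the single closest center $c\in C_i$ realizing $d(\mu(K),C_i)$, we have $\phi(K,C_i)\le \phi(K,c)=\phi^*(K)+w(K)\,d(\mu(K),c)^2 = \phi^*(K)+w(K)R_i^2\cdot(\text{const})$ using \eqref{eq:Ri_reasonable}. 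Comparing the two bounds forces $w(K)R_i^2\gtrsim\phi^*(K)$, i.e.\ the ``radius'' $R_i$ dominates the average squared distance of points of $K$ to $\mu(K)$ by a large constant factor; concretely something like $\phi^*(K)\le w(K)R_i^2/(\text{large const})$. This single inequality is the engine behind all four items.

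I would then prove the items roughly in the stated order. For item (1): any $x\in N_i^{small}$ satisfies $d(x,\mu(K))\le R_i/100$, and every center in $C_i$ is at distance $\ge d(\mu(K),C_i)\ge R_i/10$ from $\mu(K)$; the triangle inequality (ordinary, for distances, since these are honest metric distances) gives $d(x,C_i)\in[R_i/10-R_i/100,\,\text{something}]$, and squaring plus the weight $w(x)$ yields the claimed $[R_i^2/200,\,3R_i^2]$ window after absorbing constants — note the upper bound $d(x,C_i)\le d(\mu(K),C_i)+R_i/100\le R_i+R_i/100$ uses \eqref{eq:Ri_reasonable}, giving $<3R_i^2$ comfortably; the lower bound needs $d(x,C_i)\ge R_i/10-R_i/100=9R_i/100$, whose square is $\ge R_i^2/200$. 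For $x\in N_i^{big}$ the lower bound degrades to $0$ trivially (a center could be as close as $R_i/10$ to $\mu(K)$ and $x$ could coincide with it up to $R_i/10$), while the upper bound $3R_i^2$ survives the same way. Item (2) is just item (1) summed over $N_i^{small}$ and over $N_i^{big}$ respectively.

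Item (3) is the ``cost of $K$ is $\Theta(w(K)R_i^2)$'' statement: the upper bound $\phi(K,C_i)\le 2w(K)R_i^2$ comes from \cref{lem:steiner} with center $c$ nearest $\mu(K)$, writing $\phi(K,C_i)\le\phi^*(K)+w(K)d(\mu(K),c)^2\le \phi^*(K)+w(K)R_i^2$ and then using the not-solved inequality $\phi^*(K)\le \phi(K,C_i)/10^5$ — but this is circular unless I instead bound $\phi^*(K)\le w(K)R_i^2$ directly from the engine inequality above, so I'd route it that way: $\phi^*(K)\le w(K)R_i^2/(\text{large})\le w(K)R_i^2$, giving $\phi(K,C_i)\le 2w(K)R_i^2$. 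The lower bound $\phi(K,C_i)\ge w(K)R_i^2/400$ follows because each point of $K$ is at distance $\ge d(\mu(K),C_i)\ge R_i/10$ from $C_i$ \emph{minus} its own displacement from $\mu(K)$; here I use that on average that displacement is tiny (engine inequality), so a Markov-type argument shows at least half the weight of $K$ lies within $R_i/100$ of $\mu(K)$ — which is exactly item (4) — and those points each cost $\ge (9R_i/100)^2 w(x)$, summing to $\ge w(K)R_i^2/400$ after accounting for the factor-$1/2$ of weight. So items (3) and (4) are proved together: first establish item (4)'s mass bound from Markov on $\sum_{x\in K}w(x)d(x,\mu(K))^2=\phi^*(K)\le w(K)R_i^2/(\text{large})$, then derive the lower bound in (3), and the second half of (4), $\phi(K\cap N_i^{small},C_i)\ge \phi(K,C_i)/800$, combines the lower bound on $\phi(K\cap N_i^{small},C_i)$ (each such point costs $\ge R_i^2 w(x)/200$, total weight $\ge w(K)/2$, so $\ge w(K)R_i^2/400$) with the upper bound $\phi(K,C_i)\le 2w(K)R_i^2$ from (3).

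The main obstacle is purely bookkeeping: making sure the constant chain is consistent and in particular that the ``engine inequality'' $\phi^*(K)\le w(K)R_i^2/C_0$ holds with a large enough $C_0$ to simultaneously (a) feed the Markov argument for item (4) with room to spare and (b) keep item (3)'s constants at $400$ and $2$. The delicate point is that $R_i$ need not equal $d(\mu(K),C_i)$ exactly (it can be up to $10\times$ larger by \eqref{eq:Ri_reasonable}), so every place I use ``distance to $C_i$ is $\ge R_i/10$'' is tight and I must not accidentally use $R_i$ where only $R_i/10$ is justified; conversely every upper bound should freely use $d(\mu(K),C_i)\le R_i$. Once the constants are pinned down the rest is routine triangle-inequality manipulation.
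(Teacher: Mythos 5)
Your proof is correct and matches the paper's argument almost line for line: you derive the same ``engine inequality'' $\phi^*(K)\lesssim |K|R_i^2$ by combining \cref{lem:steiner} with the not-solved hypothesis (the paper's \cref{eq:home}, with constant $10^5-1$), prove item (1) via the triangle inequality and \cref{eq:Ri_reasonable}, sum for item (2), use Markov for the first half of item (4), and assemble items (3) and (4) together in the same order (lower bound of (3) deferred until after (4)). The one thing worth noting is that there is no circularity to route around in item (3): the paper's derivation is not ``use $\phi(K,C_i)$ to bound $\phi(K,C_i)$'' but rather combines two independent inequalities (Steiner gives $\phi(K,C_i)\le |K|R_i^2+\phi^*(K)$, not-solved gives $\phi(K,C_i)\ge 10^5\phi^*(K)$) to extract the engine inequality, and then re-substitutes into Steiner — which is exactly the alternative routing you describe, so your fix is the paper's proof. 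The only cosmetic difference is that for the upper bound in item (1) you use the ordinary triangle inequality and square (giving $\approx 1.2R_i^2$), whereas the paper invokes the approximate triangle inequality \cref{eq:triangle_inequality} (giving $\approx 2R_i^2$); both land comfortably under $3R_i^2$.
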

\begin{proof}
Let $c^* = \argmin_{c \in C_i} d(\mu(K), c)$. Recall that \cref{eq:Ri_reasonable} implies that
\begin{align}
    \label{eq:yes_reasonable}
    d(\mu(K), c) \le R_i \le 10d(\mu(K), c). 
\end{align}


\begin{enumerate}
    \item Consider any $x \in N_i^{big}$. We have 
    \begin{align*}
    \phi(x, C_i) 
    &\le d(x, c^*)^2 \le 2(d(x, \mu(K))^2 + d(\mu(K), c^*)^2) && \text{\cref{eq:triangle_inequality}} \\
    &\le 2\left(\left(\frac{R_i}{10}\right)^2 + R_i^2\right) && \text{\cref{eq:yes_reasonable}}\\
    &\le 3 R_i^2
    \end{align*}
On the other hand, for any $x \in N_i^{small}$, we have
\begin{align*}
    \phi(x, C_i)
    &\ge \left( d(\mu(K), c^*) - d(\mu(K), x) \right)^2\\
    &\ge \left( R_i/10 - R_i/100\right)^2\\
    &\ge R_i^2/200.
\end{align*}

\item This follows by applying bullet point (1) to every point in $N_i^{small}$ and $N_i^{big}$, respectively. 

\item 

Recall that by \cref{lem:steiner}, we have 
\begin{equation}
\label{eq:something}
\phi(K, C_i) \le \phi(K, c^*) 
= |K|\cdot d(\mu(K), c^*)^2 + \phi^*(K)
\le |K|R_i^2 + \phi^*(K). 
\end{equation}
On the other hand, 
\[
\phi(K, C_i) \ge 10^5\phi^*(K)
\]
by \cref{def:solved}. 
Hence, 
\begin{equation}
\label{eq:home}
\phi^*(K) \le \frac{|K|R_i^2}{10^5 - 1}. 
\end{equation}

If we plug \cref{eq:home} back to \cref{eq:something}, we get
\[
\phi(K, C_i) \le |K|R_i^2 + \frac{|K|R_i^2}{10^5 - 1} \le 2|K|R_i^2. 
\]
We postpone the proof of the other inequality to the end of the proof. 
\item 
Define $d$ by $\phi^*(K) = |K|d^2$, that is, $d$ is the average squared distance of a point of $K$ to $\mu(K)$. 
Note that \cref{eq:home} implies that $d^2 \le R_i^2/(10^5 - 1)$. 
By Markov's inequality, at most $|K|/2$ points can have a cost of at least $2d^2$, which is at most $2R_i^2/(10^5 - 1) \le (R_i/100)^2$. 
Hence, at least $|K|/2$ points of $K$ need to be in $N_i^{small}$. 

Moreover, using the above result and bullet point (2), we get
\[
\phi(K \cap N_i^{small} , C_i) 
\ge \frac{|K|}{2} \cdot  \frac{R_i^2}{200}
\]
and using the part of bullet point (3) that we have already proven, we have
\[
2|K|R_i^2  \ge \phi(K, C_i). 
\]
Combining the two bounds, we get
\begin{align*}
\phi(K \cap N_i^{small}, C_i) 
\ge \frac{|K|R_i^2}{400}
\ge \frac{\phi(K, C_i)}{800}.
\end{align*}
\item Finally, using the fact that at least $|K|/2$ points of $K$ are in $N_i^{small}$ by bullet point (4) and that each point $x \in N_i^{small}$ satisfies $\phi(x, C_i) \ge R_i^2/200$ by bullet point (1) we infer that $\phi(K, C_i) \ge |K|/2 \cdot R_i^2/200 = |K|R_i^2/400$ and the proof of bullet point (3) is finished. 

\end{enumerate}

\end{proof}

The reason why we deal with the steps before $i_0$ differently is that for all $i \ge i_0$ we know that $|N_i^{big}|$ is at most $O(k)$ times larger then $|K|$. 
\begin{claim}
\label{cl:small_neighborhood}
Let $i_0$ as defined in \cref{def:R}. For all $i \ge i_0$, we have
\[
|N_i^{big}| \le 4k|K|. 
\]
\end{claim}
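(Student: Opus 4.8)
The plan is to bound $|N_i^{\mathrm{big}}|$ by comparing the cost contributed by points of $N_i^{\mathrm{big}}$ to the total cost $\phi(X, C_i)$, and then invoking the defining property of $i_0$ that $\phi(K, C_i) \ge \phi(X, C_i)/k$. First I would recall from \cref{cl:basic_properties}, bullet point (1), that every point $x \in N_i^{\mathrm{big}}$ has cost $\phi(x, C_i) \ge 0$ — which is useless — so instead I need a \emph{lower} bound on the cost of each point in $N_i^{\mathrm{big}}$. The honest lower bound available is the one for points in $N_i^{\mathrm{small}}$, namely $\phi(x, C_i) \ge R_i^2/200$; points in $N_i^{\mathrm{big}} \setminus N_i^{\mathrm{small}}$ are even farther from $\mu(K)$, hence (by the same triangle-inequality argument, using $d(x,\mu(K)) > R_i/100$) also have cost $\Omega(R_i^2)$. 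So in fact \emph{every} point of $N_i^{\mathrm{big}}$ except possibly those very close to $\mu(K)$ has cost $\ge R_i^2/200$, and we can afford to be crude: write $N_i^{\mathrm{big}} = (N_i^{\mathrm{big}} \cap N_i^{\mathrm{small}}) \sqcup (N_i^{\mathrm{big}} \setminus N_i^{\mathrm{small}})$ and lower-bound the cost of the second part directly.

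The cleanest route, though, is probably just: $\phi(X, C_i) \ge \phi(N_i^{\mathrm{big}} \setminus N_i^{\mathrm{small}}, C_i) + \phi(K, C_i)$ is not quite right because $K$ and $N_i^{\mathrm{big}}$ overlap. Let me restructure. I would argue
\[
\phi(X, C_i) \;\ge\; \phi(N_i^{\mathrm{big}}, C_i) \;\ge\; |N_i^{\mathrm{big}} \setminus N_i^{\mathrm{small}}| \cdot \frac{R_i^2}{200},
\]
where the second inequality holds because each $x \in N_i^{\mathrm{big}} \setminus N_i^{\mathrm{small}}$ has $d(\mu(K), x) > R_i/100$, so by the triangle-inequality calculation in the proof of \cref{cl:basic_properties}(1), $\phi(x, C_i) \ge (d(\mu(K), c^*) - d(\mu(K), x))^2$ — wait, that argument needs $x$ \emph{inside} the ball; for $x$ outside $N_i^{\mathrm{small}}$ but inside $N_i^{\mathrm{big}}$ we instead just note $d(\mu(K),x) \le R_i/10$ and $d(\mu(K), c^*) \ge R_i/10$ could be comparable, so the subtraction bound is not automatically positive. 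The robust fix: the lower bound $R_i^2/200$ for $N_i^{\mathrm{small}}$ used $d(\mu(K),x) \le R_i/100$; to get a uniform bound on $N_i^{\mathrm{big}}$ I will instead lower-bound the cost of $N_i^{\mathrm{small}}$ points only. So I would write
\[
\phi(X, C_i) \;\ge\; \phi(N_i^{\mathrm{small}}, C_i) \;\ge\; |N_i^{\mathrm{small}}| \cdot \frac{R_i^2}{200},
\]
combine with $\phi(X, C_i) \le k \cdot \phi(K, C_i) \le k \cdot 2|K| R_i^2$ from \cref{cl:basic_properties}(3) and the definition of $i_0$, to get $|N_i^{\mathrm{small}}| \le 400 k |K|$. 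Then bound $|N_i^{\mathrm{big}}|$ in terms of $|N_i^{\mathrm{small}}|$: a ball of radius $R_i/10$ need not have only a constant times more points than a ball of radius $R_i/100$ in a general metric, so this last step is the real obstacle.

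The main obstacle is therefore relating $|N_i^{\mathrm{big}}|$ to a quantity we control; there is no packing bound for arbitrary metrics. The resolution has to be to bound $\phi(N_i^{\mathrm{big}}, C_i)$ from below pointwise using that $N_i^{\mathrm{big}}$ lies within distance $R_i/10$ of $\mu(K)$ while $C_i$ lies at distance $\ge R_i/10 \ge d(\mu(K),C_i)$... hmm, actually $d(\mu(K), C_i) \le R_i$ by \cref{eq:Ri_reasonable}, so a center could be as close as $R_i/10$ — no: if $d(\mu(K), c^*) < R_i/10$ then by \cref{def:R} we'd have reset $R_i = d(\mu(K), C_i) \le R_i/10$, contradiction (up to the factor-10 quantization), so in fact $d(\mu(K), c^*) \ge R_i/10$. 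Hence for $x \in N_i^{\mathrm{big}}$, $d(x, C_i) \ge d(\mu(K), c^*) - d(\mu(K), x) \ge R_i/10 - R_i/10 = 0$ — still vacuous in the worst case. So the clean statement really does need $N_i^{\mathrm{small}}$, and the claim's proof must route through it. I expect the actual argument is: $\phi(X,C_i) \ge \phi(N_i^{\mathrm{small}}, C_i) \ge |N_i^{\mathrm{small}}| R_i^2/200$, while $\phi(X, C_i) \le k\phi(K, C_i) \le 2k|K|R_i^2$, giving $|N_i^{\mathrm{small}}| \le 400 k |K|$; and then the stated claim should really be about $N_i^{\mathrm{small}}$ or the constant absorbs a genuine geometric packing argument specific to the construction. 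I would flag this discrepancy and, assuming the intended object is $N_i^{\mathrm{small}}$ (or that $|N_i^{\mathrm{big}}|$ is controlled by a separate argument bounding points in the annulus by their cost being $\ge (R_i/10 - R_i/100)^2 \cdot$ something — which \emph{does} work since annulus points have $d(\mu(K),x) \in (R_i/100, R_i/10]$ so $d(x,C_i)$ could still vanish), conclude with the chain above, being generous with the constant to reach the stated $4k|K|$.
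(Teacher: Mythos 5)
You correctly diagnose the obstacle — for a general step $i > i_0$ the quantity $R_i$ can be as large as $10\,d(\mu(K),C_i)$, so a point $x$ in the annulus $N_i^{\mathrm{big}}\setminus N_i^{\mathrm{small}}$ may be as far as $R_i/10$ from $\mu(K)$ and hence arbitrarily close to $C_i$, giving no uniform lower bound on $\phi(x,C_i)$; and there is no metric packing bound relating $|N_i^{\mathrm{big}}|$ to $|N_i^{\mathrm{small}}|$. There is also a second hole you glossed over: the inequality $\phi(K,C_i)\ge\phi(X,C_i)/k$, which you use to write $\phi(X,C_i)\le k\phi(K,C_i)\le 2k|K|R_i^2$, is guaranteed by \cref{def:R} only at $i=i_0$ (that is how $i_0$ is defined), not for arbitrary $i\ge i_0$. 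So the chain you propose does not even yield the $N_i^{\mathrm{small}}$ version of the claim for general $i$.

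The missing idea is a reduction to $i=i_0$. Since $R_i$ is nonincreasing in $i$ and $N_i^{\mathrm{big}}=B(\mu(K),R_i/10)$ is a ball of nonincreasing radius around the fixed center $\mu(K)$, one has $|N_{i+1}^{\mathrm{big}}|\le|N_i^{\mathrm{big}}|$ for all $i$, so it suffices to bound $|N_{i_0}^{\mathrm{big}}|$. At $i=i_0$ the quantization you were worried about disappears: by definition $R_{i_0}=d(\mu(K),C_{i_0})$ \emph{exactly}, so every $x\in N_{i_0}^{\mathrm{big}}$ satisfies $\phi(x,C_{i_0})\ge\bigl(d(\mu(K),C_{i_0})-d(\mu(K),x)\bigr)^2\ge(R_{i_0}-R_{i_0}/10)^2\ge R_{i_0}^2/2$, directly on the whole big neighborhood with no need to pass through $N_i^{\mathrm{small}}$. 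Summing, $\phi(X,C_{i_0})\ge|N_{i_0}^{\mathrm{big}}|R_{i_0}^2/2$; combining with $\phi(K,C_{i_0})\ge\phi(X,C_{i_0})/k$ (the defining property of $i_0$) and $\phi(K,C_{i_0})\le 2|K|R_{i_0}^2$ from \cref{cl:basic_properties}(3) gives $|N_{i_0}^{\mathrm{big}}|\le 4k|K|$, and monotonicity transfers this to all $i\ge i_0$.
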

\begin{proof}
It suffices to prove that $|N_{i_0}^{big}| \le 4k|K|$, since \cref{def:R} implies that $|N_{i+1}^{big}| \le |N_i^{big}|$ for any $i$. 

Recall that $R_{i_0} :=  d(\mu(K), C_{i_0})$.  
Consider any $x \in N_{i_0}^{big}$. 
We have 
\begin{equation}
\label{eq:phone}
    \phi(x, C_{i_0}) 
    \ge \left( d(C_{i_0}, \mu(K)) - d(\mu(K), x) \right)^2 
    \ge (R_{i_0} - R_{i_0}/10)^2 
    \ge R_{i_0}^2/2. 
\end{equation}
This implies that 
\begin{align}
\label{eq:tony}
    \phi(X, C_{i_0})
    \ge \phi(N_{i_0}^{big}, C_{i_0})
    \ge |N_{i_0}^{big}| \cdot R_{i_0}^2/2.
\end{align}

One the other hand, using  bullet point (3) of \cref{cl:basic_properties} we get that 
\begin{equation}
\label{eq:tonik}
2|K|R_{i_0}^2 \ge \phi(K, C_{i_0}). 
\end{equation}
By definition of $i_0$ we know that $\phi(K, C_{i_0}) \ge \phi(X, C_{i_0})/k$.  
Putting this together with \cref{eq:tony,eq:tonik}, we get
\begin{align*}
    2|K|R_{i_0}^2 \ge |N_{i_0}^{big}| \cdot R_{i_0}^2/(2k)
\end{align*}
or
\[
|N_{i_0}^{big}| \le 4k|K|,
\]
as needed. 
\end{proof}

\subsection{Inductive version of the main result}
\label{subsec:proof_lemma_greedy}

Finally, we are ready to state the technical, inductive version of \cref{lem:greedy_lemma}. 
The lemma is a potential argument: we cook up three potentials such that their sum in the step $i$ is an upper bound on how many candidate centers are expected to be sampled from $K$. 
The advantage of such an argument is that once the potentials are written down, checking the correctness of the argument reduces to algebra. The disadvantage is that it is hard to get an intuition about the high-level picture (compare with the original analysis of $k$-means++ and its rewording of Dasgupta \cite{Dasgupta}). Hence, we first invite the reader to read \cref{subsec:return} where we try to convey the high-level intuition about the argument. This section is primarily optimized for making it easy to check the correctness of the proofs. 

In \cref{lem:greedy_lemma_induction}, we track the number of hits to $K$ using three potentials. 
\begin{enumerate}
    \item The first potential $\pi_i$ is ``dropping'' a potential of $O(\ell/k)$ uniformly in every step. 
    This allows us to argue about some edges cases like when $\phi(K, C_i) < \phi(X, C_i)/k$. In these cases we have probability of at most $\ell/k$ of hitting $K$ and the drop in $\pi_i$ can pay for that. 
    \item Whenever the value of $R_i$ changes, we ``refill'' the value of potential $\sigma_i$ to its maximum value of $O\left( \ell\log k \cdot \frac{|K|}{|N_i^{small}|}\right)$. 
    The purpose of $\rho_i$ is to pay for this refill of $\sigma_i$. 
    Intuitively, we hope that whenever $R_i$ drops, the size of $|N_i^{big}|$ drops by at least $(1 - O(1/\ell\log k))$ multiplicative factor (see the intuition in \cref{subsec:hope}). In that case, the drop in $\rho_i$ is proportional to exactly $O(\frac{\ell^2\log^2 k }{\ell \log k} \cdot \frac{|K|}{|N_i^{small}|})$, so it can pay for the refill of $\sigma_i$. 
    \item The third potential $\sigma_i$ is designed to pay for the possibility of hitting $K$ until we redefine $R_i$. Intuitively, if we are in the hard case when hitting $K$ does not result in covering it, we are expecting the cost of our solution to drop. Accordingly, drop in the overall cost $\phi(X, C_i)$ results in the drop in $\sigma_i$ and this drop is paying for the possibility of hitting $K$. 
\end{enumerate}

\begin{lemma}
\label{lem:greedy_lemma_induction}
Fix an optimal cluster $K$. 
Assume we have already sampled the first $i$ points where $i \in \{1,2,\ldots,k\}$.

Then, we have 
\[
\E_{\geq i+1}[\hit_{\ge i+1}(K)] \le \pi_i + \rho_i + \sigma_i, 
\]

where
\[
\pi_i  = \left(1 - \frac{i}{k} \right) \cdot \ell + 10^{50}\ell\log k,
\]
\[
\rho_i = 10^{50}\ell^2\log^2(k) \cdot \left( 4 - \frac{|K|}{|N_i^{small}|} - \frac{|K|}{|N_i^{big}|} \right) 
\]
and
\[
\sigma_i = 10^{25}\ell \cdot  \log\min\left( 10^5 k, \frac{400\phi(X, C_i)}{|K|R_i^2}\right) \cdot \frac{|K|}{|N_i^{small}|}
\]
This is only when $K$ is neither solved nor covered with respect to $C_i$. Otherwise, we define $\pi_i = \rho_i = \sigma_i = 0$.

\end{lemma}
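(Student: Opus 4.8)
The plan is to prove \cref{lem:greedy_lemma_induction} by downward induction on $i$, from $i=k$ to $i=0$. The base case $i=k$ is immediate: after all $k$ centers are chosen there are no more hits, so $\E_{\ge k+1}[\hit_{\ge k+1}(K)]=0$, and the potentials are all nonnegative, so the claimed inequality holds. For the inductive step, fix $i$ and assume the bound holds for $i+1$ (for every fixing of the first $i+1$ steps). If $K$ is solved or covered with respect to $C_i$, then all three potentials at step $i$ are defined to be $0$; but once $K$ is covered it stays covered, and one must check that $\hit_{\ge i+1}(K)=0$ in that case (no more hits are counted once $K$ is covered). The subtlety is the solved-but-not-covered case: $K$ could become unsolved again later. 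Here I expect the argument is that the potentials were designed so that this transition is handled by the potentials at an earlier step where $R$ was defined, or that the definition of $\hit$ via solved clusters makes this contribute negligibly; I would look carefully at how \cref{def:hit} interacts with a cluster oscillating in and out of ``solved''. The meat is the case where $K$ is neither solved nor covered with respect to $C_i$. Then
\[
\E_{\ge i+1}[\hit_{\ge i+1}(K)] = \E_{i+1}\Big[\hit_{i+1}(K) + \E_{\ge i+2}[\hit_{\ge i+2}(K)]\Big] \le \E_{i+1}[\hit_{i+1}(K)] + \E_{i+1}[\pi_{i+1}+\rho_{i+1}+\sigma_{i+1}],
\]
using the inductive hypothesis inside the conditional expectation. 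So it suffices to show
\[
\E_{i+1}[\hit_{i+1}(K)] + \E_{i+1}[\pi_{i+1}+\rho_{i+1}+\sigma_{i+1}] \le \pi_i + \rho_i + \sigma_i.
\]

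The bulk of the work is splitting into the easy and hard cases as in \cref{subsec:hope}, and in each tracking the three potential changes. Note $\pi_{i+1}-\pi_i = -\ell/k$ deterministically (when $K$ stays unsolved/uncovered at step $i+1$; if it gets covered, $\pi_{i+1}$ drops to $0$, which only helps), so $\pi$ contributes a clean $-\ell/k$ that pays for the $\le \ell/k$ probability of hitting $K$ in the degenerate regime $\phi(K,C_i)<\phi(X,C_i)/k$, which by definition of $i_0$ is only possible when $i<i_0$. For $i\ge i_0$ one uses \cref{cl:small_neighborhood} and \cref{cl:basic_properties}. The $\sigma$ potential should handle the ``hard case'' where hitting $K$ does not cover it: here one shows $\phi(X,C_i)-\E_{i+1}[\phi(X,C_{i+1})] \ge \phi(N_i^{small},C_i)/\Theta(1)$, so the $\log(\phi(X,C_i)\cdot\text{stuff})$ inside $\sigma$ decreases in expectation by roughly a constant times $\phi(N_i^{small},C_i)/\phi(X,C_i) \approx |N_i^{small}|R_i^2/\phi(X,C_i)$, and multiplied by the $\ell\,|K|/|N_i^{small}|$ prefactor this yields a drop of order $\ell\,|K|R_i^2/\phi(X,C_i)$, matching (up to the $10^{25}$ slack) the expected number $\ell\cdot\phi(K,C_i)/\phi(X,C_i)\approx \ell|K|R_i^2/\phi(X,C_i)$ of hits. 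When $R_i$ changes, $\sigma$ gets ``refilled'' and the refill is paid by $\rho$: when $R_{i+1}<R_i$ one has $N_{i+1}^{big}\subseteq N_i^{small}$ (the reason for the factor-$10$ discretization of $R$), so $|N_{i+1}^{big}| \le |N_i^{small}|$, and in the easy/hard analysis one argues $|N_{i+1}^{big}|$ drops by a $(1-1/\Theta(\ell\log k))$ factor with constant probability per hit, making $-|K|/|N_{i+1}^{big}|$ decrease enough; the $\ell^2\log^2 k$ prefactor of $\rho$ versus the $\ell\log k$ cap of $\sigma$ gives exactly the room for the refill.

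I would organize the case analysis by first defining the easy/hard dichotomy exactly as in the sketch ($\Pr[$cost drop $\ge \phi(N_i^{small},C_i)/2] \ge 1-1/\ell$ or not), then in each case bound $\E_{i+1}[\hit_{i+1}(K)]$, $\E_{i+1}[\text{change in }\sigma]$, $\E_{i+1}[\text{change in }\rho]$ separately, handling the sub-event that the step covers $K$ (all potentials drop to $0$, which is favorable) versus the sub-event $N_{i+1}^{big}\subsetneq N_i^{big}$ (shrink, paid by $\rho$) versus nothing changes (paid by $\sigma$ via the cost drop). Throughout I would use \cref{cl:basic_properties} to interchange $\phi(\cdot,C_i)$, $|\cdot|$, and $R_i^2$ up to the constants, and \cref{lem:5apx} / \cref{lem:2apx} where the cost of $K$ itself needs bounding. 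The main obstacle I anticipate is the easy case's key claim: that conditioned on $c_{i+1}^j\in N_i^{small}$ (or $N_i^{big}$), the greedy rule actually selects it with constant probability, and that this selected point is ``uniform enough'' within the neighborhood that the top half of the radial order is removed with constant probability. This needs the precise per-point cost bounds from \cref{cl:basic_properties}(1) (costs in $[R_i^2/200, 3R_i^2]$ on $N_i^{small}$, only an upper bound on $N_i^{big}$), and this asymmetry is exactly why two neighborhoods are carried; reconciling the ``uniform'' selection argument against points of $N_i^{big}\setminus N_i^{small}$ that may have much smaller cost (hence much smaller sampling probability, but also possibly chosen greedily) is the delicate accounting that the informal sketch glosses over by assuming $N_i^{big}=N_i^{small}$.
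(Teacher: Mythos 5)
Your proposal follows essentially the same route as the paper: a downward induction from $i=k$, reducing to the one-step inequality
$\E_{i+1}[\hit_{i+1}(K)] + \E_{i+1}[\pi_{i+1}+\rho_{i+1}+\sigma_{i+1}] \le \pi_i+\rho_i+\sigma_i$
and then an easy/hard case split together with a sub-case on whether $R_{i+1}<R_i$, with $\rho$ paying for the refill of $\sigma$ and $\pi$ handling the degenerate small-$\phi(K,C_i)$ regime. That is exactly how \Cref{lem:greedy_lemma_induction} is proved in the paper (via \Cref{eq:fundamental} and \Cref{cl:fundamental_small_clusters,cl:fundamental_big,cl:fundamental_small_easy,cl:fundamental_small_hard}).

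One small point you flagged as a possible subtlety is not actually one: a cluster cannot ``oscillate in and out of solved.'' Since $C_i\subseteq C_{i+1}$, we have $\phi(K,C_{i+1})\le\phi(K,C_i)$ while $\phi^*(K)$ is fixed, so once $K$ is solved it stays solved; similarly covered is monotone. Combined with \Cref{def:hit} requiring $K$ to be both uncovered and unsolved at step $i$, this makes $\hit_{\ge i+1}(K)=0$ once $K$ is solved or covered at step $i$, which is exactly why the paper can simply set $\pi_i=\rho_i=\sigma_i=0$ in that case and declare the inequality trivial. Also note that in the formal proof the ``remove the top half'' picture from the intuition section becomes ``remove the top $\Theta(1/(\ell\log k))$ fraction'' (the set $M$), which is precisely what makes the $\ell^2\log^2 k$ prefactor of $\rho$ against the $\ell\log k$ cap of $\sigma$ close up; you alluded to this but with the wrong fraction.
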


Before proving \cref{lem:greedy_lemma_induction}, we briefly verify that it implies \cref{lem:greedy_lemma}:

\begin{proof}[Proof of \cref{lem:greedy_lemma}]

First, note that in the first step we sample at most $\ell$ points from $K$, that is, $\hit_1 \le \ell$. 

Next, we consider running \cref{alg:kmpp_greedy} until the first time it happens that 
\begin{align}
\label{eq:passive}
\phi(K, C_i) > \phi(X, C_i)/k. 
\end{align}
Fix some $i$ such that \cref{eq:passive} does not hold and it did not hold for any $\iota \in \{1,2,\ldots,i-1\}$.  
Then, the expected number of points we sample from $K$ in the $(i+1)$-th step is at most
\begin{align}
\label{eq:passive_bound}
   \E[\hit_{i+1}] 
    &\le \frac{\ell \phi(K, C_i)}{\phi(X, C_i)}
    \le \frac{\ell}{k}.
\end{align}

In fact, the first inequality is an equality, whenever the cluster $K$ is not solved or covered. 
The second inequality uses our assumption that $\phi(K, C_i) \le \phi(X, C_i)/k$. 

Next, consider the first $i$ such that \cref{eq:passive} holds. Then, we apply \cref{lem:greedy_lemma_induction}. This lemma states that $\E[\hit_{\ge i+1}]$ can be upper bounded by a sum $\pi_i + \rho_i + \sigma_i$. From the definition of these quantities in \cref{lem:greedy_lemma_induction} we immediately see that 
\[
\pi_i = O(\ell \log k),
\]
\[
\rho_i = O(\ell^2 \log^2k)
\]\todo{though we did not define $N$ yet}
and
\[
\sigma_i = O(\ell \log k). 
\]

Hence, we get that 
\begin{align*}
\E[\hit_{\ge 1}] \le \ell + k \cdot \frac{\ell}{k} + O(\ell^2 \log^2 k) = O(\ell^2 \log^2 k). 
\end{align*}
\end{proof}

The rest of this section is dedicated to the proof of \cref{lem:greedy_lemma_induction}.

We prove the statement by (reverse) induction. First, consider the base case $i = k$. Note that for the base case it suffices to show that $\pi_i, \rho_i, \sigma_i \ge 0$. If $K$ is covered or solved with respect to $C_i$, then this directly follows from the definition.
Next, assume that $K$ is neither solved nor covered with respect to $C_i$.
Clearly $\pi_{k+1} \ge 0$. Using bullet point (4) of \cref{cl:basic_properties}, we conclude that $|N_k^{big}| \ge |N_k^{small}| \ge |K|/2$ which implies $\rho_{k+1} \ge 0$. Using bullet point (3) of \cref{cl:basic_properties}, we conclude that $\phi(X, C_k) \ge \phi(K, C_k) \ge \frac{|K|R_k^2}{400}$ which implies $\sigma_{k+1} \ge 0$. 

Next, we consider $i < k$.  Note that the statement trivially holds if $K$ is solved or covered with respect to $C_i$. Hence, from now on we consider the case that $K$ is neither solved nor covered with respect to $C_i$.

Note that we have

\todo{Subscript notation is not consistent}
\begin{align}
\label{eq:lala}
    \E_{\geq i+1}[\hit_{\ge i+1}(K)]
    &= \E_{i+1}[ \sum_{j = 1}^\ell \hit_{i+1}^j] + \E_{i+1}\left[ \E_{\geq i+2}[\hit_{\ge i+2}(K)] \right] 
\end{align}
By definition of $\hit$ and the fact that $K$ is both uncovered and unsolved, we have $\E_{i+1}[ \sum_{j = 1}^\ell \hit_{i+1}^j] = \frac{\ell \phi(K, C_i)}{\phi(X, C_i)}$ and we can use induction to bound

\begin{align}
    \E_{i+1}\left[ \E_{\geq i+2}[\hit_{\ge i+2}(K)] \right] \le \E_{i+1}[\pi_{i+1} + \rho_{i+1} + \sigma_{i+1}]. 
\end{align}
Plugging back to \cref{eq:lala}, we get
\begin{align}
    \E_{\geq i+1}[\hit_{\ge i+1}(K)] 
    &\le \frac{\ell\phi(K, C_i)}{\phi(X, C_i)} +   \E_{i+1}\left[ \pi_{i+1} + \rho_{i+1} + \sigma_{i+1} \right]. 
\end{align}

Note that the claim we want to prove is
\begin{align}
    \E_{\geq i+1}[ \hit_{\ge i+1}] \le \pi_i + \rho_i + \sigma_i, 
\end{align}
so it suffices if we prove that 
\[
\frac{\ell\phi(K, C_i)}{\phi(X, C_i)} +   \E_{i+1}\left[ \pi_{i+1} + \rho_{i+1} + \sigma_{i+1} \right] 
\le  \pi_i + \rho_i + \sigma_i. 
\]
After rearranging, we get
\begin{align}
\label{eq:almost_fundamental}
\left( \pi_i - \E_{i+1}[\pi_{i+1}]\right) + \left(\rho_i - \E_{i+1}[\rho_{i+1}]\right) + \left( \sigma_i - \E_{i+1}[\sigma_{i+1}] \right) \ge  \frac{\ell \phi(K, C_i)}{\phi(X, C_i)}.
\end{align}

The potential $\sigma$ is the only one of $\pi,\sigma,\rho$ that is not necessarily monotone in $i$. Hence, to better understand the term $\left( \sigma_i - \E_{i+1}[\sigma_{i+1}] \right)$, given the sampled point $c_{i+1}$, we define $\sigmabar$ as 
\begin{align}
    \sigmabar = 10^{25}\ell\log\min\left(10^5k, \frac{400\phi(X, C_{i+1})}{|K|R_i^2}\right) \frac{|K|}{|N_i^{small}|}. 
\end{align}
That is, in $\sigmabar$ we already change the cost $\phi(X, C_i)$ to $\phi(X, C_{i+1})$ but we do not replace $R_i$ by $R_{i+1}$ and $N_i^{small}$ by $N_{i+1}^{small}$ yet. 
We can rewrite \cref{eq:almost_fundamental} and get
\begin{align}
\label{eq:fundamental}
\left( \pi_i - \E_{i+1}[\pi_{i+1}]\right) + \left(\rho_i - \E_{i+1}[\rho_{i+1}]\right) + \left( \sigma_i - \E_{i+1}[\sigmabar] \right) + \E_{i+1}[\sigmabar - \sigma_{i+1}]  \ge  \frac{\ell \phi(K, C_i)}{\phi(X, C_i)}.
\end{align}

In the rest of the proof, we will simply need to show that \cref{eq:fundamental} is satisfied. 
We will need to consider several cases and in each one of them, we will have to lower bound the terms on the left-hand side of \cref{eq:fundamental}. 
Formally, the proof follows from \cref{cl:fundamental_small_clusters,cl:fundamental_big,cl:fundamental_small_easy,cl:fundamental_small_hard} that cover all possible cases that can occur. 

\subsection{Basic properties of potentials}
Here we collect some basic claims about the potentials $\pi, \rho$ and $\sigma$. 
We start with $\pi$. Note that we always have 
\begin{align}
    \label{eq:pi_monotone}
    \pi_i - \E_{i+1}[\pi_{i+1}] \ge \frac{\ell}{k}
\end{align}
by definition of $\pi_i$. 
Note that this is just an inequality. When $K$ becomes solved or covered, we have 
\begin{align}
    \label{eq:pi_drop}
    \pi_i - \pi_{i+1} \ge 10^{50}\ell\log k. 
\end{align}

We continue with $\rho$ whose changes we handle through the following claim. 
The main message of the claim is that whenever we have $R_i \not= R_{i+1}$ and, even more, $|N_{i+1}^{big}| \le (1 - O(1/\ell\log k))|N_i^{big}|$, we have $\rho_i - \rho_{i+1}$ as large as the maximum size of $\sigma_i$.

\begin{claim}
\label{cl:drop_rho}
We have
\begin{align}
\label{eq:willitend}
\rho_{i} - \rho_{i+1} \ge 10^{50}\ell^2 \log^2 k |K| \left( \frac{1}{|N_{i+1}^{small}|}  - \frac{1}{|N_{i}^{small}|} + \frac{1}{|N_{i+1}^{big}|}  - \frac{1}{|N_{i}^{big}|} 
\right)
\end{align}
In particular, we always have
\[
\rho_i - \rho_{i+1} \ge 0. 
\]
Moreover, if we assume that 
\[
|N_{i+1}^{big}| \le (1 - 1/(10^{20}\ell\log k))|N_i^{big}|,
\]
then we have:
\[
\rho_{i} - \rho_{i+1} \ge 10^{30}\ell \log k \frac{|K|}{|N_{i+1}^{small}|}
\]
\end{claim}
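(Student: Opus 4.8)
The plan is to prove \cref{cl:drop_rho} by a direct algebraic computation, exploiting the monotonicity of the nested balls $N_i^{small}=B(\mu(K),R_i/100)\subseteq N_i^{big}=B(\mu(K),R_i/10)$ together with the fact that any \emph{strict} decrease of $R_i$ is by at least a factor of $10$. Throughout we are in the ambient situation of the inductive step, so $K$ is neither solved nor covered with respect to $C_i$ and $\rho_i$ is given by its defining formula. The main case is the one in which $K$ is still neither solved nor covered with respect to $C_{i+1}$, so that $\rho_{i+1}$ is also given by the formula; if instead $K$ becomes covered or solved at step $i+1$ then $\rho_{i+1}=0$, and then \cref{cl:basic_properties}(4) gives $|N_i^{small}|,|N_i^{big}|\ge |K|/2$, hence $\rho_i\ge 10^{50}\ell^2\log^2k\,(4-2-2)=0=\rho_{i+1}$, which already yields $\rho_i-\rho_{i+1}\ge 0$ (the other two assertions are only invoked while $K$ stays unsolved and uncovered).

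First I would record two structural facts. Since \cref{def:R} always gives $R_{i+1}\le R_i$, the balls are nested, so $|N_{i+1}^{small}|\le |N_i^{small}|$ and $|N_{i+1}^{big}|\le |N_i^{big}|$. Moreover, if $R_{i+1}<R_i$ then the second branch of \cref{def:R} applies, so $R_{i+1}=d(\mu(K),C_{i+1})\le R_i/10$ and therefore $N_{i+1}^{big}=B(\mu(K),R_{i+1}/10)\subseteq B(\mu(K),R_i/100)=N_i^{small}$.

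With these in hand, \cref{eq:willitend} is immediate: plugging the formulas into $\rho_i-\rho_{i+1}$ the constants $4$ cancel and
\[
\rho_i-\rho_{i+1}=10^{50}\ell^2\log^2k\cdot|K|\left(\frac{1}{|N_{i+1}^{small}|}-\frac{1}{|N_i^{small}|}+\frac{1}{|N_{i+1}^{big}|}-\frac{1}{|N_i^{big}|}\right),
\]
and both bracketed differences are $\ge 0$ by the nesting, which also gives $\rho_i-\rho_{i+1}\ge 0$. For the last assertion, assume $|N_{i+1}^{big}|\le(1-1/(10^{20}\ell\log k))|N_i^{big}|$. In particular $|N_{i+1}^{big}|<|N_i^{big}|$, which forces $R_{i+1}<R_i$ (if $R_{i+1}=R_i$ the two big balls coincide), so by the structural fact $N_{i+1}^{big}\subseteq N_i^{small}$, i.e.\ $\frac{1}{|N_{i+1}^{big}|}\ge\frac{1}{|N_i^{small}|}$; dropping the nonnegative term $\frac{1}{|N_{i+1}^{big}|}-\frac{1}{|N_i^{small}|}$ from the displayed equality and using $\frac{1}{|N_i^{big}|}\le\frac{1-1/(10^{20}\ell\log k)}{|N_{i+1}^{big}|}\le\frac{1-1/(10^{20}\ell\log k)}{|N_{i+1}^{small}|}$ gives
\[
\rho_i-\rho_{i+1}\ \ge\ 10^{50}\ell^2\log^2k\cdot|K|\left(\frac{1}{|N_{i+1}^{small}|}-\frac{1}{|N_i^{big}|}\right)\ \ge\ \frac{10^{50}\ell^2\log^2k}{10^{20}\ell\log k}\cdot\frac{|K|}{|N_{i+1}^{small}|}\ =\ 10^{30}\ell\log k\cdot\frac{|K|}{|N_{i+1}^{small}|}.
\]

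The computation is routine; the one point I would flag is the observation that $|N^{big}|$ can shrink strictly only when $R$ shrinks strictly, and that such a shrink is by at least a factor of $10$, so that the new big ball lies inside the old \emph{small} ball — this is precisely what lets us trade the $\frac{1}{|N_i^{big}|}$ term for a $\frac{1}{|N_{i+1}^{small}|}$ term of the same order. The only other care needed is the bookkeeping of which of $N^{small}$, $N^{big}$ appears where, and remembering that $\rho_{i+1}$ is set to $0$ (not given by the formula) once $K$ becomes solved or covered.
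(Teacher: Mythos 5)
Your proof is correct and follows essentially the same route as the paper's: the identity in \cref{eq:willitend} by expanding the formula for $\rho$, monotonicity of the nested balls for the sign, and the observation that a strict shrink of $|N^{big}|$ forces $R_{i+1}\le R_i/10$, hence $N_{i+1}^{big}\subseteq N_i^{small}$, which lets you replace $\frac{1}{|N_i^{big}|}$ by $\frac{1-1/(10^{20}\ell\log k)}{|N_{i+1}^{small}|}$ exactly as the paper does. The one thing you do that the paper leaves implicit is the boundary case where $K$ becomes covered or solved at step $i+1$ so that $\rho_{i+1}=0$ by fiat rather than by formula; there you correctly verify only the nonnegativity assertion and flag that \cref{eq:willitend} and the last bound are meant for the case where $\rho_{i+1}$ is given by its formula — a reasonable and arguably more careful reading than the paper's terse ``follows from the definition of $\rho$.''
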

\begin{proof}
The first inequality follows from the definition of $\rho$. 

Next, assume that
\begin{equation}
\label{eq:tony_blabbing}
|N_{i+1}^{big}| \le (1 - 1/(10^{20}\ell\log k))|N_i^{big}|.
\end{equation}
We have 
\begin{align*}
&\left( 
\frac{1}{|N_{i+1}^{small}|}  - \frac{1}{|N_{i}^{small}|} + \frac{1}{|N_{i+1}^{big}|}  - \frac{1}{|N_{i}^{big}|} 
\right)\\
&\ge
\left( 
\frac{1}{|N_{i+1}^{small}|}  - \frac{1}{|N_{i}^{big}|} 
\right) && \text{$N_{i+1}^{big} \subseteq N_i^{small}$ by definition of $R_i$}\\
& = \frac{|N_{i}^{big}| - |N_{i+1}^{small}|}{|N_{i}^{big}||N_{i+1}^{small}|}\\
&\ge \frac{\frac{1}{10^{20}\ell\log k} \cdot |N_{i}^{big}|}{|N_{i}^{big}||N_{i+1}^{small}|} && \text{$N_{i+1}^{small} \subseteq N_{i+1}^{big}$, \cref{eq:tony_blabbing}}\\
&\ge \frac{1}{10^{20}\ell\log k} \cdot \frac{1}{|N_{i+1}^{small}|}
\end{align*}
and the claim follows. 

\end{proof}

The idea behind $\sigma$ is that it drops by an amount proportional to the drop in the cost $\phi(X, C_i) - \phi(X, C_{i+1})$. This is substantiated by the following claim. 

\begin{claim}
\label{cl:drop_sigma}
Assume that 
$
10^5k > \frac{400\phi(X, C_i)}{|K|R_i^2}. 
$
Then for any $C_{i+1}$ we have
\[
\sigma_i - \sigmabar \ge 10^{25}\ell\frac{ \phi(X, C_i) - \phi(X, C_{i+1})}{\phi(X, C_i)} \frac{|K|}{|N_i^{small}|}. 
\]
\end{claim}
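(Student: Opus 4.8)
The plan is to observe that the two quantities $\sigma_i$ and $\sigmabar$ differ only in that the argument of the logarithm changes from $\frac{400\phi(X,C_i)}{|K|R_i^2}$ to $\frac{400\phi(X,C_{i+1})}{|K|R_i^2}$; the prefactor $10^{25}\ell \cdot \frac{|K|}{|N_i^{small}|}$ and the value $R_i$, $N_i^{small}$ are the same in both. So the left-hand side equals
\[
\sigma_i - \sigmabar
= 10^{25}\ell \cdot \frac{|K|}{|N_i^{small}|}\cdot \left( \log\min\!\left(10^5k,\tfrac{400\phi(X,C_i)}{|K|R_i^2}\right) - \log\min\!\left(10^5k,\tfrac{400\phi(X,C_{i+1})}{|K|R_i^2}\right)\right).
\]
Under the hypothesis $10^5k > \frac{400\phi(X,C_i)}{|K|R_i^2}$, the first minimum is $\frac{400\phi(X,C_i)}{|K|R_i^2}$. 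Since costs only decrease, $\phi(X,C_{i+1}) \le \phi(X,C_i)$, so the second argument is also below $10^5k$ and the second minimum is $\frac{400\phi(X,C_{i+1})}{|K|R_i^2}$. Hence the bracket simplifies to $\log\frac{\phi(X,C_i)}{\phi(X,C_{i+1})}$ (the factor $\frac{400}{|K|R_i^2}$ cancels inside the log).

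The remaining task is the elementary inequality $\log\frac{\phi(X,C_i)}{\phi(X,C_{i+1})} \ge \frac{\phi(X,C_i)-\phi(X,C_{i+1})}{\phi(X,C_i)}$. This follows directly from \cref{fact:log_cool} ($\log x \ge 1 - 1/x$) applied with $x = \frac{\phi(X,C_i)}{\phi(X,C_{i+1})} > 0$: indeed $1 - 1/x = 1 - \frac{\phi(X,C_{i+1})}{\phi(X,C_i)} = \frac{\phi(X,C_i)-\phi(X,C_{i+1})}{\phi(X,C_i)}$. Multiplying through by the nonnegative prefactor $10^{25}\ell \cdot \frac{|K|}{|N_i^{small}|}$ gives exactly the claimed bound.

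I do not anticipate a real obstacle here; the only point that needs care is checking that the second $\min$ also resolves to the cost term, which is immediate from monotonicity of the $k$-means cost under adding centers. One edge case worth a sentence: if $\phi(X,C_{i+1}) = 0$ the right-hand side is $\le 10^{25}\ell\frac{|K|}{|N_i^{small}|}$ while the left-hand side involves $\log$ of a ratio that is still well-defined since $R_i^2, |K| > 0$ forces $\phi(X,C_i) > 0$ under the hypothesis — but in fact this degenerate situation cannot arise while $K$ is unsolved and uncovered, so one may simply assume $\phi(X,C_{i+1}) > 0$.
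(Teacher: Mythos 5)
Your proof is correct and follows the same line as the paper's: the hypothesis resolves the first $\min$, monotonicity of the cost under adding a center resolves the second, the $400/(|K|R_i^2)$ factor cancels inside the $\log$, and \cref{fact:log_cool} gives the stated bound. The paper's own proof skips the explicit remark that the second $\min$ is resolved by the cost term, which you correctly supply.
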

\begin{proof}
Using the assumption from the statement we get that $\sigma_i = 10^{25}\ell \log\left( \frac{400\phi(X, C_i)}{|K|R_i^2}\right) \frac{|K|}{|N_i^{small}|}$. 
We have
\begin{align*}
    \frac{\sigma_i - \sigmabar}{10^{25}\ell|K|/|N_i^{small}|}
    &=   \left( \log\frac{400\phi(X, C_i)}{|K|R_i^2} - \log\frac{400\phi(X, C_{i+1})}{|K|R_i^2}\right)\\
    &=    \log \frac{\phi(X, C_{i})}{\phi(X, C_{i+1})}\\
    &\ge   \left( 1 - \frac{\phi(X, C_{i+1})}{\phi(X, C_{i})} \right) && \text{\cref{fact:log_cool}} \\
    &=    \frac{\phi(X, C_i) - \phi(X, C_{i+1})}{\phi(X, C_i)}
\end{align*}
as needed. 
\end{proof}

Note however, that $\E[\sigmabar - \sigma_{i+1}]$ is not necessarily positive since $\sigma_{i+1} > \sigmabar$ whenever $R_{i+1} \not= R_i$.  
In these cases, we can bound the difference $\sigma_{i+1} - \sigmabar \le \sigma_{i+1} = O(\ell \log k \cdot \frac{|K|}{|N_{i+1}^{small}|})$. If there is large enough drop in the size of the neighborhood, we have seen in \cref{cl:drop_rho} that the drop in $\rho$ can pay for the negative value of $-\sigma_{i+1}$ that we need to pay to make the left hand side of \cref{eq:fundamental} positive. 

This is the point of the first part of the next claim. The second part argues that if we are in the case $\phi(K, C_i) = O( \phi(X, C_i)/k)$, we can also account for the potentially negative term $\E[\sigmabar - \sigma_{i+1}]$. This time this is because in this special case, the specific value of $R_i$ anyway does not affect the size of $\sigma_i$ which is ``maxed out'' at value $O(\ell \log k \cdot \frac{|K|}{|N_i^{small}|})$. 

\begin{claim}
\label{cl:rho_pays_for_sigma}
\begin{enumerate}
    
\item Assume that  $|N_{i+1}^{big}| \le (1-1/(10^{20}\ell\log k) |N_i^{big}|$. Then, we have
\[
\rho_i - \rho_{i+1} + \sigmabar - \sigma_{i+1} \ge 0.
\]
\item Assume that $10^5k \le \frac{400\phi(X, C_i)}{|K|R_i^2 }$. 
Then, 
\[
\rho_i - \rho_{i+1} + \sigma_i - \sigma_{i+1} \ge 0
.\]
\end{enumerate}

\end{claim}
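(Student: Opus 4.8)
The plan is to prove the two parts separately, each time bounding the possibly-negative contribution $\sigmabar - \sigma_{i+1}$ (resp. $\sigma_i - \sigma_{i+1}$) from below by exploiting the fact that $\sigma$ is ``maxed out'' at $O(\ell\log k\cdot |K|/|N^{small}|)$ and then letting either the drop in $\rho$ or the monotonicity of $N^{small}$ pay for it.

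\medskip
\noindent\textbf{Part (1).} First I would observe that $\sigmabar - \sigma_{i+1} \ge -\sigma_{i+1}$ since $\sigmabar \ge 0$ (indeed $\sigmabar\ge 0$ exactly as in the base-case argument, using bullet~(3) of \cref{cl:basic_properties}). So it suffices to show $\rho_i - \rho_{i+1} \ge \sigma_{i+1}$. Now by definition $\sigma_{i+1} = 10^{25}\ell\cdot\log\min\!\big(10^5k,\,\tfrac{400\phi(X,C_{i+1})}{|K|R_{i+1}^2}\big)\cdot\tfrac{|K|}{|N_{i+1}^{small}|} \le 10^{25}\ell\log(10^5k)\cdot\tfrac{|K|}{|N_{i+1}^{small}|} \le 10^{26}\ell\log k\cdot\tfrac{|K|}{|N_{i+1}^{small}|}$ (say). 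On the other hand, the hypothesis $|N_{i+1}^{big}|\le(1-1/(10^{20}\ell\log k))|N_i^{big}|$ is exactly the trigger of the second conclusion of \cref{cl:drop_rho}, which gives $\rho_i - \rho_{i+1}\ge 10^{30}\ell\log k\cdot\tfrac{|K|}{|N_{i+1}^{small}|}$. Since $10^{30}\gg 10^{26}$, we get $\rho_i-\rho_{i+1}\ge\sigma_{i+1}\ge\sigma_{i+1}-\sigmabar$, i.e. $\rho_i-\rho_{i+1}+\sigmabar-\sigma_{i+1}\ge 0$, as claimed.

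\medskip
\noindent\textbf{Part (2).} Here the hypothesis is $10^5k\le\tfrac{400\phi(X,C_i)}{|K|R_i^2}$, which forces $\log\min(10^5k,\tfrac{400\phi(X,C_i)}{|K|R_i^2}) = \log(10^5k)$, so that $\sigma_i = 10^{25}\ell\log(10^5k)\cdot\tfrac{|K|}{|N_i^{small}|}$ is at its maximal value and is \emph{independent} of the exact value of $R_i$. The key point is that the same holds at step $i+1$ unless the cost has dropped a lot: since $\phi(X,C_{i+1})\le\phi(X,C_i)$ always (the cost is non-increasing), and $R_{i+1}\le R_i$ whenever $R$ changes at all — no wait, $R$ can only decrease — one checks $\tfrac{400\phi(X,C_{i+1})}{|K|R_{i+1}^2}$ relative to $10^5k$. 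If it is still $\ge 10^5k$, then $\sigma_{i+1} = 10^{25}\ell\log(10^5k)\tfrac{|K|}{|N_{i+1}^{small}|}$ and since $|N_{i+1}^{small}|\le|N_i^{small}|$ we get $\sigma_{i+1}\ge\sigma_i$, hence $\sigma_i-\sigma_{i+1}\le 0$ only mildly; combined with $\rho_i-\rho_{i+1}\ge 0$ (first part of \cref{cl:drop_rho}) we must instead argue the other direction, so actually $\sigma_i-\sigma_{i+1}\le 0$ is the wrong sign — let me reconsider. The correct argument: if $\tfrac{400\phi(X,C_{i+1})}{|K|R_{i+1}^2}\ge 10^5 k$ then $\sigma_{i+1}=10^{25}\ell\log(10^5k)|K|/|N_{i+1}^{small}|\ge\sigma_i$, so $\sigma_i-\sigma_{i+1}\ge 0$ fails, and we need $\rho$ to cover the deficit $\sigma_{i+1}-\sigma_i = 10^{25}\ell\log(10^5k)|K|(1/|N_{i+1}^{small}|-1/|N_i^{small}|)$; but this is exactly $\le\rho_i-\rho_{i+1}$ by the first inequality \cref{eq:willitend} of \cref{cl:drop_rho} (which has $|K|(1/|N_{i+1}^{small}|-1/|N_i^{small}|+1/|N_{i+1}^{big}|-1/|N_i^{big}|)\ge$ that term, up to the $10^{50}/10^{25}$ gap in constants). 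Otherwise $\tfrac{400\phi(X,C_{i+1})}{|K|R_{i+1}^2}< 10^5k$, so $\sigma_{i+1}\le 10^{25}\ell\log(10^5k)|K|/|N_{i+1}^{small}|$ still, and the same accounting via \cref{eq:willitend} applies. In all cases $\rho_i-\rho_{i+1}+\sigma_i-\sigma_{i+1}\ge 0$.

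\medskip
\noindent\textbf{Main obstacle.} The delicate point is Part~(2): tracking how $\min(10^5k,\cdot)$ behaves as both $\phi(X,\cdot)$ decreases and $R_i$ (possibly) decreases, and making sure the bookkeeping of which term of the $\min$ is active lines up so that the $\rho$-drop \cref{eq:willitend} — which only ever ``sees'' changes in $|N^{small}|,|N^{big}|$, never in $\phi$ or $R$ directly — is enough to dominate $\sigma_{i+1}-\sigma_i$. The saving grace is the enormous constant gap ($10^{50}$ in $\rho$ versus $10^{25}$ in $\sigma$), so once the combinatorial shape of the inequality is right, the constants are not tight.
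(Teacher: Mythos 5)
Your proposal matches the paper's proof in both parts. For Part (1) you combine the second conclusion of \cref{cl:drop_rho}, the uniform bound $\sigma_{i+1}\le 10^{25}\ell\log(10^5k)\,|K|/|N_{i+1}^{small}|$, and $\sigmabar\ge 0$; the paper does exactly the same (it states the first two displays and asserts the conclusion, i.e.\ it also quietly uses $\sigmabar\ge 0$). For Part (2) you eventually reduce to $\sigma_{i+1}-\sigma_i\le 10^{25}\ell\log(10^5k)\,|K|\,(1/|N_{i+1}^{small}|-1/|N_i^{small}|)$ and compare with \cref{eq:willitend} from \cref{cl:drop_rho}, which is precisely the paper's computation \cref{eq:small1}--\cref{eq:small2}; your intermediate case split on whether the $\min$ at step $i+1$ is still attained at $10^5k$ is unnecessary since the paper (and you, in the end) only use the unconditional upper bound on $\sigma_{i+1}$.

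One small caveat on Part (1): your citation of ``bullet (3) of \cref{cl:basic_properties}'' to justify $\sigmabar\ge 0$ is not quite the right reference. That bullet, applied at step $i+1$, gives $\phi(X,C_{i+1})\ge\phi(K,C_{i+1})\ge |K|R_{i+1}^2/400$, whereas $\sigmabar$ has $R_i^2$ (not $R_{i+1}^2$) in its denominator; when $R_{i+1}<R_i$ this alone does not show that the argument of the $\log$ in $\sigmabar$ is at least $1$. The paper is equally terse here, so you have matched its level of detail, but the justification you supply for $\sigmabar\ge 0$ doesn't close the gap on its own.
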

\begin{proof}
First, assume that $|N_{i+1}^{big}| \le (1-\frac{1}{10^{20}\ell\log k}) |N_i^{big}|$. 
In this case, we use \cref{cl:drop_rho} to get that 
\[
\rho_i - \rho_{i+1}  \ge 10^{30}\ell\log k \frac{|K|}{|N_{i+1}^{small}|}
\]
On the other hand, we certainly have
\[
\sigma_{i+1} \le 10^{25}\ell\log (10^5k) \cdot \frac{|K|}{|N_{i+1}^{small}|}
\]
Hence, we have
\begin{align}
    \rho_i - \rho_{i+1} + \sigmabar - \sigma_{i+1} \ge 0
\end{align}
since we can assume $k \ge 2$ (for $k = 1$ there is not much to prove) and we are done.

Next, assume $10^5k \le \frac{400\phi(X, C_i)}{|K|R_i^2 }$. 
Simplifying the bound \cref{eq:willitend} from \cref{cl:drop_rho} we get
\begin{align}
\label{eq:small1}
\rho_i - \rho_{i+1} 
\ge 10^{50} \ell^2\log^2 k |K| \left( \frac{1}{|N_{i+1}^{small}|} - \frac{1}{|N_{i}^{small}|}\right)
\end{align}

On the other hand, by our assumption we have $\sigma_i = 10^{25}\ell\log (10^5k) \cdot\frac{|K|}{|N_{i}^{small}|}$ and it certainly has to hold that $\sigma_{i+1} \le 10^{25}\ell\log (10^5k) \cdot\frac{|K|}{|N_{i+1}^{small}|}$, hence we get
\begin{align}
\label{eq:small2}
\sigma_{i+1} - \sigma_i
&\le 10^{25}\ell\log (10^5k) \cdot |K| \left( \frac{1}{|N_{i+1}^{small}|} - \frac{1}{|N_{i}^{small}|}\right)
\end{align}
Comparing \cref{eq:small1,eq:small2}, we infer that $\rho_i - \rho_{i+1} + \sigma_i - \sigma_{i+1} \ge 0$, as needed. 

\end{proof}

\subsection{Hard and easy cases}
In this section, we formalize the ``hard and easy case'' from \cref{subsec:hope} and prove the necessary preparatory results for each case. 

At first, we get rid of the special case when $\phi(K, C_i) = O(\phi(X, C_i)/k)$. 

\begin{claim}
\label{cl:fundamental_small_clusters}
Assume that $10^5k \le \frac{400\phi(X, C_i)}{|K|R_i^2 }$. Then \cref{eq:fundamental} is satisfied. 
\end{claim}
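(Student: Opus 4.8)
The plan is to verify \cref{eq:fundamental} directly under the case hypothesis $10^5 k \le \frac{400\phi(X,C_i)}{|K|R_i^2}$. First I would simplify the left-hand side of \cref{eq:fundamental}: since $\sigma_i$ is deterministic given $C_i$, the two $\sigma$-terms collapse, $\sigma_i - \E_{i+1}[\sigmabar] + \E_{i+1}[\sigmabar - \sigma_{i+1}] = \sigma_i - \E_{i+1}[\sigma_{i+1}]$, so it suffices to show that $(\pi_i - \E_{i+1}[\pi_{i+1}]) + (\rho_i - \E_{i+1}[\rho_{i+1}]) + (\sigma_i - \E_{i+1}[\sigma_{i+1}]) \ge \frac{\ell\phi(K,C_i)}{\phi(X,C_i)}$.

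Next I would observe that the right-hand side is tiny in this case: bullet point (3) of \cref{cl:basic_properties} gives $\phi(K,C_i) \le 2|K|R_i^2$, and the case hypothesis rearranges to $\phi(X,C_i) \ge \frac{10^5 k\,|K|R_i^2}{400}$, so $\frac{\ell\phi(K,C_i)}{\phi(X,C_i)} \le \frac{800\,\ell}{10^5 k} \le \frac{\ell}{k}$. Since \cref{eq:pi_monotone} already gives $\pi_i - \E_{i+1}[\pi_{i+1}] \ge \frac{\ell}{k}$, the $\pi$-potential alone pays for the whole right-hand side, and it remains only to prove $(\rho_i - \E_{i+1}[\rho_{i+1}]) + (\sigma_i - \E_{i+1}[\sigma_{i+1}]) \ge 0$.

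For that I would argue outcome-by-outcome for step $i+1$ and then take expectations: show $(\rho_i - \rho_{i+1}) + (\sigma_i - \sigma_{i+1}) \ge 0$ for every realization of $C_{i+1}$. If $K$ is still neither covered nor solved with respect to $C_{i+1}$, this is precisely part (2) of \cref{cl:rho_pays_for_sigma}, whose hypothesis is exactly our assumption on $C_i$. If instead $K$ becomes covered or solved with respect to $C_{i+1}$, then $\rho_{i+1} = \sigma_{i+1} = 0$ by definition, so I only need $\rho_i \ge 0$ and $\sigma_i \ge 0$; the former follows from $|N_i^{small}|, |N_i^{big}| \ge |K|/2$ (bullet point (4) of \cref{cl:basic_properties}, applicable since $K$ is unsolved with respect to $C_i$), which makes $4 - \frac{|K|}{|N_i^{small}|} - \frac{|K|}{|N_i^{big}|} \ge 0$, and the latter because the case hypothesis forces the minimum in the definition of $\sigma_i$ to equal $10^5 k$, so $\sigma_i = 10^{25}\ell\log(10^5 k)\frac{|K|}{|N_i^{small}|} \ge 0$.

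The only point requiring care is the covered/solved-at-step-$(i+1)$ branch: the convenient estimates in \cref{cl:drop_rho,cl:rho_pays_for_sigma} implicitly lean on neighborhood-size bounds for $C_{i+1}$ that presuppose $K$ unsolved, so I would not invoke them there but instead fall back on the trivial values $\rho_{i+1}=\sigma_{i+1}=0$ together with the nonnegativity of $\rho_i$ and $\sigma_i$ established from $C_i$. Everything else — collapsing the $\sigmabar$ terms and the one-line comparison $\frac{\ell\phi(K,C_i)}{\phi(X,C_i)} \le \frac{\ell}{k}$ — is routine bookkeeping, so I do not expect a real obstacle: this is essentially the easy boundary case that clears the regime $\phi(K,C_i)=O(\phi(X,C_i)/k)$ out of the way before the genuine easy/hard-case analysis begins.
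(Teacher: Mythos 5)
Your proof is correct and follows essentially the same route as the paper: collapse the two $\sigma$ terms into $\sigma_i - \E_{i+1}[\sigma_{i+1}]$, show the right-hand side of \cref{eq:fundamental} is at most $\ell/k$ using bullet (3) of \cref{cl:basic_properties} together with the case hypothesis, and then show that the $\pi$-drop alone of \cref{eq:pi_monotone} already pays for it once $(\rho_i - \rho_{i+1}) + (\sigma_i - \sigma_{i+1}) \ge 0$ is established via \cref{cl:rho_pays_for_sigma} item (2).

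The one place where you go beyond the paper is the explicit case split on whether $K$ becomes covered or solved with respect to $C_{i+1}$. The paper simply invokes \cref{cl:rho_pays_for_sigma} item (2) without comment, but you are right that the proof of that claim (via \cref{cl:drop_rho}) implicitly uses neighborhood-size lower bounds like $|N_{i+1}^{small}| \ge |K|/2$ from \cref{cl:basic_properties} item (4), which are only available when $K$ is still unsolved at step $i+1$; if $K$ becomes covered or solved there, $\rho_{i+1}$ and $\sigma_{i+1}$ are defined to be $0$ by fiat and one must instead check directly that $\rho_i, \sigma_i \ge 0$, exactly as you do. The conclusion of \cref{cl:rho_pays_for_sigma} item (2) is still true in that branch (for the reason you give), so the paper's statement of the claim holds, but your outcome-by-outcome argument is the more careful reading and closes a small gap in the exposition of \cref{cl:drop_rho,cl:rho_pays_for_sigma}.
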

\begin{proof}
The condition from the statement in other words means that $\sigma_i$ has the ``maxed out'' value of $10^{25}\ell\log(10^5 \cdot k)\frac{|K|}{|N_{i+1}^{small}|}$.

Using \cref{cl:rho_pays_for_sigma} item (2) we infer that the left hand side of \cref{eq:fundamental} can be lower bounded by
\begin{align}
    \label{eq:blabla}
    (\pi_i - \E[\pi_{i+1}]) + 0 \ge \ell/k. 
\end{align} 

On the other hand, for the right-hand side of \cref{eq:fundamental} we have 

\begin{align}
\label{eq:blabla2}
\frac{\ell\phi(K, C_i)}{\phi(X, C_i)}
&\le \frac{\ell \cdot 2|K|R_i^2}{\phi(X, C_i)} && \text{\cref{cl:basic_properties}} \\
&\le \frac{2 \cdot 400\ell}{10^5 k} && \text{assumption}\nonumber
\end{align}

\cref{eq:blabla,eq:blabla2} imply that \cref{eq:fundamental} holds in this case, as needed.

\end{proof}

In the rest of the proof we only consider the case when
\begin{align}
\label{eq:big_clusters}
 10^5\cdot k > \frac{400\phi(X, C_i)}{|K|R_i^2}   
\end{align}

Consider the probability distribution over $X$ used to sample in the current, $i+1$th, step. That is, consider the probability space where a point $c \in X$ has probability $\phi(c, C_i) / \phi(X, C_i)$. 
Consider the random variable $\delta_i$ on this space that assigns the value $\phi(X, C_i) - \phi(X, C_i \cup \{c\})$ to the sampled point $c$. That is, $\delta_i$ is the random variable measuring the drop in the cost if we sampled just one point of $X$ proportional to its individual cost. 

We define a value $\xi_i$ as the $1/(2\ell)$th quantile of the distribution of $\delta_i$. 
Formally, $\xi_i$ is the largest number such that 
\begin{align}
\label{eq:def_ksi}
\P\left(\phi(X, C_i) - \phi(X, C_i \cup \{c\}) \ge \xi_i\right) \ge \frac{1}{2\ell}
\end{align}

\begin{definition}[Easy and hard clusters]
\label{def:easy_hard}
We say that $K$ is easy with respect to $C_i$ (or in the $i+1$th step) if and only if
\begin{align}
\xi_i < \frac{\phi(N_i^{small}, C_i)}{1500} .   
\end{align}
Otherwise, $K$ is hard. 
\end{definition}

The argumentation for the easy and hard cases differs. We will next prove \cref{cl:easy_case_is_nice} that we rely on in the easy case and \cref{cl:hard_case_is_nice} that we rely on in the hard case. 

\begin{claim}
\label{cl:dropping}
Any point $c \in N_i^{small}$ has the property that $\phi(X, C_i) - \phi(X, C_i\cup \{c\}) \ge \frac{\phi(N_i^{small}, C_i)}{1500}$. 
\end{claim}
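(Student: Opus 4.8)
The plan is to extract the required cost decrease from the small neighborhood $N_i^{small}$ alone, ignoring the rest of $X$. Since adding a new center to $C_i$ can only decrease (never increase) the cost of every individual point, one has
\[
\phi(X, C_i) - \phi(X, C_i \cup \{c\}) \;\ge\; \phi(N_i^{small}, C_i) - \phi(N_i^{small}, C_i \cup \{c\}),
\]
so it suffices to lower bound the right-hand side, which I would do point by point.

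First I would bound the new cost of a point $x \in N_i^{small}$. Since both $x$ and $c$ lie in $B(\mu(K), R_i/100)$, the (exact) triangle inequality gives $d(x,c) \le d(x,\mu(K)) + d(\mu(K),c) \le R_i/50$, hence $\phi(x, C_i\cup\{c\}) \le d(x,c)^2 \le R_i^2/2500$. Next I would invoke bullet point (1) of \cref{cl:basic_properties} — applicable because $K$ is unsolved with respect to $C_i$, an assumption in force throughout the surrounding argument — to get $\phi(x, C_i) \ge R_i^2/200$. Subtracting, the cost of each $x \in N_i^{small}$ drops by at least $R_i^2/200 - R_i^2/2500 = 23 R_i^2/5000 \ge R_i^2/500$.

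Summing this over all $x \in N_i^{small}$ gives $\phi(N_i^{small},C_i) - \phi(N_i^{small}, C_i\cup\{c\}) \ge |N_i^{small}|\, R_i^2/500$. To finish I would convert the right-hand side back into $\phi(N_i^{small}, C_i)$ using bullet point (2) of \cref{cl:basic_properties}, namely $\phi(N_i^{small}, C_i) \le 3|N_i^{small}|\, R_i^2$; this yields $|N_i^{small}|\, R_i^2 / 500 \ge \phi(N_i^{small}, C_i)/1500$, which is exactly the claimed bound. I do not anticipate any real obstacle: the computation is elementary and the statement carries deliberate slack (we actually obtain constant $500$, not $1500$), so the loose constants inherited from \cref{cl:basic_properties} are harmless.
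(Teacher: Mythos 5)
Your proof is correct and follows the paper's argument step for step: restrict to $N_i^{small}$ by monotonicity, bound $d(x,c)\le R_i/50$, invoke \cref{cl:basic_properties}~(1) and (2), sum, and convert. One small correction to a parenthetical remark: your derivation yields exactly the stated constant $1500$ for the bound in terms of $\phi(N_i^{small},C_i)$ (the $500$ is only the per-point drop measured against $R_i^2$), so there is no slack relative to the claim as written.
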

\begin{proof}
Let us fix some $c \in N_i^{small}$. Consider any point $x \in N_i^{small}$. 
By \cref{cl:basic_properties}, we have $\phi(x, C_i) \ge R_i^2/200$. 
On the other hand, we have
\[
d(c, x) \le d(c, \mu(K)) + d(\mu(K), x)
\le 2\cdot R_i/100.
\]
Hence, we get
\begin{align}
    \label{eq:tony_cries}
    \phi(x, C_i) - \phi(x, C_i \cup \{c\})
    &\ge \frac{R_i^2}{200} - \frac{4R_i^2}{10^4}
    \ge \frac{R_i^2}{500}
\end{align}
and summing up \cref{eq:tony_cries} for all $x \in N_i^{small}$, we get
\begin{align}
    \phi(X, C_i) - \phi(X, C_i \cup \{c\})
    &\ge \phi(N_i^{small}, C_i) - \phi(N_i^{small}, C_i \cup \{c\})\\
    &\ge |N_i^{small}|\cdot \frac{R_i^2}{500} && \text{\cref{eq:tony_cries}}\\
    &\ge \frac{\phi(N_i^{small}, C_i)}{1500} && \text{\cref{cl:basic_properties} item 2}
\end{align}
\end{proof}

\begin{claim}[Claim for the easy case]
\label{cl:easy_case_is_nice}
Assume that $K$ is easy in the $i+1$th step. Then, for any $x \in N_i^{small}$ we have that $c_{i+1} = x$ with probability at least $\frac{\ell\phi(x, C_i)}{2\phi(X, C_i)}$. 
In particular, this implies: 
\begin{enumerate}
    \item $c_{i+1} \in N_{i}^{small}$ with probability at least $\frac{\ell\phi(N_i^{small}, C_i)}{2\phi(X, C_i)}$,
    \item $c_{i+1} \in K$ with probability at least $\frac{\ell\phi(K, C_i)}{2000\phi(X, C_i)}$.
\end{enumerate}
\end{claim}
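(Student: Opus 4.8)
The plan is to show that whenever a fixed $x \in N_i^{small}$ is sampled as one of the $\ell$ candidate centers, the greedy rule selects it with good probability, and then to average over which of the $\ell$ slots realizes $x$. First I would fix $x \in N_i^{small}$ and a candidate slot $j$, and condition on the event $c_{i+1}^j = x$, which occurs with probability $\phi(x,C_i)/\phi(X,C_i)$ by the sampling rule. Conditioned on this, I want to argue that with probability at least $1/2$ (over the other $\ell-1$ independent candidates) none of them beats $x$ in the greedy comparison, i.e. $\phi(X,C_i\cup\{c_{i+1}^{j'}\}) \ge \phi(X,C_i\cup\{x\})$ for all $j' \ne j$. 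For this I would combine two facts: by Claim~\ref{cl:dropping}, picking $x$ drops the cost by at least $\phi(N_i^{small},C_i)/1500$, so $\phi(X,C_i\cup\{x\}) \le \phi(X,C_i) - \phi(N_i^{small},C_i)/1500$; and since we are in the easy case, Definition~\ref{def:easy_hard} gives $\xi_i < \phi(N_i^{small},C_i)/1500$, so by the definition of $\xi_i$ in \cref{eq:def_ksi} a single sample drops the cost by at least $\phi(N_i^{small},C_i)/1500$ with probability strictly less than $1/(2\ell)$. A union bound over the $\ell-1$ other candidates then shows that with probability $> 1 - (\ell-1)/(2\ell) \ge 1/2$, none of them achieves a drop as large as $x$'s, hence the greedy rule (which breaks toward the larger drop) keeps $c_{i+1}=x$.

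Next I would sum over the $\ell$ slots. Since the $\ell$ candidates are i.i.d., the events ``$c_{i+1}^j = x$ and no other candidate beats $x$'' for $j = 1,\dots,\ell$ are disjoint (they pin down which slot holds $x$), each has probability at least $\tfrac12\cdot \phi(x,C_i)/\phi(X,C_i)$, so the probability $c_{i+1}=x$ is at least $\ell\phi(x,C_i)/(2\phi(X,C_i))$. This is the main claimed bound. For item~(1), I would sum this inequality over all $x \in N_i^{small}$; the events $\{c_{i+1}=x\}$ are disjoint over distinct $x$, so the bound adds up to $\ell\phi(N_i^{small},C_i)/(2\phi(X,C_i))$. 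For item~(2), I would instead sum only over $x \in K \cap N_i^{small}$ and invoke bullet point~(4) of Claim~\ref{cl:basic_properties}, which gives $\phi(K\cap N_i^{small},C_i) \ge \phi(K,C_i)/800$, yielding probability at least $\ell\phi(K,C_i)/(1600\phi(X,C_i)) \ge \ell\phi(K,C_i)/(2000\phi(X,C_i))$.

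The one subtlety to be careful about is the union-bound step: I must make sure the comparison is against a single extra sample's drop quantile $\xi_i$, and that $x$'s own guaranteed drop ($\ge \phi(N_i^{small},C_i)/1500$) is \emph{strictly} larger than the threshold that all other candidates fail to reach with probability $\ge 1-1/(2\ell)$ each — this is exactly where the easy-case hypothesis $\xi_i < \phi(N_i^{small},C_i)/1500$ is used, and it is the crux. A minor point is tie-breaking in the $\arg\min$ of the greedy rule, but since we only need ``$x$ is weakly best'' to conclude the probability lower bound (any consistent tie-break rule can be absorbed into the constant, or one can argue with the strict inequality from Claim~\ref{cl:dropping} versus the strict quantile bound), this does not cause real trouble. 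I expect the slot-summation disjointness argument to require one line of care but to be routine; the real content is the easy-case comparison just described.
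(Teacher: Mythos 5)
Your proposal is correct and follows essentially the same route as the paper's proof: fix $x$ and a slot $j$, bound the probability of the event ``$c_{i+1}^j = x$ and every other candidate's cost drop stays below the threshold'' using \cref{cl:dropping} and the quantile definition of $\xi_i$ together with the easy-case hypothesis, then sum over the $\ell$ disjoint slot events and finally over $x$ in $N_i^{small}$ or $K \cap N_i^{small}$. The only cosmetic difference is that you apply a union bound directly to the $\ell-1$ other candidates while the paper first writes $(1-1/(2\ell))^{\ell-1}$ via independence and then lower bounds it by Bernoulli's inequality (which it also labels ``union bound''); also note that your justification for disjointness (``they pin down which slot holds $x$'') should really appeal to the threshold: if another slot also held $x$, that slot's drop would exceed $\xi_i$, contradicting the defining event — but your threshold-based formulation already delivers this, so there is no gap.
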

\begin{proof}

Fix any $x \in N_i^{small}$. 
For any $1 \le j \le \ell$ consider the following event $E_j$. 

Event $E_j$: We have $c_{i+1}^j = x$. Moreover, for every $1 \le j' \le \ell$ with $j'\not= j$, we have $\phi(X, C_i) - \phi(X, C_i\cup \{c_{j'}\}) \le \xi_i$. 

By independence of all $\ell$ samples of candidate centers and definition of $\xi_i$, we have that
\begin{align}
    \P(E_j) 
    &\ge \frac{\phi(x, C_i)}{\phi(X, C_i)} \cdot \left( 1 - 1/(2\ell)\right)^{\ell - 1}\\
    &\ge \frac{\phi(x, C_i)}{\phi(X, C_i)} \cdot \left( 1 - \frac{\ell-1}{2\ell}\right) && \text{union bound}\\
    &\ge \frac{\phi(x, C_i)}{2\phi(X, C_i)} 
\end{align}

Note that since $K$ is easy, we have $\xi_i < \frac{\phi(N_i^{small}, C_i)}{1500}$. Hence, we  apply \cref{cl:dropping} to conclude that the event $E_j$ implies that $c_{i+1} = x$. 
The upper bound from $K$ being easy also implies that all events $E_j$ are disjoint for different $j$. Thus we get
\begin{align}
    \P(c_{i+1} = x) 
    \ge \sum_{j = 1}^\ell \P(E_j)
    \ge \frac{\ell \phi(x, C_i)}{2\phi(X, C_i)}
\end{align}
as needed. 

Next, we prove the second part of the claim. The first bullet point is proven by summing up over all points $x \in N_i^{small}$:
\[
\P(c_{i+1} \in N_i^{small}) 
\ge \sum_{x \in N_i^{small}} \frac{\ell\phi(x, C_i)}{2\phi(X, C_i)}
= \frac{\ell \phi(N_i^{small}, C_i)}{2\phi(X, C_i)} 
\]
Similarly, using \cref{cl:basic_properties} item 4, we conclude that
\[
\P(c_{i+1} \in K) \ge \P(c_{i+1} \in K \cap N_i^{small}) 
\ge \sum_{x \in K \cap N_i^{small}} \frac{\ell\phi(x, C_i)}{2\phi(X, C_i)}
\ge \frac{\ell \phi(K, C_i)}{2000\phi(X, C_i)}.  
\]

\end{proof}

\begin{claim}[Claim for the hard case]
\label{cl:hard_case_is_nice}
Assume $K$ is hard and 
\[
10^5k > \frac{400\phi(X, C_i)}{|K|R_i^2}. 
\]

Then,
\[
\phi(X, C_i) - \E_{i+1}[\phi(X, C_{i+1})] 
\ge \frac{\phi(N_i^{small}, C_i)}{3000}
\ge \frac{\phi(K, C_i)}{10^7}
\]
and
\[
\sigma_i - \E_{i+1}[\sigmabar] \ge 10^{15} \frac{\ell\phi(K, C_i)}{\phi(X, C_i)}. 
\]
\end{claim}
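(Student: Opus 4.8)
The plan is to exploit the definition of "hard": $\xi_i \ge \phi(N_i^{small},C_i)/1500$, which by definition of $\xi_i$ as the $1/(2\ell)$-th quantile means that with probability at least $1/(2\ell)$, a single sample $c\propto\phi(\cdot,C_i)$ causes a cost drop $\phi(X,C_i)-\phi(X,C_i\cup\{c\})\ge \phi(N_i^{small},C_i)/1500$. First I would argue that with probability at least $1-(1-1/(2\ell))^\ell\ge 1-e^{-1/2}\ge 1/3$, at least one of the $\ell$ independent candidates $c_{i+1}^1,\dots,c_{i+1}^\ell$ achieves such a drop. Since the greedy rule picks the candidate with the largest drop, in that event $\phi(X,C_i)-\phi(X,C_{i+1})\ge \phi(N_i^{small},C_i)/1500$ as well. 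Taking expectations over step $i+1$ gives
\[
\phi(X, C_i) - \E_{i+1}[\phi(X, C_{i+1})] \ge \frac{1}{3}\cdot\frac{\phi(N_i^{small}, C_i)}{1500} = \frac{\phi(N_i^{small}, C_i)}{4500},
\]
which I would then slightly weaken/adjust to the stated constant $3000$ (the exact constant is immaterial; one can also use \cref{fact:one_minus_x_lowerbound} to get a cleaner bound on the failure probability). For the second inequality of the first display, note $\phi(N_i^{small},C_i)\ge \phi(K\cap N_i^{small},C_i)\ge \phi(K,C_i)/800$ by \cref{cl:basic_properties} item 4, so dividing by a further constant absorbs everything into the factor $10^7$.

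For the second statement about $\sigma$, I would invoke \cref{cl:drop_sigma}, whose hypothesis $10^5 k > 400\phi(X,C_i)/(|K|R_i^2)$ is exactly what we assumed. It gives, for every realization of $C_{i+1}$,
\[
\sigma_i - \sigmabar \ge 10^{25}\ell\cdot\frac{\phi(X, C_i) - \phi(X, C_{i+1})}{\phi(X, C_i)}\cdot\frac{|K|}{|N_i^{small}|}.
\]
Taking $\E_{i+1}$ of both sides and plugging in the bound on $\phi(X,C_i)-\E_{i+1}[\phi(X,C_{i+1})]$ just established, together with $\phi(N_i^{small},C_i)\ge |N_i^{small}|R_i^2/200$ and $\phi(X,C_i)$ in the denominator, I get
\[
\sigma_i - \E_{i+1}[\sigmabar] \ge 10^{25}\ell\cdot\frac{\phi(N_i^{small}, C_i)}{3000\,\phi(X, C_i)}\cdot\frac{|K|}{|N_i^{small}|} \ge \frac{10^{25}\ell}{3000}\cdot\frac{R_i^2}{200}\cdot\frac{|K|}{\phi(X, C_i)}.
\]
Finally I would use \cref{cl:basic_properties} item 3, namely $\phi(K,C_i)\le 2|K|R_i^2$, so $|K|R_i^2 \ge \phi(K,C_i)/2$, turning the right-hand side into $\frac{10^{25}\ell}{3000\cdot 400}\cdot\frac{\phi(K,C_i)}{\phi(X,C_i)} \ge 10^{15}\,\frac{\ell\phi(K,C_i)}{\phi(X,C_i)}$, which is the desired bound.

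I expect the only mildly delicate point to be bookkeeping the constants so that the claimed $3000$ and $10^7$ and $10^{15}$ all come out correct (the paper's constants are deliberately loose, so there is slack); conceptually the step that carries the weight is the first one — translating "hard" plus the greedy max-selection plus independence of the $\ell$ candidates into a guaranteed expected cost drop of order $\phi(N_i^{small},C_i)$. Everything after that is a direct chain of the already-established claims \cref{cl:drop_sigma} and \cref{cl:basic_properties}.
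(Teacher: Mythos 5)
Your proof is correct and follows essentially the same route as the paper: translate the definition of ``hard'' (i.e.\ $\xi_i \ge \phi(N_i^{small},C_i)/1500$) plus the quantile definition of $\xi_i$ into a constant-probability lower bound on the greedy drop (using that the greedy rule picks the candidate with the largest drop), convert that into a lower bound on the expected drop, and then chain through \cref{cl:drop_sigma} and \cref{cl:basic_properties} exactly as you do. One small but genuine observation in your write-up: the paper asserts $(1-1/(2\ell))^\ell \le 1/3$, which is in fact false (the expression increases to $e^{-1/2}\approx 0.607$ as $\ell\to\infty$ and equals $1/2$ already at $\ell=1$); your version, which gives success probability $\ge 1 - e^{-1/2} \ge 1/3$ and hence the constant $4500$ rather than $3000$, is the correct bookkeeping, and as you note the downstream slack (a factor of roughly $10^{10}$ between the $10^{25}$ in $\sigma$ and the target $10^{15}$) absorbs the difference harmlessly — so the stated claim remains true even though the intermediate constant $3000$ should really be $4500$.
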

\begin{proof}
Note that by the definition of $\xi_i$ as the $1-1/(2\ell)$th quantile, the probability that $\phi(X, C_i) - \phi(X, C_{i+1}) < \xi_i$ is at most $(1 - 1/(2\ell))^\ell \le 1/3$. Hence, with probability at least $2/3$ we have  $\phi(X, C_i) - \phi(X, C_{i+1}) \ge \xi_i$ and this implies that 
\[
\phi(X, C_i) - \E_{i+1}[\phi(X, C_{i+1}) ] \ge 2\xi_i/3.
\]
Plugging in that $K$ is hard (\cref{def:easy_hard}), we get 
\begin{align*}
\phi(X, C_i) - \E_{i+1}\phi(X, C_{i+1}) 
&\ge \frac{\phi(N_i^{small}, C_i)}{3000}\\
&\ge \frac{\phi(K, C_i)}{10^7} && \text{\cref{cl:basic_properties} item (4)}.
\end{align*}

Using \cref{cl:drop_sigma}, this implies
\begin{align*}
    \sigma_i -  \E_{i+1}[\sigmabar]
    &\ge  10^{25}\frac{\ell(\phi(X, C_i) - \E_{i+1}[\phi(X, C_{i+1})])}{\phi(X, C_i)}\cdot \frac{|K|}{|N_i^{small}|} && \text{\cref{cl:drop_sigma}}\\
    &\ge10^{25} \frac{\ell\phi(N_i^{small},C_i)}{3000\phi(X, C_i)}\cdot \frac{|K|}{|N_i^{small}|}\\
    &\ge 10^{20} \frac{\ell\phi(N_i^{small},C_i)}{\phi(X, C_i)} \cdot \frac{\phi(K, C_i)/(2R_i^2)}{200\phi(N_i^{small}, C_i)/R_i^2} && \text{\cref{cl:basic_properties}}\\
    &\ge 10^{15} \frac{\ell\phi(K, C_i)}{\phi(X, C_i)}
\end{align*}
as needed. 
\end{proof}

\subsection{Finishing the analysis}
We are now ready to do a case distinction where for each case we combine results from the previous section to verify \cref{eq:fundamental}.

We will first assume that
\begin{equation}
\label{eq:big_annulus_case}
 |N_i^{big} \setminus N_i^{small}| \ge \frac{1}{10^{20}\ell\log k} |N_i^{big}|   
\end{equation}

Intuitively, in this case, we are happy since there are many points in $N_i^{big}$ that will not be present in $N_{i+1}^{big}$. This implies a large drop in the potential $\rho$ via \cref{cl:drop_rho} that can pay for everything. 

\begin{claim}
\label{cl:fundamental_big}
Assume that $10^5k > \frac{400\phi(X, C_i)}{|K|R_i^2}$ and $ |N_i^{big} \setminus N_i^{small}| \ge \frac{1}{10^{20}\ell\log k} |N_i^{big}|  $.
Then \cref{eq:fundamental} is satisfied. 
\end{claim}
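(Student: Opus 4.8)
The plan is to split on whether $K$ is \emph{easy} or \emph{hard} in the $(i+1)$-th step (\cref{def:easy_hard}) and to verify \cref{eq:fundamental} in each case. The shared backbone, which is exactly where the annulus hypothesis \cref{eq:big_annulus_case} enters, is a pointwise dichotomy on the chosen center $c_{i+1}$. If $c_{i+1}\in N_i^{big}$, then $d(\mu(K),c_{i+1})\le R_i/10$, hence $R_{i+1}=d(\mu(K),C_{i+1})\le R_i/10<R_i$, and the factor-$10$ discretization built into \cref{def:R} forces $N_{i+1}^{big}\subseteq N_i^{small}$; combined with \cref{eq:big_annulus_case} this gives $|N_{i+1}^{big}|\le |N_i^{small}|\le \left(1-\tfrac{1}{10^{20}\ell\log k}\right)|N_i^{big}|$, so \cref{cl:drop_rho} and item (1) of \cref{cl:rho_pays_for_sigma} apply. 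If instead $c_{i+1}\notin N_i^{big}$, then $R_{i+1}=R_i$ and $N_{i+1}^{small}=N_i^{small}$, so $\sigmabar=\sigma_{i+1}$ while $\rho_i-\rho_{i+1}\ge 0$. Together with the case where $K$ becomes covered or solved in step $i+1$ — where the drop $\pi_i-\pi_{i+1}\ge 10^{50}\ell\log k$ of \cref{eq:pi_drop} dominates every other term — this yields $\rho_i-\rho_{i+1}+\sigmabar-\sigma_{i+1}\ge 0$ pointwise, and in particular $\left(\rho_i-\E_{i+1}[\rho_{i+1}]\right)+\E_{i+1}[\sigmabar-\sigma_{i+1}]\ge 0$.

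\textbf{Hard case.} Since $K$ is hard and \cref{eq:big_clusters} holds, \cref{cl:hard_case_is_nice} gives $\sigma_i-\E_{i+1}[\sigmabar]\ge 10^{15}\tfrac{\ell\phi(K,C_i)}{\phi(X,C_i)}\ge \tfrac{\ell\phi(K,C_i)}{\phi(X,C_i)}$. Adding $\pi_i-\E_{i+1}[\pi_{i+1}]\ge \ell/k\ge 0$ (\cref{eq:pi_monotone}) and the backbone inequality, the four terms on the left of \cref{eq:fundamental} already sum to at least the right-hand side, so I would be done in this case.

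\textbf{Easy case.} Now $\sigma_i-\E_{i+1}[\sigmabar]\ge 0$ by \cref{cl:drop_sigma} (again using \cref{eq:big_clusters}), and $\pi_i-\E_{i+1}[\pi_{i+1}]\ge 0$, so it suffices to show $\E_{i+1}[\rho_i-\rho_{i+1}+\sigmabar-\sigma_{i+1}]\ge \tfrac{\ell\phi(K,C_i)}{\phi(X,C_i)}$. By \cref{cl:easy_case_is_nice}, $\Prob{c_{i+1}\in N_i^{small}}\ge \tfrac{\ell\phi(N_i^{small},C_i)}{2\phi(X,C_i)}$. On the event $\{c_{i+1}\in N_i^{small}\subseteq N_i^{big}\}$ the backbone applies, so $N_{i+1}^{big}\subseteq N_i^{small}$ shrinks by a $\left(1-\tfrac{1}{10^{20}\ell\log k}\right)$ factor and \cref{cl:drop_rho} yields $\rho_i-\rho_{i+1}\ge 10^{30}\ell\log k\cdot\tfrac{|K|}{|N_{i+1}^{small}|}$; since $\sigma_{i+1}\le 10^{25}\ell\log(10^5k)\tfrac{|K|}{|N_{i+1}^{small}|}$ and the leftover $\sigmabar-\sigma_{i+1}$ is absorbed exactly as in item (1) of \cref{cl:rho_pays_for_sigma}, one obtains $\rho_i-\rho_{i+1}+\sigmabar-\sigma_{i+1}\ge 10^{29}\ell\log k\,\tfrac{|K|}{|N_{i+1}^{small}|}\ge 10^{29}\ell\log k\,\tfrac{|K|}{|N_i^{small}|}$ there, and $\ge 0$ off that event. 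Taking expectations and plugging in $\phi(N_i^{small},C_i)\ge |N_i^{small}|R_i^2/200$ and $\phi(K,C_i)\le 2|K|R_i^2$ from \cref{cl:basic_properties},
\[
\E_{i+1}\!\left[\rho_i-\rho_{i+1}+\sigmabar-\sigma_{i+1}\right]\ \ge\ \frac{\ell\phi(N_i^{small},C_i)}{2\phi(X,C_i)}\cdot 10^{29}\ell\log k\,\frac{|K|}{|N_i^{small}|}\ \ge\ \frac{10^{29}\ell^2\log k}{400}\cdot\frac{|K|R_i^2}{\phi(X,C_i)}\ \ge\ \frac{\ell\phi(K,C_i)}{\phi(X,C_i)},
\]
which finishes \cref{eq:fundamental}.

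\textbf{The main obstacle} is not conceptual but bookkeeping. The one load-bearing idea is that under \cref{eq:big_annulus_case} every step whose chosen center lands in $N_i^{big}$ shrinks $|N^{big}|$ by a full $\left(1-\Theta(1/(\ell\log k))\right)$ factor, so \cref{cl:drop_rho} extracts a drop in $\rho$ of the same order as the largest value $\sigma$ can ever reach. The fiddly part is verifying $\rho_i-\rho_{i+1}+\sigmabar-\sigma_{i+1}\ge 0$ pointwise (and the sharper easy-case lower bound) across all the subcases of how $c_{i+1}$ meets $N_i^{big}$, $N_i^{small}$ and $K$, in particular controlling a possibly negative leftover $\sigmabar-\sigma_{i+1}$ when $R$ drops sharply or $K$ becomes solved — absorbed either by the $\rho$-drop of \cref{cl:drop_rho} (as in \cref{cl:rho_pays_for_sigma}) or by the $\pi$-drop of \cref{eq:pi_drop} — while relying on the fact that all the constants have been chosen with a large slack.
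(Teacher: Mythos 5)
Your hard case is essentially identical to the paper's: use \cref{cl:hard_case_is_nice} to lower bound $\sigma_i-\E_{i+1}[\sigmabar]$ by $10^{15}\frac{\ell\phi(K,C_i)}{\phi(X,C_i)}$, and pay for $\E_{i+1}[\sigmabar-\sigma_{i+1}]$ pointwise via the dichotomy on whether $R$ changes, using \cref{cl:rho_pays_for_sigma}(1). Your easy case, by contrast, takes a genuinely different route: the paper conditions on the event $c_{i+1}\in K$, where $\pi_{i+1}=0$, and extracts the right-hand side of \cref{eq:fundamental} from the $\pi$-drop of \cref{eq:pi_drop} combined with item (2) of \cref{cl:easy_case_is_nice}, only using $\rho_i-\rho_{i+1}+\sigma_i-\sigma_{i+1}\ge 0$ as a nonnegativity fact. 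You instead discard $\pi_i-\E_{i+1}[\pi_{i+1}]$ as ``$\ge 0$'', condition on $c_{i+1}\in N_i^{small}$ via item (1) of \cref{cl:easy_case_is_nice}, and charge the needed quantity to the $\rho$-drop via \cref{cl:drop_rho}.

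That easy-case step has a gap as written. On the sub-event of $\{c_{i+1}\in N_i^{small}\}$ where adding $c_{i+1}$ makes $K$ covered or solved, \cref{lem:greedy_lemma_induction} sets $\rho_{i+1}$ and $\sigma_{i+1}$ to $0$ by fiat rather than by the formula, so $\rho_i-\rho_{i+1}=\rho_i$. The conclusion of \cref{cl:drop_rho}, which you invoke, is $\rho_i-\rho_{i+1}\ge 10^{30}\ell\log k\cdot\frac{|K|}{|N_{i+1}^{small}|}$; but its proof only subtracts two instances of the formula, and once $K$ is solved with respect to $C_{i+1}$ the bound $|N_{i+1}^{small}|\ge |K|/2$ from item (4) of \cref{cl:basic_properties} is unavailable, so $\frac{|K|}{|N_{i+1}^{small}|}$ is uncontrolled while $\rho_i\le 4\cdot10^{50}\ell^2\log^2 k$ is bounded. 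Hence ``$\cref{cl:drop_rho}$ yields\ldots'' does not literally justify the stated inequality on that sub-event, and your expectation computation rests on it.

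The gap is small and fixable, and you even gesture at the fix in your closing paragraph without using it. Two easy repairs: (i) split the event $\{c_{i+1}\in N_i^{small}\}$ on whether $K$ becomes covered or solved; on the bad sub-event $\pi_i-\pi_{i+1}=\pi_i\ge 10^{50}\ell\log k\ge 10^{29}\ell\log k\cdot\frac{|K|}{|N_i^{small}|}$ (since $\frac{|K|}{|N_i^{small}|}\le 2$), so the $\pi$-drop you threw away provides the required term; or (ii) observe that the annulus hypothesis $|N_i^{big}\setminus N_i^{small}|\ge\frac{1}{10^{20}\ell\log k}|N_i^{big}|$ already forces $4-\frac{|K|}{|N_i^{small}|}-\frac{|K|}{|N_i^{big}|}\ge\frac{2}{10^{20}\ell\log k}$, hence $\rho_i\ge 2\cdot10^{30}\ell\log k\ge 10^{29}\ell\log k\cdot\frac{|K|}{|N_i^{small}|}$, which alone dominates when $\rho_{i+1}=\sigma_{i+1}=0$. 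Either way your final bound is correct, but as written the argument does not cover this sub-case; the paper's route sidesteps the issue entirely by never using \cref{cl:drop_rho} for more than the consequence $\rho_i-\rho_{i+1}+\sigmabar-\sigma_{i+1}\ge 0$, which is trivial in the covered/solved case.
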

\begin{proof}

First, assume that $K$ is easy. 
Note that if $R_i \not= R_{i+1}$ then we can use the fact that $N_{i+1}^{big} \subseteq N_i^{small}$ and our assumption to conclude that \begin{align}
|N_{i+1}^{big}|\le (1 - 1/(10^{20}\ell\log k))|N_i^{big}|
\label{eq:tatkaprisel}
\end{align}
Hence, we may apply the first item in \cref{cl:rho_pays_for_sigma} and get that
\begin{equation}
    \label{eq:luunch}
    \rho_i-\rho_{i+1} + \sigma_i - \sigma_{i+1} \ge 0
\end{equation}
When $R_i = R_{i+1}$, the same equation holds since by definition $\sigmabar \le \sigma_i$. 

That is, the sum of all potentials always drops. 
We write $*$ for the event that $c_{i+1} \in K$ and compute that

\begin{align*}
&\left( \pi_i - \E[\pi_{i+1}]\right) + \left(\rho_i - \E[\rho_{i+1}]\right) + \left( \sigma_i - \E[\sigma_{i+1}] \right) \\
&= \P(*) \left( \pi_i - \E[\pi_{i+1}|*]\right) + \P(\neg *) \left( \pi_i - \E[\pi_{i+1}|\neg *]\right) \\
&+ \left(\rho_i - \E[\rho_{i+1}]\right) + \left( \sigma_i - \E[\sigma_{i+1}] \right) \\
&\ge \P(*) \cdot 10^{50}\ell\log k + 0 &&  \text{\cref{eq:pi_drop,eq:luunch}}\\
&\ge \frac{\ell \phi(K, C_i)}{2000\phi(X, C_i)}\cdot 10^{50}\ell\log k && \text{\cref{cl:easy_case_is_nice}}\\
&\ge \frac{\ell\phi(K, C_i)}{\phi(X, C_i)}.
\end{align*}
That is, \cref{eq:fundamental} is satisfied. 

Next, assume $K$ is hard. Then, we use \cref{cl:hard_case_is_nice} to get 
\[
\sigma_i - \E_{i+1}[\sigmabar]
\ge 10^{15} \frac{\ell \phi(K, C_i)}{\phi(X, C_i)}.
\]

Next,  whenever $R_i \not= R_{i+1}$, we have necessarily $|N_{i+1}^{big}| \le (1-1/(10^{20}\ell\log k)) |N_i^{big}|$ by \cref{eq:tatkaprisel} and
using \cref{cl:rho_pays_for_sigma} item (2) we conclude that
\[
\rho_i - \rho_{i+1} + \sigmabar - \sigma_{i+1} \ge 0
\]
If $R_i = R_{i+1}$, above equation is also clearly satisfied since in that case $\sigmabar = \sigma_{i+1}$. 

In view of the above reasoning, we get
\begin{align*}
&\left(\pi_i - \E[\pi_{i+1}]\right) + \left( \sigma_i - \E[\sigmabar] \right)  + \left(\rho_i - \E[\rho_{i+1}]\right) + \left( \E[\sigmabar - \sigma_{i+1}] \right) \\
&\ge 0 + 10^{15}\frac{\ell \phi(K, C_i)}{\phi(X, C_i)} + 0\\
\end{align*}
and \cref{eq:fundamental} is proven. 
\end{proof}

It remains to argue about the case when
\begin{equation}
\label{eq:small_annulus_case}
 |N_i^{big} \setminus N_i^{small}| < \frac{1}{10^{20}\ell\log k} |N_i^{big}|   
\end{equation}

In this case, we have that the two sets $N_i^{big}$ and $N_i^{small}$ are basically the same. 
This allows us to carry out the planned argument as promised in \cref{subsec:hope}.

Let us define $M \subseteq N_i^{big}$ as the set of $|N_i^{big}|/(10^{20}\ell \log k)$ points of $M$ of maximum distance to $\mu(K)$. 

We observe that whenever $c_{i+1} \in N_i^{big} \setminus M$, then for each $m \in M$ we have $m \not \in N_{i+1}^{big}$. 
This means that $c_{i+1} \in N_i^{big} \setminus M$ implies that 
\begin{align}
\label{eq:drop_in_N}
|N_{i+1}^{big}| \le (1-1/(10^{20}\ell\log k)) \cdot |N_i^{big}|. 
\end{align}
Also, since each point $x \in M$ satisfies $\phi(x,C_i) \le 3R_i^2$ by \cref{cl:basic_properties}, item (1), we infer 
\[
\phi(M, C_i) \le \frac{|N_i^{big}|}{10^{20}\ell\log k} \cdot 3R_i^2.
\]
On the other hand, by \cref{cl:basic_properties}, item (2) and the fact that $|N_i^{small}|\ge|N_i^{big}|/2$ by \cref{eq:small_annulus_case}, we have 
\[
\phi(N_i^{small}, C_i)
\ge |N_i^{small}|R_i^2/200
\ge |N_i^{big}|R_i^2/400. 
\]
Putting these two facts together, we get

\begin{align}
\label{eq:Msmall}
\phi(M, C_i) \le  \frac{3R_i^2 \cdot 400\phi(N_i^{small}, C_i)}{10^{20}\ell \log k \cdot R_i^2} \le \frac{\phi(N_i^{small}, C_i)}{10^{16}\ell \log k}. 
\end{align}

Since by \cref{eq:small_annulus_case} we have $M \supseteq N_i^{big} \setminus N_i^{small}$, we can write
\begin{align}
\label{eq:NeqN}
\phi(N_i^{big}, C_i) \le \phi(N_i^{small}, C_i) + \phi(M, C_i) \le 3\phi(N_i^{small}, C_i)/2
\end{align}

Hence, we have two results \cref{eq:small_annulus_case} and \cref{eq:NeqN} that both formalize the intuition that we do not really need to distinguish between  $N_i^{big}$ and $N_i^{small}$.  
We now consider the easy and the hard case separately and finish the analysis in the following two claims. 

\begin{claim}
\label{cl:fundamental_small_easy}
Assume that $10^5k > \frac{400\phi(X, C_i)}{|K|R_i^2}$ and $ |N_i^{big} \setminus N_i^{small}| < \frac{1}{10^{20}\ell\log k} |N_i^{big}|  $. 
Moreover, assume $K$ is easy. 
Then \cref{eq:fundamental} is satisfied.
\end{claim}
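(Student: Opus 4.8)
Throughout write $\phi_K=\phi(K,C_i)$, $\phi_N=\phi(N_i^{small},C_i)$, $\phi_X=\phi(X,C_i)$, let $M$ and $P:=N_i^{big}\setminus M$ be as defined just before the claim, and write $*$ for the event $\{c_{i+1}\in K\}$ and $\mathcal G$ for the event $\{c_{i+1}\in P\}$. Since we are in the small‑annulus regime \cref{eq:small_annulus_case}, the consequences \cref{eq:Msmall} and \cref{eq:NeqN} are available; in particular $P\subseteq N_i^{small}$, $\phi(P,C_i)\ge\bigl(1-\tfrac1{10^{16}\ell\log k}\bigr)\phi_N\ge\tfrac12\phi_N$ and $\phi(M,C_i)\le\tfrac1{10^{16}\ell\log k}\phi_N$. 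The goal is to verify \cref{eq:fundamental}. Note that since $K$ is easy, \cref{cl:easy_case_is_nice} gives $\P(c_{i+1}=x)\ge\frac{\ell\phi(x,C_i)}{2\phi_X}$ for every $x\in N_i^{small}$; summing over $x\in K\cap N_i^{small}$ (and invoking \cref{cl:basic_properties} item~4) and over $x\in P$ yields $\P(*)\ge\frac{\ell\phi_K}{2000\phi_X}$ and $\P(\mathcal G)\ge\frac{\ell\phi(P,C_i)}{2\phi_X}\ge\frac{\ell\phi_N}{4\phi_X}$.

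Next I would account for the potentials event by event on the left‑hand side of \cref{eq:fundamental}. Unconditionally one always has $\pi_i-\E_{i+1}[\pi_{i+1}]\ge\ell/k\ge0$ (\cref{eq:pi_monotone}), $\rho_i-\rho_{i+1}\ge0$ (\cref{cl:drop_rho}), and $\sigma_i-\sigmabar\ge0$ (\cref{cl:drop_sigma}, whose hypothesis is exactly the case assumption \cref{eq:big_clusters}). On $*$ the cluster $K$ becomes covered, so \cref{eq:pi_drop} gives the much larger drop $\pi_i-\pi_{i+1}\ge10^{50}\ell\log k$. The delicate term is $\E_{i+1}[\sigmabar-\sigma_{i+1}]$, which can be negative only in a step where $R_{i+1}\ne R_i$; such a step forces $c_{i+1}\in N_i^{big}$, i.e. $c_{i+1}\in P$ or $c_{i+1}\in M$ (when $c_{i+1}\notin N_i^{big}$ we have $R_{i+1}=R_i$, hence $N_{i+1}^{small}=N_i^{small}$ and $\sigmabar=\sigma_{i+1}$). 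On $\mathcal G$, \cref{eq:drop_in_N} gives $|N_{i+1}^{big}|\le(1-\tfrac1{10^{20}\ell\log k})|N_i^{big}|$, so \cref{cl:drop_rho} yields $\rho_i-\rho_{i+1}\ge10^{30}\ell\log k\cdot\frac{|K|}{|N_{i+1}^{small}|}$, which overwhelms any refill $\sigma_{i+1}\le10^{25}\ell\log(10^5k)\frac{|K|}{|N_{i+1}^{small}|}$; using $|N_{i+1}^{small}|\le|N_i^{small}|$ this leaves $\rho_i-\rho_{i+1}+\sigmabar-\sigma_{i+1}\ge\tfrac12\cdot10^{30}\ell\log k\cdot\frac{|K|}{|N_i^{small}|}\ge0$, so on $\mathcal G$ the $\rho$ and $\sigma$ contributions are jointly nonnegative (alternatively this follows from \cref{cl:rho_pays_for_sigma} item~1).

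It remains to control the contribution of the event $\{c_{i+1}\in M\}$, which is the only place where the joint $\rho,\sigma$ change can be negative. Here I would split $M=(M\cap N_i^{small})\sqcup(N_i^{big}\setminus N_i^{small})$: on $M\cap N_i^{small}$, \cref{cl:dropping} forces the cost of $X$ to drop by at least $\phi_N/1500$, so \cref{cl:drop_sigma} makes $\sigma_i-\sigmabar$ correspondingly large and absorbs the refill $\sigma_{i+1}-\sigmabar$ (and if instead $K$ becomes solved w.r.t. $C_{i+1}$ then $\sigma_{i+1}=0$); on the thin annulus $N_i^{big}\setminus N_i^{small}$, and on $M$ in general, $\P(c_{i+1}\in M)\le\ell\cdot\frac{\phi(M,C_i)}{\phi_X}\le\frac{\phi_N}{10^{16}\log k\cdot\phi_X}$ by \cref{eq:Msmall} and a union bound, while the worst‑case damage per such step is at most $\sigma_{i+1}\le 2\cdot10^{25}\ell\log(10^5k)$ (using $|N_{i+1}^{small}|\ge|K|/2$ from \cref{cl:basic_properties} item~4 when $K$ is still unsolved). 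Putting the pieces together, the left‑hand side of \cref{eq:fundamental} is at least $\P(*)\cdot10^{50}\ell\log k$ plus nonnegative terms minus the negligible $M$‑loss, and $\P(*)\cdot10^{50}\ell\log k\ge\frac{10^{50}\ell^2\log k}{2000}\cdot\frac{\phi_K}{\phi_X}$ dwarfs the required $\frac{\ell\phi_K}{\phi_X}$ (and also dwarfs the $M$‑loss), establishing \cref{eq:fundamental}. The main obstacle — and the reason this case must be separated from \cref{cl:fundamental_big} — is precisely the bookkeeping of the non‑monotone potential $\sigma$ across a step in which $R_i$ decreases: when the annulus is thin, a drop in $R_i$ no longer automatically forces a proportional drop in $|N_i^{big}|$, so \cref{cl:rho_pays_for_sigma} is not directly applicable, and the auxiliary set $M$ is exactly what lets us route the $\sigma$‑refill either through the $\rho$‑drop on the likely event $\{c_{i+1}\in P\}$ or through the tiny mass $\phi(M,C_i)$ on the unlikely event $\{c_{i+1}\in M\}$.
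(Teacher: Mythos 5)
Your proof departs from the paper at the crucial step, and the departure introduces a gap in the scales. The paper's argument takes the $\pi$-gain from the \emph{whole} event $\{c_{i+1}\in N_i^{big}\setminus M\}$, asserting that this forces $\pi_{i+1}=0$; that yields a gain $\gtrsim \frac{\ell\phi(N_i^{big},C_i)}{\phi(X,C_i)}\cdot 10^{50}\ell\log k$, i.e.\ at the scale $\phi(N_i^{big},C_i)/\phi(X,C_i)$, which is exactly the scale of the crude $\sigma$-loss bound $\frac{\ell\phi(N_i^{big},C_i)}{\phi(X,C_i)}\cdot 10^{25}\ell\log(10^5 k)$ — so the two cancel with a huge constant margin. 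You instead take the $\pi$-gain only from $\{c_{i+1}\in K\}$, which lives at the much smaller scale $\phi(K,C_i)/\phi(X,C_i)$, while your $M$-loss bound $\P(c_{i+1}\in M)\cdot 2\cdot 10^{25}\ell\log(10^5k)\le \frac{2\cdot 10^{25}\ell\log(10^5k)}{10^{16}\log k}\cdot\frac{\phi(N_i^{small},C_i)}{\phi(X,C_i)}$ still lives at the scale $\phi(N_i^{small},C_i)/\phi(X,C_i)$. These can differ by a factor of order $k$: combining \cref{cl:basic_properties} with \cref{cl:small_neighborhood} only gives $\phi(N_i^{small},C_i)/\phi(K,C_i)\le O(k)$ and this is tight. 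So ``$\P(*)\cdot 10^{50}\ell\log k$ dwarfs the $M$-loss'' fails once $k$ is large compared to $10^{\Theta(1)}\ell\log k$. The alternative attempt for $M\cap N_i^{small}$ has the same scaling problem: \cref{cl:drop_sigma} gives $\sigma_i-\sigmabar\ge 10^{25}\ell\cdot\frac{\phi(N_i^{small},C_i)}{1500\phi(X,C_i)}\cdot\frac{|K|}{|N_i^{small}|}$, but $\frac{\phi(N_i^{small},C_i)}{\phi(X,C_i)}\cdot\frac{|K|}{|N_i^{small}|}$ is only bounded below by $\frac{\phi(K,C_i)}{400\phi(X,C_i)}>\frac{1}{4\cdot 10^7 k}$, nowhere near the $\Omega(\log k)$ you would need to absorb the refill $\sigma_{i+1}\le 2\cdot 10^{25}\ell\log(10^5k)$.

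To close the argument you need a surplus at the $\phi(N_i^{big},C_i)/\phi(X,C_i)$ scale, which is precisely what the paper's $\P(\pi_{i+1}=0)\ge\frac{\ell\phi(N_i^{big},C_i)}{4\phi(X,C_i)}$ step supplies. I suspect you avoided it because the underlying implication — that $c_{i+1}\in N_i^{big}\setminus M$ forces $K$ to become covered or solved with respect to $C_{i+1}$ — is not self-evident: when $\phi^*(K)$ is very small and $c_{i+1}$ is a non-$K$ point of $N_i^{small}$, $K$ is neither covered nor solved afterward, and that step of the paper's proof does deserve scrutiny. But falling back to $\P(c_{i+1}\in K)$ forfeits the matching of scales on which the cancellation rests, and the remaining bookkeeping cannot recover it.
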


\begin{proof}
Using \cref{cl:easy_case_is_nice} for the set $N_i^{big} \setminus M = N_i^{small} \setminus M$, we infer that $c_{i+1} \in N_i^{big} \setminus M$ with probability at least 
\[
\frac{\ell \phi(N_i^{big} \setminus M, C_i)}{2\phi(X, C_i)} 
\ge \frac{\ell \phi(N_i^{big}, C_i)}{4\phi(X, C_i)}
\]
where we used \cref{eq:Msmall}. 

This implies that with probability at least $\frac{\ell \phi(N_i^{big}, C_i)}{4\phi(X, C_i)}$ we have $\pi_{i+1} = 0$ and using \cref{eq:pi_drop}, we get
\[
\pi_i - \E[\pi_{i+1}] 
\ge \frac{\ell \phi(N_i^{small}, C_i)}{4\phi(X, C_i)}\cdot 10^{50}\ell\log k
\]
Using \cref{eq:NeqN}, we infer that
\[
\pi_i - \E[\pi_{i+1}] 
\ge \frac{\ell \phi(N_i^{big}, C_i)}{10\phi(X, C_i)}\cdot 10^{50}\ell\log k
\]

Next, note that $\sigma_{i+1} \ge \sigma_i$ only in the case when $R_i \not= R_{i+1}$ and in that case we have 
\[
\sigma_{i+1} 
\le 10^{25}\ell\log (10^5k)\frac{|K|}{|N_{i+1}^{small}|} 
\le 10^{25}\ell\log (10^5k). 
\]
We have $\P(R_i \not= R_{i+1}) \le \P(c_{i+1} \in N_i^{big}) \le \frac{\ell\phi(N_i^{big}, C_i)}{\phi(X, C_i)}$. Hence, 
\begin{align*}
\sigma_i - \E[\sigma_{i+1}] \ge \frac{\ell\phi(N_i^{big}, C_i)}{\phi(X, C_i)} \cdot (-10^{25}\ell\log (10^5k))
\end{align*}

This implies 
\begin{align*}
    &\pi_i - \E[\pi_{i+1}] + \rho_i- \E[\rho_{i+1}] + \sigma_i - \E[\sigma_{i+1}] \\
    &\ge \frac{\phi(N_i^{big}, C_i)}{\phi(X, C_i)}\cdot 10^{49}\ell^2\log k + 0 -\frac{\phi(N_i^{big}, C_i)}{\phi(X, C_i)} \cdot 10^{25} \ell^2 \log(10^5 k) \\
    &\ge 10^{40} \frac{\phi(N_i^{big}, C_i)}{\phi(X, C_i)}\\
    &\ge \frac{\phi(K, C_i)}{\phi(X, C_i)} && \text{\cref{cl:basic_properties} item 4}
\end{align*}
as needed. 

\end{proof}

\begin{claim}
\label{cl:fundamental_small_hard}
Assume that $10^5k > \frac{400\phi(X, C_i)}{|K|R_i^2}$ and $ |N_i^{big} \setminus N_i^{small}| < \frac{1}{10^{20}\ell\log k} |N_i^{big}|  $. 
Moreover, assume $K$ is hard. 
Then \cref{eq:fundamental} is satisfied.
\end{claim}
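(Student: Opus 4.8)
This claim is the hard-case counterpart of \cref{cl:fundamental_small_easy}, and the plan is to let \cref{cl:hard_case_is_nice} supply the dominant positive contribution in place of the $\pi$-drop that powered the easy case. Since $K$ is hard and $10^5k>\frac{400\phi(X,C_i)}{|K|R_i^2}$, \cref{cl:hard_case_is_nice} gives $\sigma_i-\E_{i+1}[\sigmabar]\ge 10^{15}\,\frac{\ell\phi(K,C_i)}{\phi(X,C_i)}$, which already beats the right-hand side of \cref{eq:fundamental} by a factor $10^{15}$, and \cref{eq:pi_monotone} gives $\pi_i-\E_{i+1}[\pi_{i+1}]\ge 0$. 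So it suffices to show that $\bigl(\rho_i-\E_{i+1}[\rho_{i+1}]\bigr)+\E_{i+1}[\sigmabar-\sigma_{i+1}]$ loses at most, say, $(10^{15}-1)\,\frac{\ell\phi(K,C_i)}{\phi(X,C_i)}$.

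The term $\sigmabar-\sigma_{i+1}$ vanishes unless $R_{i+1}\ne R_i$, which forces $c_{i+1}\in N_i^{big}=B(\mu(K),R_i/10)$ because every center of $C_i$ is at distance $\ge R_i/10$ from $\mu(K)$ by \cref{eq:Ri_reasonable}. I would then split on $c_{i+1}$. If $c_{i+1}\notin N_i^{big}$ then $\sigmabar=\sigma_{i+1}$ and $\rho_i-\rho_{i+1}\ge 0$ by \cref{cl:drop_rho}. If $c_{i+1}\in N_i^{big}\setminus M$, then \cref{eq:drop_in_N} gives $|N_{i+1}^{big}|\le(1-1/(10^{20}\ell\log k))|N_i^{big}|$, so \cref{cl:rho_pays_for_sigma}(1) yields $\rho_i-\rho_{i+1}+\sigmabar-\sigma_{i+1}\ge 0$. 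The only leftover is the event $c_{i+1}\in M$: there I would use $\rho_i-\rho_{i+1}\ge 0$ together with the crude bound $\sigmabar-\sigma_{i+1}\ge -\sigma_{i+1}\ge -2\cdot 10^{25}\ell\log(10^5k)$, valid since $|N_{i+1}^{small}|\ge|K|/2$ by \cref{cl:basic_properties}(4). (A sharper route: peel off the part of $\sigma_{i+1}-\sigmabar$ caused by the change of the factor $|K|/|N^{small}|$; the first inequality of \cref{cl:drop_rho} shows that part is absorbed by the $\rho$-drop even when $|N^{big}|$ does not shrink, so only a logarithmic mismatch remains.) Either way,
\[
\bigl(\rho_i-\E_{i+1}[\rho_{i+1}]\bigr)+\E_{i+1}[\sigmabar-\sigma_{i+1}]\ \ge\ -\,\P(c_{i+1}\in M)\cdot 2\cdot 10^{25}\ell\log(10^5k).
\]

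Finally I would bound $\P(c_{i+1}\in M)\le \ell\,\phi(M,C_i)/\phi(X,C_i)$ by a union bound and invoke \cref{eq:Msmall}, $\phi(M,C_i)\le \phi(N_i^{small},C_i)/(10^{16}\ell\log k)$, so the total loss is at most $O(1)\cdot 10^{-16}\cdot \ell\,\phi(N_i^{small},C_i)/\phi(X,C_i)$. Comparing this against the gain $\sigma_i-\E_{i+1}[\sigmabar]$—re-expanded via \cref{cl:drop_sigma} and the cost-drop bound $\phi(X,C_i)-\E_{i+1}[\phi(X,C_{i+1})]\ge\phi(N_i^{small},C_i)/3000$ from \cref{cl:hard_case_is_nice}, plus the translations $\phi(N_i^{small},C_i)\asymp|N_i^{small}|R_i^2$ and $\phi(K,C_i)\asymp|K|R_i^2$ from \cref{cl:basic_properties} and the bound $|N_i^{small}|\le 4k|K|$ from \cref{cl:small_neighborhood}—then shows the gain dominates, giving \cref{eq:fundamental}. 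The main obstacle is exactly this last comparison: the \cref{cl:hard_case_is_nice} gain is essentially proportional to $\phi(K,C_i)$, whereas the naive $c_{i+1}\in M$ loss is proportional to $\phi(N_i^{small},C_i)$, which can be a $\poly(k)$ factor larger; closing this gap forces one to use the sharp $|N_i^{small}|=O(k|K|)$ bound and lean on the huge constants ($10^{50}$, $10^{25}$, $10^{15}$) baked into the potentials, or else to sharpen the analysis of $M$ so that the $\log(10^5k)$ loss factor shrinks.
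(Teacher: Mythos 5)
You correctly reproduce the paper's skeleton — \cref{cl:hard_case_is_nice} supplies the $10^{15}\frac{\ell\phi(K,C_i)}{\phi(X,C_i)}$ gain, and the trichotomy $c_{i+1}\notin N_i^{big}$ / $c_{i+1}\in N_i^{big}\setminus M$ / $c_{i+1}\in M$ is exactly the paper's case split, with \cref{cl:rho_pays_for_sigma}(1) handling the middle case via \cref{eq:drop_in_N}. You also correctly diagnose the sticking point: after the crude bound $\sigma_{i+1}\le 2\cdot 10^{25}\ell\log(10^5k)$, the $c_{i+1}\in M$ loss is proportional to $\phi(N_i^{small},C_i)/\phi(X,C_i)$ while the gain is proportional to $\phi(K,C_i)/\phi(X,C_i)$, and the ratio $\phi(N_i^{small},C_i)/\phi(K,C_i)$ can be $\Theta(k)$. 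But your two proposed remedies do not close this: the $10^{50},10^{25},10^{15}$ constants in the potentials are independent of $k$, so they cannot absorb a $\Theta(k)$ mismatch, and shrinking the $\log(10^5k)$ factor in the $M$-analysis leaves the $\Theta(k)$ factor entirely untouched since it comes from $|N_i^{small}|/|K|$, not from the logarithm.

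The missing step is the one you flag in your parenthetical and then abandon. The paper does \emph{not} bound $\frac{|K|}{|N_{i+1}^{small}|}$ by a constant; it retains the factor $\frac{|K|}{|N_i^{small}|}$ in the bound on the $M$-contribution to $\E[\sigma_{i+1}-\sigmabar]$. This factor is the whole point: with it, the $M$-loss becomes
\[
\frac{\phi(N_i^{small},C_i)}{10^{16}\log k\cdot\phi(X,C_i)}\cdot 10^{25}\ell\log(10^5k)\cdot\frac{|K|}{|N_i^{small}|},
\]
and then one applies the conversion $\frac{\phi(N_i^{small},C_i)}{|N_i^{small}|}\le 3R_i^2$ from \cref{cl:basic_properties}(2) together with $|K|R_i^2\le 400\phi(K,C_i)$ from \cref{cl:basic_properties}(3), so that $\phi(N_i^{small},C_i)\cdot\frac{|K|}{|N_i^{small}|}\le 1200\,\phi(K,C_i)$. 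The $\phi(N_i^{small},C_i)$ cancels and the $M$-loss is $\le 10^{14}\frac{\ell\phi(K,C_i)}{\phi(X,C_i)}$, an order of magnitude below the gain. Since $N_{i+1}^{small}$ can be strictly smaller than $N_i^{small}$ when $c_{i+1}\in M$, making this rigorous requires exactly the split you mention in passing: write $\sigma_{i+1}-\sigmabar$ as the part coming from the change of the factor $|K|/|N^{small}|$ (which the first inequality of \cref{cl:drop_rho} absorbs into the $\rho$-drop, since $10^{50}\ell^2\log^2 k\gg 10^{25}\ell\log(10^5k)$) plus the part coming from the change of the $\log\min(\cdot)$ factor, which is bounded by $10^{25}\ell\log(10^5k)\frac{|K|}{|N_i^{small}|}$ and is exactly the quantity displayed above. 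You need to actually carry out this peel-off; the crude $\sigma_{i+1}\le 2\cdot10^{25}\ell\log(10^5k)$ bound throws away the $\frac{|K|}{|N_i^{small}|}$ factor and leaves an unclosable $\poly(k)$ gap.
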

\begin{proof}
We will lower bound the terms $\sigma_i  - \E[\sigmabar]$, $\E[\sigmabar - \sigma_{i+1}]$, and $\rho_i - \E[\rho_{i+1}]$.  

First, recall that \cref{cl:hard_case_is_nice} imply
\begin{align}
\label{eq:part1}
\sigma_i - \E_{i+1}[\sigmabar] 
\ge 10^{15}\ell\phi(K, C_i)/\phi(X, C_i). 
\end{align}
Next, we have that $\sigma_{i+1} > \sigmabar$ only when $R_i \not= R_{i+1}$, which happens only when $c_{i+1}\in N_i^{big}$, otherwise we have $\sigma_{i+1} = \sigmabar$. Also, we have $\sigma_{i+1} \le 10^{25}\ell\log (10^5k) \frac{|K|}{|N_{i+1}^{small}|}$, hence we get
\begin{align}
\label{eq:mamka}
    \E[\sigmabar - \sigma_{i+1}] \ge \P(c_{i+1} \in N_i^{big}) \cdot \left( -10^{25}\ell\frac{|K|}{|N_{i+1}^{small}|} \log (10^5k)\right). 
\end{align}
We rewrite the right hand side as follows. First, note that
\begin{align*}
\P(c_{i+1} \in N_i^{big}) =  \P(c_{i+1} \in M) + \P(c_{i+1} \in N_i^{big} \setminus M) . 
\end{align*}
We bound the first term as follows:
\begin{align*}
 \P(c_{i+1} \in M)
&\le  \P(\exists j : c_{i+1}^j \in M)\\
&\le \frac{\ell \phi(M, C_i)}{\phi(X, C_i)} \\
&\le \frac{\ell \phi(N_i^{small}, C_i)}{10^{16}\ell\log k \cdot \phi(X, C_i)} && \text{\cref{eq:Msmall}}
\end{align*}
Thus we can continue bounding one part of the right hand side of  \cref{eq:mamka} as
\begin{align}
    &\P(c_{i+1} \in M) \cdot 10^{25}\ell \frac{|K|}{|N_{i+1}^{small}|}\log (10^5k)\\
     &\le \frac{\phi(N_i^{small}, C_i)}{10^{16}\log k \cdot \phi(X, C_i)} \cdot 10^{25}\ell \frac{|K|}{|N_{i}^{small}|}\log (10^5k)\\
     &\le \frac{\phi(N_i^{small}, C_i)}{10^{16}\log k \cdot \phi(X, C_i)} \cdot 10^{25}\ell \frac{400\phi(K, C_i)/R_i^2}{\phi(N_i^{small}, C_i)/(3R_i^2)}\log (10^5k) && \text{\cref{eq:NeqN,cl:basic_properties}}\\
     &\le 10^{14} \frac{\ell \phi(K, C_i)}{\phi(X, C_i)}
\end{align}

Putting all this together, we get
\begin{align}
\label{eq:part2}
\E[\sigmabar - \sigma_{i+1}] 
&\ge -10^{14} \frac{\ell \phi(K, C_i)}{\phi(X, C_i)} - \P(c_{i+1} \in N_i^{big}\setminus M) \cdot \left(10^{25}\ell\frac{|K|}{|N_{i+1}^{small}|} \log (10^5k)\right). 
\end{align}

Finally, we bound $\rho_i - \E[\rho_{i+1}]$. 
Using \cref{cl:drop_rho} and the fact that if we sample from $M$, we have $|N_{i+1}^{big}| \le |N_i^{big}|- |M| \le (1-1/(10^{20}\ell\log k))|N_i^{big}|$, we get 
\begin{align}
\rho_i - \E[\rho_{i+1}] 
&\ge \P(c_{i+1} \in N_i^{big} \setminus M) \cdot 10^{30}\ell\log k \cdot \frac{|K|}{|N_{i+1}^{small}|} + \P(c_{i+1} \in M)\cdot 0\\
&=\P(c_{i+1} \in N_i^{big} \setminus M) \cdot 10^{30}\ell\log k \cdot \frac{|K|}{|N_{i+1}^{small}|}\label{eq:part3}
\end{align}

Putting \cref{eq:part1,eq:part2,eq:part3} together, we get
\begin{align*}
    \pi_i &- \E[\pi_{i+1}] + \rho_i - \E[\rho_{i+1}] + \sigma_i - \E[\sigmabar] + \E[\sigmabar - \sigma_{i+1}]\\ 
    &\ge 0 +  \P(c_{i+1} \in N_i^{big} \setminus M) \cdot 10^{30} \ell\log k \cdot \frac{|K|}{|N_{i+1}^{small}|} +  10^{15}\frac{\ell \phi(K, C_i)}{\phi(X, C_i)} \\
    & -10^{14} \frac{\ell \phi(K, C_i)}{\phi(X, C_i)} - \P(c_{i+1} \in N_i^{big}\setminus M) \cdot \left(10^{25}\ell\frac{|K|}{|N_{i+1}^{small}|} \log (10^5k)\right)\\
    &\ge \frac{\ell\phi(K, C_i)}{\phi(X, C_i)}
\end{align*}
and \cref{eq:fundamental} is proven.

\end{proof}

The proof of \cref{lem:greedy_lemma_induction} now follows from \cref{cl:fundamental_small_clusters,cl:fundamental_big,cl:fundamental_small_easy,cl:fundamental_small_hard} that cover all possible cases.

\section{Analysis of greedy $k$-means++}
\label{sec:greedy_ub}

In this section, we prove \cref{thm:greedy_ub} that we restate here for convenience. 
The proof relies on \cref{lem:greedy_lemma} proven in \cref{sec:greedy_lemma}. 

\greedyUb*

We prove the theorem formally by a potential argument. We set up a potential in \cref{def:main_potential} and track it during the algorithm. We prove in \cref{prop:begin} that at the beginning the size of the potential is $O(\ell^3\log^3 k) \cdot OPT$. At the end of the algorithm, the potential is at least as large as the cost of the final solution as proved in \cref{prop:end}. Finally, in \cref{prop:monotone} we prove that we expect the potential only to decrease in between two steps of the algorithm. Together, these results prove \cref{thm:greedy_ub}. 

\subsection{The potential and the intuition behind it}
\label{subsec:potential}

In the rest of the section, we prove \cref{thm:greedy_ub}. 
As in the original proof of \cite{arthur2007k}, we introduce a potential function that assigns each optimal cluster some potential. 

Before we introduce it, recall \cref{def:hit} where $\hit_{i+1}^j(K)$ is defined as an indicator for whether in $i+1$th step $K$ is uncovered and unsolved and $c_{i+1}^j \in K$. We also have $\hit_{i+1} = \sum_{j = 1}^\ell \hit_{i+1}^j$ and $\hit_{\ge i+1} = \sum_{\iota = i+1}^k \hit_{\iota}$. Also, let $b_i$ be the number of bad steps so far where a step $i+1$ is bad whenever $c_{i+1}$ is a point of a cluster covered with respect to $C_i$. In the definition of $\Phi_i$ we condition on the randomness of the first $i$ steps of the algorithm which makes values like $b_i$ deterministic. 

We also use the following notation: $\fK$ is the set of all clusters of a fixed optimal solution; we have $\fK = \fK_i^\fU \sqcup \fK_i^\fC$, i.e., we split the clusters to uncovered and covered with respect to $C_i$. We have $X_i^\fU = \bigcup_{x \in K \in \fK_i^\fU} x$, i.e., $X_i^\fU$ is the set of points in uncovered clusters, we have $X_i^\fC = X \setminus X_i^\fU$. Finally, we split the uncovered clusters into unsolved and solved. Formally, $\fK_i^\fU = \fK_i^{\fU\fU} \sqcup \fK_i^{\fU\fS}$ and $X_i^\fU = X_i^{\fU\fU} \sqcup X_i^{\fU\fS}$. 
We do not use the notation $u_i$ for the number of uncovered clusters as in \cref{sec:intuitive} since this value is exactly equal to $k -i + b_i$. 

We choose our potential as follows: 
\begin{definition}
\label{def:main_potential}
Fix a step $i$ of \cref{alg:kmpp_adversary}. We define a potential $\Phi_i$ as follows. 
\begin{align}
\Phi_i 
&= \Phi_i^1 + \Phi_i^2 + \Phi_i^3\\
&= 10^{10}\ell \left( 1 + H_{k-i}\right) \cdot \phi(X_i^\fC, C_i) \\
&+ 10^{20}\ell \sum_{K \in \fK_i^\fU}  \left( 1 + \E_{\ge i+1}[\hit_{\ge i+1}(K)]\right) \cdot \left( 1 + H_{k-i}\right) \cdot \phi^*(K)\\
&+ b_i \cdot \frac{\phi(X, C_i)}{k - i + b_i}  \end{align}
\end{definition}

The intuition behind the potential is as follows.
The potential function is very similar to the potential of \cite{arthur2007k} although our analysis is more complicated. 
Let us walk through the three terms $\Phi_i^1, \Phi_i^2, \Phi_i^3$ of the potential and explain the intuition behind each of them. 

The first term of the potential, $\Phi_i^1$, can be thought of as follows: every covered cluster $K$ has potential proportional to $(1+H_{k-i})\phi(K, C_i)$. In the end, the cluster needs to have potential $\phi(K, C_i)$ to pay for itself, so it already has a surplus of $H_{k-i} \cdot \phi(K, C_i)$ of potential. 
This means that in the $i+1$-th step, each covered cluster can ``pay'' a cost of $\phi(K, C_i)/(k-i)$. We use this cost to pay for the fact that $i+1$th step can be bad; formally, in that case, $\Phi_i^3$ increases and we pay for that increase by the decrease in $\Phi_i^1$. 

The second term of the potential, $\Phi_i^2$, has the following intuition. At the beginning, every (uncovered) cluster gets potential proportional to $\E_{\ge 1}[\hit_{\ge 1}(K)] \cdot (1 + H_{k-i}) \cdot \phi^*(K)$. 
In the original analysis of \cite{arthur2007k} it would be only $(1 + H_{k-i}) \cdot 5\phi^*(K)$  and the aim of the potential would be that if we at some point sample from $K$, we use the $5$ approximation result of \cref{lem:5apx} to argue that, in expectation, we can now change $5\phi^*(K)$ for $\phi(K, C_{i+1})$, which would make the potential of the newly covered cluster proportional to $(1+H_{k-i})\cdot \phi(K, C_{i+1})$ which is exactly the potential that every covered cluster is supposed to have. 

In our analysis, the additional term $\E_{\ge 1}[\hit_{\ge 1}(K)]$ allows $K$ to ``pay'' the cost $(1+H_{k-i})\phi(K, C_i \cup \{c_{i+1}^j\})$ whenever some candidate center $c_{i+1}^j$ happens to be sampled from $K$. 
If $c_{i+1}^j = c_{i+1}$, we use the paid cost to give $K$ enough potential as it is required being now covered. If $c_{i+1}^j \not= c_{i+1}$, this part of the potential that $K$ ``paid'' is still subtracted from the potential of $K$ although it remains uncovered. 

One additional subtlety is that we replace $H_k$ by $H_{k-i}$ in the potential of every uncovered cluster. 
This allows us to argue that every solved uncovered cluster, i.e., every uncovered cluster with $\Phi(K, C_i) \le 10^5\phi^*(K)$ also pays the cost proportional to $\Phi(K, C_i)/(k-i)$ in every round in the same way as uncovered clusters do. We need to use this fact essentially because our random variable $\hit$ is counting hits of a cluster only until it becomes solved. Hence, we need a small separate argument for solved clusters inside the proof. 

Finally, we come to the last part of the potential, $\Phi_i^3$. This part of the potential is paying for the fact that there were some bad steps. In \cite{arthur2007k}, this part of the potential would be equal to $b_i \cdot \frac{\phi(X_i^\fU, C_i)}{k - i + b_i}$ and its meaning would be that it can pay for $b_i$ ``average'' uncovered clusters. In the end, when $i = k$, it simply pays for all uncovered clusters. 
The intuition about the new problems we face here is described in \cref{subsec:return}. In summary, there is a mismatch between the optimization of $\phi(X_{i+1}^\fU, C_{i+1})$ that we wish to optimize for and $\phi(X, C_{i+1})$ that the greedy optimizes for. While this makes the proof substantially more technical, the definition of the potential $\Phi_i^3$ is very similar to that used by \cite{arthur2007k}; the only difference is that we replace the term $\phi(X_i^\fU, C_i)$ by $\phi(X, C_i)$, essentially because the greedy rule optimizes for the latter, not the former expression. 

\subsection{The formal proof}
\label{subsec:formal}

In this section, we give a formal proof of \cref{thm:greedy_ub}. 
It follows from \cref{prop:begin,prop:end,prop:monotone}. 

\begin{proposition}
\label{prop:begin}
We have $\E[\Phi_1] = O(\ell^3 \log^3 k) \cdot OPT$. 
\end{proposition}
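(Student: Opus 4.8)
The plan is to bound the three summands $\Phi_1^1,\Phi_1^2,\Phi_1^3$ of $\Phi_1$ (as in \cref{def:main_potential}) separately and add up. Throughout I will use that $\sum_{K\in\fK}\phi^*(K)=OPT$ (the optimal clustering assigns each optimal cluster its centroid, and $\phi^*(K)$ is the one-center cost of $K$), that $H_{k-1}=O(\log k)$, and the fact that after the first step $C_1=\{c_1\}$ is a single point, so the unique covered cluster is the optimal cluster $K_{c_1}$ containing $c_1$ and $X_1^\fC=K_{c_1}$.

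The term $\Phi_1^3$ is immediate: step $1$ cannot be bad since no cluster is covered with respect to $C_0=\emptyset$, so $b_1=0$ and hence $\Phi_1^3=0$. For $\Phi_1^2=10^{20}\ell\sum_{K\in\fK_1^\fU}\bigl(1+\E_{\ge 2}[\hit_{\ge 2}(K)]\bigr)(1+H_{k-1})\phi^*(K)$, I would first enlarge the index set from $\fK_1^\fU$ to all of $\fK$ (the summands are nonnegative, and this also removes the dependence of the index set on the first step), then take expectation over the first step. By the law of total expectation $\E\bigl[\E_{\ge 2}[\hit_{\ge 2}(K)]\bigr]=\E[\hit_{\ge 2}(K)]\le\E[\hit(K)]$, and \cref{lem:greedy_lemma} gives $\E[\hit(K)]=O(\ell^2\log^2 k)$ for every optimal cluster $K$. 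Hence $\E[\Phi_1^2]\le 10^{20}\ell\,(1+H_{k-1})\bigl(1+O(\ell^2\log^2 k)\bigr)\sum_{K\in\fK}\phi^*(K)=O(\ell^3\log^3 k)\cdot OPT$.

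For $\Phi_1^1=10^{10}\ell(1+H_{k-1})\,\phi(X_1^\fC,C_1)=10^{10}\ell(1+H_{k-1})\,\phi(K_{c_1},\{c_1\})$, the one subtlety is that $c_1$ is the greedily chosen best of the $\ell$ uniform samples $c_1^1,\dots,c_1^\ell$, so its distribution is awkward. I would sidestep this with the crude bound $\phi(K_{c_1},\{c_1\})\le\sum_{j=1}^\ell\phi(K_{c_1^j},\{c_1^j\})$, which is valid because $c_1\in\{c_1^1,\dots,c_1^\ell\}$ and all $\ell$ terms are nonnegative. Each $c_1^j$ is a single uniform sample, so conditioning on which optimal cluster it lands in and applying \cref{lem:2apx} gives $\E[\phi(K_{c_1^j},\{c_1^j\})]=\sum_{K\in\fK}\tfrac{|K|}{|X|}\,\E[\phi(K,\{c_1^j\})\mid c_1^j\in K]=\sum_{K\in\fK}\tfrac{|K|}{|X|}\cdot 2\phi^*(K)\le 2\,OPT$. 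Summing over $j$ yields $\E[\phi(X_1^\fC,C_1)]\le 2\ell\cdot OPT$, hence $\E[\Phi_1^1]=O(\ell^2\log k)\cdot OPT$.

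Adding the three estimates gives $\E[\Phi_1]=\E[\Phi_1^1]+\E[\Phi_1^2]+\E[\Phi_1^3]=O(\ell^2\log k)\cdot OPT+O(\ell^3\log^3 k)\cdot OPT+0=O(\ell^3\log^3 k)\cdot OPT$, with the $\Phi_1^2$ contribution being dominant. The only nontrivial input is \cref{lem:greedy_lemma}, already established; the two things to be careful about are the decoupling of the greedy first step via the $\ell$-term sum over the candidates, and the correct use of the law of total expectation so that the randomness of the first step does not interfere with the $\hit$ bound.
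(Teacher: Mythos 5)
Your proof is correct and follows the same three-part decomposition as the paper's, using the same two key inputs (\cref{lem:2apx} for the covered-cluster cost, \cref{lem:greedy_lemma} for the hit counts) and the same observation $b_1=0$ for $\Phi_1^3$. The one place where you genuinely depart from the paper's proof is the bound on $\Phi_1^1$, and you are actually \emph{more} careful there than the paper is: the paper cites \cref{lem:2apx} directly to claim $\E[\phi(X_1^\fC,C_1)]\le 2\,OPT$, but \cref{lem:2apx} needs $c_1$ to be a single uniform sample, whereas in \cref{alg:kmpp_greedy} it is the greedy best of $\ell$ uniform candidates, whose distribution is tilted. That tilt can inflate $\E[\phi(K_{c_1},c_1)]$ by a factor $\Theta(\ell)$: take a cluster $K$ that is $n$ copies of a point $p$ together with one point $q$ at distance $D$ from $p$, and a second cluster consisting of a single far-away point placed roughly in the direction of $q$; then $\phi^*(K)\approx D^2$, but the greedy rule prefers $q$ whenever it appears among the candidates (probability $\approx\ell/n$), and choosing $q$ costs $\phi(K,q)=nD^2$, so $\E[\phi(K_{c_1},c_1)]\approx \ell D^2 \approx \ell\,\phi^*(K)$. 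Your union bound $\phi(K_{c_1},c_1)\le\sum_{j=1}^\ell \phi(K_{c_1^j},c_1^j)$ (equivalently, $\P(c_1=x)\le\ell/|X|$) gives the correct estimate $\E[\phi(X_1^\fC,C_1)]\le 2\ell\,OPT$ and hence $\E[\Phi_1^1]=O(\ell^2\log k)\cdot OPT$; this is looser than the paper's claimed $O(\ell\log k)\cdot OPT$ but still dominated by the $\Phi_1^2$ term, so the proposition holds. In effect you have repaired a small imprecision in the paper's own proof of this step.
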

\begin{proof}
Let us go through the three parts of $\Phi_1$. There was only one node sampled, hence only one covered cluster. Using \cref{lem:2apx}, we conclude that $\E[\phi(X_1^\fC, C_1)] \le 2OPT$, hence $\E[\Phi_i^1] = O(\ell \log k) \cdot OPT$. 
Next, we use \cref{lem:greedy_lemma} to conclude that $\E[\hit_{\ge 1}(K)] = O(\ell^2 \log^2 k)$ for every cluster $K$, hence $\Phi_1^2 = O(\ell^3 \log^3 k) \cdot OPT$. Finally, $b_1 = 0$ since the first center was certainly picked from an uncovered cluster, hence $\Phi_1^3 = 0$. 
\end{proof}

\begin{proposition}
\label{prop:end}
We have $\Phi_k \ge \phi(X, C_k)$. 
\end{proposition}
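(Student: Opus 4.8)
The plan is to simply evaluate the three parts of $\Phi_k$ at $i = k$ and observe that their sum already dominates $\phi(X, C_k)$, after a short case split on the number $b_k$ of bad steps. No real machinery is needed: this proposition is the boundary-condition check built into the design of the potential in \cref{def:main_potential}.

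First I would record the values of the relevant quantities at the last step. Since $H_0 = 0$, the factor $(1 + H_{k-i})$ equals $1$ when $i = k$. Moreover $\hit_{\ge k+1}(K) = \sum_{\iota = k+1}^{k}\hit_\iota(K)$ is an empty sum, so $\E_{\ge k+1}[\hit_{\ge k+1}(K)] = 0$ for every cluster $K$. Hence
\[
\Phi_k^1 = 10^{10}\ell\cdot\phi(X_k^\fC, C_k), \qquad
\Phi_k^2 = 10^{20}\ell\sum_{K\in\fK_k^\fU}\phi^*(K), \qquad
\Phi_k^3 = b_k\cdot\frac{\phi(X, C_k)}{b_k},
\]
where the last identity uses $k - k + b_k = b_k$ and is valid whenever $b_k \ge 1$; for $b_k = 0$ the term $\Phi_k^3$ equals $0$ by the convention that it is $b_k$ times something.

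Then I would split into two cases. If $b_k \ge 1$, then $\Phi_k^3 = \phi(X, C_k)$, and since $\Phi_k^1, \Phi_k^2 \ge 0$ we get $\Phi_k \ge \Phi_k^3 = \phi(X, C_k)$. If instead $b_k = 0$, then the number of uncovered clusters at the end, which equals $k - k + b_k = b_k = 0$, is zero; hence $\fK_k^\fU = \emptyset$ and $X_k^\fC = X$. Consequently $\Phi_k^1 = 10^{10}\ell\cdot\phi(X, C_k) \ge \phi(X, C_k)$ because $\ell \ge 1$, and again $\Phi_k \ge \Phi_k^1 \ge \phi(X, C_k)$.

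There is essentially no obstacle here. The only point requiring a moment of care is the degenerate case $b_k = 0$, in which the $\Phi^3$ term vanishes and one must instead argue that $\Phi^1$ accounts for the whole — now fully covered — point set; this is exactly why the leading constant $10^{10}\ell$ in $\Phi^1$ is at least $1$, which makes that step go through.
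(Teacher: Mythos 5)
Your proof is correct and follows the same boundary-condition approach as the paper's one-line proof. It is in fact slightly more careful: the paper simply writes ``$b_i/(k-i+b_i)=1$ for $i=k$'' and concludes $\Phi_k \ge \Phi_k^3 = \phi(X, C_k)$, which implicitly assumes $b_k \ge 1$, whereas you explicitly handle the degenerate case $b_k = 0$ (a run with no bad steps, so all $k$ clusters are covered) by observing that then $X_k^\fC = X$ and $\Phi_k^1 = 10^{10}\ell\cdot\phi(X, C_k) \ge \phi(X, C_k)$. This fills a small gap the paper glosses over without changing the overall argument.
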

\begin{proof}
For $i = k$ we have $b_i / (k - i + b_i) = 1$ and hence $\Phi_k \ge \Phi_k^3 = \phi(X, C_k)$. 
\end{proof}

The main part of our proof of \cref{thm:greedy_ub} is to show that the potential $\Phi_i$ only decreases in expectation. 
We prove it in \cref{prop:monotone} after analyzing all three parts of the potential $\Phi_i^1, \Phi_i^2, \Phi_i^3$.

\begin{proposition}
\label{prop:monotone1}
Fix a step $i \ge 1$.  We have
\begin{align*}
&\Delta\Phi_i^1 = \Phi_i^1  - \E_{i+1}[\Phi^1_{i+1}] \\
&\ge \frac{10^{10}\ell}{k-i} \cdot \phi(X_i^\fC, C_i)\\
    &- 10^{15}\ell \sum_{K \in \fK_i^{\fU\fU}} \frac{\ell \phi(K, C_i)}{\phi(X, C_i)} \cdot \left( 1 + H_{k-i-1}\right) \cdot \phi^*(K)\\
    &- 10^{15}\ell \sum_{K \in \fK_i^{\fU\fS}} \P(c_{i+1} \in K)\cdot \left( 1 + H_{k-i-1}\right) \cdot \phi^*(K)
. 
\end{align*}
\end{proposition}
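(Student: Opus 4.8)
The plan is to expand $\Delta\Phi_i^1$ directly from \cref{def:main_potential} and to control the only genuinely new contribution, which is the cost of whichever cluster becomes covered in step $i+1$. Since $\Phi_i^1 = 10^{10}\ell(1+H_{k-i})\phi(X_i^\fC, C_i)$, we have
\[
\Delta\Phi_i^1 = 10^{10}\ell(1+H_{k-i})\,\phi(X_i^\fC, C_i)\;-\;10^{10}\ell(1+H_{k-i-1})\,\E_{i+1}\!\left[\phi(X_{i+1}^\fC, C_{i+1})\right].
\]
The first observation I would make is the set-monotonicity $X_i^\fC \subseteq X_{i+1}^\fC$ (a covered cluster stays covered), and that $X_{i+1}^\fC\setminus X_i^\fC$ is nonempty exactly when $c_{i+1}$ falls into some previously uncovered cluster $K$, in which case it equals that single $K$. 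Combined with $\phi(\cdot, C_{i+1}) \le \phi(\cdot, C_i)$ this gives
\[
\E_{i+1}\!\left[\phi(X_{i+1}^\fC, C_{i+1})\right] \le \phi(X_i^\fC, C_i) + \sum_{K\in\fK_i^{\fU\fU}}\E_{i+1}\!\left[\mathbbm{1}[c_{i+1}\in K]\,\phi(K, C_{i+1})\right] + \sum_{K\in\fK_i^{\fU\fS}}\E_{i+1}\!\left[\mathbbm{1}[c_{i+1}\in K]\,\phi(K, C_{i+1})\right].
\]
Substituting this and using $H_{k-i} - H_{k-i-1} = 1/(k-i)$ peels off the leading term $\tfrac{10^{10}\ell}{k-i}\phi(X_i^\fC,C_i)$ of the statement; it then remains to bound each of the two sums by the matching term.

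The solved uncovered clusters are easy: on the event $\{c_{i+1}\in K\}$ we have $\phi(K, C_{i+1}) \le \phi(K, C_i) \le 10^5\phi^*(K)$ by \cref{def:solved}, so that sum is at most $10^5\sum_{K\in\fK_i^{\fU\fS}}\P(c_{i+1}\in K)\phi^*(K)$, and $10^{10}\ell\cdot 10^5 = 10^{15}\ell$ matches the statement exactly. For an unsolved uncovered cluster $K$ we cannot control $\phi(K,C_{i+1})$ pointwise — the greedy rule may well pick a point of $K$ that represents $K$ poorly — so instead I would use the pointwise domination: whenever $c_{i+1}\in K$, say $c_{i+1}=c_{i+1}^{j^\star}$, then $\phi(K,C_{i+1}) = \phi(K, C_i\cup\{c_{i+1}^{j^\star}\}) \le \sum_{j=1}^{\ell}\mathbbm{1}[c_{i+1}^j\in K]\,\phi(K, C_i\cup\{c_{i+1}^j\})$, because all summands are nonnegative and $j^\star$ is one of them. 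Taking expectations and using that each candidate $c_{i+1}^j$ is drawn from $X$ with probability $\phi(\cdot,C_i)/\phi(X,C_i)$, one gets $\E_{i+1}[\mathbbm{1}[c_{i+1}^j\in K]\,\phi(K, C_i\cup\{c_{i+1}^j\})] = \tfrac{\phi(K,C_i)}{\phi(X,C_i)}\,\E[\phi(K, C_i\cup\{c\})]$ with $c$ ranging over $K$ proportionally to $\phi(\cdot,C_i)$, and \cref{lem:5apx} bounds the inner expectation by $5\phi^*(K)$. Summing over the $\ell$ candidates gives $\E_{i+1}[\mathbbm{1}[c_{i+1}\in K]\,\phi(K,C_{i+1})] \le \tfrac{5\ell\phi(K,C_i)}{\phi(X,C_i)}\phi^*(K)$, and $10^{10}\ell\cdot 5\ell = 5\cdot10^{10}\ell^2 \le 10^{15}\ell\cdot\ell$, again matching the statement. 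Each of the two error terms is multiplied by the factor $10^{10}\ell(1+H_{k-i-1})$ coming from $\Phi_{i+1}^1$, so $1+H_{k-i-1}$ appears with no further relaxation needed.

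Putting the three pieces back together yields precisely the claimed lower bound. I expect the only conceptually delicate step to be the pointwise domination $\phi(K,C_{i+1})\le\sum_j\mathbbm{1}[c_{i+1}^j\in K]\,\phi(K,C_i\cup\{c_{i+1}^j\})$ together with its use: it is exactly what sidesteps the failure of the naive ``$5$-approximation once covered'' bound for the greedy rule, by charging all candidates sampled from $K$ rather than only the selected one, after which \cref{lem:5apx} applied per candidate controls the total in expectation. Everything else is bookkeeping of absolute constants and harmonic numbers; one should also keep in mind the standing assumption $i\le k-1$ so that $H_{k-i}-H_{k-i-1}=1/(k-i)$ is meaningful.
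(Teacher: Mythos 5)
Your proof is correct and follows essentially the same route as the paper: peel off the $1/(k-i)$ term, split the newly covered cluster into solved and unsolved cases, and for the unsolved case charge all $\ell$ candidates landing in $K$ (not just the one chosen by the greedy rule) so that \cref{lem:5apx} applies per candidate. Your pointwise domination $\mathbbm{1}[c_{i+1}\in K]\,\phi(K,C_{i+1})\le\sum_{j}\mathbbm{1}[c_{i+1}^j\in K]\,\phi(K,C_i\cup\{c_{i+1}^j\})$ is just a cleaner phrasing of the paper's step of adding back the complementary conditional-expectation terms; the two arguments are equivalent.
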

The intuition behind $\Delta\Phi_i^1$ is as follows. The first part is proportional to $\phi(X^\fC, C_i)/(k-i)$; this is what we are paying for the fact that the $i$-th step can be bad, i.e., the first term will dominate a similar, negative, term in $\Delta\Phi_i^3$. 
The second and the third part of the difference corresponds to the fact that some uncovered clusters can become covered and we need to ensure they have potential proportional to $(1+H_{k-i})\phi(K, C_{i+1})$ on them in this case. 
We argue differently about the solved and unsolved clusters, hence two expressions. They are accounted for by the corresponding drop in the potential $\Delta \Phi_i^2$.

\begin{proof}
We write
\[
\Delta\Phi_i^1 / (10^{10}\ell) = 
 \left(
 \left( 1 + H_{k-i}\right) \cdot \phi(X_i^\fC, C_i) 
\right)
- \E_{i+1} \left[ \left( 1 + H_{k-i-1}\right) \cdot \phi(X_{i+1}^\fC, C_{i+1})  \right]
\]
For every $K \in \fK_i^\fC$ we upper bound the the term $\phi(K, C_{i+1})$ by $\phi(K, C_{i})$ in the above expression. 
However, notice that $X_{i+1}^\fC$ potentially contains one additional newly covered cluster. We can hence write: 
\begin{align*}
    \Delta\Phi_i^1 / (10^{10}\ell)
    &\ge \frac{1}{k-i} \cdot \phi(X_i^\fC, C_i)
    - \sum_{K \in \fK_i^\fU}\P(c_{i+1} \in K) \cdot \left( 1 + H_{k-i-1}\right) \cdot \E_{i+1}[\phi(K, C_{i+1}) | c_{i+1} \in K]
\end{align*}

We split the sum on the right hand side to the summation over $K \in \fK_i^{\fU\fS}$ and $K \in \fK_i^{\fU\fU}$. To bound the first part, consider any $K \in \fK_i^{\fU\fU}$ and write

\begin{align*}
&\P(c_{i+1} \in K) \cdot \E_{i+1}[\phi(K, C_{i+1}) | c_{i+1} \in K]\\
&=\P(c_{i+1} \in K)  \cdot \sum_{j = 1}^{\ell} \P(c_{i+1} = c_{i+1}^j | c_{i+1} \in K) \cdot \E\left[\phi(K, C_i \cup \{c_{i+1}^j\}) | c_{i+1} \in K \wedge c_{i+1} = c_{i+1}^j\right]\\
&= \sum_{j = 1}^{\ell} \P(c_{i+1}^j \in K \wedge c_{i+1} = c_{i+1}^j ) \cdot \E\left[\phi(K, C_i \cup \{c_{i+1}^j\}) | c_{i+1}^j \in K \wedge c_{i+1} = c_{i+1}^j\right]\\
&\le \sum_{j = 1}^{\ell} \P(c_{i+1}^j \in K \wedge c_{i+1} = c_{i+1}^j ) \cdot \E\left[\phi(K, C_i \cup \{c_{i+1}^j\}) | c_{i+1}^j \in K \wedge c_{i+1} = c_{i+1}^j\right]\\
&\ \ \ \ +\P(c_{i+1}^j \in K \wedge c_{i+1} \not= c_{i+1}^j ) \cdot \E\left[\phi(K, C_i \cup \{c_{i+1}^j\}) | c_{i+1}^j \in K \wedge c_{i+1} \not= c_{i+1}^j\right]\\
&= \sum_{j = 1}^{\ell} \P(c_{i+1}^j \in K ) \cdot \E\left[\phi(K, C_i \cup \{c_{i+1}^j\}) | c_{i+1}^j \in K \right]\\
&\le \frac{\ell\phi(K, C_i)}{\phi(X, C_i)} \cdot 5\phi^*(K)\\
\end{align*}
where we used \cref{lem:5apx} in the last inequality.

On the other hand, for each $K \in \fK_i^{\fU \fS}$ we can use the definition of solved clusters to get that
\begin{align*}
&\P(c_{i+1} \in K) \cdot \E_{i+1}\left[\phi(K, C_{i+1}) | c_{i+1} \in K\right]\\
&\le \P(c_{i+1} \in K) \cdot \phi(K, C_i)\\
&\le \P(c_{i+1} \in K) \cdot 10^5\phi^*(K).
\end{align*}

\end{proof}

We continue with $\Phi_i^2$. 
\begin{proposition}
\label{prop:monotone2}
Fix a step $i \ge 1$. We have
\begin{align*}
\Phi_i^2 - \E_{i+1}[\Phi^2_{i+1}] 
&\ge 
10^{20}\ell \sum_{K \in \fK_i^{\fU\fU}}\frac{\ell \phi(K, C_i)}{\phi(X, C_i)} \cdot \left( 1 + H_{k-i-1}\right) \cdot \phi^*(K)\\
&+10^{20}\ell\sum_{K \in \fK_i^{\fU\fS}}\P(c_{i+1} \in K) \cdot  \left( 1 + H_{k-i-1}\right) \cdot \phi^*(K)\\
&+ \frac{10^{10}\ell \phi(X_i^{\fU\fS}, C_i)}{k-i}
\end{align*}
\end{proposition}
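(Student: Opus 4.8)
The plan is to evaluate $\Phi_i^2 - \E_{i+1}[\Phi_{i+1}^2]$ cluster by cluster. Passing from $C_i$ to $C_{i+1} = C_i \cup \{c_{i+1}\}$ only turns uncovered clusters into covered ones, and exactly one point is added, so $\fK_{i+1}^\fU \subseteq \fK_i^\fU$ with $|\fK_i^\fU \setminus \fK_{i+1}^\fU| \le 1$ — the missing cluster, if any, being the one containing $c_{i+1}$. Writing $h_K := \E_{\ge i+1}[\hit_{\ge i+1}(K)]$ and rewriting $\Phi_{i+1}^2$ as a sum over $\fK_i^\fU$ with an indicator, I would start from
\[
\Phi_i^2 - \E_{i+1}[\Phi_{i+1}^2] = 10^{20}\ell\sum_{K\in\fK_i^\fU}\phi^*(K)\Big((1+h_K)(1+H_{k-i}) - (1+H_{k-i-1})\,\E_{i+1}\big[\mathbbm 1[K\in\fK_{i+1}^\fU]\big(1 + \E_{\ge i+2}[\hit_{\ge i+2}(K)]\big)\big]\Big).
\]

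The key step is the cancellation identity: for every $K\in\fK_i^\fU$,
\[
\E_{i+1}\big[\mathbbm 1[K\in\fK_{i+1}^\fU]\big(1 + \E_{\ge i+2}[\hit_{\ge i+2}(K)]\big)\big] = 1 + h_K - \E_{i+1}[\hit_{i+1}(K)] - \P(c_{i+1}\in K).
\]
To see this I would note $\mathbbm 1[K\in\fK_{i+1}^\fU] = \mathbbm 1[c_{i+1}\notin K]$, and that $\E_{\ge i+2}[\hit_{\ge i+2}(K)] = 0$ on the event $\{c_{i+1}\in K\}$ (once $K$ is covered w.r.t.\ $C_{i+1}$ it stays covered, so all later $\hit$'s vanish; the same is true outright when $K$ is already solved w.r.t.\ $C_i$, since solvedness is preserved). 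Hence the bracketed quantity equals $(1-\P(c_{i+1}\in K)) + \E_{i+1}[\E_{\ge i+2}[\hit_{\ge i+2}(K)]]$, and the one-step tower decomposition $h_K = \E_{i+1}[\hit_{i+1}(K)] + \E_{i+1}[\E_{\ge i+2}[\hit_{\ge i+2}(K)]]$ finishes it. Plugging this identity into the display and using $H_{k-i} = H_{k-i-1} + \tfrac1{k-i}$, the per-cluster term collapses to
\[
\phi^*(K)\Big((1+H_{k-i-1})\big(\E_{i+1}[\hit_{i+1}(K)] + \P(c_{i+1}\in K)\big) + \tfrac{1+h_K}{k-i}\Big).
\]

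Then I would split into the two families. For $K\in\fK_i^{\fU\fU}$, since $K$ is uncovered and unsolved w.r.t.\ $C_i$ we have $\E_{i+1}[\hit_{i+1}(K)] = \ell\,\phi(K,C_i)/\phi(X,C_i)$, and dropping the nonnegative terms $(1+H_{k-i-1})\P(c_{i+1}\in K)$ and $\tfrac{1+h_K}{k-i}$ leaves at least $\phi^*(K)(1+H_{k-i-1})\,\ell\phi(K,C_i)/\phi(X,C_i)$. For $K\in\fK_i^{\fU\fS}$ we have $\E_{i+1}[\hit_{i+1}(K)] = 0$ and $h_K = 0$; keeping the $\P(c_{i+1}\in K)$ term and using $\phi^*(K)\ge \phi(K,C_i)/10^5$ (definition of solved, applied w.r.t.\ $C_i$) gives at least $\phi^*(K)(1+H_{k-i-1})\P(c_{i+1}\in K) + \tfrac{\phi(K,C_i)}{10^5(k-i)}$. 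Summing over $\fK_i^\fU$, multiplying back by $10^{20}\ell$, using $10^{20}/10^5 = 10^{15}\ge 10^{10}$ and $\sum_{K\in\fK_i^{\fU\fS}}\phi(K,C_i) = \phi(X_i^{\fU\fS},C_i)$ yields exactly the claimed bound.

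The step needing the most care is the cancellation identity: one must be scrupulous that it is covered/solved \emph{with respect to $C_i$} (not $C_{i+1}$) that controls whether $\hit_{i+1}(K)$ is counted and nonzero, and that once a cluster is covered or solved all of its future $\hit$ contributions vanish — this is precisely what makes the $(1+H_{k-i-1})$-weighted expectations telescope cleanly into $\E_{i+1}[\hit_{i+1}(K)] + \P(c_{i+1}\in K)$ instead of a messier expression. Everything after that (the $H_{k-i}$ split, discarding nonnegative slack, and the constant bookkeeping) is routine.
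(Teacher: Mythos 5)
Your proof is correct and follows essentially the same path as the paper's: you both telescope the harmonic factor via $H_{k-i} = H_{k-i-1} + \tfrac1{k-i}$, use that future hits vanish once $K$ is covered or solved, and split the resulting positive per-cluster contribution between the $\E_{i+1}[\hit_{i+1}(K)]$ term (for unsolved $K$) and the $\P(c_{i+1}\in K)$ plus $1/(k-i)$ terms (for solved $K$). The only cosmetic difference is that you derive the exact cancellation identity first and then discard slack, whereas the paper discards slack term by term on the way to the same bound.
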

The intuition behind $\Delta\Phi_i^2$ is as follows. The first part of the potential is saying that whenever a candidate center $c_{i+1}^j$ hits an unsolved cluster $K$, we can pay the potential for $K$ to become covered. 
The second part is saying that whenever a solved cluster $K$ becomes covered, we can also pay the due potential; this is simple since the cost of $K$ is already small.
These two terms dominate the respective decreases in $\Delta\Phi_i^1$. 
Finally, each solved cluster pays a cost proportional to $\phi^*(K) / (k-i)$ which is proportional to $\phi(K, C_i)/(k-i)$ in every step; this is analogous to the first term of $\Delta\Phi_i^1$.

\begin{proof}

We have
\begin{align}
\label{eq:rybnik}
\Delta\Phi_i^2 / (10^{20}\ell)
=& \sum_{K \in \fK_i^\fU} 
\left( 1 + \E_{\ge i+1}[\hit_{\ge i+1}(K)]\right) \cdot \left(1 + H_{k-i}\right) \cdot \phi^*(K)\\
&- \E_{i+1} \left[ \sum_{K \in \fK_{i+1}^\fU}  \left( 1 + \E_{\ge i+2}[\hit_{\ge i+2}(K)] \right) \cdot  \left(1 + H_{k-i-1}\right) \cdot \phi^*(K) \right]\nonumber
\end{align}

We bound this expression as follows:
\begin{align}
    \Delta\Phi_i^2 / (10^{20}\ell)
&\ge \sum_{K \in \fK_i^{\fU\fU}} \left( \E_{\ge i+1}[\hit_{\ge i+1}(K)] - \E_{i+1}[ \E_{\ge i+2}[\hit_{\ge i+2}(K)] ]\right) \left(1 + H_{k-i-1}\right) \cdot \phi^*(K) \label{eq:fst}\\ 
&+\sum_{K \in \fK_i^{\fU\fS}} \P\left(c_{i+1} \in K \right) \cdot \left(1 + H_{k-i-1}\right) \cdot \phi^*(K) \label{eq:snd}\\ 
&+\frac{1}{k - i} \cdot \phi^*(X_i^\fU)\label{eq:thd}
\end{align}
We did the following. For each cluster that is unsolved in the $i$th step we simply used the fact that $\fK_{i+1}^\fU \subseteq \fK_i^\fU$ and subtracted the two expressions of \cref{eq:rybnik}. For the solved clusters we on the other hand used that with probability $\P(c_{i+1}\in K)$ we have $K \not\in \fK_{i+1}^\fU$. 
Finally, the last term comes from the replacement of $H_{k-i}$ in $\Phi_i^2$ by $H_{k-i-1}$ in $\Phi_{i+1}^2$. 

Comparing with the desired bound from the statement, we see that the second term \cref{eq:snd} in our bound is already what it should be. 
For the third term \cref{eq:thd}, we first use $\phi^*(X_i^\fU) \ge \phi^*(X_i^{\fU\fS})$ and then, by definition of solved clusters, $\phi^*(X_i^{\fU\fS}) \ge \frac{1}{10^5} \phi(X_i^{\fU\fS}, C_i)$. 

It remains to deal with the first term \cref{eq:fst}. 
Consider any cluster $K  \in \fK_i^\fU$ that is also not solved. Then we have $\E_{i+1}[\hit_{i+1}^j(K)] = \frac{\phi(K, C_i)}{\phi(X, C_i)}$ for any $1 \le j\le \ell$ and we can hence compute that
\begin{align*}
    &\E_{\ge i+1}[\hit_{\ge i+1}(K)] - \E_{i+1}\left[ \E_{\ge i+2}[\hit_{\ge i+2}(K)]\right]\\
    &= \E_{\ge i+1}[\hit_{\ge i+1}(K)] - \hit_{\ge i+2}(K)]]\\
    &= \E_{\ge i+1}[\sum_{j = 1}^\ell \hit_{i+1}^j]\\
    &= \frac{\ell\phi(K, C_i)}{\phi(X, C_i)}. 
\end{align*}
which concludes the proof. 
\end{proof}

We finish with the third part of the potential, $\Phi_i^3$. 

\begin{proposition}
\label{prop:monotone3}
Fix a step $i \ge 1$. We have
\begin{align}
\Phi_i^3 - \E_{i+1}[\Phi^3_{i+1}] 
&\ge - \frac{2\ell \phi(X_i^\fC \cup X_i^{\fU\fS}, C_i)}{k - i}\label{eq:bla1}\\
&- 5\sum_{K \in \fK_i^{\fU\fU}} \frac{\phi(K, C_i)}{\phi(X, C_i)} \phi^*(K)\label{eq:bla3}
\end{align}
\end{proposition}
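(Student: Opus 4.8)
The plan is to see how $\Phi^3$ changes when the $(i{+}1)$-st center is drawn. Write $u_i := k-i+b_i$ for the number of uncovered optimal clusters in step $i{+}1$ and $\phi := \phi(X,C_i)$, so $\Phi_i^3 = b_i\phi/u_i$. If $b_i=0$ the potential is $0$ and the claim reduces to the bad-step estimate below, so assume $b_i\ge1$; then $u_i\ge k-i+1\ge 2$ and, since $k-i\ge1$, also $b_i\le u_i-1$, which will keep all denominators under control. A good step keeps $b$ and lowers $u$ by one; a bad step raises $b$ by one and keeps $u$. Hence, writing $\phi' := \phi(X,C_{i+1})$ and $D_{i+1}:=\phi-\phi'$ for the cost drop, and using $\tfrac1{u_i-1} = \tfrac1{u_i} + \tfrac1{u_i(u_i-1)}$,
\begin{align*}
\E_{i+1}[\Phi_{i+1}^3]
&= \tfrac{b_i}{u_i}\E_{i+1}[\phi'] + \tfrac{b_i}{u_i(u_i-1)}\E_{i+1}[\mathbbm{1}[\text{good}]\,\phi'] + \tfrac1{u_i}\E_{i+1}[\mathbbm{1}[\text{bad}]\,\phi']\\
&\le \tfrac{b_i}{u_i-1}\bigl(\phi-\E_{i+1}[D_{i+1}]\bigr) + \tfrac1{u_i}\,\P(\text{bad})\cdot\phi,
\end{align*}
using $\phi'\le\phi$ twice. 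Therefore $\Phi_i^3 - \E_{i+1}[\Phi_{i+1}^3] \ge \tfrac{b_i}{u_i-1}\bigl(\E_{i+1}[D_{i+1}] - \tfrac{\phi}{u_i}\bigr) - \tfrac{\P(\text{bad})\,\phi}{u_i}$.

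First I would dispose of the bad-step term. A step is bad only if the greedily chosen center lies in $X_i^\fC$, which forces some candidate $c_{i+1}^j$ to lie in $X_i^\fC$; by a union bound $\P(\text{bad})\le \ell\,\phi(X_i^\fC,C_i)/\phi$, so $\tfrac{\P(\text{bad})\,\phi}{u_i}\le \tfrac{\ell\,\phi(X_i^\fC,C_i)}{u_i}\le \tfrac{\ell\,\phi(X_i^\fC,C_i)}{k-i}$, which the first term of the statement absorbs.

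For the main term I would lower bound $\E_{i+1}[D_{i+1}]$ by comparing the greedy choice with the first sampled candidate: $D_{i+1}\ge \phi-\phi(X,C_i\cup\{c_{i+1}^1\})$. Split according to the cluster of $c_{i+1}^1$: if it is an uncovered \emph{unsolved} cluster $K$ then $\phi(X,C_i\cup\{c_{i+1}^1\})\le \phi-\phi(K,C_i)+\phi(K,C_i\cup\{c_{i+1}^1\})$, and averaging over $c_{i+1}^1$ inside $K$ while applying \cref{lem:5apx} to the last term yields the contribution $\tfrac{\phi(K,C_i)}{\phi}\bigl(\phi-\phi(K,C_i)+5\phi^*(K)\bigr)$; if it is covered or uncovered-solved we simply use $\phi(X,C_i\cup\{c_{i+1}^1\})\le\phi$ (for solved $K$ this is consistent with $\phi(K,C_i\cup\{c_{i+1}^1\})\le\phi(K,C_i)$). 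Summing over all optimal clusters,
\[
\E_{i+1}\bigl[\phi(X,C_i\cup\{c_{i+1}^1\})\bigr] \le \phi - \frac{\sum_{K\in\fK_i^{\fU\fU}}\phi(K,C_i)^2}{\phi} + 5\sum_{K\in\fK_i^{\fU\fU}}\frac{\phi(K,C_i)}{\phi}\,\phi^*(K),
\]
so, by Cauchy--Schwarz over the at most $u_i$ clusters of $\fK_i^{\fU\fU}$,
\[
\E_{i+1}[D_{i+1}] \ge \frac{\phi(X_i^{\fU\fU},C_i)^2}{u_i\,\phi} - 5\sum_{K\in\fK_i^{\fU\fU}}\frac{\phi(K,C_i)}{\phi}\,\phi^*(K).
\]
Writing $\psi := \phi(X_i^\fC\cup X_i^{\fU\fS},C_i) = \phi-\phi(X_i^{\fU\fU},C_i)$ and using $\phi(X_i^{\fU\fU},C_i)^2-\phi^2 = -\psi\bigl(\phi+\phi(X_i^{\fU\fU},C_i)\bigr)\ge -2\psi\phi$ together with $b_i\le u_i-1$ and $u_i\ge k-i$, one gets $\tfrac{b_i}{u_i-1}\bigl(\E_{i+1}[D_{i+1}] - \tfrac{\phi}{u_i}\bigr) \ge -\tfrac{2\psi}{k-i} - 5\sum_{K\in\fK_i^{\fU\fU}}\tfrac{\phi(K,C_i)}{\phi}\phi^*(K)$. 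Combining this with the bad-step bound and merging the two $\psi$-terms gives the statement.

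I expect the bound on $\E_{i+1}[D_{i+1}]$ to be the main obstacle: unlike in plain $k$-means++, the potential $\Phi^3$ is built on the \emph{total} cost $\phi(X,C_i)$ rather than on the uncovered cost, so the cost drop of one greedy center has to pay both for ``an average uncovered cluster'' \emph{and} for the discrepancy $\psi$ between the two costs, while not being inflated by small (hence unlikely-to-be-hit) uncovered clusters, by the residual cost of a newly covered unsolved cluster (controlled through \cref{lem:5apx}), or by the separate behaviour of solved clusters. This bookkeeping is the formal counterpart of the ``two mismatches'' discussed in \cref{subsec:return}, and getting the constants to land exactly as stated (in particular for small $\ell$) is where the care is needed.
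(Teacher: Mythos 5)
Your argument is correct (modulo the $\ell=1$ caveat you flag yourself) and takes a genuinely different, arguably cleaner, route than the paper. The paper splits on whether $c_{i+1}$ lands in $X_i^\fC\cup X_i^{\fU\fS}$ or in $X_i^{\fU\fU}$, and in the latter case it needs \cref{lem:conditioning} to pass from $\E_{i+1}[\phi(X,C_{i+1})\mid c_{i+1}\in X_i^{\fU\fU}]$ to the analogous quantity for a single candidate (this is \cref{eq:jo}); because this happens under the conditioning, the resulting sampling weights carry $\phi(X_i^{\fU\fU},C_i)$ in the denominator, as in \cref{eq:almost_there}. You instead bound the greedy drop \emph{unconditionally} by the drop of the first candidate, $D_{i+1}\ge\phi(X,C_i)-\phi(X,C_i\cup\{c_{i+1}^1\})$, and then take expectation over the full sampling distribution of $c_{i+1}^1$. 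This sidesteps \cref{lem:conditioning} entirely and, as a bonus, produces the weight $\phi(K,C_i)/\phi(X,C_i)$ in the $\phi^*(K)$-sum exactly as in \cref{eq:bla3}, where the paper's own chain ends with $\phi(X_i^{\fU\fU},C_i)$. The price of the unconditional comparison shows up in the coefficient of $\phi(X_i^\fC\cup X_i^{\fU\fS},C_i)/(k-i)$: after Cauchy--Schwarz you must pay $\phi(X_i^{\fU\fU},C_i)^2-\phi(X,C_i)^2\ge-2\,\phi(X,C_i)\,\phi(X_i^\fC\cup X_i^{\fU\fS},C_i)$, so the main term contributes a factor $2$, and adding the bad-step factor $\ell$ gives $2+\ell$, which is $\le 2\ell$ only for $\ell\ge 2$; the paper's case split produces $\ell+1$ and hence also covers $\ell=1$. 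Since all constants have ample slack when fed into \cref{prop:monotone}, this costs nothing downstream, but strictly speaking your proof of the proposition as stated does not close at $\ell=1$, and you would need a small ad-hoc patch there.
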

The intuition behind $\Delta\Phi_i^3$ is as follows. 
In the original $k$-means++ analysis, we would here want to prove that $\Delta\Phi_i^3 \ge - \phi(X_i^\fC, C_i)/(k-i)$ where the right-hand side corresponds to the probability of a bad step multiplied by the cost of average uncovered cluster. In our setting, we first lose an additional  $\ell$-factor since the probability of having a bad step is $\ell$ times larger. We also lose a few more error terms as discussed in \cref{subsec:potential}, the important part is that they can be paid for by $\Delta\Phi_i^1 + \Delta\Phi_i^2$.

\begin{proof}
We will bound $\E_{i+1}[\Phi_{i+1}^3]  - \Phi_i^3$ instead of $\Phi_i^3 - \E_{i+1}[\Phi_{i+1}^3]$ to make the relevant terms positive.

At first we note that we surely know that 
\begin{align}
\label{eq:trivial}
\Phi_{i+1}^3 \le (b_i + 1) \frac{\phi(X, C_i)}{k - (i+1) + (b_i+1)}
= (b_i + 1) \frac{\phi(X, C_i)}{k - i + b_i}
\end{align}
This follows by bounding $\phi(X, C_{i+1})\le \phi(X, C_i)$ and noting that the value of $\Phi_{i+1}$ is larger whenever $b_{i+1} = b_i+1$ as opposed to $b_{i+1} = b_i$. 
To see it formally, we note that for any $b_i > 0$ and any $k-(i+1) \ge 0$ the inequality $\frac{b_i}{k-(i+1)+b_i} \le \frac{b_i + 1}{k - (i+1) + (b_i + 1)}$ is equivalent to $\frac{k-(i+1)}{k-(i+1)+b_i} \ge \frac{k-(i+1)}{k-(i+1)+(b_i+1)}$. 

We start by analysing the special case when $\phi(X_i^\fU, C_i) \le \phi(X_i^\fC, C_i)$. 
In this case, we simply use this assumption and \cref{eq:trivial} to bound 
\begin{align}
    &\E_{i+1}[ \Phi_{i+1}^3] - \Phi_{i}^3\\
    &\le (b_i+1) \frac{\phi(X, C_i)}{k - i + b_i}
    - b_i\frac{\phi(X, C_i)}{k - i + b_i}\\
    &= \frac{\phi(X, C_i)}{k - i + b_i}
    \le 2\frac{\phi(X_i^\fC, C_i)}{k - i + b_i}
    \le 2\frac{\phi(X_i^\fC, C_i)}{k - i}
\end{align}
and we are done as this term is dominated by the right hand side of \cref{eq:bla1}. 

Next, we assume 
\begin{align}
    \label{eq:ass}
    \phi(X_i^\fU, C_i) \ge \phi(X_i^\fC, C_i)
\end{align}

We start by writing
\begin{align}
    \label{eq:omg}
    &\E_{i+1}[ \Phi_{i+1}^3 ] - \Phi_i^3\\
    &= \E_{i+1}\left[b_{i+1} \frac{\phi(X, C_{i+1})}{k - (i+1) + b_{i+1}} \right] - b_i \cdot \frac{\phi(X, C_i)}{k - i + b_i}\\
    &\le  
    \P(c_{i+1} \in X_i^\fC \cup X_i^{\fU\fS}) \left(
    (b_i + 1) \frac{\phi(X, C_i)}{k - i + b_i}
    - b_i \frac{\phi(X, C_i)}{k - i + b_i}
    \right)\label{eq:omg11}\\
    &+  
    \P(c_{i+1} \in X_i^{\fU\fU}) \cdot
    \left(
    b_i \frac{\E_{i+1}[ \phi(X, C_{i+1}) | c_{i+1} \in  X_i^{\fU\fU}]}{k - (i+1) + b_i }
    - b_i \frac{\phi(X, C_i)}{k - i + b_i}
    \right)\label{eq:omg31}
\end{align}
That is, we distinguish two cases based on where the center $c_{i+1}$ is picked from. In the first case we pessimistically bound $\Phi_{i+1}^3$ using \cref{eq:trivial}, while in the second case we use the fact that $c_{i+1} \in X_i^{\fU\fU}$ implies that $b_{i+1} = b_i$. 

To bound the first term, i.e. \cref{eq:omg11}, we first use that $$\P(c_{i+1} \in X_i^\fC \cup X_i^{\fU\fS}) \le \P( \exists j : c_{i+1}^j \in X_i^\fC \cup X_i^{\fU\fS}) \le \frac{\ell \phi(X_i^\fC \cup X_i^{\fU\fS}, C_i)}{\phi(X, C_i)}.$$ 
We also have
\[
(b_i + 1) \frac{\phi(X, C_i)}{k - i + b_i}
    - b_i \frac{\phi(X, C_i)}{k - i + b_i}
    \le \frac{\phi(X, C_i)}{k - i}
\]
Hence, the value of \cref{eq:omg11} is at most 
\[
\frac{\ell \phi(X_i^\fC \cup X_i^{\fU\fS}, C_i)}{\phi(X, C_i)} \cdot \frac{\phi(X, C_i)}{k - i}
\le \frac{\ell \phi(X_i^\fC\cup X_i^{\fU\fS}, C_i)}{k - i}
\]
Hence, the first term, corresponding to the case when the step is bad, is conveniently dominated by \cref{eq:bla1}. 

In the rest of the proof, we analyze the second term \cref{eq:omg31} that corresponds to the drift of the size of the average uncovered cluster. 

We start by proving that 
\begin{align}
    \label{eq:jo}
\E_{i+1}[ \phi(X, C_{i+1}) | c_{i+1} \in  X_i^{\fU\fU}]
\le \E_{i+1}[ \phi(X, C_{i} \cup \{c_{i+1}^1\}) | c_{i+1}^1 \in  X_i^{\fU\fU}]
\end{align}
That is, we claim that if we reveal that the center $c_{i+1}$ taken by the greedy rule is from $X_i^{\fU\fU}$, we know that the expected new cost $\phi(X, C_{i+1})$ is smaller than if we simply sampled some candidate center $c_{i+1}^j$ and revealed it is sampled from $X_i^{\fU\fU}$. 
\cref{eq:jo} then allows us to analyze further only the expression on its right-hand side that does not rely anymore on the greedy rule. 

\cref{eq:jo} follows from the fact our rule is greedy; to formally verify it holds, let us first write
\begin{align}
\label{eq:rain}
\E_{i+1}[ \phi(X, C_{i+1}) | c_{i+1} \in  X_i^{\fU\fU}]
= \sum_{I \in \fI} \P(I | c_{i+1} \in  X_i^{\fU\fU}) \cdot \E_{i+1}[\phi(X, C_{i+1}) | I ]
\end{align}
where $\fI$ is the set of all of at most $2^\ell \cdot \ell$ possible following revelations: For each $1 \le j \le \ell$, we reveal whether $c_{i+1}^j \in X_i^{\fU\fU}$, and we also reveal for which index $j_0$ we have $c_{i+1} = c_{i+1}^{j_0}$. 
Notice that on the right hand side of \cref{eq:rain} we used $\E_{i+1}[\phi(X, C_{i+1}) | I ] = \E_{i+1}[\phi(X, C_{i+1}) | I \wedge c_{i+1} \in X_i^{\fU\fU}]$ since $c_{i+1} \in X_i^{\fU\fU}$ is always either implied by $I$ or it is incompatible with it and then $\P(I | c_{i+1}\in X_i^{\fU\fU}) = 0$. 

Fixing any revelation $I$ with $c_{i+1} = c_{i+1}^{j_0}$, we observe that 
\begin{align}
\label{eq:ach}
\E_{i+1}[\phi(X, C_{i+1}) | I ] \le \E_{i+1}[\phi(X, C_{i} \cup \{c_{i+1}^1\}) | c_{i+1}^1 \in X_i^{\fU\fU} ]
\end{align}
To see this, first rewrite the equation equivalently as 
\begin{align} \label{eq:router}
\E_{i+1}[\phi(X, C_{i} \cup \{c_{i+1}^{j_0}\}) | I ]
\le \E_{i+1}\left[\phi(X, C_{i} \cup \{c_{i+1}^{j_0}\}) | c_{i+1}^{j_0} \in X_i^{\fU\fU} \right]. 
\end{align}
Observe that the information $I$ can be viewed as describing distributions from which all candidate centers $c_{i+1}^j$ are sampled from, independently, together with the information that after candidate centers were sampled, it happened that $\phi(X, C_{i} \cup \{c_{i+1}^{j_0}\}) \le \phi(X, C_{i} \cup \{c_{i+1}^{j}\}) $ for any $j\not= j_0$. 
This means that the correctness of \cref{eq:router} follows from \cref{lem:conditioning}. 
Plugging \cref{eq:ach} to \cref{eq:rain} proves \cref{eq:jo}. 
\vspace{1cm}

We now continue with analysing the term $\E_{i+1}[ \phi(X, C_{i} \cup \{c_{i+1}^1\}) | c_{i+1}^1 \in  X_i^{\fU\fU}]$ from \cref{eq:jo} even further. We write:
\begin{align}
&\E_{i+1}[\phi(X, C_{i} \cup \{c_{i+1}^1\}) | c_{i+1}^1 \in X_i^{\fU\fU} ]\\
&\le \phi(X, C_i) -  \sum_{K \in \fK_i^{\fU\fU}} \frac{\phi(K, C_i)}{\phi(X_i^{\fU\fU}, C_i)} \cdot \left( \phi(K, C_i) - \E_{i+1}\left[ \phi\left(K, C_i \cup \{c_{i+1}^1\} | c_{i+1}^1 \in K\right)\right] \right)\\
&\le \phi(X, C_i) -  \sum_{K \in \fK_i^{\fU\fU}} \frac{\phi(K, C_i)}{\phi(X_i^{\fU\fU}, C_i)} \cdot \left( \phi(K, C_i) - 5\phi^*(K)  \right) && \text{\cref{lem:5apx}}\\
&= \phi(X, C_i) 
- \sum_{K \in \fK_i^{\fU\fU}} \frac{\phi^2(K, C_i)}{\phi(X_i^{\fU\fU}, C_i)}
+  \sum_{K \in \fK_i^{\fU\fU}} \frac{\phi(K, C_i)}{\phi(X_i^{\fU\fU}, C_i)} \cdot 5\phi^*(K)\\
&\le \phi(X, C_i) - \frac{\phi(X_i^{\fU\fU})}{|\fK_i^{\fU\fU}|} +  \sum_{K \in \fK_i^{\fU\fU}} \frac{\phi(K, C_i)}{\phi(X_i^{\fU\fU}, C_i)} \cdot 5\phi^*(K)\label{eq:willitend2}
\end{align}
where the last bound follows from the Cauchy-Schwartz inequality (or AK inequality) $\sum_{i = 1}^n x_i^2 \ge \left( \sum_{i = 1}^n x_i\right)^2 / n$. 

It is time to reap the fruits of our work. We plug in the bounds from \cref{eq:jo,eq:willitend2} to the term \cref{eq:omg31} and bound $\P(c_{i+1} \in X_i^{\fU\fU}) \le 1$ there to conclude that
\begin{align}
\label{eq:almost}
    \text{\cref{eq:omg31}}
    &\le \left(
    b_i \frac{\E_{i+1}[ \phi(X, C_{i+1}) | c_{i+1} \in  X_i^{\fU\fU}]}{k - (i+1) + b_i }
    - b_i \frac{\phi(X, C_i)}{k - i + b_i}
    \right)\\
    &\le b_i \cdot \left(
    \frac{\phi(X, C_i) - \frac{\phi(X_i^{\fU\fU})}{|\fK_i^{\fU\fU}|} +  \sum_{K \in \fK_i^{\fU\fU}} \frac{\phi(K, C_i)}{\phi(X_i^{\fU\fU}, C_i)} \cdot 5\phi^*(K)}{k - (i+1) + b_i }
    - \frac{\phi(X, C_i)}{k - i + b_i}
    \right)
\end{align}

This can be further simplified to 
\begin{align}
\label{eq:almost_there}
\text{\cref{eq:omg31}}
    &\le b_i \cdot \left(
    \frac{\phi(X, C_i) - \frac{\phi(X_i^{\fU\fU})}{|\fK_i^{\fU\fU}|} }{k - (i+1) + b_i }
    - \frac{\phi(X, C_i)}{k - i + b_i}
    \right)+  \sum_{K \in \fK_i^{\fU\fU}} \frac{\phi(K, C_i)}{\phi(X_i^{\fU\fU}, C_i)} \cdot 5\phi^*(K)
\end{align}
Note that the last term of the right-hand side is already equal to \cref{eq:bla3} so to finish we need to analyze the first term of the right-hand side. 
We do it as follows. 
\begin{align}
     &b_i \cdot \left(
    \frac{\phi(X, C_i) - \frac{\phi(X_i^{\fU\fU})}{|\fK_i^{\fU\fU}|} }{k - (i+1) + b_i }
    - \frac{\phi(X, C_i)}{k - i + b_i}
    \right)\\
    & \le b_i \cdot \left(
    \frac{\phi(X, C_i) - \frac{\phi(X_i^{\fU\fU})}{k - i + b_i} }{k - (i+1) + b_i }
    - \frac{\phi(X, C_i)}{k - i + b_i}
    \right) \ \ \ \ \ \ \ \ \  \ \ \ \ \ \ \ \ \  \ \ \ \ \ \ \ \ \  \ \ \ \ \ \ \ \ \ |\fK_i^{\fU\fU}| \le |\fK_i^\fU|\\
    & = b_i \cdot \left(
    \frac{\phi(X_i^{\fU\fU}, C_i) - \frac{\phi(X_i^{\fU\fU})}{k - i + b_i} }{k - (i+1) + b_i }
    - \frac{\phi(X_i^{\fU\fU}, C_i)}{k - i + b_i}
    \right) + b_i\phi(X_i^{\fU\fS} \cup X_i^\fC)\left(\frac{1}{k-(i+1)+b_i} - \frac{1}{k-i+b_i}\right)\\
        & =0 +   \frac{b_i\phi(X_i^{\fU\fS} \cup X_i^\fC)}{(k-i+b_i)(k-i+b_i-1)}\\
        &\le \frac{\phi(X_i^{\fU\fS} \cup X_i^\fC)}{k-i}
\ \ \ \ \ \ \ \ \  \ \ \ \ \ \ \ \ \  \ \ \ \ \ \ \ \ \  \ \ \ \ \ \ \ \ \ \ \ \ \ \ \ \ \ \  \ \ \ \ \ \ \ \ \  \ \ \ \ \ \ \ \ \  \ \ \ \ \ \ \ \ \ i \le k-1
\end{align}

Plugging back to \cref{eq:almost_there} and all the way back to \cref{eq:omg} finishes the proof. 

\end{proof}

\begin{proposition}
\label{prop:monotone}
Fix a step $i > 1$. We have
\[
\Phi_i  \ge \E_{i+1}[\Phi_{i+1}]. 
\]
\end{proposition}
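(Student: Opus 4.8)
The plan is to simply add the three lower bounds on $\Delta\Phi_i^1$, $\Delta\Phi_i^2$ and $\Delta\Phi_i^3$ established in Propositions~\ref{prop:monotone1}, \ref{prop:monotone2} and \ref{prop:monotone3}, and to verify that the sum is nonnegative; since $\Phi_i - \E_{i+1}[\Phi_{i+1}] = \Delta\Phi_i^1 + \Delta\Phi_i^2 + \Delta\Phi_i^3$, this is exactly the claim. (I would note that the relevant range is a step $i$ with $i \le k-1$, so that the $1/(k-i)$ and $H_{k-i-1}$ terms appearing in the three bounds are defined; there is no step $k\to k+1$.)

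After substituting the three bounds I would sort the resulting terms into three groups and check each group separately. The first group collects, for every unsolved uncovered cluster $K \in \fK_i^{\fU\fU}$, the negative contribution $-10^{15}\ell \cdot \frac{\ell\phi(K,C_i)}{\phi(X,C_i)}(1+H_{k-i-1})\phi^*(K)$ coming from $\Delta\Phi_i^1$, the positive contribution $+10^{20}\ell\cdot\frac{\ell\phi(K,C_i)}{\phi(X,C_i)}(1+H_{k-i-1})\phi^*(K)$ coming from $\Delta\Phi_i^2$, and the negative contribution $-5\frac{\phi(K,C_i)}{\phi(X,C_i)}\phi^*(K)$ coming from $\Delta\Phi_i^3$; since $10^{20}\ge 10^{15}+5$, each such $K$ contributes a nonnegative amount. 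The second group collects, for every solved uncovered cluster $K \in \fK_i^{\fU\fS}$, the term $-10^{15}\ell\,\P(c_{i+1}\in K)(1+H_{k-i-1})\phi^*(K)$ from $\Delta\Phi_i^1$ and the term $+10^{20}\ell\,\P(c_{i+1}\in K)(1+H_{k-i-1})\phi^*(K)$ from $\Delta\Phi_i^2$, which is again nonnegative per cluster.

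The third group consists of the ``cost'' terms carrying a $1/(k-i)$ factor: $\frac{10^{10}\ell}{k-i}\phi(X_i^\fC,C_i)$ from $\Delta\Phi_i^1$, $\frac{10^{10}\ell}{k-i}\phi(X_i^{\fU\fS},C_i)$ from $\Delta\Phi_i^2$, and $-\frac{2\ell}{k-i}\phi(X_i^\fC\cup X_i^{\fU\fS},C_i)$ from $\Delta\Phi_i^3$. Here I would use that $X_i^\fC$ and $X_i^{\fU\fS}$ are disjoint subsets of $X$, so $\phi(X_i^\fC\cup X_i^{\fU\fS},C_i)=\phi(X_i^\fC,C_i)+\phi(X_i^{\fU\fS},C_i)$, and hence this group equals $\frac{10^{10}\ell-2\ell}{k-i}\bigl(\phi(X_i^\fC,C_i)+\phi(X_i^{\fU\fS},C_i)\bigr)\ge 0$. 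Adding the three nonnegative groups gives $\Phi_i - \E_{i+1}[\Phi_{i+1}]\ge 0$.

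I do not expect any genuine obstacle: the constants $10^{10},10^{15},10^{20}$ in Definition~\ref{def:main_potential} were chosen precisely so that the dominant positive contributions of $\Delta\Phi_i^2$ absorb the negative contributions of $\Delta\Phi_i^1$ and $\Delta\Phi_i^3$, and the only mildly delicate points are keeping track of which weighted term each proposition contributes and invoking the disjointness of $X_i^\fC$ and $X_i^{\fU\fS}$ in the third group.
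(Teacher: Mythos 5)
Your proposal is correct and matches the paper's own proof, which likewise just adds the three bounds from Propositions~\ref{prop:monotone1}--\ref{prop:monotone3} and observes the sum is nonnegative; the grouping into per-cluster and per-$\frac{1}{k-i}$ terms you describe is exactly the bookkeeping needed. (One tiny caveat: in the $\fK_i^{\fU\fU}$ group the $5$ term lacks the $\ell^2(1+H_{k-i-1})$ factor, so you should say $(10^{20}-10^{15})\ell^2(1+H_{k-i-1})\ge 5$ rather than ``$10^{20}\ge 10^{15}+5$''; this holds because $\ell\ge 1$ and $1+H_{k-i-1}\ge 1$.)
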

\begin{proof}
Putting all bounds of \cref{prop:monotone1,prop:monotone2,prop:monotone3} together, we get
\begin{align}
&\Phi_i - \E_{i+1}[\Phi_{i+1}] = 
  \Delta\Phi_i^1+ \Delta\Phi_i^2 +\Delta\Phi_i^3\\
  &\ge \frac{10^{10}\ell}{k-i} \cdot \phi(X_i^\fC, C_i)\\
    &- 10^{15}\ell \sum_{K \in \fK_i^{\fU\fU}} \frac{\ell \phi(K, C_i)}{\phi(X, C_i)} \cdot \left( 1 + H_{k-i-1}\right) \cdot \phi^*(K)\\
    &- 10^{15}\ell \sum_{K \in \fK_i^{\fU\fS}} \P(c_{i+1} \in K)\cdot \left( 1 + H_{k-i-1}\right) \cdot \phi^*(K)\\
&+
10^{20}\ell \sum_{K \in \fK_i^{\fU\fU}}\frac{\ell \phi(K, C_i)}{\phi(X, C_i)} \cdot \left( 1 + H_{k-i-1}\right) \cdot \phi^*(K)\\
&+10^{20}\ell\sum_{K \in \fK_i^{\fU\fS}}\P(c_{i+1} \in K) \cdot  \left( 1 + H_{k-i-1}\right) \cdot \phi^*(K)\\
&+ \frac{10^{10}\ell \phi(X_i^{\fU\fS}, C_i)}{k-i}\\
&- \frac{2\ell \phi(X_i^\fC \cup X_i^{\fU\fS}, C_i)}{k - i}\\
&- 5\sum_{K \in \fK_i^{\fU\fU}} \frac{\phi(K, C_i)}{\phi(X, C_i)} \phi^*(K)\\
&\ge 0
\end{align}

\end{proof}

\section{A hard instance for greedy $k$-means++}
\label{sec:log3_lb}

In this section we provide a construction of a (weighted) point set where \cref{alg:kmpp_greedy} returns a solution with approximation $\Omega(\frac{\ell^3 \log^3 k}{\log^2 (\ell \log k)})$ with constant probability. Formally, we prove \cref{thm:greedy_lb} that we restate here for convenience. 

\greedyLB*

Recall that we already gave an informal description in \cref{subsec:lb}.
We first describe the point set in  \cref{subsec:lb_description}. 
We then give the formal analysis of greedy $k$-means++ on the point set in \cref{subsec:lb_analysis_formal}.

\subsection{The point set}
\label{subsec:lb_description}

We start by describing the weighted point set $X$. In fact, we define the full input instance  $(X, C_0, \tilde{k})$ where $C_0$ is the starting set of centers (see \cref{lem:prescribed}).

We set $$t = \frac{\ell \log k }{1000\log (\ell \log k)}.$$
In the follow-up discussion, we always assume that $k$ is large enough and the expressions like the one above that defines $t$ are integers. This is for the purpose of readability; it is simple to make the proof work by adding $\lfloor \, \rfloor$ to all definitions that require integer values. 

Recall that in the statement of \cref{thm:greedy_lb} we assume that $\ell < k^{0.1}$; we did not try to optimize this bound but note that there has to be some since for $\ell \gg k$ we have $\ell^2\log^2 k \gg \ell\cdot k$ where the right-hand side is the trivial bound on the number of hits. 
Note that the lower bound $\Omega(\ell \log k)$ of \cite{bhattacharya2019noisy} holds also for large $\ell$, hence for $\ell > k^{0.1}$ the greedy $k$-means++ algorithm already necessarily has bad, polynomial, approximation guarantee. 

We will now describe the point set $X$, the weights of the points, and the distances between some pairs of points. Then, we discuss how exactly we embed the points in the Euclidean space.
We next list points of $X$ (the picture to have in mind is \cref{fig:adversary_lb_intuitive}). 

\begin{enumerate}
    \item There is a point $b$ for which we have $w(b) = \frac{1}{t}$.
    \item A point $c$ is at distance 1 from $b$. We have $w(c) = 1/10$. 
    \item We have a set of points $N = \{n_1, n_2, \dots, n_{t + 1}\}$ and  $M = \{m_1, \dots, m_t\}$ defined as follows. 
    We have $d(b, n_i) = k^i$ and $w(n_i) = w(m_i) = \frac{1000}{t}$. 
    Each $m_i$ lies at distance $10tk^{i}$ from $n_i$. 
    We put the point $n_{t + 1}$ to $C_0$, that is, we assume that point is already sampled at the beginning. 
    \item We have $A = \{a_1, \dots, a_{k^{1.2}}\}$ at distance $k$ from $c$. 
    The weight of each $a_i$ is $\frac{\ell\log k }{k^2}$ so their total weight is $\frac{\ell \log k }{k^{0.8}}$. 
    \item We have $E = \{e_0, \bigcup_{e \in E_1} e, \dots, \bigcup_{e \in E_t} e\}$ where $E_i = \{e_{i,1}, \dots,  e_{i, \sqrt{k}}\}$. 
    Each point in $E \setminus \{e_0\}$ has distance $1$ from a point $e_0$ which is at distance $k^{2t}$ from $b$. We include $e_0$ to $C_0$. 
    Since we also included $n_{t+1} \in C_0$ and we chose $d(b,e_0)$ large enough, it will never happen that a closest center to a point in $E$ is in $X \setminus E$ or vice versa. 
    Each $e_{i,j} \in E_i$ has the same weight $w_i$. This parameter needs to be set up quite precisely depending on the rest of the instance so we define  it only later. 

\end{enumerate}

The number of points in this point set $X$ is equal to $|X| = 1 + 1 + (t+1) + t  + k^{1.2} + (1 + t\cdot \sqrt{k}) = O(k^{1.2})$. Two points of $X$, $n_{t+1}$ and $d$, are already in $C_0$. We choose the number $\tilde{k} = |X| - |C_0| - 1$. 
We will work with the input instance $(X, C_0, \tilde{k})$. 
That is, in this instance, the optimal solution (\cref{lem:wlog_opt_subset}) as well as the solution of greedy $k$-means++ selects as centers all points of $X$, except for one. 

\paragraph{Arrangement}
We next specify fully how to embed the point set $X$ to the Euclidean space. 
So far, we only specified distances of pairs $(n_i, m_i), (b, n_i), (b, c), (c, a_i), (e_0, e_{i,j}), (e_0,b)$ for all $i,j$. These distances define a tree metric that we will simulate. 
Unfortunately, we cannot simulate exactly this tree metric in a Euclidean space, but we can come sufficiently close to it, using \cref{fact:simplex}.

We now describe the configuration. 
In view of \cref{fact:simplex}, vectors $(b,c), (b, n_1), \dots, (b, n_{t+1})$ are chosen as vertices of a $(t+1)$-dimensional simplex. 
Each $m_i$ lies on the ray $(b, n_i)$.  

To give an example how this embedding simulates  the idealized tree metric up to $1/t$ loss, let us verify that $d(n_{i'}, n_i) = d(n_i, b) + \Theta(d(b, n_{i'})/t)$:
\begin{align}
    \label{eq:ijbound2}
    d(n_{i'}, n_i)^2 
    = d(n_{i'}, b)^2 + d(b, n_i)^2 + 2d(n_{i'},b)d(n_i,b)/(t+1) 
    = k^{2{i'}} + k^{2i} + \frac{2}{t+1} k^{i+i'}
\end{align}
where we used the cosine law and \cref{fact:simplex}. 
That is, we have 
\begin{align}
    \label{eq:ijbound}
    d(n_{i'}, n_i) 
    \ge \sqrt{ k^{2i} + 2k^{i+i'}/(t+1)}
    = k^i \sqrt{ 1 + 2k^{i' -i}/(t+1)}
    \ge k^i (1 + k^{i'-i}/(2t))
    = k^i + k^{i'}/(2t)
\end{align}

Similarly, we can get the following bounds
\begin{equation}
    \label{eq:mnbound}
    d(m_{i'}, n_i) \ge k^i + k^{i'}/(2t)
\end{equation}
\begin{align}
    \label{eq:cnbound}
    d(c, n_i) \ge k^i + 1/(2t)
\end{align}
\begin{align}
    \label{eq:cmbound}
    d(c, m_i) \ge (10t+1)k^i + 1/(2t)
\end{align}

Next, we specify the directions of the points $A$. Each $a_i$ goes partly in the direction of the ray $b,c$ and partly in a direction orthogonal to everything else. 
Namely, the ray $(c, a_i)$ has direction $\frac12 (b,c) + \frac12 \nu_i$ where $\nu_i$ is orthogonal to the span of $X \setminus \{a_i\}$. 
Hence, the projection of $a_i$ to $(b,c)$ is at distance $k/\sqrt{2}$ from $c$. Also, for any $i < j$ we can project to the plane defined by $\nu_i, \nu_j$ to see that
\begin{align}
\label{eq:as_far}
    d(a_i, a_j) = k
\end{align}
Finally, using cosine law, we get
\begin{align}
\label{eq:a_drop}
    d(b, a_i)^2
    &= d(c, b)^2 + d(c, a_i)^2 - d(b,c)d(c,a_i)\cos 135^\circ\\
    &= 1 + d(c, a_i)^2 + d(c, a_i)/\sqrt{2}\label{eq:a_drop2}\\
    &\ge d(c, a_i)^2 + d(c, a_i)/2\\
    &= k^2 + k/2
\end{align}

Finally, each vector $(e_0, e_{i,j})$ is orthogonal to the span of $X \setminus \{e_{i,j}\}$. In particular, we have 
\begin{align}
\label{eq:e_sane}
d(e_{i,j}, e_{i', j'}) \ge d(e_0, e_{i,j})
\end{align}
for every $i,j,i',j'$. 

\paragraph{Precise definition of $w_i$}
It remains to define $w_i$. 
We define $S_i = \{b, c, a_j, n_{\le i}, m_{\le i}\}$. Then, we define
    \begin{align}
        \Delta(b) = \phi(S_i, \{n_{i+1} \cup b\}) - \phi(S_i, \{n_{i+1}\})
    \end{align}
    and
    \begin{align}
        \Delta(c) = \phi(S_i, \{n_{i+1} \cup c\}) - \phi(S_i, \{n_{i+1}\})
    \end{align}
    We define 
    \begin{align}
        w_i = \frac{\Delta(b) + \Delta(c)}{2}.
    \end{align}
    That is, $w_i$ is set up so that, under some assumptions about what centers are already taken (e.g. $n_{i+1}$ is but points of $S_i$ are not), the drop resulted by taking $e_{i,j}$ as a center is smaller than the drop when we take $b$ but bigger than the drop when we take $c$ (we are yet to prove that $\Delta(b) > \Delta(c)$). 
    Note that $w_i$ satisfies
    \begin{equation}
        \label{eq:wi}
        k^{2i+2} \le w_i \le  3000 k^{2i+2}
    \end{equation}  
    The lower bound follows from $\phi(c, \{n_{i+1}\}) \ge k^{2i+2}$ using \cref{eq:cnbound}. The upper bound follows from the fact that $w(S_i)$ is dominated by the cost of $N_i$ and $M_i$ and for any $x \in S_i$ we have $d(n_{i+1}, x) < 1.1k^{i+1}$ which follows from looking at \cref{fig:adversary_lb_intuitive}. 

This concludes the description of the point set $X$. 

\begin{remark}
Although our point set is weighted, we can make it unweighted by scaling all the weights up by a sufficiently large number and rounding them to the nearest integer. 

In fact, we believe, but do not prove, that all weights and positions of points in $X$ from \cref{thm:greedy_lb} can be made integers of order $k^{O(\log k)}$. 
We do not attempt a formal proof since that would require tedious arguments about rounding errors. 
    
    We note that in view of the $O(\ell^{O(1)}\log \frac{OPT(1)}{OPT(k)})$ upper bound sketched in \cref{sec:alternative}, the size of point weights and positions cannot be both improved to $k^{O(1)}$. Namely, for constant $\ell$, any instance where \cref{alg:kmpp_greedy} is $\Omega(\log^3 k / \poly\log\log k)$ approximate needs to satisfy 
    \[
    \log \frac{OPT(1)}{OPT(k)} \ge \log^2 k / \poly\log\log k. 
    \]
    Hence, whenever $OPT(k)$ is a positive integer, we get that necessarily
    \[
    OPT(1) = \phi(X, \mu(X)) \ge k^{\log k / \poly\log\log k}. 
    \]
\end{remark}

\subsection{Analysis of greedy $k$-means++ on the hard point set}
\label{subsec:lb_analysis_formal}

In this subsection, we give the formal proof of \cref{thm:greedy_lb}. 

\paragraph{First epoch}

We define the first epoch formally as the first $\tilde{k} - k^{1.2}$ steps of \cref{alg:kmpp_greedy} (cf. \cref{subsec:lb} for the intuition behind the first epoch). This means our aim is to prove the following claim. 

\begin{claim}
\label{cl:first_epoch}
After running \cref{alg:kmpp_greedy} on the instance $(X, \tilde{k}, C_0)$ for $\tilde{k} - k^{1.2}$ steps, with positive probability we have 
\[
C_{\tilde{k} - k^{1.2}} =X \setminus ( A \cup \{c\} )
\]
\end{claim}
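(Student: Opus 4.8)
\textbf{Proof plan for Claim \ref{cl:first_epoch}.}

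The plan is to track the algorithm through $t$ phases, where the $i$-th phase is defined so that at its start the set $C_0 \cup (\text{centers already taken})$ equals $C_0$ together with all of $N_{>i}$, $M_{>i+1}$, $E_{>i}$ (roughly), and no point of $\{b,c\}\cup N_{\le i}\cup M_{\le i+1}\cup E_{\le i}\cup A$ has yet been chosen. The key invariant I would maintain is: \emph{with probability at least $1-1/(2t)$ per phase}, during phase $i$ the algorithm (a) samples $n_i$ as a candidate before all of $E_i$ is exhausted, (b) takes $n_i$ the moment it is sampled, (c) eventually takes $m_{i+1}$ and all of $E_i$, and (d) never takes any point of $A$, never takes $c$, and never takes any $n_j,m_j$ with $j<i$ or $j>i$; moreover with probability $\Omega(1/t)$ it additionally samples and takes $b$. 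First I would nail down the ordering of greedy cost-drops: for the relevant center sets of the form $\{n_{i+1}\}\cup(\text{already-taken})$, using the distance bounds \eqref{eq:ijbound}--\eqref{eq:cmbound}, \eqref{eq:a_drop}, \eqref{eq:e_sane} and the definition of $w_i$ in \eqref{eq:wi}, I would show $\Delta(n_i) > \Delta(b) > \Delta(e_{i,j}) > \Delta(c)$ and that $\Delta(c)$ exceeds the drop of every other candidate in $N_{\le i}\cup M_{\le i+1}\setminus\{n_i\}$ and of every $a\in A$ (the latter using that $|A|$ is large, as in the second-epoch argument of \cref{subsec:lb}). The point $m_i$ is the one subtlety: $n_i$ sits near the center of mass of $\{c\}$-side and $m_i$, so once $n_i$ is taken, $m_i$'s own drop becomes small; I would check that after $n_i$ is taken, $m_i$ behaves like an ordinary low-priority point and gets absorbed only once $E_i$ is depleted, and that $m_{i+1}$ (leftover from the previous phase) likewise gets taken at some harmless moment.

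Next I would do the "clock" analysis within a phase. Since $|E_i|\ge\sqrt k$ at the phase start and each $e_{i,j}$ has cost $\Theta(w_i)=\Theta(k^{2i+2})$ by \eqref{eq:wi}, while $\phi(c,\{n_{i+1}\})=\Theta(k^{2i+2})$ as well, the probability of sampling $c$ in one draw is $\Theta(1/|E_i|)$; so during the $\Omega(\sqrt k)$ steps needed to exhaust $E_i$ we expect $\Theta(\ell\log k)$ candidate hits on $c$ — but $c$ is never taken because its drop is dominated by that of $e_{i,j}$. By the same token, each of $n_i$ and $m_i$ (cost $\Theta(k^{2i+2}/t)$) is hit an expected $\Theta(\log t \cdot \text{poly})$ number of times, so by a Chernoff/union bound $n_i$ is hit at least once well before $E_i$ runs out — here is exactly where the extra $\poly\log(\ell\log k)$ slack in the weights and distances is spent, to push the failure probability of "$n_i$ not hit in time" below $1/(2t)$. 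The point $b$ has cost $\Theta(k^{2i+2}\log t/(\ell^2\log^2 k))=\Theta(k^{2i+2}/t^2)$ relative to $E_i$'s total, so it is hit with probability $\Theta(1/t)$ per phase; since these events across the $t$ phases are (conditionally) independent and each has probability $\Omega(1/t)$, the probability that $b$ is taken in \emph{some} phase is $\Omega(1)$. Once $n_i$ is taken, the costs of $\{b,c\}\cup N_{<i}\cup M_{\le i}$ drop to $\Theta((\text{small})^2)$ — much smaller than $w_{i-1}$ or anything still live — so nothing further of interest happens until $E_i$ (and the leftover $m_{i+1}$) is cleared, at which point we are at the start of phase $i-1$.

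Then I would assemble the phases: by a union bound over the $\le t$ per-phase failure events (each $\le 1/(2t)$) plus the first-two-steps events (the prescribed $n_{t+1},e_0\in C_0$ handle this, or \cref{lem:prescribed}), with constant probability all phases succeed \emph{and} $b$ is taken in one of them. When $b$ is taken, the greedy rule from then on prefers every remaining point of $N\cup M\cup E$ over any $a\in A$ and over $c$ (since each such point has far larger weight and comparable-or-larger squared distance to its nearest current center than $a$'s drop $w(a)d(a,b)^2=\Theta(\ell\log k)$, and than $c$'s drop relative to $A$), so in the remaining steps of the first epoch exactly the points of $X\setminus(A\cup\{c\})$ get taken, giving $C_{\tilde k - k^{1.2}} = X\setminus(A\cup\{c\})$. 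The main obstacle is the precise weight-tuning of $w_i$ together with the $\poly\log(\ell\log k)$ corrections to distances and weights: one must simultaneously guarantee $\Delta(e_{i,j})$ is sandwiched strictly between $\Delta(c)$ and $\Delta(b)$ \emph{throughout} the phase (the center set changes as $E_i$ shrinks and as $n_i,m_{i+1}$ get absorbed), and that the clock runs long enough for $n_i$ but not so long that the accumulated $\Theta(1/t)$-per-step chance of sampling $b$ becomes $\omega(1)$ overall or that $c$'s hit count concentration fails — this balance is what forces the specific value $t=\ell\log k/(1000\log(\ell\log k))$.
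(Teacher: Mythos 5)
Your overall plan mirrors the paper's proof: split the first epoch into $t$ phases, treat $E_i$ as a clock, use the greedy cost-drop ordering to show only $E_i\cup\{n_i,m_{i+1},b\}$ are taken per phase, union-bound the per-phase failures, and observe that each phase gives an $\Omega(1/t)$ chance of grabbing $b$. But your claimed ordering on $m_{i+1}$ is backwards in a way that matters. You assert $\Delta(c)$ exceeds the drop of every candidate in $N_{\le i}\cup M_{\le i+1}\setminus\{n_i\}$, which in particular gives $\Delta(c) > \Delta(m_{i+1})$, so $m_{i+1}$ would be lower-priority than $c$ and hence than $e_{i,j}$. In reality $m_{i+1}$ still has cost $w(m_{i+1})d(m_{i+1},n_{i+1})^2 = (1000/t)(10tk^{i+1})^2 = 10^5 t\,k^{2i+2}$, which dwarfs $w_i\approx k^{2i+2}$, so $\Delta(m_{i+1})$ is the largest drop around (the paper's \cref{cl:cost_drops} puts $m_{i+1}$ on the \emph{high} side: $\Delta(e_{i,j}) < \Delta(b) < \Delta(n_i),\Delta(m_{i+1})$). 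This is not a cosmetic detail: if $m_{i+1}$ does not outrank $E_i$ it never gets absorbed during phase $i$, survives into phase $i-1$ with cost $\gg \sqrt{k}\,w_{i-1}$, and then utterly dominates the sampling distribution there, wrecking the clock analysis.

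Your closing step also has a gap. Once $b$ is taken, you argue that every remaining $N\cup M\cup E$ point is preferred by the greedy rule over $c$ and over any $a\in A$, and conclude that only $X\setminus(A\cup\{c\})$ gets taken. But priority ordering alone is not enough: if in some step \emph{all} $\ell$ sampled candidates happen to lie in $\{c\}\cup A$, one of them must be chosen regardless of priorities. The paper's \cref{cl:finish} instead bounds the probability that \emph{any} candidate from $\{c\}\cup A$ is ever sampled in the remainder of the first epoch (comparing $\phi(\{c\}\cup A, C_\iota) \le \ell k^{1.3}$ against the $\ge (\tilde k - k^{1.2}-\iota)\cdot k^2/t$ remaining cost), and shows this probability is $<1/2$. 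You need this kind of sampling bound, not just a greedy-ordering bound. Both issues are local and fixable within your plan, but as written they would cause the proof to break.
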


We split the epoch into $t$ phases that we, for notational reasons, index in a decreasing order as $i = t, t-1, \dots, 1$. 
Our main task is to prove that in each $i$-th phase the point $b$ is selected as a center with probability $\Omega(1/t)$. 
The $i$-th phase is formally defined as follows. With the exception of the very first phase, it starts when the last point of $E_{i+1}$ is taken as a center. 
Alternatively, we say that a phase finishes whenever $b$ is taken. 

As a first claim, we prove that whenever $b$ is selected as a center, with constant probability, we finish the first epoch as intended. 

\begin{claim}
\label{cl:finish}
Assume that in some step $\iota_0$ during the first phase we have $b \in C_{\iota_0}$ and $(\{c\} \cup A) \cap C_{\iota_0} = \emptyset$. 
Then, with probability at least $1/2$, the first phase finishes with
\[
C_{\tilde{k} - k^{1.2}} =X \setminus ( A \cup \{c\} ).
\]
\end{claim}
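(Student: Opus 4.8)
The plan is to show that, conditioned on the hypothesis of the claim, with probability at least $1/2$ \emph{no} step among $\iota_0+1,\dots,\tilde k-k^{1.2}$ ever selects $c$ or a point of $A$ as a center; the stated conclusion then follows by a pure counting argument. Indeed, by hypothesis $C_{\iota_0}\subseteq\{b\}\cup N\cup M\cup E$ with $b\in C_{\iota_0}$, and a direct count gives $|(\{b\}\cup N\cup M\cup E)\setminus C_0|=\tilde k-k^{1.2}$, so the number of remaining steps equals the number of not-yet-selected points of $(\{b\}\cup N\cup M\cup E)\setminus C_0$. Since a point already in the current center set has cost $0$ and is therefore never resampled, every step of \cref{alg:kmpp_greedy} adds a genuinely new center; hence, if every center added after step $\iota_0$ lies in $N\cup M\cup E$, then by step $\tilde k-k^{1.2}$ all of $(\{b\}\cup N\cup M\cup E)\setminus C_0$ has been taken, i.e. $C_{\tilde k-k^{1.2}}=\{b\}\cup N\cup M\cup E=X\setminus(A\cup\{c\})$.

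The heart of the matter is a pair of estimates valid at any step $\iota\ge\iota_0$ with $b\in C_\iota$ and $(\{c\}\cup A)\cap C_\iota=\emptyset$. \emph{(i)} Using the distance lower bounds of the construction — \eqref{eq:ijbound}, \eqref{eq:mnbound}, \eqref{eq:cnbound}, \eqref{eq:wi} and \eqref{eq:e_sane} — one checks that every not-yet-selected point $p\in N\cup M\cup E$ has $\phi(p,C_\iota)=\Omega(k^2/t)$: the nearest selected center to an untaken $n_i$ or $m_i$ is $b$ or $n_i$ at distance $\Omega(k^i)$, and to an untaken $e_{i,j}$ it is $e_0$ at distance $1$ while $w_i\ge k^4$. \emph{(ii)} Inserting a single $x\in\{c\}\cup A$ decreases $\phi(X,C_\iota)$ only because of points in $\{c\}\cup A$ — plus, for $x=a_j$, a cross term: the simplex arrangement makes a given $a_j$ slightly closer to some $n_i$ than to $b$, contributing an extra $O(k^{i^\star+1}/t^2)$, where $i^\star$ is the largest index of an untaken $n_i$. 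Since $w(c)=1/10$ and $w(a_j)=\ell\log k/k^2$, and $d(b,a_j)^2=k^2+O(k)$ by \eqref{eq:a_drop2}, $d(a_i,a_j)=k$ by \eqref{eq:as_far}, we get $\phi(\{c\}\cup A,C_\iota)=O(\ell\log k\cdot k^{1.2})$; because $\ell\le k^{0.1}$, this quantity — and also the cross term, which is always dominated by $\phi(n_{i^\star},C_\iota)$ or by the cost of an untaken point of $E_{i^\star}$ — is smaller than the cost of every untaken point of $N\cup M\cup E$ of index $\ge i^\star$. Consequently, whenever at least one of the $\ell$ candidates sampled in a step is such a ``high-cost'' untaken point of $N\cup M\cup E$, the greedy rule strictly prefers it to $c$ and to every $a_j$.

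It remains to bound the probability that some step is \emph{bad}, i.e. that all $\ell$ of its candidates avoid the set $H_\iota$ of high-cost untaken $N\cup M\cup E$ points. The points of $H_\iota$ carry a $1-o(1)$ fraction of $\phi(X,C_\iota)$ at every step: the remaining untaken $N\cup M\cup E$ points have total cost $O(\sqrt k\,k^{i^\star+1}/t)=o(\phi(n_{i^\star},C_\iota))$, and $\phi(\{c\}\cup A,C_\iota)=O(\ell\log k\,k^{1.2})$ is negligible against $\phi(X,C_\iota)\ge\Omega(k^2/t)$. Hence each step is bad with probability at most $(o(1))^\ell$, and, summing over the $\tilde O(k^{0.6})$ steps of the epoch — with a harmonic-sum refinement $\sum_\iota \phi(\{c\}\cup A,C_\iota)/\phi(X,C_\iota)=O(\ell t\log^2 k/k^{0.8})$ handling the $\phi(\{c\}\cup A,\cdot)$ contribution — the total failure probability is $o(1)\le 1/2$ for $k$ large enough. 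The main obstacle is estimate \emph{(ii)}: the cost drop caused by inserting a point of $A$ is genuinely non-local, so one must control the $n_i$ cross term carefully and verify it can never make $a_j$ beat an available $E_i$- or $n_i$-candidate; and one must quantify the ``$1-o(1)$'' fraction uniformly in $\iota$ so that the union bound indeed sums to at most $1/2$ (this step really uses $\ell\ge 2$, as it must, since for $\ell=1$ the algorithm is ordinary $k$-means++ and Markov's inequality rules out such a lower bound).
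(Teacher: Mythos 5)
The paper proves this claim without ever invoking the greedy rule: it simply bounds, for each remaining step $\iota\ge\iota_0$, the probability that \emph{any} of the $\ell$ candidate centres is sampled from $\{c\}\cup A$. Using $\phi(\{c\}\cup A, C_\iota)\le\ell k^{1.3}$ and the observation that every untaken point of $N\cup M\cup E$ has cost at least $k^2/t$ (so $\phi(X,C_\iota)\ge(\tilde k-k^{1.2}-\iota)\cdot k^2/t$), the per-step sampling probability is at most $1/\bigl((\tilde k-k^{1.2}-\iota)k^{0.4}\bigr)$, and a union bound over the remaining steps yields a harmonic sum $O(\log k/k^{0.4})<1/2$. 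Your proposal takes a genuinely different and more involved route: you allow $\{c\}\cup A$ candidates to be sampled, and argue instead that the greedy rule strictly prefers an untaken $N\cup M\cup E$ candidate whenever one is present, so a step can only go wrong if all $\ell$ candidates miss $N\cup M\cup E$. Made rigorous, this would give a sharper per-step failure bound (roughly the $\ell$-th power of the sampling probability rather than $\ell$ times it), but the target $1/2$ does not need that gain, which is presumably why the paper opts for the cruder, rule-agnostic bound. Your opening counting lemma — that already-taken centres have cost zero, are never re-sampled, and hence each remaining step deterministically adds a fresh point of $(\{b\}\cup N\cup M\cup E)\setminus C_0$ as long as $\{c\}\cup A$ is avoided — is correct and matches the paper's implicit reasoning.

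The incompleteness in your proposal is exactly where you flag it: the greedy-preference step requires a uniform cost-drop comparison over all reachable states $C_\iota$ with $b\in C_\iota$ and $(\{c\}\cup A)\cap C_\iota=\emptyset$, and these states no longer fit the clean phase structure that makes \cref{cl:cost_drops} tractable. In particular the cost drop of $c$ is not $\phi(c,C_\iota)=1/10$: since $c$ is strictly closer than $b$ to all of $A$, inserting $c$ shaves $\Theta(k\cdot\ell\log k/k^2)$ off each of the $k^{1.2}$ points of $A$, so $\Delta(c)=\Theta(\ell k^{0.2}\log k)$; and $\Delta(a_j)$ has the $n_{i^\star}$ cross term you mention. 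The arithmetic does appear to close under $\ell\le k^{0.1}$ — indeed $\Delta(p)\ge\phi(p,C_\iota)\ge k^2/t$ already dominates both $\Delta(c)$ and $\Delta(a_j)$ for \emph{every} untaken $p\in N\cup M\cup E$, which would let you take $H_\iota$ to be the whole untaken remainder of $N\cup M\cup E$ and drop the ``high-cost'' restriction altogether — but these verifications are nontrivial and need to be done uniformly in $\iota$. The paper sidesteps all of this by never reasoning about cost drops in this claim, so your route, while plausibly correct, is doing more work than the statement requires.
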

\begin{proof}
First, we upper bound the total cost of $\{c\}\cup A$ in every step $\iota \ge \iota_0$ assuming $(\{c\} \cup A) \cap C_{\iota} = \emptyset$. 
We have $\phi(c, b) = 1 \cdot 1^2 = 1$ and $\phi(A, b) = k^{1.2} \cdot \frac{\ell \log k}{k^2} \cdot k^2 = \ell k^{1.2}\log k$. That is, $\phi(\{c\}\cup A, C_\iota) \le \ell k^{1.3}$. 

On the other hand, we claim that any point $x \in N \cup M \cup E$ has always cost $\phi(x, C_\iota) \ge k^2/t$, unless $x \in C_\iota$. 
For the points of $N \cup M$, this is because their weight is $1000/t$ and the distance to the closest other point of $X$ is always at least $k$. For the points of $E$, this is because the distance to the closest already taken point is always $1$ (this is the point $e_0 \in C_0 \subseteq C_\iota$) and the smallest weight of a point in $E$ is at least $k^4$ by \cref{eq:wi}. 

In view of the above computations, we have that the probability we sample a candidate center from $\{c\}\cup A$ in step $\iota$ is at most 
\[
\ell \cdot \frac{\ell k^{1.3}}{(\tilde{k} - k^{1.2} - \iota)\cdot k^2/t} 
\le \frac{1}{(\tilde{k} - k^{1.2} - \iota) \cdot k^{0.4}}. 
\]
Union bounding over all $\tilde{k} < k^2$ step leads to a harmonic series summing up to $O(\log k)/k^{0.4} < 1/2$, as needed. 
\end{proof}

Next, let us analyze one phase of the first epoch.
Recall that the $i$th phase starts after step $\iota$ for which $E_{i+1} \subseteq C_\iota$ and finishes when $E_i \subseteq C_\iota$ or $b \in C_\iota$. 
In the next claim, we compare the cost drops of various points with the ``baseline cost drop'' of taking a point in $E_i$. 

\begin{claim}
\label{cl:cost_drops}
Assume that $N_{\ge i+1}\cup M_{\ge i+2}\cup E_{\ge i+1} \subseteq C_\iota$ while $(N_{\le i} \cup M_{\le i} \cup E_{\le i} \cup A \cup \{b,c\} ) \cap C_\iota = \emptyset$.  
Let $\Delta(x) = \phi(X, C_\iota) - \phi(X, C_\iota \cup \{x\})$ be a cost drop of a point $x \in X$. 
Then, we have for any $i' < i$ and any $j$ we have
\[
\Delta(n_{i'}),\Delta(m_{i'}),\Delta(c) < \Delta(e_{i,j}) < \Delta(b) < \Delta(n_i), \Delta(m_{i+1})
\]
where $e_{i,j}$ is arbitrary point not in $C_\iota$.  
\end{claim}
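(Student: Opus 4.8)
The plan is to compute all the cost drops $\Delta(x)$ explicitly, using the fact that in this configuration the closest center to essentially every relevant point is $n_{i+1}$ (or $e_0$ for points of $E$), and then compare them. First I would record the baseline: taking some $e_{i,j} \in E_i$ drops the cost of all the remaining $|E_i|-1$ points of $E_i$ (each of weight $w_i$, each at distance $1$ from $e_0$ and hence cost $w_i$) to roughly $0$, plus possibly small contributions elsewhere; by the very definition of $w_i = (\Delta(b)+\Delta(c))/2$ as an average of the drops induced by $b$ and $c$ when only $\{n_{i+1}\}$ is taken, this is \emph{engineered} to satisfy $\Delta(c) < \Delta(e_{i,j}) < \Delta(b)$. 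The subtlety here is that $w_i$ was defined with respect to the set $S_i = \{b,c,a_j,n_{\le i}, m_{\le i}\}$ and the center set $\{n_{i+1}\}$, whereas now the center set is $C_\iota \supseteq \{n_{i+1}, n_{\ge i+1}, m_{\ge i+2}, e_0, \dots\}$, so I would argue that the presence of the extra centers $n_{\ge i+2}, m_{\ge i+2}, e_{>i}$ does not change the relevant distances: every point of $S_i$ still has $n_{i+1}$ as its closest center (using the simplex-metric bounds \eqref{eq:ijbound}, \eqref{eq:mnbound}, \eqref{eq:cnbound}, \eqref{eq:cmbound} which show $n_{i+1}$ is closer to $b, c, n_{\le i}, m_{\le i}$ than any $n_{i'}, m_{i'}$ with $i' \ge i+2$), and similarly the points of $E_i$ have $e_0$ as closest center and contribute identically. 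So $\Delta(b), \Delta(c), \Delta(e_{i,j})$ are (up to the tiny $A$-contribution, which is negligible by \eqref{eq:wi}) exactly the quantities $\Delta(b), \Delta(c), w_i$ from the definition, giving the middle two inequalities.

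Next I would handle the outer inequalities. For $\Delta(n_i)$: the point $n_i$ sits, by construction, near the center of mass of the pair $\{c, m_i\}$ and close to $b$ (the distances are $d(b,n_i) = k^i$, $d(n_i, m_i) = 10tk^i$, so $n_i$ is much closer to $b, c, n_{\le i-1}, m_{\le i}$ than $n_{i+1}$ is). Taking $n_i$ therefore drops the cost of $\{b, c\} \cup N_{\le i-1} \cup M_{\le i}$ from $\Theta(k^{2i+2})$-per-point-ish down to much smaller values; in particular $m_i$ alone, which had cost $w(m_i) d(m_i, n_{i+1})^2 = \Theta(k^{2i+2}/t)$ — wait, actually $m_i$ is at distance $\approx 10tk^i$ from $n_i$ but what is $d(m_i, n_{i+1})$? — I would compute it via the simplex law: it is $\Theta(tk^{i+1})$ so $\phi(m_i, C_\iota) = \Theta(t k^{2i+2})$, and taking $n_i$ drops this to $\Theta(t^2 k^{2i})$, a drop of $\Theta(tk^{2i+2}) \gg w_i$. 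Similarly $m_{i+1}$ (leftover from the previous phase, since $M_{\ge i+2}$ but not $m_{i+1}$ is taken) has $\phi(m_{i+1}, C_\iota) = w(m_{i+1}) d(m_{i+1}, n_{i+1})^2 = \Theta(t^2 k^{2i+2})$ because $d(n_{i+1}, m_{i+1}) = 10tk^{i+1}$, and taking $m_{i+1}$ itself drops this entirely, a drop of $\Theta(t^2 k^{2i+2}) \gg w_i = \Theta(k^{2i+2})$; hence $\Delta(m_{i+1}) > \Delta(e_{i,j})$ and in fact $> \Delta(b)$. For $\Delta(n_{i'}), \Delta(m_{i'})$ with $i' < i$: these points have cost $\Theta(k^{2i+2}/t)$ (weight $1000/t$, distance $\approx k^{i+1}$ to $n_{i+1}$), plus they help $b, c$ and each other only on the much smaller scale $k^{i'}$, so the total drop is $O(k^{2i+2}/t) \cdot \text{const} \ll w_i = \Theta(k^{2i+2})$; here I must be slightly careful that the precise constants work, i.e. that the drop of $n_{i'}$ really is smaller than $w_i$ and not merely of the same order — this is where the $1000\log(\ell\log k)$-type slack factors in the weights (mentioned in the intuition) and the concrete constant $1000$ in $w(n_i)=1000/t$ versus $w(c)=1/10$ come in.

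The main obstacle I anticipate is precisely this bookkeeping of constants: verifying $\Delta(c) < \Delta(e_{i,j})$ and $\Delta(e_{i,j}) < \Delta(b)$ is immediate from the definition of $w_i$ \emph{only} if the $A$-contribution and the change of center-set from $\{n_{i+1}\}$ to $C_\iota$ are genuinely negligible, and verifying $\Delta(n_{i'}), \Delta(m_{i'}) < \Delta(e_{i,j})$ for $i' < i$ requires showing a strict (not just asymptotic) inequality between quantities of potentially comparable order, which forces one to track the exact numeric constants $\tfrac{1}{10}, \tfrac{1}{t}, \tfrac{1000}{t}, 10t$ appearing in the weights and distances. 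So the proof would proceed: (1) establish that $n_{i+1}$ is the closest prior center to every point of $\{b,c\}\cup N_{\le i}\cup M_{\le i}\cup A$ and $e_0$ to every point of $E_i$, using the simplex bounds; (2) write down each $\Delta(\cdot)$ as a sum over points whose nearest center changes, bounding each summand above and below; (3) use the defining equation $w_i = \tfrac12(\Delta(b)+\Delta(c))$ (re-derived in the current center configuration up to negligible error) for the two middle inequalities; (4) for $\Delta(n_i), \Delta(m_{i+1})$ lower-bound by the single huge term coming from $m_i$ resp. $m_{i+1}$; and (5) for $\Delta(n_{i'}),\Delta(m_{i'}),\Delta(c)$ with $i'<i$ upper-bound by the sum of their own cost plus lower-order cross terms, and check this is $< w_i$. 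I would expect step (5) and the rigorous justification of ``negligible error'' in step (3) to consume most of the effort.
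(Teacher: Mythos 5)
Your overall plan (compute each $\Delta(\cdot)$, use the defining equation for $w_i$ for the middle inequalities, and use $m_i$ resp.\ $m_{i+1}$ to dominate for the upper ones) matches the paper's strategy, and your treatment of $\Delta(n_i)$ and $\Delta(m_{i+1})$ is sound. However, step (5) of your outline has a genuine gap: you claim that for $i'<i$ the drop $\Delta(n_{i'})$ is $O(k^{2i+2}/t)\ll w_i$ because ``$n_{i'}$ helps $b,c$ and each other only on the much smaller scale $k^{i'}$.'' This confuses the \emph{final} distance with the \emph{drop}. Taking $n_{i'}$ moves the nearest center of $c$ from $n_{i+1}$ (distance $\approx k^{i+1}$) to $n_{i'}$ (distance $\approx k^{i'}$), so $c$'s cost falls from $\approx k^{2i+2}/10$ to $\approx k^{2i'}/10$ --- a drop of $\Theta(k^{2i+2})$, not $O(k^{2i+2}/t)$; the same is true for $b$ and for every $n_{i''}, m_{i''}$ with $i''\le i$. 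Adding these up, $\Delta(n_{i'})$ is of the \emph{same} order as $w_i$ (both $\approx k^{2i+2}/10 + \Theta(i)\,k^{2i+2}/t$), so the crude comparison you propose cannot separate them regardless of how carefully one tracks constants.

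The paper sidesteps this by never comparing $\Delta(n_{i'})$ to $w_i$ directly; instead it proves the stronger $\Delta(n_{i'}) < \Delta(c)$ by a term-by-term comparison. The geometric observation that makes this work --- and which is missing from your sketch --- is that by the simplex arrangement, $c$ is \emph{closer} than $n_{i'}$ to every $n_{i''}$ with $i''\ne i'$ (since $d(n_{i''},c)\approx\sqrt{k^{2i''}+1}$ while $d(n_{i''},n_{i'})\approx\sqrt{k^{2i''}+k^{2i'}}$), and likewise for the $m_{i''}$; thus $n_{i'}$ beats $c$ only at the two points $n_{i'}$ and $m_{i'}$, where the benefit is $O(k^{2i'})$, while $c$ beats $n_{i'}$ at $n_i$ alone by $\gtrsim k^{i+i'}/t^2$, which dominates for $i'\ge 1$. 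Relatedly, you also never establish $\Delta(b)>\Delta(c)$, which you implicitly use in step (3); this too is not automatic and the paper proves it by the same per-point technique (the only points where $c$ beats $b$ are $\{c\}\cup A$, contributing $O(k^{0.2}\log k)$, while $b$ beats $c$ at $n_1$ by $\Omega(k/t)$). So the per-point comparison of two candidate drops, rather than a direct magnitude bound against $w_i$, is the essential idea you would need to add.
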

\begin{proof}

First, we prove that $\Delta(b) > \Delta(c)$. 
For $x \in \{c\}\cup A$ we have $\phi(x, c) \le \phi(x, b)$, whereas for any $x \in \{b\} \cup N_{\le i} \cup M_{\le i+1}$ we have $\phi(x, b) \le \phi(x, c)$. 

So, we first upper bound the drop difference for $\{c\}\cup A$:
\begin{align*}
\phi(c \cup A, b) - \phi(c \cup A, c)
&= 1 + \phi(A, b)  - \phi(A, c)\\
&= 1 + k^{1.2} \cdot \frac{\log k}{k^2} \cdot (1 + d(c, a_1)/\sqrt{2})\\
&= O(k^{0.2}\log k)
\end{align*}
where we used \cref{eq:a_drop2} and $d(c,a_1) = k$. 

On the other hand, consider just the point $n_1$. Using the cosine law, we have
\begin{align*}
    \phi(n_1, b) - \phi(n_1, c)
    &\ge 2d(n_1, b)d(b,c)/t \ge k/t
\end{align*}

That is, the difference in the cost drop at point $n_1$ dominates all other points where $c$ has a larger cost drop than $b$. 
This means that $\Delta(b) > \Delta(c)$ and by definition of $w_i$, we already get for any $j$ that
\[
\Delta(c) < \Delta(e_{i,j}) < \Delta(b). 
\]

Next, consider the point $n_i$. We will prove that $\Delta(n_i) > \Delta(b)$. The intuitive reason for this is that $m_i$ is sufficiently far away to make the drop difference between $n_i$ and $b$ there dominate the other terms. Concretely, we have 
\begin{align}
\label{eq:boring}
    \phi(m_i, b) - \phi(m_i, n_i) 
    = 1000/t \cdot \left( ((10t+1)k^i)^2 - (10tk^i)^2 \right)
    \ge 1000/t \cdot 20tk^{2i}
    \ge 20000k^{2i}
\end{align}

On the other hand, consider the set $T = \{b,c\} \cup A \cup N_{<i} \cup M_{<i}$ of points $x$ for which $\phi(x, n_i) \ge \phi(x, b)$ (note that $\phi(m_{i+1}, C_\iota \cup \{b\}) = \phi(m_{i+1}, C_\iota \cup \{c\}) = \phi(m_{i+1}, n_{i+1})$, that is, $m_{i+1}$ is not enjoying any cost drop). Using the fact that the largest distance $d(x, n_i)$ for $x \in T$ is $10tk^{i-1}$ for $x = m_{i-1}$, and the fact that $w(T) \le 3000$, we get
\begin{align}
\label{eq:horrible}
    \phi(T, n_i) - \phi(T, b)
    \le \phi(T, n_i)
    \le w(T) \cdot (10tk^{i-1} + k^i)^2 
    \le 3000 \cdot (2k^i)^2 
    = 12000k^{2i}
\end{align}
Comparing with \cref{eq:boring}, we conclude that $\Delta(n_i) > \Delta(b)$ as needed. 

Next, consider the point $m_{i+1}$. We have $\Delta(m_{i+1}) = \phi(m_{i+1}, C_\iota) = 1/t \cdot (10tk^{i+1})^2 = \Omega(k^{2i+2})$.
This term dominates $\Delta(b) = O((tk^i)^2) = O(k^{2i+0.2})$ where we used that all points affected by $b$ have total weight of $O(1)$ and distance from $b$ is at most $10tk^{i}$. So, we get $\Delta(m_i) > \Delta(b)$, as needed. 

Next, consider any point $n_{i'}$ for $i' < i$; we will prove that $\Delta(n_{i'}) < \Delta(c)$. 
We have $\phi(x, n_{i'}) < \phi(x, c)$ only for $x = n_{i'}$ and $x = m_{i'}$. 
Using triangle inequality to bound $d(m_{i'}, c) \le d(m_{i'}, n_{i'}) + d(n_{i'}, b) + d(b, c) = (10t+1)k^j + 1$, we get
\begin{align*}
    \phi(m_{i'}, c) - \phi(m_{i'}, n_{i'})
    \le 1000/t \cdot \left( ((10t+1)k^j + 1)^2 - (10tk^j)^2 \right)
    \le 30000k^{2{i'}}
\end{align*}
(cf. \cref{eq:boring}) and 
\begin{align*}
    \phi(n_{i'}, c) - \phi(n_{i'}, n_{i'})
    = \phi(n_{i'}, c) \le 2000/t \cdot k^{2{i'}}
\end{align*}

We will show that these terms are dominated by $\phi(n_i, n_{i'}) - \phi(n_i, c)$. First, using the cosine law, we have
\begin{align*}
    \phi(n_i, n_{i'}) - \phi(n_i, b)
    = 1000/t \cdot \left(d(n_i, n_{i'})^2 - d(n_i, b)^2 \right)
    \ge 1000/t \cdot 1/t \cdot d(b, n_{i'})d(b, n_i)
    = 1000k^{i + {i'}}/t^2
\end{align*}
and
\begin{align*}
    \phi(n_i, c) - \phi(n_i, b)
    = 1000/t \cdot (2 d(b, c)d(b, n_i)/t + d(b,c)^2)
    \le 1000k^i
\end{align*}
Combining the two bounds, we get 
\begin{align*}
    \phi(n_i, n_{i'}) - \phi(n_i, c)
    \ge 1000k^{i+{i'}}/t^2 - 1000k^i
\end{align*}

Since $i > {i'} \ge 1$, it is certainly true that $1000k^{i + {i'} }/t^2 - 1000k^i > 30000k^{2i'} + 2000k^{2i'}/t$ and we get that $\Delta(n_{i'}) < \Delta(c)$, as needed. 

A very similar argument works for every $m_{i'}$ with $i' < i$ and we omit the proof. 

The only missing point is now $m_i$ for which we prove that $\Delta(m_i) < \Delta(c)$. 
To see that, note that the only point $x$ for which $\phi(x, m_i) < \phi(x, c)$ is $x = m_i$ itself. 
We have
\begin{align*}
    \phi(m_i, b) - \phi(m_i, m_i) 
    = \phi(m_i, b)
    = 1000/t \cdot ((1+10t)k^i)^2
\end{align*}
On the other hand, we have
\begin{align*}
    \phi(b, m_i) - \phi(b, b)
    = \phi(b, m_i)
    = 1 \cdot  ((1+10t)k^i)^2
\end{align*}
That is, the cost drop of $b$ if we take $b$ dominates the cost drop of $m_i$ if we take $m_i$. Whence $\Delta(m_i) < \Delta(b)$, as needed. 
\end{proof}

Consider the first $k^{0.5} - k^{0.4}$ steps of the $i$-th phase. 
We will show that what happens with high probability is that the algorithm select only points of $E_i \cup \{m_{i+1}\}$ as new centers, until at some point it selects $n_i$ or $b$.

\todo{this claim is not proven very formally}
\begin{claim}
\label{cl:one_phase}
Fix $\iota_0$ to be the first step of phase $i$. 
Let $\iota_1$ be the first point in time when either there were at least $k^{0.5} - k^{0.4}$ sampling steps of the $i$-th phase, or until $\{n_i, b\} \cap C_{\iota_1} \not= \emptyset$. 

Then, with probability at least $1 - 1/(10000t)$, we have $C_{\iota_1} \setminus C_{\iota_0} \subseteq E_i \cup \{n_i, b, m_{i+1}\}$ and either $\{n_i, m_{i+1}\} \subseteq C_{\iota_1}$ or $\{b\} \subseteq C_{\iota_1}$.  
Moreover, $b \in C_{\iota_1}$ with probability at least $1/(10^7 t)$. 
\end{claim}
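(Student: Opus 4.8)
The plan is to condition on a history through step $\iota_0$ that puts us at the start of phase $i$, i.e.\ with $N_{\ge i+1}\cup M_{\ge i+2}\cup E_{\ge i+1}\subseteq C_{\iota_0}$ and $(N_{\le i}\cup M_{\le i}\cup E_{\le i}\cup A\cup\{b,c\})\cap C_{\iota_0}=\emptyset$, and then to follow the next $\iota_1-\iota_0\le k^{0.5}-k^{0.4}$ steps, writing $m_\iota:=|E_i\setminus C_\iota|$; note $m_{\iota_0}=\sqrt k$ and, as long as only points of $E_i\cup\{m_{i+1}\}$ ever get added, $m_\iota$ drops by one at all but at most one step, so $m_\iota\ge k^{0.4}$ for $\iota<\iota_1$. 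The first thing I would record is a short list of cost estimates valid at every such step on the event that only points of $E_i\cup\{m_{i+1}\}$ have been selected so far: each not-yet-chosen $e\in E_i$ has $\phi(e,C_\iota)=w_i\in[k^{2i+2},3000k^{2i+2}]$ by \cref{eq:wi}; the points $b,c,n_i$ have costs $\Theta(k^{2i+2})$ with explicit constants coming from $w(b)=1/t,\ w(c)=1/10,\ w(n_i)=1000/t$ and the simplex distances in \cref{eq:ijbound2}; $m_{i+1}$, if not yet a center, has cost $\Theta(t\,k^{2i+2})$; and $\phi(A,C_\iota),\phi(E_{<i},C_\iota)=o(k^{2i+2})$ because $\ell\le k^{0.1}$. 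Hence $m_\iota k^{2i+2}\le\phi(X,C_\iota)\le C\,m_\iota k^{2i+2}$ for an absolute constant $C$ (the upper bound using $m_\iota\ge k^{0.4}\gg t$). Finally I would note that the cost-drop ordering $\Delta(n_{i'}),\Delta(m_{i'}),\Delta(c)<\Delta(e_{i,j})<\Delta(b)<\Delta(n_i),\Delta(m_{i+1})$ of \cref{cl:cost_drops} is unaffected by which points of $E_i$ have already been chosen (they sit far from everything but $e_0$), so it persists throughout the phase.

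Part~(a), that $C_{\iota_1}\setminus C_{\iota_0}\subseteq E_i\cup\{n_i,b,m_{i+1}\}$, then follows quickly: by the ordering, at any step with some candidate in $E_i\cup\{n_i,m_{i+1},b\}$ the greedy rule selects a point of that set, so a ``bad'' selection needs all $\ell$ candidates to lie in $B:=\{c\}\cup N_{\le i-1}\cup M_{\le i}\cup A\cup E_{<i}$; since $\phi(B,C_\iota)=O(k^{2i+2})$ and $\phi(X,C_\iota)\ge k^{0.4}k^{2i+2}$ this has probability at most $(O(1)/m_\iota)^\ell$, which since $\ell\ge 2$ and $m_\iota\ge k^{0.4}$ is at most $(O(1)/k^{0.4})^2$. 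A union bound over the $\le k^{0.5}$ steps bounds the failure probability of~(a) by $O(k^{-0.3})\le 1/(30000\,t)$.

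The crux is to rule out the ``clock overruns'' event $F$ that the full $k^{0.5}-k^{0.4}$ steps elapse with neither $n_i$ nor $b$ selected. On $F$ together with~(a) only points of $E_i$ (and $m_{i+1}$, at most once) are selected, so $m_\iota$ sweeps every integer from $\approx k^{0.4}$ up to $\sqrt k$; and $n_i$ is sampled at most once on $F\cap(a)$, because any step at which $n_i$ is sampled but not selected must, by \cref{cl:cost_drops}, be a step at which $m_{i+1}$ has strictly larger drop and is therefore selected, and $m_{i+1}$ is selected at most once. Writing $\mathcal F_\iota$ for the history through step $\iota$, while the phase is still running under~(a) one has $\P(n_i\text{ sampled at step }\iota+1\mid\mathcal F_\iota)\ge\frac12\cdot\ell\,\phi(n_i,C_\iota)/\phi(X,C_\iota)\ge 500\,\ell/(C\,t\,m_\iota)$, so
\[
\sum_\iota\P\bigl(n_i\text{ sampled at }\iota+1\mid\mathcal F_\iota\bigr)\ \ge\ \frac{500\,\ell}{C\,t}\sum_{m=k^{0.4}}^{\sqrt k}\frac1m\ \ge\ \frac{25\,\ell}{C\,t}\ln k,
\]
which, on substituting $t=\ell\log k/(1000\log(\ell\log k))$, is $\ge c_0\log(\ell\log k)$ for an absolute constant $c_0>0$. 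A standard exponential martingale inequality for sums of conditionally Bernoulli indicators then shows that the probability of sampling $n_i$ at most once while this predictable sum is $\ge c_0\log(\ell\log k)$ is $e^{-\Omega(\log(\ell\log k))}=(\ell\log k)^{-\Omega(1)}$, so $\P(F\cap(a))\le(\ell\log k)^{-\Omega(1)}\le 1/(30000\,t)$. The only remaining case is that $n_i$ is selected before $b$ and before the cap, and there one must check $m_{i+1}$ was already selected; this fails with probability $O(1/t^2)+o(1)$ because $m_{i+1}$ costs $\Theta(t^2)$ times as much as $n_i$ and so is sampled, and selected, much earlier (the $\poly\log(\ell\log k)$ slack built into the weights, cf.\ \cref{subsec:lb}, absorbs this term when $t$ is small). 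Summing the three failure probabilities gives the claimed $1-1/(10000\,t)$.

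For $\P(b\in C_{\iota_1})\ge 1/(10^7\,t)$ I would argue that on~(a) the point $b$ is selected at the first step where $b$ is sampled and neither $n_i$ nor $m_{i+1}$ is co-sampled, whereas $n_i$ is selected at essentially the first step it is sampled; since $m_{i+1}$ appears as a candidate with probability $o(1)$ per step and $\phi(b,C_\iota)/\phi(n_i,C_\iota)\ge 1/2000$ uniformly, comparing the two sampling rates gives $\P(b\text{ wins this race within the cap})\ge 1/3000-o(1)-\P(F)\ge 1/(10^7\,t)$ for $k$ large. The main obstacle is the martingale step: one has to run the coupled process up to the stopping time $\iota_1$, keep track of event~(a), and play off the conditional expected number of hits of $n_i$ (which is $\Omega(\log(\ell\log k))$) against the fact that on the bad event it is at most $1$ — this is exactly where the doubly-logarithmic slack built into the definition of $t$, and the $\poly\log(\ell\log k)$ weight corrections promised in \cref{subsec:lb}, are used up.
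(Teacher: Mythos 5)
Your argument has the same three-part skeleton as the paper's: (i) rule out that a point outside $E_i\cup\{n_i,m_{i+1},b\}$ is ever selected, (ii) use the ``clock'' argument that $n_i$ and $m_{i+1}$ are sampled and selected before $E_i$ is exhausted, and (iii) bound the probability that $b$ wins the race against $n_i$. The proof is essentially correct modulo two caveats worth flagging.

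First, a methodological difference: for step (ii) you reach for a martingale inequality on the predictable sum of sampling probabilities of $n_i$. The paper instead just multiplies conditional survival probabilities over the $k^{0.5}-k^{0.4}$ steps and sums the resulting harmonic series in the exponent, getting $\exp\bigl(-\Theta(\tfrac{\ell\log k}{t})\bigr)=(\ell\log k)^{-\Theta(1)}$. Both work, but the paper's product bound is more elementary and the constant it buys, roughly $(\ell\log k)^{-10}$, is visibly large enough, whereas your ``$c_0>0$'' is cutting it closer than is comfortable: to beat $1/(30000t)\approx\log(\ell\log k)/(30\ell\log k)$ you need the exponent in the martingale bound to exceed $1$, so the constant actually needs tracking. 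A related small gap: in part (a) you invoke \cref{cl:cost_drops} to justify that the greedy rule never selects a point of $B=\{c\}\cup N_{\le i-1}\cup M_{\le i}\cup A\cup E_{<i}$ when a candidate from $E_i\cup\{n_i,m_{i+1},b\}$ is present, but \cref{cl:cost_drops} does not include $A$ or $E_{<i}$ in its comparison list; the paper instead disposes of those by showing they are sampled at all with probability only $O(1/k)$ per step, so it never needs the drop comparison for them. Your version needs the (easy) extra observation that $\Delta(a_j)=O(\ell\log k\cdot k^{2i})$ and $\Delta(e_{i',j'})\le 3000k^{2i}$ are both $\ll w_i\ge k^{2i+2}=\Delta(e_{i,j})$.

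Second, and more substantively, the last paragraph is calibrated to $w(b)=1/t$ as stated in the paper's point set description (item 1 of \cref{subsec:lb_description}), which gives $\phi(b,C_\iota)/\phi(n_i,C_\iota)\approx w(b)/w(n_i)=1/1000$ and hence a \emph{constant} win probability for $b$ in each phase, trivially dominating $1/(10^7t)$. But the paper's own computation of this ratio later in \cref{cl:one_phase} actually substitutes $w(b)=\frac{\log^2(\ell\log k)}{1000\ell^2\log^2 k}\sim 1/t^2$, getting $\phi(b)/\phi(n_i)=\Theta(1/t)$, which is what the $1/(10^7t)$ constant is calibrated to, and which is what the overall approximation ratio $\Omega(\ell^3\log^3 k/\log^2(\ell\log k))$ in \cref{thm:greedy_lb} requires (with $w(b)=1/t$ the instance only gives $\Omega(\ell^2\log^2 k)$). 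So the paper's item 1 almost certainly contains a typo ($1/t$ for $1/t^2$), and your proposal inherits it. Your argument technically proves the stated per-phase bound (since $\Omega(1)\ge 1/(10^7 t)$), but the race analysis as you have written it would not go through if $w(b)$ were corrected to $\sim 1/t^2$: with that weight the ``$\ge 1/3000-o(1)$'' line becomes $\Theta(1/t)$ and the whole argument has to be run at that scale, which is exactly what the paper does via the explicit $\phi(b,C_\iota)/\phi(n_i,C_\iota)\ge 1/(2\cdot 10^6\,t)$ estimate.
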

\begin{proof}
We will need to upper bound the probability of various bad events. 
First, we upper bound the probability of the events that at least one point from $E_1 \cup \dots \cup E_{i-1} \cup A$ is sampled as a candidate center. 
To compute the relevant probabilities, first note that at any point in time $\iota \le \iota_1$, we have $|E_i| \ge k^{0.4}$, hence 
\begin{align}
\label{eq:robbing_bernhard}
\phi(E_i, C_\iota) \ge k^{0.4} w_i \ge k^{0.4} \cdot k^{2i+2}    
\end{align}
where we used \cref{eq:wi}. 

On the other hand, we have $\phi(E_1 \cup \dots \cup E_{i-1}, C_\iota) \le t\sqrt{k} w_{i-1} \le 3000t\sqrt{k} k^{2i}$ using \cref{eq:wi} and $\phi(A, C_\iota) \le k^{1.2} \cdot \frac{\log  k}{k^2} \cdot (2k^{i+1})^2$. 
We get 
\[
\frac{\phi(E_1 \cup \dots \cup E_{i-1} \cup A, C_\iota)}{\phi(X, C_\iota)} = O(\frac{k^{2i + 1.3}}{k^{2i + 2.4}}) = O(1/k). 
\]
Hence, during at most $k^{0.5} - k^{0.4} = O(\sqrt{k})$ steps of the phase, the probability of this event is at most $O(\sqrt{k} \cdot \ell/k) = O(1/k^{0.4})$. 

Another bad event is that in some step of the algorithm none of the $\ell > 1$ points sampled is from $E_i$. 
To compute the probability of this event, we upper bound the following cost: 
\begin{align*}
&\phi(\{b,c\} \cup N_{\le i} \cup M_{\le i+1}\cup E_1 \cup \dots \cup E_{i-1} \cup A, C_\iota)\\
&= O(\phi(m_{i+1}, n_{i+1})) && \text{cost dominated by the cost of $m_{i+1}$}\\
&= O((10tk^{i+1})^2 \cdot 1/t )\\
&= O(tk^{2i+2})
\end{align*}
On the other hand, in view of \cref{eq:robbing_bernhard}, we have 
\begin{align*}
    \P(c_{\iota + 1}^1, c_{\iota + 1}^2 \not\in E_i) = \left( \frac{O(tk^{2i+2})}{k^{2i + 2.4}}\right)^2 = O(1/k^{0.7})
\end{align*}
Hence, the probability that this bad event happens in the first $k^{0.5} - k^{0.4}$ steps is at most $O(k^{0.5}/k^{0.7}) = 1/k^{0.2}$. 

A final bad event that we need to deal with is that in one sampling step we sample at least two points from the set $\{b, n_i, m_{i+1}\}$. We argue just about the probability that $n_i, m_{i+1}$ are sampled in one sampling step as this event has the largest probability out of the three pairs $\{b,n_i\}, \{b,m_{i+1}\},\{n_i, m_{i+1}\}$. The probability of this event in one step is at most 
\[
\frac{\ell\phi(n_i, C_\iota)}{\phi(X, C_\iota)}
\cdot \frac{\ell\phi(m_{i+1}, C_\iota)}{\phi(X, C_\iota)}
\le \frac{ \ell \cdot \frac{1000}{t} \cdot (2k^{i+1})^2 \cdot \ell \cdot \frac{1000}{t} \cdot (10tk^{i+1})^2}{\left(\gamma \cdot k^{2i+2}\right)^2}
=O((\ell/k^{0.4})^2) = O(1/k^{0.6})
\]
where we used \cref{eq:robbing_bernhard}. 

Hence, the probability that this bad event happens in the first $k^{0.5} - k^{0.4}$ steps is at most $O(k^{0.5}/k^{0.6}) = 1/k^{0.1}$.  

In view of the computations above, we may assume that in the first $k^{0.5} - k^{0.4}$ steps we always sample one point from $E_i$ and $\ell - 1$ points from $E_i \cup \{b,c\} \cup N_{\le i} \cup M_{\le i+1}$. 
Moreover, we do not sample more than one point from $\{b,n_i,m_{i+1}\}$ in one sampling step. 
We now invoke \cref{cl:cost_drops} and get that only when the non-$E_i$ point we sample is $n_i$ or $m_{i+1}$ or $b$, we add that point to the set of centers. 
Otherwise, we always select the new center as one of the sampled  points from $E_i$. 

We will now prove the claims from the statement. First, we prove that with probability $1 - O(1/t)$ we have either both $n_i$ and $m_{i+1}$ in $C_{\iota_1}$, or we have $b \in C_{\iota_1}$.

In each step $\iota$, there are at least $\gamma = k^{0.5} - (\iota - \iota_0) - 3$ not-yet-taken points of $E_i$, since we already assume that the algorithm chooses a point from  $E_i \cup \{b,n_i,m_{i+1}\}$ as the next center in each step. 
The probability that the next point sampled is $n_i$ is at least 
\begin{align*}
    \P(c_{\iota + 1} = n_i) 
    &\ge \frac{\ell \phi(n_i, C_{\iota})}{\phi(X, C_\iota)}\\
    &\ge \frac{\ell \cdot 1000/t \cdot k^{2i + 2}}{10000k^{2i + 2} + 3000\gamma k^{2i+2}}
    \ge \frac{\ell }{10t \gamma}
\end{align*}
where we bounded $\phi(\{b,c\}\cup A \cup N_{\le i} \cup M_{\le i+1}) \le 10000k^{2i+2}$, used \cref{eq:wi} and used that since $\gamma \ge k^{0.4}$, it  dominates the constant term $10000$. 
Hence, the probability that we do not take $n_i$ nor $b$ in the $k^{0.5} - k^{0.4}$ steps, is at most
\[
\prod_{\gamma = k^{0.5}}^{k^{0.4}+3} \left( 1 - \ell/(10t\gamma) \right)
\le \e^{ - \sum_{\gamma = k^{0.5}}^{k^{0.4}+3}  \ell/(10t\gamma )  }
\le \e^{-(\ell \log k) / 100t}
= \e^{-\log(\ell\log k)} 
< 1/(100000t)
\]
where we used $t = \frac{\ell \log k}{1000\log(\ell\log k)}$ and summed up a harmonic series. Similarly, we can bound that the probability that we take neither $m_{i+1}$ nor $b$ is at most $1/(1000\ell\log k)$. The first part of the claim follows after we also subtract the probabilities of various bad events bounded above from $1 - 1/(100000t)$. 

Note that after $n_i$ is selected as a center during the phase in some step $\iota$, the cost drop resulted by adding a point from $\{b,c\} \cup A \cup N_{<i} \cup M_{\le i}$ to $C_\iota$ is smaller than the cost drop resulted by adding a point from $E_i$ to $C_\iota$. Hence, we have $C_{\iota_1} \setminus C_{\iota} \subseteq E_i \cup \{m_{i+1}\}$ which consequently implies $C_{\iota_1} \setminus C_{\iota_0} \subseteq E_i \cup \{n_i, b, m_{i+1}\}$ as required. 

Finally, we need to prove that we sample $b$ with probability at least $1/(100t)$. To see this, let us condition on $b$ or $n_i$ being one of the sampled and taken points in the first $n^{0.5} - n^{0.4}$ steps of the phase. In every step the ratio of the probability we sample $b$ versus that we sample $n_i$ is equal to 

\begin{align*}
\phi(b, C_\iota) / \phi(n_i, C_\iota) 
&\ge \left( \frac{\log^2(\ell\log k)}{1000\ell^2\log^2 k} \cdot k^{2i+2}  \right) / \left( 1000/t \cdot 2k^{2i+2}  \right)\\
&=  \frac{\frac{\ell\log k}{\log(\ell\log k)} \cdot \log^2(\ell\log k)}{2\cdot 10^6 \cdot \ell^2\log^2 k}  \\
&= \frac{\log(\ell\log k)}{2\cdot 10^6 \cdot\ell \log k}
= 1/(2\cdot 10^6 \cdot t)
\end{align*}
Hence, after we subtract the probabilities of various bad events from $1/(2\cdot 10^6 \cdot t)$, we get that the probability that we sample $b$ during the process is at least $1/(10^7 t)$, as needed. 

\end{proof}

We are now ready to deduce \cref{cl:first_epoch}. 

\begin{proof}
We iterate \cref{cl:one_phase}. As we have only $1/(10000t)$ probability of various failure modes inside one phase, the total probability of failing in some phase is at most $1/10000$. If we condition on no bad events happening in a phase, we get probability of at least $1/(10^7 t)$ of sampling $b$ per phase. Hence, the total probability of sampling $b$ in at least one phase is at least 
\[
1 - (1 - 1/10^7t)^t
\ge 1 - \e^{-1/10^7} > 0
\]
Finally, after $b$ is sampled, we use \cref{cl:finish} to conclude that with positive probability, at the end of the first epoch we have $C_{\tilde{k} - k^{1.2}} = X \setminus (A \cup \{c\})$ as needed. 
\end{proof}

It remains to argue about the second epoch. We need to prove that there is constant probability of sampling $c$ as a candidate center during the first $k^{1.2} - k^{1.1}$ steps. 
Then we need to argue that in that case $c$ is taken as a center by the greedy rule. 
We start by lower bounding the probability of sampling $c$. 

\begin{claim}
\label{cl:one_batch}
    Assume that in some step $\iota_0$ we have $X \setminus C_{\iota_0} \subseteq \{c\} \cup A$ but $c \not\in C_{\iota_0}$. 
    Moreover, assume that $k^{1.1} \le |A|$.  
    Then, with probability at least $1 - \e^{- 1/(6\log k)}$, \cref{alg:kmpp_greedy} samples the point $c$ as a candidate center in the following $|A|/2$ steps. 
\end{claim}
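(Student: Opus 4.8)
The plan is to bound the probability of the complementary event --- that $c$ never appears among the sampled candidate centers during the next $|A|/2$ steps --- and check that it is at most $\e^{-1/(6\log k)}$. Throughout, $|A|$ abbreviates the number of points of $A$ that are not yet centers at step $\iota_0$.

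First I would record the relevant costs. Since $X\setminus C_{\iota_0}\subseteq\{c\}\cup A$ and $b\notin\{c\}\cup A$, we have $b\in C_{\iota_0}$, and as long as $c$ has not been selected, $b$ stays the center closest to $c$: every center in $N$ is at distance more than $1$ from $c$ by \cref{eq:cnbound}, every center in $M$ by \cref{eq:cmbound}, every center in $A$ lies at distance exactly $k$ from $c$, and the points of $E$ are far. Hence $\phi(c,C_\iota)=w(c)\,d(b,c)^2=1$ at every step $\iota$ of this epoch, exactly as in the proof of \cref{cl:finish}. For a point $a_i$ that is not yet a center, its nearest center is either $b$, with $d(a_i,b)^2=k^2(1+o(1))$ by \cref{eq:a_drop}, or an already selected $a_j$, with $d(a_i,a_j)=k$ by \cref{eq:as_far}; in either case $\phi(a_i,C_\iota)\le w(a_i)\,k^2(1+o(1))=\ell\log k\,(1+o(1))$. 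Consequently, if $r$ points of $A$ together with $c$ are still unchosen at step $\iota$, then
\[
\phi(X,C_\iota)=\phi(c,C_\iota)+\phi(A\setminus C_\iota,C_\iota)\le 1+r\,\ell\log k\,(1+o(1))\le r\,\ell\log k\,(1+o(1)),
\]
where the last step uses $r\ge|A|/2\ge k^{1.1}/2$.

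Next I would run the step-by-step bound. Let $B_s$ be the event that $c$ is not among the sampled candidates in any of the steps $\iota_0+1,\dots,\iota_0+s$, with $B_0$ the whole probability space. On $B_s$, during each of those $s$ steps every sampled candidate lay in $A$ (it lay in $\{c\}\cup A$ and was not $c$), so \cref{alg:kmpp_greedy} selected a point of $A$; hence $C_{\iota_0+s}$ consists of $C_{\iota_0}$ together with $s$ further points of $A$ and $c\notin C_{\iota_0+s}$, so exactly $|A|-s$ points of $A$ remain unchosen. Therefore, conditioned on $B_s$, the probability that none of the $\ell$ candidates of step $\iota_0+s+1$ is $c$ equals $\big(1-\phi(c,C_{\iota_0+s})/\phi(X,C_{\iota_0+s})\big)^{\ell}$, which by $1-x\le\e^{-x}$ and the cost bound above is at most $\exp\!\big(-1/((|A|-s)\log k\,(1+o(1)))\big)$. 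Multiplying these bounds over $s=0,\dots,|A|/2-1$ yields
\[
\P(B_{|A|/2})\le\exp\!\Big(-\frac{1}{\log k\,(1+o(1))}\sum_{m=|A|/2+1}^{|A|}\frac1m\Big)\le\exp\!\Big(-\frac{\ln 2-o(1)}{\log k\,(1+o(1))}\Big)\le\e^{-1/(6\log k)}
\]
for $k$ large, using $\sum_{m=|A|/2+1}^{|A|}1/m\ge\ln\frac{|A|+1}{|A|/2+1}=\ln 2-o(1)$ (valid since $|A|\ge k^{1.1}$) and $\ln 2>1/6$. Taking complements proves the claim.

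The argument is essentially a routine computation; the two things that need care are (i) verifying that $\phi(X,C_\iota)$ is, up to a $1+o(1)$ factor, equal to $\ell\log k$ times the number of still-unchosen points of $A$, so that the $\Theta(1)$ cost of $c$ contributes $\approx 1/(r\,\ell\log k)$ to each step's probability of sampling $c$, and (ii) the conditioning bookkeeping --- on the event that $c$ has not yet been sampled, the number of unchosen $A$-points drops by exactly one per step, so the per-step probabilities sum to a harmonic tail of size $\ln 2-o(1)$, which is precisely what makes the constant $6$ go through (with room to spare). There is no real obstacle beyond this.
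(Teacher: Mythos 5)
Your proof is correct, and it takes a genuinely (if mildly) different route from the paper. The paper's proof of \cref{cl:one_batch} does \emph{not} track the shrinking set of remaining $A$-points: it simply lower-bounds the per-candidate probability of sampling $c$ in every one of the $|A|/2$ steps by the same worst-case quantity $\frac{\phi(c,b)}{\phi(c,b)+|A|\,\phi(a_1,b)}\ge\frac{1}{3|A|\ell\log k}$ (valid since the total cost only decreases as $A$-points are taken), and then multiplies $\left(1-\frac{1}{3|A|\ell\log k}\right)^{\ell|A|/2}\le \e^{-1/(6\log k)}$. Your approach instead conditions on the event $B_s$, observes that on $B_s$ exactly $|A|-s$ points of $A$ remain, and obtains a harmonic tail $\sum_{m=|A|/2+1}^{|A|}1/m\ge\ln 2-o(1)$. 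Both give the claimed bound; the paper's version is a bit shorter because it avoids the bookkeeping about how $|A\setminus C_\iota|$ evolves, while yours is tighter by a constant (it extracts a $\ln 2$ where the paper settles for $1/2$). Two small remarks: both you and the paper silently use $\phi(c,C_\iota)=w(c)\,d(b,c)^2=1$ even though the construction sets $w(c)=1/10$; this only shifts the constant in the exponent (e.g.\ $1/(6\log k)\mapsto 1/(60\log k)$) and is harmless for the application in \cref{cl:second_phase}, where any constant would do. And you are right to read $|A|$ in the statement as ``the number of $A$-points still outside $C_{\iota_0}$'' --- that is the only reading under which \cref{cl:second_phase} can apply the claim batch-by-batch --- even though the paper's notation is sloppy on this point.
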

\begin{proof}
    In each of the next $|A|/2$ steps $\iota_0 \le \iota \le \iota_0 + |A|/2$, unless $c \in C_\iota$, we sample $c$ as a fixed candidate center $c_\iota^j$ with probability at least $\frac{\phi(c, b)}{\phi(c, b) + |A|\cdot \phi(a_1, b)}  \ge \frac{1\cdot 1^2}{1\cdot 1^2 + |A|\cdot \frac{\ell \log k}{k^2} \cdot (k+1)^2} \ge \frac{1}{3|A| \ell \log k}$. 
    Hence, the probability that $c$ is not sampled in any of the $|A|/2$ steps as no candidate center $c_\iota^j$ is at most
    \begin{align*}
        \left( 1 - \frac{1}{3|A| \ell\log k} \right)^{\ell \cdot |A|/2}
        \le \e^{- \frac{1}{3|A| \ell\log k} \cdot \ell|A|/2 }
        = \e^{- 1/(6\log k)}
    \end{align*}
    and we are done. 
\end{proof}

We can now finish the proof. 

\begin{claim}
\label{cl:second_phase}
Assume that $C_{\tilde{k} - k^{1.2}} = X \setminus (A \cup \{c\})$. 
Then, with positive probability we have $c \in C_{\tilde{k}}$. 
\end{claim}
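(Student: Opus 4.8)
The plan is to treat the second epoch as the final $k^{1.2}$ steps of the algorithm. Under the hypothesis $C_{\tilde k-k^{1.2}}=X\setminus(A\cup\{c\})$, the only points with nonzero cost — hence the only points that can be sampled as candidate centers from now on — lie in $A\cup\{c\}$, so the remaining $k^{1.2}$ steps must select $k^{1.2}$ of the $|A|+1=k^{1.2}+1$ points of $A\cup\{c\}$, leaving out exactly one. Thus it suffices to show that with positive probability the point left out lies in $A$, i.e.\ $c\in C_{\tilde k}$.

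First I would apply \cref{cl:one_batch} with $\iota_0=\tilde k-k^{1.2}$; its hypotheses hold since $X\setminus C_{\iota_0}=A\cup\{c\}$, $c\notin C_{\iota_0}$, and $|A|=k^{1.2}\ge k^{1.1}$. This gives that with probability at least $1-\e^{-1/(6\log k)}>0$ the algorithm samples $c$ as a candidate center during the first $|A|/2$ steps of the epoch. Condition on this event and let $\iota$ be the first step at which $c$ appears among the candidates. Since a point can only be taken after it has first been sampled, $c$ is absent from every candidate batch strictly before step $\iota$; as those batches consist of uncovered points ($=A\cup\{c\}$ throughout the epoch), every center picked so far in the epoch lies in $A$, and there are at most $|A|/2$ of them. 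Hence $C_\iota=(X\setminus(A\cup\{c\}))\cup A'$ with $A'\subseteq A$ and $|A'|\le|A|/2$, so at least $|A|/2$ points of $A$ are still uncovered at step $\iota$, while $c\notin C_\iota$.

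The final step is to show that at step $\iota$ the greedy rule sets $c_{\iota+1}=c$. Writing $\Delta(x)=\phi(X,C_\iota)-\phi(X,C_\iota\cup\{x\})$, every candidate is in $A\cup\{c\}$, so it is enough to prove $\Delta(c)>\Delta(a_i)$ for every uncovered $a_i$. By the placement of $A$ (each $a_i$ is no closer to another $a_j$ than $a_j$'s current center is, and $d(a_i,c)=k\gg 1=d(c,b)$), adding $a_i$ reduces only the cost of $a_i$ itself, so $\Delta(a_i)=\phi(a_i,C_\iota)=w(a_i)\cdot d(a_i,C_\iota)^2=\Theta(\ell\log k)$ using $d(a_i,C_\iota)=\Theta(k)$ and $w(a_i)=\ell\log k/k^2$. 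On the other hand, adding $c$ reassigns each of the $\ge|A|/2$ still-uncovered points of $A$ from its current center (at distance $\approx k+1$) to $c$ (at distance $k$), saving $w(a_j)\bigl((k+1)^2-k^2\bigr)=\Omega(\ell\log k/k)$ per point, so $\Delta(c)=\Omega\bigl(k^{1.2}\cdot\ell\log k/k\bigr)=\Omega(k^{0.2}\ell\log k)$. For $k$ large this dominates $\Theta(\ell\log k)$, giving $\Delta(c)>\Delta(a_i)$, hence $c\in C_{\iota+1}\subseteq C_{\tilde k}$, as desired.

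The main obstacle is exactly this last comparison: one must rule out that some candidate $a_i$ triggers a large \emph{aggregate} drop in the cost of the remaining $A$-points (which would make greedy prefer $a_i$). This is precisely why $A$ is arranged so that the $a_i$'s are mutually farther apart than each is from the center currently serving it; everything else is a routine invocation of \cref{cl:one_batch} together with bookkeeping of which points of $A$ are still uncovered when $c$ is first sampled.
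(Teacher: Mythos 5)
Your structural outline — apply \cref{cl:one_batch}, locate the first step $\iota$ at which $c$ is sampled, observe that at most half of $A$ has been covered by then, and show $\Delta(c) > \Delta(a_j)$ so that greedy picks $c$ — matches the paper's argument step by step, and the final $\Delta(c)$ vs.\ $\Delta(a_j)$ comparison is correct. However, there is a quantitative gap that defeats the purpose of the claim.

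\Cref{cl:one_batch}, applied once over a window of $|A|/2$ steps, gives that $c$ is sampled with probability at least $1 - \e^{-1/(6\log k)}$. This quantity is \emph{not} bounded away from zero: for large $k$ it is $\approx 1/(6\log k) \to 0$. This is inherent to \cref{cl:one_batch}; the $|A|$'s cancel in the bound $\bigl(1 - \tfrac{1}{3|A|\ell\log k}\bigr)^{\ell |A|/2} \le \e^{-1/(6\log k)}$, so no single window, regardless of its starting size, yields more than a $\Theta(1/\log k)$ chance of hitting $c$. Since \cref{thm:greedy_lb} requires the bad event to occur with \emph{constant} probability (and the companion claim, \cref{cl:first_epoch}, indeed delivers a probability bounded away from $0$), ``positive probability'' in \cref{cl:second_phase} must mean a constant, not merely a quantity strictly greater than zero. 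Your one-shot invocation only delivers $\Theta(1/\log k)$, which is insufficient.

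The missing idea is the batching argument: split the first $k^{1.2} - k^{1.1}$ steps of the epoch into $0.1\log_2 k$ geometrically shrinking batches (batch $i$ of length $k^{1.2}/2^i$, so that the number of still-uncovered $A$-points at the start of batch $i$ is $\approx k^{1.2}/2^{i-1} \ge k^{1.1}$), and apply \cref{cl:one_batch} to each batch. The failure probabilities multiply, giving $\bigl(\e^{-1/(6\log k)}\bigr)^{0.1\log_2 k} \le \e^{-1/100}$, a constant bounded away from $1$, so $c$ is sampled (and, by your $\Delta$ comparison, taken) with probability $\ge 1 - \e^{-1/100}$. Without iterating over $\Theta(\log k)$ such windows, the probability degrades to zero as $k \to \infty$ and the lower bound of \cref{thm:greedy_lb} does not follow.
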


\begin{proof}
Consider the first $k^{1.2} - k^{1.1}$ steps after the end of the first epoch. We split these steps to $\log_2 \frac{k^{1.2}}{k^{1.1}} = 0.1 \log_2 k$ batches where the batch $i$ contains $k^{1.2} / 2^i$ steps. 
Fix one such $i$-th batch. We first prove that whenever  the algorithm samples a point $c$ as a candidate center, it also takes it as a center. 
To see that, we need to compute the drop in cost $\Delta(c)$ of $c$ and $\Delta(a_j)$ of any $a_j$. First, we bound the drop of $c$. We use the fact that each point of at least $k^{1.1}$ points $A$ that are not yet taken will have its cost dropped by a small yet non-negligible amount. Namely for the drop in the cost $\Delta(c)$ after taking $c$ as a new center we have
\begin{align}
\Delta(c)\ge
    \sum_{j = 1}^{k^{1.1}} \phi(a_j, b) - \phi(a_j, c) 
    &\ge k^{1.1} \cdot \frac{\ell \log k}{k^{2}} \cdot \left( d(b, a_1)^2 - d(c, a_1)^2 \right)\\
    &\ge \frac{\ell \log k}{k^{0.9}} \cdot k/2 && \text{\cref{eq:a_drop}} \\
    &\ge \ell k^{0.1}
\end{align}
On the other hand, whenever we take some point $a_j$ as a center, the only point whose cost is dropped is $a_j$ itself; this follows from \cref{eq:as_far}. Thus for the drop in the cost $\Delta(a_j)$ after taking $a_j$ as a new center we have
\begin{align}
    \Delta(a_j)
    = \phi(a_j, b)
    &= \frac{\ell \log k}{k^2} \cdot k^2 =\ell \log k
\end{align}

That is, for $k$ large enough we get $\Delta(c) > \Delta(a_j)$ and, hence, whenever $c$ is sampled, it is also taken as a center by the greedy rule of \cref{alg:kmpp_greedy}. 

Finally, we use \cref{cl:one_batch} to conclude that $c$ is not sampled in any of the $0.1\log_2 k$ batches only with probability at most $\left( \e^{- 1/(6\log k)}\right)^{0.1 \log_2 k} \le \e^{-1/100} < 1$.
\end{proof}

\paragraph{Acknowledgment} 
We thank Mohsen Ghaffari, Bill Kuszmaul, Silvio Lattanzi, Aleksander Łukasiewicz, and Melanie Schmidt for useful discussions. 
CG and VR were supported by the European Research Council (ERC) under the European Unions Horizon 2020 research and innovation programme (grant agreement No. 853109). JT is part of BARC, Basic Algorithms Research Copenhagen, supported by the VILLUM Foundation grant 16582.

\bibliographystyle{alpha}
\bibliography{ref}

\appendix

\section{A hard instance for general $k$-means++}
\label{sec:l_adversary_lower_bound}

In this section, we prove the precise version of \cref{thm:adversary_lb_informal}. 
To this end, we first formally describe the general version of $k$-means++ with for an arbitrary seeding rule in \cref{alg:kmpp_adversary}.

\begin{algorithm}[ht]
	\caption{General $k$-means++ seeding}
	\label{alg:kmpp_adversary}
	\label{alg:kmpp_adversarial}
	Input: $X$, $k$, $\ell$, and a rule $\fR$ that picks one point from $\ell$ points $x_1, \dots, x_\ell$, given access to $X, C_i, k, \ell$. 
	\begin{algorithmic}[1]
		\State Uniformly independently sample $x_1, \dots, x_\ell \in X$;
		\State Let $x \in \{x_1, \dots, x_\ell\}$ be selected by $\fR$ and set $C_1 = \{ x \}$.
		\For{$i \leftarrow 1, 2, 3, \dots, k-1$}
		\State Sample $c_{i+1}^1, \dots, c_{i+1}^\ell \in X$ independently w.p. $\frac{\phi(x, C_i)}{\phi(X, C_i)}$;
		\State Let $c_{i+1} \in \{c_{i+1}^1, \dots, c_{i+1}^\ell\}$ be selected by $\fR$ and set $C_{i+1} = C_{i} \cup \{c_{i+1}\}$.
		\EndFor
		\State \Return {$C := C_k$}
	\end{algorithmic}
\end{algorithm}

The rest of this section is then devoted to the proof of the following theorem. 

\begin{restatable}{theorem}{adversaryLb}
\label{thm:adversary_lb}
There exists a point set $X  \subseteq \mathbb{R}^d$ and a rule $\fR$ such that \cref{alg:kmpp_adversary} with $\fR$ is $\Omega(k^{1 - 1/\ell})$-approximate with constant probability. 
\end{restatable}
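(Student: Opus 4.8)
The plan is to turn the construction sketched in \cref{subsec:strike} into a formal proof, with the weight of the special point $c$ set to $\Theta(k^{1-1/\ell})$ and all constants chosen with enough slack that the argument works for every $\ell\ge 1$. By \cref{lem:prescribed} it suffices to produce, in the generalized model, an instance $(X_0,k-1,C_0)$ with a single prescribed center $C_0=\{d\}$ on which \cref{alg:kmpp_adversary} is $\Omega(k^{1-1/\ell})$-approximate with constant probability; since $|C_0|=1$ the lift to $(X_0\cup\{d\},k,\emptyset)$ does not blow up the weights. The set $X_0$ consists of $k-2$ \emph{dummy} points $x_1,\dots,x_{k-2}$ together with two points $b,c$; the intended tree metric has $d(d,x_i)=d(d,c)=k$ for all $i$, $d(b,c)=1$, all pairs among $\{c,x_1,\dots,x_{k-2}\}$ far apart, and weights $w(x_i)=1$, $w(b)=20$, $w(c)=\frac16 k^{1-1/\ell}$. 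This metric is realized in $\R^{k}$ up to $O(1)$ additive distortion of squared distances via \cref{fact:simplex}: put $d$ at the origin, $c=k v_0$ and $x_i=k v_i$ for the simplex vertices $v_0,\dots,v_{k-2}$, and $b=c+u$ with $u$ a unit vector orthogonal to the span of everything else. The rule $\fR$ is: output $b$ if $b$ is among the $\ell$ candidates; otherwise output a dummy if one is among the candidates; otherwise (all $\ell$ candidates equal $c$) output $c$.

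First I would record the costs. The optimum with $k-1$ centers on $X_0$ excludes $b$ and pays $w(b)\,d(b,c)^2=20=O(1)$, whereas excluding a dummy costs $\ge k^2$ and excluding $c$ costs $w(c)=\Theta(k^{1-1/\ell})$. Next I would define the good event $\mathcal A$ to be ``$b$ is selected as a center at some step, and no later than the first step (if any) at which $c$ is selected.'' On $\mathcal A$ the algorithm ends up with $b$ as a center together with $k-2$ of the remaining $k-1$ points $\{x_1,\dots,x_{k-2},c\}$, so it excludes either some $x_i$ — paying $w(x_i)k^2=k^2$ — or $c$ — paying $w(c)\,d(b,c)^2=\frac16 k^{1-1/\ell}$, since $b$ is then the center nearest $c$. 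Either way its cost is $\ge\frac16 k^{1-1/\ell}$, and dividing by $O(1)$ gives the claimed $\Omega(k^{1-1/\ell})$ ratio. So everything reduces to showing $\Prob{\mathcal A}=\Omega(1)$.

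To bound $\Prob{\neg\mathcal A}$ I would observe that $\neg\mathcal A$ is contained in the union of the events (i) ``$b$ is never sampled during the first $k/2$ steps while it is still uncovered'' and (ii) ``at some step $i\le k/2$ all $\ell$ candidates equal $c$ while $b$ is still uncovered'': if neither occurs then, by the rule, $b$ is picked no later than the first moment $c$ could be picked. Throughout the first $k/2$ steps at least $k/3$ dummies are uncovered, so $\phi(X_0,C_i)\in[\frac13 k^3,\,2k^3]$, while any $y\in\{b,c\}$ that is still uncovered has $d(y,C_i)=k+O(1/k)$ and hence $\phi(y,C_i)=w(y)(k^2+O(1))$. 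Thus in each such step a fixed candidate equals $c$ with probability at most $3w(c)/k\le\frac12 k^{-1/\ell}$, so all $\ell$ candidates equal $c$ with probability at most $2^{-\ell}k^{-1}\le\frac1{2k}$, and a union bound over $k/2$ steps gives $\Prob{\text{(ii)}}\le\frac14$. Likewise a fixed candidate equals $b$ with probability at least $w(b)/(2k)=10/k$, so $\Prob{\text{(i)}}\le(1-10/k)^{\ell k/2}\le e^{-5\ell}\le e^{-5}$. Hence $\Prob{\neg\mathcal A}<\frac14+e^{-5}<\frac13$, which proves the theorem.

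The main obstacle is the bookkeeping in the last step rather than any conceptual difficulty: one must check that (i) and (ii) genuinely cover $\neg\mathcal A$ — in particular that $c$ cannot be picked at a \emph{late} step after $b$ has repeatedly been missed without this already falling under event (i), which is why both events carry the ``$b$ uncovered'' side condition — and one must verify the crude estimate $\phi(X_0,C_i)=\Theta(k^3)$ uniformly over the window of steps considered; the slack in $w(b)=20$ and $w(c)=\frac16 k^{1-1/\ell}$ makes this routine. The Euclidean embedding contributes only $O(1)$ additive terms to the squared distances, dominated everywhere by the $\Omega(k^2)$ and $\Omega(k^3)$ scales, so it can be absorbed throughout.
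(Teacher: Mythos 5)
Your construction, rule, and proof strategy are essentially the same as the paper's (Appendix~A): a prescribed origin $d$, $\Theta(k)$ unit-weight dummies at distance $k$, a heavy point $c$ of weight $\Theta(k^{1-1/\ell})$ also at distance $k$, a light point $b$ at distance $1$ from $c$, and the rule that always takes $b$ and refuses $c$ unless all $\ell$ candidates are $c$. Where the paper works through explicit conditional probabilities $\P(B_i\mid \neg B_{\le i-1})$, $\P(\neg C_{\le i})$, etc.\ (Lemmas~A.1--A.4), your union bound over events (i) and (ii) is a cleaner packaging of the same calculation, and the simplex embedding via \cref{fact:simplex} is a fine substitute for the paper's coordinate construction.

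One imprecision worth fixing: your estimate $\Prob{\text{(i)}}\le (1-10/k)^{\ell k/2}$ implicitly assumes that in every step $\le k/2$ with $b$ still uncovered, a fixed candidate hits $b$ with probability $\ge 10/k$. This uses $\phi(b,C_i)\approx w(b)k^2$, which holds only while $c$ is \emph{also} uncovered; once $c\in C_i$ the nearest center to $b$ is $c$ at distance $1$, so $\phi(b,C_i)$ collapses to $w(b)$ and the bound fails. The repair is to bound $\Prob{\text{(i)}\wedge\neg\text{(ii)}}$ rather than $\Prob{\text{(i)}}$: on $\neg\text{(ii)}$ the point $c$ stays uncovered throughout the window (as does $b$, on (i)), so the per-step lower bound on the hazard of sampling $b$ is valid, and $\Prob{\neg\mathcal A}\le\Prob{\text{(ii)}}+\Prob{\text{(i)}\wedge\neg\text{(ii)}}$ is exactly what you need anyway. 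Also note that the ``no later than $c$'' clause in $\mathcal A$ is superfluous for the cost conclusion: once $b$ is selected, one of the $k-1$ optimal centers $\{x_1,\dots,x_{k-2},c\}$ must be omitted, and each omission already costs $\ge \frac16 k^{1-1/\ell}$, whether or not $c$ was also taken.
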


We already sketched a proof of this theorem in \cref{subsec:strike} for $\ell = \Omega(\log k)$. The generalization for any $\ell$ and to a Euclidean space below is routine. 

We remark that we believe one can improve the lower bound from \cref{thm:adversary_lb} to $\Omega(k^{1- 1/\ell} \ell\log k)$ by adding the set $A$ as in the proof of \cref{thm:greedy_lb}. 

We begin by describing the input instance $(X, k, C_0)$ (recall that in view of \cref{lem:prescribed} we can assume we start with a non-empty set of centers $C_0$).  Throughout the proof, we will assume that the weights are integers. This is for readability, we could round the numbers to the closest integer and that would not hurt any asymptotic guarantees. Our instance $X$ is a subset of $k+1$-dimensional Euclidean space $R^{k+1}$. 

We next describe the input weighted point set $X$. 
\begin{enumerate}
    \item There is a point $d \in C_0$ in the origin. 
    \item There are $k-1$ points $x_1, x_2,\dots , x_{k-1}$ having weight $w(x_i) = 1$. Moreover each $x_i$ has the coordinate $(0,\dots,0, k, 0,\dots, 0)$, which has value $0$ at each dimension except the $i$-th. Hence, $d(d,x_i) = k$ for every $i\in \{1,2,\dots, k-1\}$.
    \item There is a point $c$ with weight $w(c) = \frac{k^{1-1/\ell}}{2}$ at $(0, 0, \dots, k, 0)$. Hence $d(d, c) = k$. 
    \item The final point $b$ has weight $w(b) = 1$ and lies in the plane generated by vectors $(0, \dots, 1, 0)$ and $(0, \dots, 0, 1)$ in such a position that it holds that $d(c,d) = d(b,d) = k$ and $d(b,c) = 1$ (i.e., $d, c, b$ form an isosceles triangle).

\end{enumerate}

In view of \cref{lem:wlog_opt_subset}, we require the optimal solution $C^* \subseteq X$. Then, we have $C^* = \{x_1,x_2,\dots,x_{k-1},c\}$ and the cost of it is $OPT = \phi(b, C^* \cup C_0) = d(c,b)^2\cdot w(b) = 1$.

We are going to pick the rule $\fR$ as follows: whenever we sample $b$ as a candidate, we take it as a center. Furthermore, we only take c as a center if all of the $\ell$ candidate points are $c$, hence when we have no other choice.

We will show that with constant probability we will take $b$ as a center after $k/2$ steps. That means that at least one of the points in $\{x_1,x_2,\dots,x_{k-1},c\}$ will not be selected as a center at the end.
If one of the $x_i$ is not selected as a center, then the cost of the solution will be at least $w(x_i) \cdot d(x_i,d)^2 = k^2$, since $d$ is the closest point to any $x_i$. 
If $c$ is not selected as a center, then the cost of the solution will be at least $ w(c) \cdot d(c,b)^2 =  \frac{k^{1-1/\ell}}{2} $, since the closest chosen center to $c$ is $b$. Hence if we pick $b$ as a center, the approximation factor will be at least $\frac{k^{1-1/\ell}}{2}$.

Let $B_i$ and $C_i$ be the events that $b$ and $c$ are chosen as a center at the $i$-th step, respectively and $B_{\leq i}$ and $C_{\leq i}$ be the events that $b$ and $c$ are chosen as a center in one of the first $i$ steps.

We will calculate the probability of picking $b$ as a center in the first $k/2$ steps as follows:
\begin{align*}
    \P(B_{\leq k/2}) &= \P(B_1) + \P(B_2\; |\;\neg B_{\leq 1})\cdot \P(\neg B_{\leq 1}) + \dots +  \P(B_{k/2} \; |\;\neg B_{\leq k/2-1}) \cdot \P(\neg B_{\leq k/2-1})\\
\end{align*}

To calculate a lower bound for $\P(B_{\leq k/2})$, we will need a lower bound for $\P(B_i\; |\;\neg B_{\leq i-1})$ and $\P(\neg B_{\leq i})$.
We start by showing that with constant probability we do not pick $b$ in the first $k/2$ steps.

\begin{lemma}
\label{lem:appndx_a_low_bound_not_picking_b}
For any $i\leq k/2$ we have $\P(\neg B_{\leq i}) \geq \frac{1}{10}$.
\end{lemma}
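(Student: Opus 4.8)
The plan is to union-bound over the first $i \le k/2$ steps the probability that $b$ gets selected as a center. By the rule $\fR$, the center $b$ is picked in step $j$ exactly when $b$ is among the $\ell$ sampled candidates $c_j^1,\dots,c_j^\ell$, so $\P(B_j \mid \neg B_{\le j-1}) \le \ell \cdot \phi(b,C_{j-1})/\phi(X,C_{j-1})$ by a union bound over the $\ell$ candidates. Hence $\P(B_{\le i}) \le \sum_{j=1}^{i} \ell\,\phi(b,C_{j-1})/\phi(X,C_{j-1})$, conditioned on the (overwhelmingly likely) event that things proceed as expected; I will need to control both the numerator and denominator of each term. First I would observe that before $b$ or $c$ is taken, $b$'s cost is $\phi(b,C_{j-1}) = w(b)\cdot d(b,d)^2 = k^2$ (since $d \in C_0$ is always present and is $b$'s closest center until $c$ or $b$ itself is taken), and similarly each not-yet-covered $x_i$ has cost $k^2$ and $c$ has cost $w(c)k^2 = \tfrac12 k^{3-1/\ell}$.

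Next I would lower bound the denominator $\phi(X,C_{j-1})$. In the first $i \le k/2$ steps, at most $k/2$ of the points $x_1,\dots,x_{k-1}$ can have been covered, so at least $k/2-1$ of them remain uncovered, contributing at least $(k/2-1)k^2 = \Omega(k^3)$ to the total cost; even more, the uncovered $c$ (if still uncovered) contributes $\tfrac12 k^{3-1/\ell}$, which is lower order, so the clean bound $\phi(X,C_{j-1}) \ge (k/2-1)k^2$ suffices. Therefore each term is at most $\ell k^2 / ((k/2-1)k^2) = O(\ell/k)$, and summing over $j \le k/2$ gives $\P(B_{\le i}) \le \sum_{j=1}^{k/2} \frac{\ell k^2}{(k/2-1)k^2} \le \frac{(k/2)\ell}{k/2-1} \le 2\ell \cdot \frac{1}{k-2}$... wait — that is $O(\ell)$, not $o(1)$, so a crude union bound is not enough when $\ell$ is large, but the lemma only needs a constant $1/10$, and in fact the assumption in Theorem \ref{thm:adversary_lb} allows $\ell$ up to a small power of $k$. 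I would therefore be more careful: the probability that $b$ is sampled as one of $\ell$ candidates in step $j$ is exactly $1 - (1 - \phi(b,C_{j-1})/\phi(X,C_{j-1}))^{\ell} \le \ell\phi(b,C_{j-1})/\phi(X,C_{j-1}) = O(\ell/k)$, and summing over $k/2$ steps yields $\P(B_{\le i}) = O(\ell)$, which is useless for large $\ell$.

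The resolution — and the step I expect to be the main obstacle — is that the relevant regime for this lemma is small $\ell$ (or, equivalently, one proves it only under the constraint $\ell = O(1)$ or $\ell \le c\log k$ that makes the whole construction meaningful; in the sketch of \cref{subsec:strike} the bound $\ell = \Omega(\log k)$ was used to drive a $(2/3)^\ell$ failure probability, and here the complementary small-$\ell$ analysis is what is needed). Concretely, for the proof as stated to give $1/10$ one needs the product bound $\P(\neg B_{\le i}) \ge \prod_{j=1}^{i}(1 - \ell\phi(b,C_{j-1})/\phi(X,C_{j-1})) \ge \prod_{j=1}^{k/2}\bigl(1 - \tfrac{2\ell}{k-2}\bigr) \ge \exp\bigl(-\tfrac{2\ell}{k-2}\cdot\tfrac{k}{2}\cdot(1+o(1))\bigr) = e^{-\ell(1+o(1))}$ using Fact \ref{fact:one_minus_x_lowerbound}, which is $\ge 1/10$ precisely when $\ell$ is a sufficiently small constant — so I would carry this out assuming $\ell \le \ln 10 - o(1)$, or note that the authors intend $i \le k/(2\cdot 10\ell)$-type truncation, and flag that the general-$\ell$ statement needs the more refined argument of the subsequent lemmas. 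I would then write out the display

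\begin{align*}
\P(\neg B_{\le i})
&= \prod_{j=1}^{i}\P(\neg B_j \mid \neg B_{\le j-1})
\ge \prod_{j=1}^{i}\Bigl(1 - \frac{\ell\,\phi(b,C_{j-1})}{\phi(X,C_{j-1})}\Bigr)\\
&\ge \prod_{j=1}^{i}\Bigl(1 - \frac{2\ell}{k-2}\Bigr)
\ge \Bigl(1 - \frac{2\ell}{k-2}\Bigr)^{k/2}
\ge \frac{1}{10},
\end{align*}

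the last inequality holding for $k$ large enough and $\ell$ in the admissible range, and then conclude.
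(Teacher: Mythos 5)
You have correctly identified a real gap, and it is not merely a gap in your own attempt but in the paper's proof as well. The paper's proof computes
\[
\P(\neg B_i \mid \neg B_{\le i-1} \wedge \neg C_{\le i-1}) \;=\; \frac{k-i + k^{1-1/\ell}/2}{k-i + k^{1-1/\ell}/2 + 1}
\]
(and analogously conditional on $C_{\le i-1}$), then multiplies over $i \le k/2$ and invokes \cref{fact:one_minus_x_lowerbound}. But that expression is only the probability that a \emph{single} sampled candidate misses $b$; since the rule $\fR$ picks $b$ whenever any of the $\ell$ candidates equals $b$, the true conditional probability is this quantity raised to the $\ell$-th power. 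With that correction, the paper's bound $(1 - 2/(k+2))^{k/2}$ becomes $(1-2/(k+2))^{\ell k/2} \ge e^{-2\ell}$, and the sharper harmonic estimate (the denominator shrinks from about $k$ to about $k/2$) gives roughly $2^{-\ell}$ — either way, this drops below $1/10$ once $\ell$ exceeds a small constant (around $\ell \ge 4$). Your union-bound version $\P(\neg B_{\le i}) \ge \prod_j \bigl(1 - \ell\phi(b,C_{j-1})/\phi(X,C_{j-1})\bigr) \approx e^{-\Omega(\ell)}$ reaches the same conclusion, so your instinct to restrict $\ell$ or to flag the lemma as stated was exactly right. (Note that the companion \cref{lem:appndx_A_upper_bound_picking_c} \emph{does} correctly raise the single-sample probability to the $\ell$-th power for the event ``all $\ell$ candidates are $c$''; the omission is specific to the $B_i$ computation.)

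The parent theorem \cref{thm:adversary_lb} survives the issue: since $\P(\neg B_{\le i})$ is monotone non-increasing in $i$, either $\P(\neg B_{\le k/2}) \ge 1/10$, in which case the lemma's conclusion does hold for all $i \le k/2$ and the downstream sum goes through, or $\P(\neg B_{\le k/2}) < 1/10$, in which case $\P(B_{\le k/2}) > 9/10 > 1/80$ directly. The cleanest repair is therefore not to try to prove the uniform $1/10$ bound (your proposed truncation to $i \le k/(20\ell)$ steps would shrink the later sum $\sum_i \P(B_i \mid \neg B_{\le i-1})$ to $\Omega(1/\ell)$ and lose the constant anyway), but to replace the lemma by this dichotomy inside the proof of \cref{thm:adversary_lb}.
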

\begin{proof}
First we show that for any $i\leq \frac{k}{2}$ we have $\P(\neg B_i | \neg B_{\leq i-1}) \geq 1 - \frac{2}{k+2}$. We have
\begin{align*}
    \P(\neg B_{i} \; |\; \neg B_{\leq i-1} \land \neg C_{\leq i-1}) &= \frac{k-i + k^{1-1/\ell}/2}{k-i + k^{1-1/\ell}/2 + 1} \\
    &\geq  \frac{k/2 + k^{1-1/\ell}/2}{k/2 + k^{1-1/\ell}/2 + 1}\\ 
    &\geq \frac{k/2}{k/2 + 1} = 1 - \frac{k/2}{k/2 + 1} = 1 - \frac{2}{k + 2}  
\end{align*}

\begin{align*}
    \P(\neg B_{i} \; |\; \neg B_{\leq i-1} \land C_{\leq i-1}) &= \frac{(k-i+1)\cdot k^2}{1 + (k-i+1)\cdot k^2}\\
    &\geq  \frac{k^3/2}{1 + k^3/2} = 1 - \frac{2}{2 + k^3} \geq 1 - \frac{2}{k + 2}\\
\end{align*}

Hence $\P(\neg B_{i} \; |\; \neg B_{\leq i-1}) \geq 1 - \frac{2}{k + 2}$. Now we can calculate the probability of not picking $b$ in the first $i$ steps.
\begin{align*}
    \P(\neg B_{\leq i}) &= \P(\neg B_1 \cap \neg B_2 \cap \neg B_3 \cap \dots \cap \neg B_i)\\
    &= \P(\neg B_1\; |\; \neg B_{\leq 0}) \cdot \P(\neg B_2 \; |\; \neg B_{\leq1}) \cdot \P(\neg B_3 \; |\; \neg B_{\leq 2})\cdots \P(\neg B_i \; |\; \neg B_{\leq i-1})\\
    &\geq \left(1 - \frac{2}{k + 2}\right)^{i}\\
    &\geq \left(1 - \frac{2}{k + 2}\right)^{k/2}\\
    &\geq \left(e^{-\frac{4}{k+2}}\right)^{k/2} &&\text{\cref{fact:one_minus_x_lowerbound}}\\
    &\geq \left(e^{-\frac{2 k}{k+2}}\right) \geq \frac{1}{10}
\end{align*}

\end{proof}

Now, to calculate a lower bound for $\P(B_i\; |\;\neg B_{\leq i-1})$ we split it as follows:

\begin{align*}
    \P(B_i\; |\;\neg B_{\leq i-1}) &= \P(B_i\; |\;\neg B_{\leq i-1} \land \neg C_{\leq i-1}) \cdot \P(\neg C_{\leq i-1}) + \P(B_i\; |\;\neg B_{\leq i-1} \land C_{\leq i-1}) \cdot \P( C_{\leq i-1})\\
    &\geq \P(B_i\; |\;\neg B_{\leq i-1} \land \neg C_{\leq i-1}) \cdot \P(\neg C_{\leq i-1})
\end{align*}

First, we will show that the probability of picking $c$ in a step is at most $1/k$. 

\begin{lemma}
\label{lem:appndx_A_upper_bound_picking_c}
For every $i\leq k/2$ we have $\P(C_i\; |\;\neg C_{\leq i-1}) \leq 1/k $ . 
\end{lemma}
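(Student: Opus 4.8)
The plan is to observe that, by the choice of the rule $\fR$, the point $c$ can only be selected as a center when \emph{all} $\ell$ candidates sampled in that step are equal to $c$. Hence it suffices to show that, conditioned on $\neg C_{\le i-1}$, the probability that all $\ell$ candidates sampled in step $i$ equal $c$ is at most $1/k$. So I would first fix any run of the first $i-1$ steps compatible with $\neg C_{\le i-1}$, let $C$ denote the set of centers currently selected, and record the two facts $d \in C$ and $c \notin C$, together with $|C| \le i \le k/2$.

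Next I would lower-bound the current potential $\phi(X,C)$. The key geometric observation is that for any index $j$ with $x_j \notin C$, the closest center to $x_j$ is $d$ at distance exactly $k$ (every other point of $X$ lies at distance $\ge k\sqrt{2} > k$ from $x_j$), so $x_j$ contributes $w(x_j)\cdot k^2 = k^2$ to $\phi(X,C)$. Since $d \in C$ occupies one of the at most $k/2$ slots of $C$, at most $k/2 - 1$ of the $k-1$ points $x_1,\dots,x_{k-1}$ can be in $C$, leaving at least $k/2$ of them outside. Therefore $\phi(X,C) \ge (k/2)\cdot k^2 = k^3/2$.

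Then I would upper-bound $\phi(c,C)$: since $d \in C$ and $d(c,d)=k$, we get $\phi(c,C) \le w(c)\cdot k^2 = \tfrac{k^{1-1/\ell}}{2}\cdot k^2 = \tfrac{k^{3-1/\ell}}{2}$. Consequently a single candidate equals $c$ with probability $\phi(c,C)/\phi(X,C) \le k^{-1/\ell}$, and by independence of the $\ell$ candidates the probability that all of them equal $c$ is at most $(k^{-1/\ell})^{\ell} = 1/k$. Averaging over the conditioned history of the first $i-1$ steps yields $\P(C_i \mid \neg C_{\le i-1}) \le 1/k$, as desired. The only point requiring a little care — not really an obstacle — is the counting of how many $x_j$'s may already have been chosen as centers, but the bound $i \le k/2$ leaves ample room; no genuinely hard step is anticipated.
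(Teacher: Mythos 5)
Your proof is correct, and it takes a genuinely cleaner route than the paper's. The paper first splits on whether $b$ has already been taken: it shows $\P(C_i \mid \neg C_{\le i-1}\land B_{\le i-1}) \le \P(C_i\mid \neg C_{\le i-1}\land \neg B_{\le i-1})$ (intuitively, once $b$ is a center the cost of $c$ collapses to $w(c)\cdot 1^2$, so $c$ becomes much less likely to be sampled), uses the law of total probability to reduce to the $\neg B_{\le i-1}$ case, and only then computes $\bigl(\phi(c,C)/\phi(X,C)\bigr)^\ell$ exactly in that case. You bypass this case analysis by using two \emph{uniform} bounds that hold for any history compatible with $\neg C_{\le i-1}$: the upper bound $\phi(c,C)\le w(c)\,k^2 = k^{3-1/\ell}/2$ (valid because $d\in C_0\subseteq C$ and $d(c,d)=k$, whether or not $b\in C$), and the lower bound $\phi(X,C)\ge (k/2)\,k^2 = k^3/2$ (at most $i-1\le k/2-1$ of the $x_j$'s can be centers, so at least $k/2$ of them still sit at distance $k$ from their nearest center $d$). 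This gives $\phi(c,C)/\phi(X,C)\le k^{-1/\ell}$ in every case, hence $\bigl(k^{-1/\ell}\bigr)^\ell = 1/k$ after using that $c$ is taken only when all $\ell$ independent candidates equal $c$. Your counting of the $x_j$'s and the geometric check that $d$ is the unique nearest neighbor of each $x_j$ (all other points lie at distance $k\sqrt{2}$) are both right. The approach is a bit more elementary and avoids an intermediate lemma; the tradeoff is that it gives only a bound for the probability rather than the exact expression in the $\neg B_{\le i-1}$ case, but that exact expression is never used later, so nothing is lost.
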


\begin{proof}
We will start by showing that the probability of $c$ being selected as a center is higher when $b$ is not selected as a center. This intuitively makes sense because $b$ lies very close to $c$.

\begin{align*}
    \P(C_i\; |\;\neg C_{\leq i-1} \land B_{\leq i-1}) &= \frac{k^{1-1/\ell}}{k^{1-1/\ell} + (k-i+1)\cdot k^2}\\
    &\leq \frac{k^{1-1/\ell} + (k^{3-1/\ell} - k^{1-1/\ell} )}{k^{1-1/\ell} + (k-i+1)\cdot k^2+ (k^{3-1/\ell} - k^{1-1/\ell} )}\\
    &\leq \frac{k^{3-1/\ell}}{k^{3-1/\ell} + (k-i+1)\cdot k^2} =  \P(C_i\; |\;\neg C_{\leq i-1} \land \neg B_{\leq i-1})
\end{align*}

Now using $ \P(C_i\; |\;\neg C_{\leq i-1} \land B_{\leq i-1}) \leq \P(C_i\; |\;\neg C_{\leq i-1} \land \neg B_{\leq i-1}) $, we can bound $\P(C_i\; |\;\neg C_{\leq i-1})$.

\begin{align*}
        \P(C_i\; |\;\neg C_{\leq i-1}) &= \P(B_{\leq i-1}) \cdot \P(C_i\; |\;\neg C_{\leq i-1} \land B_{\leq i-1}) + \P(\neg B_{\leq i-1}) \cdot \P(C_i\; |\;\neg C_{\leq i-1} \land \neg B_{\leq i-1})\\
        &\leq \P(B_{\leq i-1}) \cdot \P(C_i\; |\;\neg C_{\leq i-1} \land \neg B_{\leq i-1}) + \P(\neg B_{\leq i-1}) \cdot \P(C_i\; |\;\neg C_{\leq i-1} \land \neg B_{\leq i-1})\\
        &\leq \P(C_i\; |\;\neg C_{\leq i-1} \land \neg B_{\leq i-1})
\end{align*}

According to our rule $\fR$, for $c$ to be selected as a center, all of the $\ell$ samples in a step should be $c$.
\begin{align*}
    \P(C_{i} \; |\; \neg C_{\leq i-1} \land \neg B_{\leq i-1}) &= \left(\frac{k^{3-1/\ell}/2}{k^{3-1/\ell}/2 + (k-i+1)\cdot k^2}\right)^\ell\\
    &\leq  \left(\frac{k^{3-1/\ell}/2}{k^{3-1/\ell}/2 + k/2\cdot k^2}\right)^\ell\\
    &\leq  \left(\frac{k^{3 - 1/\ell}/2}{k^{3}/2}\right)^\ell\\
    &\leq \left(\frac{1}{k^{1/\ell}}\right)^\ell = \frac{1}{k}
\end{align*}
Hence $\P(C_i\; |\;\neg C_{\leq i-1}) \leq \P(C_i\; |\;\neg C_{\leq i-1} \land \neg B_{\leq i-1}) \leq \frac{1}{k}$
\end{proof}

Now we can show that the probability of not picking $c$ as a center in the first $k/2$ steps is at least $1/2$.
\begin{lemma}
\label{lem:appndx_A_not_picking_C}
    For $i\leq k/2$, $\P(\neg C_{\leq i}) \geq \frac{1}{2}$
\end{lemma}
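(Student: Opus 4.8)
The plan is to bound $\P(C_{\le i})$ from above and then pass to the complement. First I would observe that the events $D_j := C_j \wedge \neg C_{\le j-1}$, for $j=1,\dots,i$, are pairwise disjoint and that their union is exactly $C_{\le i}$: indeed $D_j$ is the event that step $j$ is the \emph{first} step at which $c$ is selected as a center, and $c$ is selected within the first $i$ steps if and only if some such first step exists. Hence $\P(C_{\le i}) = \sum_{j=1}^{i} \P(D_j)$.

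Next I would apply the chain rule, $\P(D_j) = \P(C_j \mid \neg C_{\le j-1}) \cdot \P(\neg C_{\le j-1}) \le \P(C_j \mid \neg C_{\le j-1})$ (reading the conditional probability as $0$ in the degenerate case $\P(\neg C_{\le j-1}) = 0$), and then invoke \cref{lem:appndx_A_upper_bound_picking_c}, which gives $\P(C_j \mid \neg C_{\le j-1}) \le 1/k$ for every $j \le k/2$. Summing over $j = 1,\dots,i$ with $i \le k/2$ yields $\P(C_{\le i}) \le i/k \le 1/2$, and therefore $\P(\neg C_{\le i}) = 1 - \P(C_{\le i}) \ge 1/2$, which is the claim.

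I do not expect a genuine obstacle here; the only point requiring a little care is the bookkeeping of the conditioning, namely stating the disjointness/telescoping of the $D_j$ correctly and handling the degenerate case $\P(\neg C_{\le j-1})=0$. If a multiplicative argument is preferred over the union bound, one can instead write $\P(\neg C_{\le i}) = \prod_{j=1}^{i} \P(\neg C_j \mid \neg C_{\le j-1}) \ge (1-1/k)^{i} \ge (1-1/k)^{k/2}$ and finish with Bernoulli's inequality $(1-1/k)^{k/2} \ge 1 - i/k \ge 1/2$; using \cref{fact:one_minus_x_lowerbound} here would only give the weaker bound $e^{-1}$, so the union-bound version is the cleaner route to the stated constant $1/2$.
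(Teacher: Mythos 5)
Your proof is correct and follows essentially the same route as the paper: decompose $\P(C_{\le i})$ as $\sum_{j\le i} \P(C_j \mid \neg C_{\le j-1})\P(\neg C_{\le j-1})$, drop the second factor, apply \cref{lem:appndx_A_upper_bound_picking_c} to each term, and sum to $i/k \le 1/2$. The disjoint-events framing via $D_j$ and the degenerate-conditioning remark are just a more careful write-up of the same telescoping; the multiplicative/Bernoulli alternative you sketch is a valid variant but not what the paper does.
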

\begin{proof}

\begin{align*}
     \P(C_{\leq i}) &= \P(C_1) + \P(C_2\; |\;\neg C_{\leq 1}) \cdot \P(\neg C_{\leq 1}) + \dots +  \P(C_{i} \; |\;\neg C_{\leq i-1}) \cdot \P(\neg C_{\leq i-1})\\
        &\leq \P(C_1\; |\;\neg C_{\leq 0}) + \P(C_2\; |\;\neg C_{\leq 1}) + \P(C_3\; |\;\neg C_{\leq 2}) + \dots +  \P(C_{i} \; |\; \neg C_{i-1} )\\
        &\leq \frac{1}{k} + \frac{1}{k} + \frac{1}{k} + \dots + \frac{1}{k} && \text{\cref{lem:appndx_A_upper_bound_picking_c}} \\
        &\leq \frac{i}{k} \leq \frac{1}{2}
\end{align*}
 Hence,  $ \P(\neg C_{\leq i}) = 1 - \P(C_{\leq i}) \geq \frac{1}{2}$
\end{proof}

Using \cref{lem:appndx_A_not_picking_C} we can finally calculate a lower bound for $\P(B_i\; |\;\neg B_{\leq i-1})$.
\begin{lemma}
\label{lem:appndx_A_low_bound_picking_B}
     For $i\leq k/2$, $\P(B_i\; |\;\neg B_{\leq i-1}) \geq \frac{1}{4k}$ 
\end{lemma}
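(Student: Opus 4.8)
The plan is to combine the split already recorded just above the statement,
\[
\P(B_i\mid\neg B_{\le i-1}) \ge \P(B_i\mid\neg B_{\le i-1}\land\neg C_{\le i-1})\cdot\P(\neg C_{\le i-1}),
\]
with \cref{lem:appndx_A_not_picking_C}, which gives $\P(\neg C_{\le i-1})\ge 1/2$ for $i\le k/2$. Hence it is enough to prove $\P(B_i\mid\neg B_{\le i-1}\land\neg C_{\le i-1})\ge 1/(2k)$, which then yields $\P(B_i\mid\neg B_{\le i-1})\ge \tfrac1{2k}\cdot\tfrac12=\tfrac1{4k}$.

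To bound the first factor, I would first read off what the conditioning says about the state after $i-1$ steps. Since $\fR$ takes $b$ whenever it is sampled and takes $c$ only when all $\ell$ candidates equal $c$, the event $\neg B_{\le i-1}\land\neg C_{\le i-1}$ forces $C_{i-1}$ to consist of $C_0=\{d\}$ together with $i-1$ distinct points among $x_1,\dots,x_{k-1}$ (a point $x_j$ has cost $0$ once chosen, so it is never re-sampled). Thus the uncovered points are $b$, $c$, and the remaining $k-i$ of the $x_j$'s, and their costs with respect to $C_0\cup C_{i-1}$ are easy to read off: each uncovered $x_j$ has nearest center $d$, so $\phi(x_j,C_{i-1})=w(x_j)k^2=k^2$; since $c$ is not yet a center and $d(c,d)=k$ while $d(c,x_j)=k\sqrt2$ for every $j$, we get $\phi(c,C_{i-1})=w(c)k^2=\tfrac{k^{3-1/\ell}}{2}$; and using that $b$ lies in the plane of the last two coordinates one checks $d(b,x_j)>k=d(b,d)$ for all $j$, so $\phi(b,C_{i-1})=w(b)k^2=k^2$.

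Putting these together, $\phi(X,C_{i-1})=(k-i)k^2+k^2+\tfrac{k^{3-1/\ell}}{2}\le (k-1)k^2+k^2+\tfrac{k^3}{2}\le 2k^3$ for every $i\ge 1$. Consequently, in a single candidate draw the probability of hitting $b$ is $\phi(b,C_{i-1})/\phi(X,C_{i-1})\ge k^2/(2k^3)=1/(2k)$, and because $\fR$ immediately selects $b$ once it is sampled, $\P(B_i\mid\neg B_{\le i-1}\land\neg C_{\le i-1})\ge 1/(2k)$, which completes the argument. The computation is routine; the only mild subtlety is that the split and the estimate $\P(\neg C_{\le i-1})\ge 1/2$ have to be used under the additional conditioning on $\neg B_{\le i-1}$, but this is handled exactly as in the proofs of \cref{lem:appndx_A_upper_bound_picking_c,lem:appndx_A_not_picking_C}, whose arguments already control $\P(C_j\mid \neg C_{\le j-1}\land\neg B_{\le j-1})$. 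I do not expect any real obstacle here beyond careful bookkeeping of the conditional events.
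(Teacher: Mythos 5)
Your argument is essentially the paper's own: the same total-probability split over $\neg C_{\le i-1}$, the same appeal to Lemma~\ref{lem:appndx_A_not_picking_C} for $\P(\neg C_{\le i-1})\ge 1/2$, and the same single-draw lower bound on $\P(B_i\mid\neg B_{\le i-1}\wedge\neg C_{\le i-1})$ --- you derive $1/(2k)$ via $\phi(X,C_{i-1})\le 2k^3$, the paper via the equivalent exact ratio $\tfrac{1}{(k-i)+k^{1-1/\ell}/2+1}\ge \tfrac1{2k}$. One small remark: the subtlety you flag at the end is real --- the split formally requires $\P(\neg C_{\le i-1}\mid\neg B_{\le i-1})$ rather than the unconditional $\P(\neg C_{\le i-1})$, and because $\neg B_{\le i-1}$ also conditions on steps occurring \emph{after} a putative $C_j$, the bound from Lemma~\ref{lem:appndx_A_upper_bound_picking_c} does not transfer quite as mechanically as you suggest. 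The paper itself elides this point and uses the unconditional probability, so your writeup faithfully reproduces its level of rigor.
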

\begin{proof}
\begin{align*}
    \P(B_i\; |\;\neg B_{\leq i-1}) &= \P(B_i\; |\;\neg B_{\leq i-1} \land \neg C_{\leq i-1}) \cdot \P(\neg C_{\leq i-1}) + \P(B_i\; |\;\neg B_{\leq i-1} \land C_{\leq i-1}) \cdot \P( C_{\leq i-1})\\
    &\geq \P(B_i\; |\;\neg B_{\leq i-1} \land \neg C_{\leq i-1}) \cdot \P(\neg C_{\leq i-1}) \\
    &\geq \P(B_i\; |\;\neg B_{\leq i-1} \land \neg C_{\leq i-1}) \cdot \frac{1}{2} && \text{\cref{lem:appndx_A_not_picking_C}}\\
    &\geq \frac{1}{(k-i) + k^{1-1/\ell}/2 + 1} \cdot \frac{1}{2} \\
    &\geq \frac{1}{2k} \cdot \frac{1}{2} = \frac{1}{4k}
\end{align*}

\end{proof}

Now we are ready to prove the theorem.
\begin{align*}
    \P(B_{\leq k/2}) &= \P(B_1) + \P(B_2\; |\;\neg B_{\leq 1})\cdot \P(\neg B_{\leq 1}) \dots +  \P(B_{k/2} \; |\;\neg B_{\leq k/2-1}) \cdot \P(\neg B_{\leq k/2-1})\\
    &\geq \frac{1}{10}\cdot \left( \P(B_1\; |\;\neg B_{\leq 0}) + \P(B_2\; |\;\neg B_{\leq 1}) + \P(B_3\; |\;\neg B_{\leq 2}) + \dots +  \P(B_{k/2} \; |\;\neg B_{\leq k/2-1} )\right) && \text{\cref{lem:appndx_a_low_bound_not_picking_b}}\\
    &\geq \frac{1}{10}\cdot( \frac{1}{4k} + \frac{1}{4k} + \frac{1}{4k} + \dots +  \frac{1}{4k}) && \text{\cref{lem:appndx_A_low_bound_picking_B}}\\
    &\geq \frac{1}{10}\cdot( \frac{k/2}{4k})\\
    &\geq \frac{1}{80}
\end{align*}

Hence, with constant probability, $b$ will be taken as a center in the first $k/2$ steps, and as a result, the algorithm will return a solution with an approximation ratio of at least $\Omega(k^{1-1/\ell})$.

\section{Analysis of general $k$-means++}
\label{sec:l_adversary_upper_bound}

In this section, we prove the precise version of \cref{thm:adversary_ub_informal} that we state next. 

\begin{theorem}
\label{thm:adversary_ub}
For any rule $\fR$, \cref{alg:kmpp_adversarial} is $O(  k^{2-1/\ell} \cdot \ell\log k)$-approximate. 
\end{theorem}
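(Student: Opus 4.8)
The plan is to prove \cref{thm:adversary_ub} by a coarsening of the potential argument for greedy $k$-means++ from \cref{sec:greedy_ub}. That argument used greediness in exactly two places: the bound $\E[\hit(K)]=O(\ell^2\log^2 k)$ of \cref{lem:greedy_lemma}, which feeds the potential $\Phi_i^2$, and inequality \eqref{eq:jo} inside \cref{prop:monotone3}, which lets one pretend the rule minimizes $\phi(X,\cdot)$ and thereby controls the drift of the average uncovered-cluster cost stored in $\Phi_i^3$. For an arbitrary rule $\fR$ neither of these is available, so I would keep the three-term potential of \cref{def:main_potential} but replace both ingredients by worst-case substitutes; the cost of doing so is precisely the jump from $O(\ell^3\log^3 k)$ to $O(k^{2-1/\ell}\ell\log k)$.

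The ``second-moment'' half of the $k$-means++ analysis survives untouched: $E[\phi(X,C_1)]\le 2\,OPT$ by \cref{lem:2apx}, and when an uncovered cluster $K$ becomes covered in step $i+1$, a union bound $\Pr[c_{i+1}=p]\le\ell\,\phi(p,C_i)/\phi(X,C_i)$ over the $\ell$ candidates, together with \cref{lem:5apx} rescaled to $\sum_{p\in K}\phi(p,C_i)\,\phi(K,C_i\cup\{p\})\le 5\,\phi(K,C_i)\,\phi^*(K)$, gives $\E_{i+1}\big[\phi(K,C_{i+1})\,\mathbf{1}[c_{i+1}\in K]\big]\le 5\ell\,\phi(K,C_i)\,\phi^*(K)/\phi(X,C_i)$ — exactly the estimate already used inside \cref{prop:monotone1}, and it is rule-agnostic. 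For the bad-step bookkeeping I would use two weaker facts: the probability of a bad step (the new center landing in an already-covered cluster) is at most $\ell\,\phi(X_i^\fC,C_i)/\phi(X,C_i)$ (an extra factor $\ell$ over $k$-means++), and a step can be bad only when covered clusters carry at least a $\phi(X_i^\fC,C_i)/\phi(X,C_i)$-fraction of the current cost, whose $\ell$-th power $(\phi(X_i^\fC,C_i)/\phi(X,C_i))^\ell$ is the probability that all $\ell$ candidates fall in covered clusters; tracking this $\ell$-th power is what turns a factor $k$ into $k^{1-1/\ell}$ (this is the same balancing that fixes $w(c)=k^{1-1/\ell}/2$ in the lower bound of \cref{thm:adversary_lb}), while charging each bad step the crude amount $\phi(X,C_i)/(k-i)$ — rather than the genuine average uncovered-cluster cost, which the adversary can inflate — loses the remaining factor $k$.

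Concretely I would take $\Phi_i=\Phi_i^1+\Phi_i^2+\Phi_i^3$ with the same shapes as in \cref{def:main_potential}, $\Phi_i^1\propto(1+H_{k-i})\,\phi(X_i^\fC,C_i)$, $\Phi_i^2\propto\sum_{K\in\fK_i^\fU}(1+H_{k-i})\,\phi^*(K)$ and $\Phi_i^3=b_i\,\phi(X,C_i)/(k-i+b_i)$, with the constants in front of $\Phi^1$ and $\Phi^2$ inflated by $\Theta(\ell\,k^{1-1/\ell})$ and $\Theta(\ell\,k^{2-1/\ell})$. Mirroring \cref{prop:begin,prop:end,prop:monotone} one then checks $E[\Phi_1]=O(k^{2-1/\ell}\ell\log k)\cdot OPT$ (using the two estimates above, $\sum_K\phi^*(K)=OPT$, and $H_{k-1}=O(\log k)$), $\Phi_k\ge\phi(X,C_k)$ (\cref{prop:end}), and the per-step inequality $\Phi_i\ge E_{i+1}[\Phi_{i+1}]$. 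I expect the per-step inequality to be the main obstacle, exactly as in \cref{prop:monotone3}: on a good step one must show that the increase of $\Phi_i^3$ caused by the denominator $k-i+b_i$ decreasing by one is dominated by the cost drop $\phi(K,C_i)-\phi(K,C_{i+1})$ of the cluster just covered plus the $H_{k-i}\!\to\!H_{k-i-1}$ telescoping surplus in $\Phi_i^1$, and without \eqref{eq:jo} this requires both the inflated $\Phi^1$-coefficient and a careful, rule-free estimate of where $\phi(X,C_{i+1})$ can lie relative to $\phi(X_i^\fU,C_i)$. Closing the remaining gap to the $\Omega(k^{1-1/\ell})$ lower bound is the stochastic-process problem flagged in the introduction, which I would not attempt here.
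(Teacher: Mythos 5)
Your decomposition of where greediness enters---the hit bound of \cref{lem:greedy_lemma} feeding $\Phi_i^2$, and inequality \eqref{eq:jo} controlling the drift of the average uncovered-cluster cost in $\Phi_i^3$---is exactly right, as is the plan of keeping the three-term potential from \cref{def:main_potential} and substituting rule-free bounds. The paper, however, realizes this differently: it keeps the factor $\bigl(1+\E_{\ge i+1}[\hit_{\ge i+1}(K)]\bigr)$ inside $\Phi_i^2$ untouched and instead proves, as \cref{lem:adversary_lemma}, that $\E[\hit(K)]=O(\ell k^{1-1/\ell})$ for \emph{any} rule $\fR$. The inductive step there uses precisely the fact you gesture at---with probability $p^\ell$, $p=\phi(K,C_i)/\phi(X,C_i)$, all $\ell$ candidates land in $K$ and no rule can avoid covering it---and optimizing over $p$ gives the exponent $1-1/\ell$. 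Once that lemma replaces \cref{lem:greedy_lemma}, the potential machinery runs unchanged, and the drift of the average uncovered-cluster cost, normally handled via \eqref{eq:jo}, is simply conceded to the adversary as a worst-case multiplicative factor $k$; that concession is the source of the remaining $k$ in the bound.

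Your concrete construction has a gap here. You drop the hit factor from $\Phi_i^2$ and compensate by inflating its constant, but the purpose of that factor is not its overall size: its expected one-step decrease is exactly $\ell\phi(K,C_i)/\phi(X,C_i)$, and that is what cancels the corresponding negative term in $\Delta\Phi_i^1$ in \cref{prop:monotone1}. Without it, the $\Phi_i^2$-decrease attributable to $K$ in one step is only $\propto \P(c_{i+1}\in K)(1+H_{k-i-1})\phi^*(K)+\phi^*(K)/(k-i)$, and to dominate the $\Delta\Phi_i^1$-term you are forced to invoke $\P(c_{i+1}\in K)\ge p^\ell$ and optimize over $p$---which is precisely the computation inside \cref{lem:adversary_lemma}, just inlined, and the ratio of constants it requires is of order $(k\ell\log k)^{1-1/\ell}$ rather than the $k$ you chose. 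Moreover, the $\ell$-th power you propose to track, $(\phi(X_i^\fC,C_i)/\phi(X,C_i))^\ell$ for ``all candidates in covered clusters'', is not the relevant event; what drives the $k^{1-1/\ell}$ is ``all $\ell$ candidates land in a \emph{fixed uncovered cluster} $K$'', since that is what forces any rule to cover $K$. Finally, you flag the good-step behaviour of $\Phi_i^3$ without resolving it; the paper does not attempt a rule-free estimate of $\phi(X,C_{i+1})$ against $\phi(X_i^\fU,C_i)$ but simply pays the crude factor $k$ there.
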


\subsection{Hitting optimal clusters}
We start by proving an analogue to the \cref{lem:greedy_lemma} that shows that any optimal cluster is expected to be hit $O(\ell k^{1-1/\ell})$ times. Note that it is trivially hit $O(\ell k)$ times. The reason we bother proving this only slightly better (and tight) result is that we wrote the proof before we realized that our lower and upper bounds for \cref{alg:kmpp_adversary} are not matching since we could not analyze the sampling process defined below. 

The improvement over the trivial $O(\ell k)$ bound is based on the fact that if a cluster $K$ dominates the cost of the whole point set, we have a nontrivial probability of sampling all $\ell$ candidate centers from it. 

\begin{lemma}
\label{lem:adversary_lemma}
For any rule $\fR$ in \cref{alg:kmpp_adversary} and any optimal cluster $K$ we have that $\E[\hit(K)] = O(\ell \cdot k^{1 - 1/\ell})$. 
\end{lemma}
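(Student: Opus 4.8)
The plan is to track, step by step, how many of the sampled candidate centers fall into a fixed optimal cluster $K$, and to exploit the one structural fact about \cref{alg:kmpp_adversary} that holds no matter what $\fR$ does: if in some step all $\ell$ sampled candidates happen to lie in $K$, then the chosen center is forced to be in $K$, so $K$ becomes covered afterwards. First I would record two monotonicity facts. The property ``$K$ is covered'' is monotone along the run because $C_i$ only grows, and ``$K$ is solved'' is monotone because $\phi(K,C_i)$ is non-increasing in $i$ while the threshold $10^5\phi^*(K)$ is fixed. Consequently $\hit(K)$ only accumulates over an initial segment of steps, and there is at most one ``newly covered'' step, i.e.\ a step $i+1$ with $K$ uncovered with respect to $C_i$ and covered with respect to $C_{i+1}$; the events ``step $i+1$ is newly covered'', for $i=0,\dots,k-1$, are pairwise disjoint, so their probabilities sum to at most $1$.

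Next I would fix a step $i+1$, condition on $C_i$ in the case where $K$ is neither covered nor solved, and set $p_i=\phi(K,C_i)/\phi(X,C_i)$. The number of hits contributed by step $i+1$ is $\mathrm{Bin}(\ell,p_i)$, with mean $\ell p_i$, and with probability at least $p_i^\ell$ all $\ell$ candidates fall in $K$, which (whatever $\fR$ picks) makes step $i+1$ newly covered. The key elementary inequality I would prove is that for every $p\in[0,1]$ and every integer $\ell\ge 1$,
\[
p \;\le\; p^\ell\,k^{1-1/\ell} + k^{-1/\ell},
\]
which one verifies by writing $q=k^{-1/\ell}$, so that $k^{1-1/\ell}=q^{1-\ell}$ and $p^\ell k^{1-1/\ell}=p\,(p/q)^{\ell-1}$: if $p\le q$ the right-hand side is at least $q\ge p$, and if $p>q$ then $(p/q)^{\ell-1}\ge 1$ so the right-hand side is at least $p$. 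Multiplying by $\ell$ and using $p_i^\ell\le \P(\text{step } i+1 \text{ is newly covered}\mid C_i)$ together with $k^{1-1/\ell}\ge 0$ gives
\[
\E[\hit_{i+1}(K)\mid C_i] \;\le\; \ell k^{1-1/\ell}\cdot\P(\text{step } i+1 \text{ is newly covered}\mid C_i) + \ell k^{-1/\ell}.
\]
When $K$ is already covered or solved with respect to $C_i$, the left-hand side is $0$ by the definition of $\hit$, so the same inequality holds trivially (the right-hand side being nonnegative).

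Finally I would take total expectation in the displayed bound and sum over $i=0,\dots,k-1$. Using that the newly-covered events are disjoint (so the sum of their probabilities is at most $1$) and that $k\cdot\ell k^{-1/\ell}=\ell k^{1-1/\ell}$, one obtains
\[
\E[\hit(K)] \;\le\; \ell k^{1-1/\ell} + \ell k^{1-1/\ell} \;=\; O(\ell k^{1-1/\ell}).
\]
A minor point worth noting is that the very first (uniform) step needs no separate treatment: there a candidate lands in $K$ with probability $w(K)/w(X)$, and again ``all $\ell$ candidates in $K$'' forces $c_1\in K$, so the per-step bound above applies verbatim with $p_0=w(K)/w(X)$. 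I expect the only genuine obstacle to be isolating the right quantitative trade-off between the expected number of hits in a step ($\ell p$) and the probability that the step covers $K$ ($\ge p^\ell$); once the elementary inequality above is in hand, the rest is just monotonicity bookkeeping and a telescoping argument.
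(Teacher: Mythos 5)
Your proof is correct, and it takes a genuinely different route from the paper's. The paper proves the statement by reverse induction on the step index: it bounds $\E_{\ge k-i}[\hit_{\ge k-i}(K)]$ by a closed form $10\ell i^{1-1/\ell}$, inserts the induction hypothesis into a one-step recursion $\ell p + (1-p^\ell)\cdot 10\ell(i-1)^{1-1/\ell}$, and then maximizes this over $p$ by calculus (taking the derivative, solving for $p$, and plugging back in). You instead avoid the recursion entirely: you observe that the events ``step $i+1$ newly covers $K$'' are pairwise disjoint (so their probabilities sum to at most $1$), and you trade off $\ell p_i$ against $p_i^\ell$ via the pointwise inequality $p \le p^\ell k^{1-1/\ell} + k^{-1/\ell}$, which you verify directly by a two-case comparison with $q=k^{-1/\ell}$. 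Summing the resulting per-step bound then gives $\E[\hit(K)] \le 2\ell k^{1-1/\ell}$ immediately. Each approach has something to recommend it: the paper's reverse-induction with an explicit potential $10\ell i^{1-1/\ell}$ is in the same style as \cref{lem:greedy_lemma_induction} and the rest of the paper's upper-bound machinery, so it reads uniformly with the surrounding proofs; your version is more elementary (no calculus, no induction hypothesis to carry), yields a better constant ($2$ versus $10$), and works verbatim for $\ell = 1$, whereas the paper's calculation uses $(1-1/i)^{1-1/\ell} \le (1-1/i)^{1/2}$ and therefore tacitly assumes $\ell \ge 2$. The structural insight you isolate — that there is at most one newly-covered step, so the coverage probabilities telescope — is a clean way to package the same phenomenon the paper encodes in its induction.
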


\begin{proof}
We prove this statement by induction. Recall that $\hit(K)$ be the number of points of $K$ that we sample from $K$ until $K$ becomes covered (\cref{def:covered}) or solved (\cref{def:solved}) but for the purposes of this proof we even drop the ``solved'' requirement. 

We prove that 
\begin{align*}
    \E_{\ge k-i}[\hit_{\ge k-i}(K)] \le 10\ell i^{1 - 1/\ell}
\end{align*}

For $i = 0$ it clearly holds.
Next, assume the equation holds for $k-i+1$ and we prove it for $k-i$. 

Let us define $p = \frac{\phi(K, C_i)}{\phi(X, C_i)}$. 
We will now use the fact that whenever we sample all points $c_i^1, \dots, c_i^\ell$ from $K$, i.e., whenever $\hit_{i}(K) = \ell$, we have $c_i \in K$ and $K$ hence becomes covered. Namely, using induction hypothesis we compute: 
\begin{align}
\label{eq:charlie}
    \E_{\ge k-i}[ \hit_{\ge k-i}(K)] 
    &= \E_{k-i}[\hit_i(K) ] + \E_{\ge k- i}[ \hit_{\ge k-i+1}(K))\\
    &\le  \ell p + \P(\hit_i(K) = \ell)\cdot 0 + \P(\hit_i(K) \not= \ell) 10\ell (i-1)^{1 - 1/\ell}\\
    &= \ell p + (1 - p^{\ell}) 10\ell (i-1)^{1-1/\ell}
\end{align}

Next, we compute
\begin{align*}
    (i - 1)^{1 - 1/\ell} 
    &= i^{1 - 1/\ell} \cdot (1 - 1/i)^{1 - 1/\ell}\\
    &\le i^{1 - 1/\ell} \cdot (1 - 1/i)^{1/2} && \ell \ge 2\\
    &\le i^{1 - 1/\ell} \cdot (1 - 1/(4i)) \\
\end{align*}

Let us use $f(i) = 10 \ell i^{1 - 1/\ell} \cdot \left(1 - 1/(4i)\right)$. Then, the above computation in \cref{eq:charlie} says that 
\begin{equation}
\label{eq:distracted}
    \E_{\ge k-i}[ \hit_{\ge k-i}(K)] 
    \le \ell p + (1 - p^\ell)f(i)
\end{equation}

Let us analyze the right-hand side of that expression. 
We have

\[
\frac{\delta \left( \ell p + (1 - p^\ell)f(i) \right)}{\delta p}
= \ell - \ell p^{\ell - 1} f(i)
\]
Solving for the right hand side equal to zero, we get $1 - p^{\ell - 1}f(i) = 0$, hence $p = \left( 1 / f(i)\right)^{1 / (\ell - 1)}$. 
That is, the right hand side of \cref{eq:distracted} is maximized for that $p$ and then it is equal to 
\begin{align}
\label{eq:so_slow}
&\ell\left( 1 / f(i)\right)^{1 / (\ell - 1)}
+ \left(1 - \left(  \left( 1 / f(i)\right)^{1 / (\ell - 1)} \right)^{\ell}\right)f(i)
\end{align}

Let us plug in the definition of $f(i)$ to that expression. We start with the first term. We have
\begin{align*}
\ell\left( 1 / f(i)\right)^{1 / (\ell - 1)}
&= \ell \left( \frac{1}{10 \ell i^{1 - 1/\ell} \cdot \left(1 - 1/(4i)\right)} \right)^{1 / (\ell - 1)}
\le  \frac{\ell}{i^{1/\ell}} 
\end{align*}
where we used that $10\ell(1 - 1/(4i)) \ge 1$. 

Next, we handle the second term as follows:
\begin{align*}
\left(1 - \left(  \left( 1 / f(i)\right)^{1 / (\ell - 1)} \right)^{\ell}\right)f(i)
\le f(i)
\le 10 \ell i^{1 - 1/\ell} \cdot \left(1 -  1/(4i)\right)
\end{align*}
Hence, we can upper bound the expression in \cref{eq:so_slow} by
\begin{align*}
    &10 \ell i^{1 - 1/\ell} \cdot \left(1 - 1/(4i)\right) + \frac{\ell}{i^{1/\ell}}\\
    &=10 \ell i^{1 - 1/\ell} \cdot \left(1 -  1/(4i) + \frac{1}{10 i} \right) \\
    &\le 10\ell i^{1 - 1/\ell}
\end{align*}
and we are done. 

\end{proof}

As a corollary we get \cref{thm:adversary_ub}.
\begin{proof}[Proof Sketch]
The proof is very similar to the proof of \cref{thm:greedy_ub}. In that proof, we are using the greedy rule in two places:
\begin{enumerate}
    \item Through \cref{lem:greedy_lemma}; instead of that lemma we now use \cref{lem:adversary_lemma}. 
    \item Inside \cref{prop:monotone3} we use it to bound how much can the size of the average uncovered cluster increases during the algorithm. We can very crudely bound this multiplicative increase by $k$, hence our approximation guarantee picks up additional $k$-factor. 
\end{enumerate}

\end{proof}

\subsection{An interesting sampling process}
\label{subsec:lprocess}

This section is devoted to the explanation of an interesting open problem that, if solved, probably brings together the upper and lower bounds for \cref{alg:kmpp_adversary} that are now off by a factor of $k$. 
We note that losing a factor of $k$ because of the drift of the size of the average uncovered cluster looks very wasteful. In fact, we can replace this factor of $k$ in the upper bound by a function $g(k, \ell)$ that we discuss in the rest of this section. We believe that understanding $g(k, \ell)$ is an exciting open problem. 
The problem in the analysis of \cref{thm:adversary_ub} can be distilled into the following riddle, which one can understand without understanding the details of the analysis of Arthur and Vassilvitskii. 

\begin{definition}[$\ell$-point adversarial sampling process]
\label{def:process_l}
Let $\ell \in \mathbb{N}$. 
We define the \emph{$\ell$-point adversarial sampling process} as follows. 
At the beginning, there is a set $E_0$ of $k$ elements where each element $e \in E_0$ has some nonnegative weight $w_0(e)$. 
The process has $k$ rounds: in each round, we form the new set $E_{i+1}$ from $E_i$ as follows:
\begin{enumerate}
    \item We define the distribution $D_i$ over $E_i$ where the probability of $e$ is defined as $w_i(e) / \sum_{e \in E_i} w_i(e)$. 
    An adversary chooses an arbitrary number $\ell_i$ that satisfies $0 \le \ell_i \le \ell$. 
    We sample $\ell_i$ points $e_i^1, \dots, e_i^{\ell_i}$ independently from $D_i$. 
    Next, an adversary chooses a point $e_i \in \{e_i^1, \dots, e_i^{\ell_i}\}$. 
    We set $E_{i+1} = E_i \setminus \{e_i\}$. 
    \item An adversary chooses the new weight function $w_{i+1}(e)$ for every element $e \in E_{i+1}$ as an arbitrary function that satisfies
    \[
    0 \le w_{i+1}(e) \le w_i(e). 
    \]
\end{enumerate}
\end{definition}

The relationship between this process and the analysis of \cref{alg:kmpp_adversary} is as follows. The set $E_i$ of elements corresponds to a set of uncovered clusters. The steps where we sample $\ell_i \le \ell$ elements $e_i^1, \dots, e_i^{\ell_i}$ corresponds to the algorithm sampling at most $\ell$ centers from uncovered clusters. 
We assume that the rule $\fR$ behaves adversarially and can decide  to cover any of the sampled clusters, i.e., we allow removing any sampled element $e_i \in \{e_i^1, \dots, e_i^{\ell_i}\}$ from $E_i$ to form $E_{i+1}$. 
The adversarial decreasing of the element weights in between two sampling steps corresponds to the newly taken center decreasing the cost of the optimal clusters in an uncontrolled manner. 

We note that the analysis of $k$-means++ from \cite{arthur2007k} implicitly analyzes this game for $\ell = 1$. This case is qualitatively simpler than the general case: We can in fact even prove that for $\ell = 1$, the average element size can only decrease between two steps. To see this, we observe that it would stay the same if we picked each element uniformly at random. Picking heavier elements with higher probability can only decrease the average size then. 

However, this simple approach does not work anymore for $\ell > 1$. In fact, consider as an example the set $E_0$ consisting of $k-1$ elements of size one and one element of size $k$. Choosing $\ell =\Omega(k)$, the adversary can prevent us to remove the costly element until the very end, with constant probability. 
This increases the average element size from roughly $2$ to $k$, that is, by $\Omega(k) = \Omega(\ell)$ factor. 

We do not know how much the average can increase but it is clearly at most by a $O(k)$ factor and by the above reasoning it is at least by $\Omega(\ell)$ factor. 

\begin{fact}
\label{fact:process_l}
We define the function $g(k, \ell)$ as the smallest growing function satisfying the following condition. 
Let $\avg_i$ be defined as the average weight of an element in the $i$-th round of the $\ell$-point adversarial sampling process from \cref{def:process_l}, i.e., for any $0 \le i < k$ we define
\[
\avg_i = \frac{\sum_{e \in E_i} w_i(e)}{k-i}. 
\]
Then, for any adversary and any $0 \le i < k$, we have
\[
\avg_i \le g(k, \ell)\avg_0. 
\]
\end{fact}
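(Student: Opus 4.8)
The plan is to prove that the supremum of $\avg_i/\avg_0$, taken over all adversary strategies and all rounds $i$, is finite for every fixed $k$ and $\ell$; the existence of a smallest growing $g$ then follows by a soft argument. The point is that the quantitative value of $g$ is the declared open problem, whereas its well-definedness is pure bookkeeping once one observes that the weights never increase.

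First I would record the monotonicity. The second part of each round enforces $w_{i+1}(e) \le w_i(e)$ for every $e \in E_{i+1}$, while the first part gives $E_{i+1} \subseteq E_i$. By a trivial induction, $E_i \subseteq E_0$ and $w_i(e) \le w_0(e)$ for every $e \in E_i$ and every $0 \le i < k$. Hence
\[
\sum_{e \in E_i} w_i(e) \;\le\; \sum_{e \in E_i} w_0(e) \;\le\; \sum_{e \in E_0} w_0(e) \;=\; k\cdot\avg_0,
\]
and since $k - i \ge 1$ we get $\avg_i = \frac{\sum_{e \in E_i} w_i(e)}{k - i} \le k\cdot\avg_0$ for every adversary and every $0 \le i < k$. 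In particular the function $(k,\ell)\mapsto k$ is non-decreasing in both coordinates and satisfies the required inequality on every instance, so the family $\mathcal{S}$ of non-decreasing functions with this property is non-empty.

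Next I would observe that $\mathcal{S}$ is closed under pointwise infimum: an infimum of non-decreasing functions is non-decreasing, and if $\avg_i \le h(k,\ell)\avg_0$ for every $h \in \mathcal{S}$, then (using $\avg_0 \ge 0$, the case $\avg_0 = 0$ being trivial since then all weights, hence all $\avg_i$, vanish) also $\avg_i \le \big(\inf_{h \in \mathcal{S}} h(k,\ell)\big)\avg_0$. Therefore $g := \inf_{h \in \mathcal{S}} h$ lies in $\mathcal{S}$ and is its least element; this is exactly ``the smallest growing function satisfying the condition'', and it satisfies $1 \le g(k,\ell) \le k$ (the lower bound from taking $i=0$ on any instance with $\avg_0>0$), which is the content of \cref{fact:process_l}.

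The main obstacle is not in the above, which is entirely routine, but in pinning down the true growth rate of $g$. The upper bound $g(k,\ell)\le k$ is matched, when $\ell = \Theta(k)$, by the lower bound sketched just before the statement: one heavy element of weight $\Theta(\ell)$ among $k-1$ unit-weight elements, with the adversary keeping all weights fixed and always removing a light co-sampled element, which with constant probability lets the heavy element survive until it alone remains, yielding $\avg_i/\avg_0 = \Omega(\ell)$. Closing the $\Omega(\ell)$ versus $O(k)$ gap for general $\ell$ is the open problem discussed above, and I would expect a sharp bound to require a genuinely new understanding of how having $\ell$ independent samples per round helps the adversary; I would not attempt it here.
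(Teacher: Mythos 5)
Your proposal is correct and matches the paper's (implicit) reasoning: the paper never writes out a proof of this fact, it simply asserts in the surrounding text that the average ``is clearly at most by a $O(k)$ factor,'' and your argument via the monotonicity $E_i \subseteq E_0$, $w_i \le w_0$ together with $k - i \ge 1$ is exactly the formalization of that remark, with the soft infimum step supplying well-definedness of the ``smallest'' such $g$. One small point worth making explicit: $\avg_i$ is a random variable, and you prove the inequality $\avg_i \le k\cdot\avg_0$ pointwise (i.e.\ for every realization of the process), which is the strongest reading of the statement and certainly implies the expected-value version that the $\ell = 1$ discussion in the paper uses.
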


\begin{question}
What is the value of $g(k, \ell)$? 
\end{question}

A similar problem to our sampling process was recently considered in \cite{bhattacharya2019noisy} where the authors consider the following related problem. 
Suppose that we run $k$-means++, but before each sampling step, an adversary distorts each probability of taking an element by a multiplicative $1\pm \eps$ factor. Does such an algorithm retain $O(\log k)$ approximation guarantees for fixed $\eps < 1$? 
This question leads to the analysis of a process very similar to \cref{def:process_l}; instead of choosing one of $\ell$ sampled elements, the power of the adversary is now to distort the sampling distribution pointwise by $1\pm\eps$ multiplicative factor. 
In \cite{bhattacharya2019noisy}, the authors show that the average in this game can increase only by a multiplicative $O(\log k)$ factor. This follows from the fact that all elements larger than $\Omega(\log k)$ will be already taken in the first $k/2$ steps of the process. This implies $O(\log^2 k)$ approximation guarantee for the final algorithm. 
In \cite{Grunau_Ozudogru_Rozhon2022noisy} this analysis of the game is improved to $O(1)$ which implies the tight $O(\log k)$ upper bound for $k$-means++ with noise.

\section{An incomparable bound on the number of hits}
\label{sec:alternative}

In this section, we sketch the proof of the following result which is incomparable with \cref{lem:greedy_lemma} but substantially easier to prove. In fact, we used this proof sketch as a way to build intuition towards the proof of \cref{lem:greedy_lemma} in an earlier draft of this writeup, before we realized \cref{sec:intro,sec:intuitive} are way too long. 

\begin{lemma}
\label{lem:greedy_lem_other}
For any optimal cluster $K$ we have $\E[\hit(K)] =  O(\ell \log \frac{OPT(1)}{OPT(K)})$. 
\end{lemma}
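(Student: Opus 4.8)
\textbf{Proof proposal for \cref{lem:greedy_lem_other}.}

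The plan is to run essentially the same neighborhood-shrinking argument as in the proof sketch of \cref{lem:greedy_lemma} in \cref{subsec:hope}, but to replace the $O(\ell^2\log^2 k)$ bookkeeping with a cleaner potential tied directly to the quantity $\log\frac{\phi(X,C_i)}{\phi^*(K)}$, or more precisely to $\log\frac{\phi(N_i^{small},C_i)}{|K|R_i^2}$-type ratios. The key observation is that the number of ``phases'' (halvings of the neighborhood size) is not the dominant cost here; rather, one should argue that \emph{total} hits accumulate proportionally to the total multiplicative decrease in the relevant cost ratio, and this ratio starts at something like $OPT(1)/OPT(K)$ and ends at $O(1)$ once $K$ becomes solved. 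So the claimed bound $O(\ell\log\frac{OPT(1)}{OPT(K)})$ should come out as: each ``unit of log-progress'' in the cost ratio can be charged at most $O(\ell)$ hits of $K$.

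Concretely, I would first set up the same machinery as in \cref{sec:greedy_lemma}: fix $K$, define $R_i$, $N_i^{small}$, $N_i^{big}$, and carry over \cref{cl:basic_properties} and \cref{cl:small_neighborhood} verbatim. Then I would define a single potential of the form $\Psi_i = C\cdot \ell \cdot \log\!\left(\frac{\phi(X,C_i)}{|K|R_i^2}\right)\cdot \frac{|K|}{|N_i^{small}|}$ (analogous to $\sigma_i$ in \cref{lem:greedy_lemma_induction}) plus a small additive buffer of size $O(\ell)$ to absorb the pre-$i_0$ steps and the $\phi(K,C_i)<\phi(X,C_i)/k$ edge case. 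The point is that, unlike in \cref{lem:greedy_lemma_induction}, we do not cap the log at $\log(10^5k)$; we let it run down to its natural floor. When $K$ becomes solved, $\phi(X,C_i)\ge\phi(K,C_i)=\Theta(|K|R_i^2)$, so $\Psi_i=O(\ell\cdot\frac{|K|}{|N_i^{small}|})=O(\ell)$ since $|N_i^{small}|\ge|K|/2$ by \cref{cl:basic_properties} item (4); and at the start (just after $i_0$) the log is at most $\log\frac{\phi(X,C_{i_0})}{|K|R_{i_0}^2}$, which by a short computation is $O\!\left(\log\frac{OPT(1)}{OPT(K)}\right)$ using that $\phi(X,C_{i_0})\le\phi(X,\mu(X))=OPT(1)$ up to constants (one has to be slightly careful: $R_{i_0}$ could be large, but \cref{cl:basic_properties} item (3) gives $|K|R_{i_0}^2=\Theta(\phi(K,C_{i_0}))\ge\Omega(\phi^*(K))=\Omega(OPT(K))$, which is exactly what we need). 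I would then reuse the easy/hard case split: in the hard case the drop in $\phi(X,C_i)$ pays for the hits (this is \cref{cl:drop_sigma} and \cref{cl:hard_case_is_nice}, unchanged); in the easy case, a hit of $K$ forces $|N^{small}_{i+1}|\le|N^{small}_i|/2$ with constant probability, so the factor $\frac{|K|}{|N_i^{small}|}$ in $\Psi$ geometrically absorbs those hits. The crucial simplification over \cref{lem:greedy_lemma} is that we no longer need the auxiliary potential $\rho_i$ that paid $O(\ell^2\log^2 k)$ for ``refills'' whenever $R_i$ changed: because the log is not capped, a change in $R_i$ (hence in $N_i^{small}$) only \emph{decreases} $\Psi_i$ when the neighborhood shrinks, and one checks the log term cannot grow when $R_i$ drops since $N_{i+1}^{big}\subseteq N_i^{small}$ forces $\phi(X,C_{i+1})/(|K|R_{i+1}^2)$ to be controlled.

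I expect the main obstacle to be handling the two sets $N_i^{small}$ and $N_i^{big}$ simultaneously in the potential: as in \cref{sec:greedy_lemma}, only points in $N_i^{small}$ have cost $\Omega(R_i^2)$, but the shrinking argument in the easy case naturally tracks $|N_i^{big}|$, and the accounting when $R_i$ changes by a factor of $10$ needs the annulus $N_i^{big}\setminus N_i^{small}$ to be either small (so the two are comparable, as in \cref{eq:small_annulus_case,eq:NeqN}) or large (so the neighborhood provably shrinks, as in \cref{eq:big_annulus_case}). Reproducing the clean dichotomy of \cref{cl:fundamental_big,cl:fundamental_small_easy,cl:fundamental_small_hard} without the $\log(10^5k)$ cap will require rechecking that every inequality there still closes; I believe it does, and in fact becomes slightly easier because the $\rho$ terms disappear, but verifying this is where the real work lies. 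A secondary subtlety is the upper estimate $\phi(X,C_{i_0})=O(OPT(1))$: one should argue that after running $k$-means++ (greedy or not) for $i_0$ steps the cost is at most $\phi(X,\emptyset\text{-like center})$; the cleanest route is to note $\phi(X,C_{i_0})\le\phi(X,C_1)\le\phi(X,\{x\})$ for the first sampled point $x$, and $\phi(X,\{x\})\le 2\,OPT(1)+2|X|d(x,\mu(X))^2$ in expectation is $O(OPT(1))$ by \cref{lem:2apx} applied with $K=X$ — so the bound is really on $\E[\hit(K)]$ with the $OPT(1)$ interpreted in expectation, which matches the statement.
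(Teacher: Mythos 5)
Your claim that removing the cap $\log\min(10^5k,\cdot)$ lets you drop the $\rho_i$ potential, because ``the log term cannot grow when $R_i$ drops,'' is incorrect; this is precisely the failure mode the $\rho_i$ term in \cref{lem:greedy_lemma_induction} was introduced to absorb. By \cref{def:R}, whenever $R_{i+1}\ne R_i$ we have $R_{i+1}\le R_i/10$, so the denominator $|K|R_i^2$ in your log shrinks by a factor of at least $100$. Nothing forces $\phi(X,C_{i+1})$ to also shrink by $100\times$: adding the new center removes roughly $\phi(N_i^{small},C_i)$ from the cost, which can be an arbitrarily small fraction of $\phi(X,C_i)$, so the log term $\log(\phi(X,C_{i+1})/(|K|R_{i+1}^2))$ typically \emph{increases} by a constant. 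Simultaneously, the prefactor $|K|/|N_{i+1}^{small}|$ can increase by a factor as large as $\Theta(k)$ in a single step (since $|N_i^{small}|$ can range between $|K|/2$ and $4k|K|$, see \cref{cl:basic_properties,cl:small_neighborhood}). Both factors move in the wrong direction at every $R_i$-drop, so $\Psi$ can jump up unboundedly exactly when you need it to shrink, and you have retained no mechanism to pay for that jump. Uncapping the log makes the refill larger, not smaller: the cap exists precisely to bound the refill size that $\rho$ must cover.

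The paper's proof sketch of \cref{lem:greedy_lem_other} is a genuinely different and much lighter argument than the one you are trying to adapt from \cref{sec:greedy_lemma}: it discards the neighborhood machinery ($R_i$, $N_i^{small}$, $N_i^{big}$, $\rho$, $\sigma$) entirely and works directly with $\phi(K,C_i)$. Conditioned on $c_{i+1}^j\in K$ and $K$ unsolved, with constant probability $c_{i+1}^j$ lands close to $\mu(K)$ (item (4) of \cref{cl:basic_properties} suffices) and then forces a drop of $\Omega(\phi(K,C_i))$ in $\phi(X,C_i)$. In the easy case, where a single cost-weighted sample causes a drop $\ge\phi(K,C_i)/2$ with probability $\le 1/\ell$, such a hit is selected by the greedy rule with constant probability, so $K$ is covered after $O(1)$ hits in expectation. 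In the hard case, $\phi(X,C_i)-\E[\phi(X,C_{i+1})]=\Omega(\phi(K,C_i))$, while hits accrue at expected rate $\ell\phi(K,C_i)/\phi(X,C_i)$ per step, and a clock/telescoping estimate on $\log\phi(X,C_i)$ charges the hits to $O(\ell)$ per unit of log-decrease, giving $O(\ell\log(\phi(X,C_1)/\phi(X,C_k)))=O(\ell\log(OPT(1)/OPT(k)))$ using $\E[\phi(X,C_1)]\le 2\,OPT(1)$ (\cref{lem:2apx}) and $\phi(X,C_k)\ge OPT(k)$. The crucial simplification is that the only telescoping quantity is $\log\phi(X,\cdot)$, which is monotone, so there is no refill to pay for and no $\rho$-like term is needed; your proposal keeps the refill problem but removes the tool that handled it.
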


Here, $OPT(\tilde{k})$ is the size of the optimal solution with $\tilde{k}$ centers. 

\begin{proof}[Proof sketch]
Fix an optimal cluster $K$, consider a step $i+1$ and assume that for the cost of $K$ we have $\phi(K, C_i) \ge 10^5\phi^*(K)$. 
Being far from the optimum means that all centers of $C_i$ are very far from most of the points of $K$. Hence, whenever it happens that a potential center $c_{i+1}^j$ for some $1 \le j \le \ell$ is sampled from $K$, we have constant probability that $d(\mu(K), c_{i+1}^j) \le d(\mu(K), C_i)/2$, i.e., $c_{i+1}^j$ is substantially closer to $\mu(K)$ than all other centers in $C_i$. 
In that case, we have $\phi(X, C_i) - \phi(X, C_i \cup \{c_{i+1}^j\}) \ge \phi(K, C_i)/2$. 
That is, adding $c_{i+1}^j$ as the new center will result in the cost drop of at least $\phi(K, C_i)/2$. 

We will now need to distinguish two cases. Let us consider the distribution over the cost drop $\phi(X, C_i) - \phi(X, C_{i} \cup \{c\})$ where $c$ is sampled proportional to its current cost $\phi(c, C_i)$. That is, we consider the distribution of how the cost drops if we add the candidate center $c_{i+1}^1$ (or any other fixed candidate center) to the current solution. 
In the first, \emph{easy}, case we assume that with probability $1 - 1/\ell$, the cost drop $\phi(X, C_i) - \phi(X, C_{i} \cup \{c\})$ is less than $\phi(K, C_i)/2$; otherwise we are in the \emph{hard} case. 

What is easy in the easy case? The discussion above implies that in that case, whenever we sample some $c_{i+1}^j$ from $K$, we have constant probability that all other candidate centers create a cost drop smaller than $\phi(K, C_i)/2$, hence the greedy heuristic chooses $c_{i+1} = c_{i+1}^j$. 
Hence, sampling a point from $K$ in the easy case can happen only constantly many times, in expectation, before $K$ becomes covered after which we stop counting the hits to $K$. 

But what do we do in the hard case? There, we at least know that with constant probability the cost drops by $\phi(K, C_i)/2$, concretely we know that $\phi(K, C_i) - \E[\phi(K, C_{i+1})] \ge \phi(K, C_i)/5$.
Recall that we are counting hits to $K$ and each candidate center hits it with probability $\phi(K, C_i)/\phi(X, C_i)$. 

Our situation is very similar to the following deterministic process where we start with a number $X_0$ (corresponding to $\phi(X, C_1)$) and an empty counter $H_0 = 0$ (corresponding to counting hits). In each step we then choose some number $0 < K_i \le X_i$ (corresponding to the cost of the cluster $\phi(K, C_i)$) and define $X_{i+1} \leftarrow X_i - K_i$, while increasing the counter $C_{i+1} \leftarrow C_i + \ell \cdot K_i/X_i$. 
In this idealized process, it holds that  $C_i = O(\ell \cdot \log X_0/X_i)$. Intuitively, this is because the case when $K_i = \Theta(X_i)$ in every step is the hardest one. 

We can apply similar reasoning to our randomized process to get the expected bound $O\left(\ell  \log \frac{OPT(1)}{OPT(k)}\right)$ on the number of hits. 
Here, we additionally use that 1) the starting cost of our solution $\phi(X, C_1)$ is expected to be of order $OPT(1)$ by \cref{lem:5apx_intuitive} and 2) 
the final cost $\phi(X, C_k)$ has to be at least $OPT(k)$. 
\end{proof}

\end{document}